\title{Session Types for Link Failures (Technical Report)}
\author{Manuel Adameit
	\qquad\qquad Kirstin Peters
	\qquad\qquad Uwe Nestmann
	\vspace{0.5em}\\
	TU Berlin, Germany
}
\begin{document}

\maketitle

\begin{abstract}
	We strive to use session type technology to prove behavioural properties of fault-tolerant distributed algorithms.
	Session types are designed to abstractly capture the structure of (even multi-party) communication protocols. The goal of session types is the analysis and verification of the protocols' behavioural properties. One important such property is progress, \ie the absence of (unintended) deadlock.
	Distributed algorithms often resemble (compositions of) multi-party communication protocols. In contrast to protocols that are typically studied with session types, they are often designed to cope with system failures. An essential behavioural property is (successful) termination, despite failures, but it is often elaborate to prove for distributed algorithms.

	We extend multi-party session types (and multi-party session types with nested sessions) with optional blocks that cover a limited class of link failures. This allows us to automatically derive termination of distributed algorithms that come within these limits. To illustrate our approach, we prove termination for an implementation of the ``rotating coordinator'' Consensus algorithm.
	This paper is an extended version of \cite{adameitPetersNestmann17}.
\end{abstract}

\section{Introduction}

Session types are used to statically ensure correctly coordinated behaviour in systems without global control. One important such property is progress, \ie the absence of (unintended) deadlock. Like with every other static typing approach to guarantee behavioural properties, the main advantage is that the respective properties are then provable without unrolling the process, \ie without computing its executions. Thereby, the state explosion problem is avoided.
Hence, after the often elaborate task of establishing a type system, they allow to prove properties of processes in a quite efficient way.
 
Session types describe global behaviours of a system---or protocols---as \emph{sessions}, \ie units of conversations. The participants of such sessions are called \emph{roles}.
\emph{Global types} specify protocols from a global point of view, whereas \emph{local types} describe the behaviour of individual roles within a protocol.
\emph{Projection} ensures that a global type and its local types are consistent.
These types are used to reason about processes formulated in a corresponding \emph{session calculus}. Most of the existing session calculi are extensions of the well-known $ \pi $-calculus \cite{milnerParrowWalker92} with specific operators adapted to correlate with local types.
Session types are designed to abstractly capture the structure of (even multi-party) communication protocols \cite{BettiniAtall08,BocciAtall10}.
The literature on session types provides a rich variety of extensions.
Session types with \emph{nested} protocols were introduced by \cite{DemangeonHonda12} as an extension of multi-party session types as defined \eg in \cite{BettiniAtall08,BocciAtall10}.
They offer the possibility to define sub-protocols independently of their parent protocols.

It is essentially the notion of nested protocols that led us to believe that session types could be applied to capture properties of distributed algorithms, especially the so-called round-based distributed algorithms. The latter are typically structured by a repeated execution of communication patterns by $n$~distributed partners. Often, like it will also be in our running example, such a pattern involves an exposed coordinator role, whose incarnation may differ from round to round.
As such, distributed algorithms very much resemble compositions of nested multi-party communication protocols.
Moreover, an essential behavioural property of distributed algorithms is (successful) termination \cite{Tel94,Lynch96}, despite failures, but it is often elaborate to prove. It turns out that progress (as provided by session types) and termination (as required by distributed algorithms) are closely related.
For these reasons, our goal is to apply session type technology to prove behavioural properties of distributed algorithms.

Particularly interesting round-based distributed algorithms were designed in a fault-tolerant way, in order to work in a model where they have to cope with system failures---be it links dropping or manipulating messages, or processes crashing with or without recovery.
As the current session type systems are not able to cover fault-tolerance (except for exception handling as in \cite{CarboneHondaYoshida08,capecchi2016}), it is necessary to add an appropriate mechanism to cover system failures.

\paragraph{Optional Blocks.}
While the detection of conceptual design errors is a standard property of type systems, proving correctness of algorithms despite the occurrence of uncontrollable system failures is not.
In the context of distributed algorithms, various kinds of failures have been studied. Often, the correctness of an algorithm does not only depend on the kinds of failures but also of the phase of the algorithm in which they occur, the number of failures, or their likelihood.
Here, we only consider a very simple case, namely algorithms that terminate despite arbitrarily many link failures that may occur at any moment in the execution of the algorithm.

Therefore, we extend session types with \emph{optional blocks}. Such a block specifies chunks of communication that may at some point fail due to a link failure.
This partial communication protocol is protected by the optional block, to ensure that no other process can interfere before the optional block was resolved and to ensure, that in the case of failure, no parts of the failed communication attempt may influence the further behaviour.
In case a link fails, the ambition to guarantee progress requires from our mechanism that the continuation behaviour is not blocked.
Therefore, the continuation of an optional block $ C $ can be parametrised by a set of values that are either computed by a successful termination of an optional block or are provided beforehand as default values, \ie we require that for each value that $ C $ uses the optional block specifies a default value.
An optional block can cover parts of a protocol or even other optional blocks.
The type system ensures that communication with optional blocks requires an optional block as communication partner and that only a successful termination of an optional block releases the protection around its values derived within optional blocks.
The semantics of the session calculus then allows us to abort an unguarded optional block at any point. If an optional block models a single communication, its abortion represents a message loss.
In summary, optional blocks will allow us to automatically derive termination despite arbitrary link failures of distributed algorithms.

\paragraph{Running Example.}
Fault-tolerant Consensus algorithms are used to solve the problem of reaching agreement on a decision in the presence of faulty processes or otherwise unreliable systems.
A simple, but prominent example is the rotating coordinator algorithm \cite{Tel94} to be used in an asynchronous message-passing system model.
We use this algorithm as running example throughout this paper.

\begin{example}[The Rotating Coordinator Algorithm]
	\label{exa:RCAlgorithm}
	$ $
	\begin{lstlisting}
*$ x_i :=\! $ input;
*for $ r := 1 $ to $ n $ do
*\{
  *if $ r = i $ then broadcast($ x_i $);
  *else if alive($ p_r $) then $ x_i :=\! $ input\_from\_broadcast()
*\};
*output $ x_i $;
	\end{lstlisting}
\end{example}

\noindent
The above example describes the rotating coordinator algorithm for participant~$ i $. A network then consists of $ n $ such participants composed in parallel that try to reach Consensus on a value $ x $. Each participant receives an initial value for $ x $ from the environment, then performs the $ n $ rounds of the algorithm described in the Lines 2--6, and finally outputs its decision value. Within the $ n $ rounds each participant $i$ is exactly once (if $ r = i $) the coordinator and broadcasts its current value to all other participants. In the remaining rounds it receives the values broadcasted by other participants and replaces its own value with the received value. 

Due to the asynchronous nature of the underlying system, messages do not necessarily arrive in the order in which they were sent; also, different participants can be in different (local) rounds at the same (global) time. In case of link failure, the system may lose an arbitrary number of messages. In case of process crash, only its messages that are still in transition may possibly be received. In a system with only crash failures but no link failures, messages that are directed from a non-failing participant to some other non-failing participants are eventually received; this is often called reliable point-to-point communication. 

Following  \cite{Tel94, Lynch96}, a network of processes like above solves Consensus if
\begin{inparaenum}[(1)]
	\item all non-failing participants eventually output their value (\emph{Termination}),
	\item all emitted output values are the same (\emph{Agreement}), and
	\item each output value is an initial value of some participant (\emph{Validity}).
\end{inparaenum}
We show that an implementation of the above algorithm reaches termination despite an arbitrary number of link failures, \ie all participants (regardless of whether their messages are lost) eventually terminate. Note that we implement broadcast by a number of binary communications---one from the sender to each receiver---and thus a link failure not necessary implies the loss of all messages of a broadcast. This way we consider the more general and more realistic case of such failures.
Moreover we concentrate our attention to the main part of the algorithm, \ie the implementation of the rounds in the Lines 2--6.

\paragraph{Related Work.}
Type systems are usually designed for scenarios that are free of system failures.
An exception is \cite{KouzapasGutkovasGay14} that introduces unreliable broadcast. Within such an unreliable broadcast a transmission can be received by multiple receivers but not necessarily all available receivers. In the latter case, the receiver is deadlocked. In contrast, we consider failure-tolerant unicast, \ie communications between a single sender and a single receiver, where in the case of a failure the receiver is not deadlocked but continues using default values.

\cite{CarboneHondaYoshida08,capecchi2016} extends session types with exceptions thrown by processes within \textsc{try}-and-\textsc{catch}-blocks.
Both concepts---\textsc{try}-and-\textsc{catch}-blocks and optional blocks---introduce a way to structurally and semantically encapsulate an unreliable part of a protocol and provide some means to 'detect' a failure and 'react' to it. They are, however, conceptionally and technically different.
An obvious difference is the limitation of the inner part of optional blocks towards the computation of values; there is no such limitation in the \textsc{try}-and-\textsc{catch}-blocks of \cite{capecchi2016}.
More fundamentally these approaches differ in the way they allow to 'detect' failures and to 'react' to them.

Optional blocks are designed for the case of system errors that may occur non-deterministically and not necessarily reach the whole system or not even all participants of an optional block, whereas \textsc{try}-and-\textsc{catch}-blocks model controlled interruption requested by a participant. Hence these approaches differ in the source of an error; raised by the underlying system structure or by a participant. Technically this means that in the presented case failures are introduced by the semantics, whereas in \cite{capecchi2016} failures are modelled explicitly as \textsc{throw}-operations. In the latter case the model also describes, why a failure occurred. Here we deliberately do not model causes of failures, but let them occur non-deterministically.
In particular we do not specify, how a participant 'detects' a failure. Different system architectures might provide different mechanisms to do so, \eg by time-outs. As it is the standard for the analysis of distributed algorithms, our approach allows to port the verified algorithms on different systems architectures, provided that the respective structure and its failure pattern preserves correctness of the considered properties.

The main difference between these two approaches is how they react to failures. In \cite{capecchi2016} \textsc{throw}-messages are propagated among nested \textsc{try}-and-\textsc{catch}-blocks to ensure that all participants are consistently informed about concurrent \textsc{throws} of exceptions. In distributed systems such a reaction towards a system error is unrealistic. Distributed processes usually do not have any method to observe an error on another system part and if a participant is crashed or a link fails permanently there is usually no way to inform a waiting communication partner. Instead abstractions (failure detectors) are used to model the detection of failures that can \eg be implemented by time-outs. Here it is crucial to mention that failure detectors are usually considered to be local and can not ensure global consistency. Distributed algorithms have to deal with the problem that some part of a system may consider a process/link as crashed, while at the same time the same process/link is regarded as correct by another part. This is one of the most challenging problems in the design and verification of distributed algorithms.

In the case of link failures, if a participant is directly influenced by a failure on some other system part (a receiver of a lost message) it will eventually abort the respective communication attempt. If a participant does not depend (the sender in an unreliable link) it may never know about the failure or its nature.
Distributed algorithms usually deal with unexpected failures that are hard to detect and often impossible to propagate. Generating correct algorithms for this scenario is difficult and error-prone, thus we need methods to verify them.

\paragraph{Contribution.}
We extend multi-party session types---first a basic version similar to \cite{BettiniAtall08, BocciAtall10} and then the type system of \cite{DemangeonHonda12}---by optional blocks, \ie protected parts of sessions that either yield a value to be used in the continuation or fail and return a former specified default value.
This simple but restricted mechanism allows us to model link failures of the underlying system and to model distributed algorithms on top of such an unreliable communication infrastructure.
Moreover the type system ensures that well-typed processes progress, \ie termination, for the case of arbitrary occurrences of link failures.
Our approach is limited with respect to two aspects: We only cover algorithms that
\begin{inparaenum}[(1)]
	\item allow us to specify default values for all unreliable communication steps and
	\item terminate despite arbitrary link failures.
\end{inparaenum}
Accordingly, this approach is only a first step towards the analysis of distributed algorithms with session types. It shows however that it is possible to analyse distributed algorithms with session types and how the latter can solve the otherwise often complicated and elaborate task of proving termination.
We show that our attempt respects two important aspects of fault-tolerant distributed algorithms:
\begin{inparaenum}[(1)]
	\item The modularity as \eg present in the concept of rounds in many algorithms can be expressed naturally, and
	\item the model respects the asynchronous nature of distributed systems such that messages are not necessarily delivered in the order they are send and the rounds may overlap.
\end{inparaenum}

\paragraph{Overview.}
We extend global types and restriction in \S\ref{sec:globalTypes}, local types and projection in \S\ref{sec:localTypes}, and introduce the extended session calculus in \S\ref{sec:calculus} with a mechanism to check types in \S\ref{sec:wellTypedness}.
We present two examples---that are different variants to consider the rotating coordinator algorithm---as running examples with the presented definitions. A third example is presented in \S\ref{sec:ExaSubSessionsWithinOptionalBlocks}.
In \S\ref{sec:properties} we analyse the properties of the extended type system and use them to prove termination despite link failures of our running example.
We conclude with \S\ref{sec:conclusions}.
This paper is an extended version of \cite{adameitPetersNestmann17}.

\subsection{Properties of Optional Blocks}
\label{sec:optionalBlocks}

We extend standard versions of session types---as given \eg in \cite{BettiniAtall08,BocciAtall10,DemangeonHonda12}---with optional blocks. An optional block is a simple construct that encapsulates and isolates a potentially unreliable part of a protocol.
Since we are interested in the proof of relatively strong system properties such as termination, we restrain the effect that the failure (and thus also the success) of the encapsulated part can impose on the remainder of the protocol. The encapsulated part can compute some values and has---for the case of failure---to provide some default values.
In order to use optional blocks to model failures and at the same time ensure certain system properties, we designed optional blocks such that they ensure the properties encapsulation, isolation, safety, and reliance:
\begin{description}
	\item[Encapsulation:] Optional blocks encapsulate a potentially unreliable part of a protocol such that unreliable parts and reliable parts of a system are clearly distinguished and cannot interfere except for values that might be computed differently in the case of a failure.
	\item[Isolation:] Communication from within an optional block is restricted to its participants.
	\item[Safety:] Regardless of success or failure, each participant of an optional block---if it does not loop forever---returns a (potentially empty) vector of values of the required kinds. In the case of success these return values can be computed using communications with the other participants of the block. In the case of failure the return values are the default values.
	\item[Reliance:] If the considered system can terminate, then there is also a way to successfully complete all optional blocks.
\end{description}
Here encapsulation and isolation result from the semantics of the newly introduced concepts, whereas safety and reliance are enforced by the type system.
Session types introduce three different layers of abstraction:
\begin{inparaenum}[(1)]
	\item The session calculus is a process calculus---usually a variant of the $ \pi $-calculus \cite{milnerParrowWalker92}---that allows to model systems. Usually it is designed (or adapted) to provide flexibility and an easy and intuitive syntax in order to support the designer. For this purpose session calculi usually reflect only the local views on the respective participants, \ie the overall system behaviour is represented by modelling the single participants and their abilities to interact. With that session calculi are relatively close to programming languages. As a consequence, session calculi themselves provide very little guarantees on the correctness of the modelled systems.
	\item Global types, on the other hand side, provide a global view on a system. They are used to specify and formalise the desired properties of the overall system.
	\item To mediate between these two points of view and to guarantee that an implementation in a session calculus of a specification given as a global type has the desired properties, session types introduce an intermediate layer called local types. Intuitively local types specify the consequences of a global specification on a single participant. Projection functions allow to automatically derive local types from global types and typing rules allow to automatically check whether an implementation in a session calculus satisfies the global specification, by comparing the process against the local type.
\end{inparaenum}
With that session types provide a static and thus very efficient way to analyse and guarantee different kinds of system properties.
When extending session types with optional blocks we follow these three layers, starting with global types.

\section{Global Types with Optional Blocks}
\label{sec:globalTypes}

We want to derive a correct implementation of the algorithm of our running example from its specification. Accordingly we start with the introduction of the type system.

Throughout the paper we use $ G $ for global types, $ T $ for local types, $ \Labe $ for communication labels, $ \Chan[s], \Chan[k] $ for session names, $ \Chan $ for shared channels, $ \Role $ for role identifiers, $ \Prot $ for protocol identifiers, and $ \Args[v] $ for values of a base type (\eg integer or string). $ \Args, \Args[y] $ are variables to represent \eg session names, shared channels, or values.
We formally distinguish between roles, labels, process variables, type variables, and names---additionally to identifiers for global/local types, protocols, processes, \ldots. Formally we do however not further distinguish between different kinds of names but use different identifiers ($ \Chan, \Chan[s], \Args[v], \ldots $) to provide hints on the main intended purpose at the respective occurrence. Roles and participants are used as synonyms.
To simplify the presentation, we adapt set-like notions for tuples. For example we write $ \Args_i \in \tilde{\Args} $ if $ \tilde{\Args} = \left( \Args_1, \ldots, \Args_n \right) $ and $ 1 \leq i \leq n $. We use $ \cdot $ to denote the empty tuple.

Global types describe protocols from a global point of view on systems by interactions between roles. They are used to formalise specifications that describe the desired properties of a system. We extend the basic global types as used \eg in \cite{BettiniAtall08,BocciAtall10} with a global type for optional blocks.

\begin{definition}[Global Types]
	\label{def:globalTypes}
	Global types with optional blocks are given by
	\begin{align*}
		G & \deffTerms \GTCom{\Role_1}{\Role_2}{\sum_{i \in \indexSet} \Set{\GTInp{\Labe_i}{\Typed{\tilde{\Args}_i}{\tilde{\Sort}_i}}{G_i}}}
		\sep \textcolor{blue}{ \GTOptBl{\widetilde{\Role, \Typed{\tilde{\Args}}{\tilde{\Sort}}}}{G}{G'} }\\
		& \sep \GTChoi{G_1}{\Role}{G_2} \sep \GTPar{G_1}{G_2} \sep \GTRec{\TermV}{G} \sep \TermV \sep \GTEnd
	\end{align*}
\end{definition}

\noindent
$ \GTCom{\Role_1}{\Role_2}{\sum_{i \in \indexSet} \Set{\GTInp{\Labe_i}{\Typed{\tilde{\Args}_i}{\tilde{\Sort}_i}}{G_i}}} $ is the standard way to specify a communication from role $ \Role_1 $ to role $ \Role_2 $, where $ \Role_1 $ has a direct choice between several labels $ \Labe_i $ proposed by $ \Role_2 $.
Each branch expects values $ \tilde{\Args}_i $ of sorts $ \tilde{\Sort}_i $ and executes the continuation $ G_i $.
When $ \indexSet $ is a singleton, we write $ \GTCom{\Role_1}{\Role_2}{\GTInpS{\Labe}{\Typed{\tilde{\Args}}{\tilde{\Sort}}}} $.
$ \GTChoi{G_1}{\Role}{G_2} $ introduces so-called located (or internal) choice: the choice for one role $ \Role $ between two distinct protocol branches.
The parallel composition $ \GTPar{G_1}{G_2} $ allows to specify independent parts of a protocol.
$ \GTRec{\TermV}{G} $ and $ \TermV $ are used to allow for recursion.
$ \GTEnd $ denotes the successful completion of a global type.
We often omit trailing $ \GTEnd $ clauses.

We add the primitive $ \GTOptBl{\widetilde{\Role, \Typed{\tilde{\Args}}{\tilde{\Sort}}}}{G}{G'} $ to describe an optional block between the roles $ \Role_1, \ldots, \Role_n $, where $ \widetilde{\Role, \Typed{\tilde{\Args}}{\tilde{\Sort}}} $ abbreviates the sequence $ \Role_1, \Typed{\tilde{\Args}_1}{\tilde{\Sort}_1}, \ldots, \Role_n, \Typed{\tilde{\Args}_n}{\tilde{\Sort}_n} $  for some natural number $ n $.
Here $ G $ is the protocol that is encapsulated by the optional block and the $ \tilde{\Args}_i $ are so-called default values that are used within the continuation $ G' $ of the surrounding parent session if the optional block fails.
There is one (possibly empty) vector of default values $ \tilde{\Args}_i $ for each role $ \Role_i $.
The inner part $ G $ of an optional block is a (part of a) protocol that (in the case of success) is used to compute the vectors of return values. The typing rules ensure that for each role $ \Role_i $ the type of the computed vector coincides with the type $ \tilde{\Sort}_i $ of the specified vector of default values $ \tilde{\Args}_i $. Intuitively, if the block does not fail, each participant can use its respective vector of computed values in the continuation $ G' $. Otherwise, the default values are used. An optional block can either be completed successfully or fail completely.

Optional blocks capture the main features of a failure very naturally: a part of a protocol either succeeds or fails. They also encapsulate the source and direct impact of the failure, which allows us to study their implicit effect---as \eg missing communication partners---on the overall behaviour of protocols. With that they help us to specify, implement, and verify failure-tolerant algorithms.

Using optional blocks we provide a natural and simple specification of an unreliable link $ c $ between the two roles $ \Role[src] $ and $ \Role[trg] $, where in the case of success the value $ \Args[v]_{\Role[src]} $ is transmitted and in the case of failure a default value $ \Args[v]_{\Role[trg]} $ is used by the receiver.

\begin{example}[Global Type of an Unreliable Link]
	\label{exa:GTunreliableLink}
	\begin{align*}
		\GUL{\Role[src]}{\Args[v]_{\Role[src]}}{\Role[trg]}{\Args[v]_{\Role[trg]}}
		= \GTOptBlS{\Role[src], \cdot, \Role[trg], \Typed{\Args[v]_{\Role[trg]}}{\Sort[V]}}{\left( \GTCom{\Role[src]}{\Role[trg]}{\GTInp{\Labe[c]}{\Typed{\Args[v]_{\Role[src]}}{\Sort[V]}}{\GTEnd}} \right)}
	\end{align*}
\end{example}

\noindent
Here we have a single communication step---to model the potential loss of a single message---that is covered within an optional block. In the term $ \GUL{\Role[src]}{\Args[v]_{\Role[src]}}{\Role[trg]}{\Args[v]_{\Role[trg]}}\!.G' $ the receiver $ \Role[trg] $ may use the transmitted value $ \Args[v]_{\Role[src]} $ in the continuation $ G' $ if the communication succeeds or else uses its default value $ \Args[v]_{\Role[trg]} $.
Note that the optional block above specifies the empty sequence of values as default values for the sending process $ \Role[src] $, \ie the sender needs no default values.

In the remaining text, we use $ \prod_{i = 1..n} G_i \deff \GTPar{G_1}{\GTPar{\cdots}{G_n}} $ to abbreviate parallel composition and $ \bigodot_{i = 1..n} G_i \deff G_1.\cdots.G_n $ likewise for sequential composition. We naturally adapt these notations to local types and processes that are introduced later.

Remember that global types specify a global point of view of the communication structure, whereas the pseudo code of Example~\ref{exa:RCAlgorithm} provides the local view for participant~$ i $ containing also the data flow.
Accordingly we obtain a global type for Example~\ref{exa:RCAlgorithm} by abstracting partly from the values; concentrating on the communications.

Let $ \Args[v]_{i, j} $ be the value of participant~$ i $ of Example~\ref{exa:RCAlgorithm} after round~$ j $ such that $ \Args[v]_{i, i} := \Args[v]_{i, i - 1} $ (the coordinator does not update its value) and assume a vector $ \left( \Args[v]_{1, 0}, \ldots, \Args[v]_{n, 0} \right) $ of initial values.
Here only the initial values $ \Args[v]_{i, 0} $ are actually values, the remaining $ \Args[v]_{i, j} $ are variables that are instantiated with values during runtime. Then, in
\begin{example}[Global Type for Rotating Coordinators]
	\label{exa:GTRC}
	\begin{align*}
		\GRC{n} ={} & \bigodot_{i = 1..n} \; \bigodot_{j = 1..n, j \neq i} \GUL{\Role[p]_i}{\Args[v]_{i, i - 1}}{\Role[p]_j}{\Args[v]_{j, i - 1}}
	\end{align*}
\end{example}
the index $ i $ is used to specify the number of the current round, while $ j $ iterates over potential communication partners in round $ i $.
From a global point of view, there are $ n $~rounds such that each participant is exactly once the coordinator $ \Role[p]_i $ and transmits its value to all other participants $ \Role[p]_j $ using an unreliable link.
This global type abstracts in particular from Line~$ 5 $ in Example~\ref{exa:RCAlgorithm}, since it does not specify that or how the values of the receivers are updated.
For simplicity we do not consider the Lines~1 and 7.

\subsection{Global Types with Optional Blocks and Sub-Sessions}

As it is the case for our running example, many distributed algorithms are organised in rounds or use similar concepts of modularisation.
We want to be able to directly mirror this modularity.
To do so, we make use of the extension of multi-party session types with nested sessions of \cite{DemangeonHonda12}.
These authors introduce two additional primitives for global types
\begin{align*}
	\sep \GTDecl{\Prot}{\tilde{\Role}_1}{\Typed{\tilde{\Args[y]}}{\tilde{\Sort}}}{\tilde{\Role}_2}{G}{G'}
	\sep \GTCall{\Role}{\Prot}{\tilde{\Role}}{\tilde{\Args[y]}}{G}
\end{align*}
to implement sub-sessions.
The type $ \GTDecl{\Prot}{\tilde{\Role}_{1}}{\Typed{\tilde{\Args[y]}}{\tilde{\Sort}}}{\tilde{\Role}_2}{G}{G'} $ describes the declaration of a sub-protocol $ G $ identified via $ \Prot $, to be called from within the main protocol $ G' $.
Here $ \tilde{\Role}_1 $, $ \tilde{\Args[y]} $, and $ \tilde{\Role}_2 $ are the internally invited participants, the arguments, and the externally invited participants of $ G $, respectively.
With the protocol call $ \GTCall{\Role}{\Prot}{\tilde{\Role}}{\tilde{\Args[y]}}{G} $ a formerly declared sub-protocol $ \Prot $ can be initialised, where $ \tilde{\Role} $ and $ \tilde{\Args[y]} $ specify the internally invited roles and the arguments of $ \Prot $. $ G $ is the remainder of the parent session.
The sub-protocols that are introduced by these two primitives allow to specify algorithms in a modular way.

\begin{definition}[Global Types with Sub-Sessions]
	\label{def:globalTypesWSS}
	\begin{align*}
		G & \deffTerms \GTCom{\Role_1}{\Role_2}{\sum_{i \in \indexSet} \Set{\GTInp{\Labe_i}{\Typed{\tilde{\Args}_i}{\tilde{\Sort}_i}}{G_i}}}
		\sep \textcolor{blue}{ \GTOptBl{\widetilde{\Role, \Typed{\tilde{\Args}}{\tilde{\Sort}}}}{G}{G'} }\\
		& \sep \GTDecl{\Prot}{\tilde{\Role}_1}{\Typed{\tilde{\Args[y]}}{\tilde{\Sort}}}{\tilde{\Role}_2}{G}{G'}
		\sep \GTCall{\Role}{\Prot}{\tilde{\Role}}{\tilde{\Args[y]}}{G}\\
		& \sep \GTChoi{G_1}{\Role}{G_2} \sep \GTPar{G_1}{G_2} \sep \GTRec{\TermV}{G} \sep \TermV \sep \GTEnd
	\end{align*}
\end{definition}

Here optional blocks can surround (a part of) a session that possibly contains sub-sessions or it may surround a part of a single sub-session.

We extend the global type of our running example.

\begin{example}[Rotating Coordinators with Sub-Sessions]
	\label{exa:GTRCWSS}
	\begin{align*}
		\GRCB{n}
		= \GTDecl{\Prot[R]_n}{\Role[src], \widetilde{\Role[trg]}}{\Typed{\Args[v]_{\Role[src]}}{\Sort[V]}}{\cdot}{\GR{n}}{\left( \bigodot_{i = 1..n} \GTCallS{\Role[p]_i}{\Prot[R]_n}{\overline{\Role[p]_i}}{\Args[v]_{i, i - 1}} \right)}
	\end{align*}
	where the global type of a round is given by:
	\begin{align*}
		\GR{n} = \bigodot_{j = 1..(n - 1)} \GUL{\Role[src]}{\Args[v]_{\Role[src]}}{\Role[trg]_j}{\Args[v]_{\Role[src]}}
	\end{align*}
\end{example}

\noindent
The type of a single round basically remains the same but is transferred into the sub-session $ \GR{n} $ such that each round corresponds to its own sub-session.
The overall session $ \GRCB{n} $ then consists of the declaration of this sub-protocol (using the $ \mathtt{let} $-construct) followed by the iteration over $ i $ of the rounds, where in each round the respective sub-session is called.
Here $\overline{\Role[p]_i}$ is used to abbreviate the reordering $ \Role[p]_i, \Role[p]_1, \ldots, \Role[p]_{i - 1}, \Role[p]_{i + 1}, \ldots \Role[p]_n $ of the vector $ \tilde{\Role[p]} $.
In $ \GR{n} $ we use $ \Args[v]_{\Role[src]} $ not only as transmitted value but also as the default value of the receiver. This violates our intuition of the algorithm.
Intuitively the default value should be the last known value of the receiver, \ie $ \Args[v]_{i, j - 1} $ for the receiver $ i $ in round~$ j $. The implementation in the session calculus will use this value.
However, since all $ \Args[v]_{i, j} $ are of the same type and because we consider (global) types here, we can use $ \Args[v]_{\Role[src]} $.

\subsection{Restriction}

$ \GRC{n} $ and $ \GRCB{n} $ have $ n $ roles: $ \Role[p]_1, \ldots, \Role[p]_n $.
\emph{Restriction} maps a global type to the parts that are relevant for a certain role.
We extend the restriction rules of \cite{DemangeonHonda12,Demangeon15} with a rule for optional blocks.

\begin{definition}[Restriction]
	\label{def:restriction}
	$ $\\
	The restriction operator $ \RestS{G}{\Role} $ goes inductively through all constructors except:
	\begin{align*}
		& \Rest{\GTDecl{\Prot}{\tilde{\Role}_1}{\tilde{\Args[y]}}{\tilde{\Role}_2}{G}{G'}}{\Role_0} = \GTDecl{\Prot}{\tilde{\Role}_1}{\tilde{\Args[y]}}{\tilde{\Role}_2}{G}{\!\Rest{G'}{\Role_0}}\\
		& \Rest{\GTCall{\Role}{\Prot}{\tilde{\Role}}{\tilde{\Args[y]}}{G}}{\Role_0} = \begin{cases} \GTCall{\Role}{\Prot}{\tilde{\Role}}{\tilde{\Args[y]}}{\!\Rest{G}{\Role_0}} & \text{if } \Role_0 = \Role \text{ or } \Role_0 \in \tilde{\Role}\\ \RestS{G}{\Role_0} & \text{else} \end{cases}\\
		& \Rest{\GTCom{\Role_1}{\Role_2}{\sum_{i \in \indexSet} \Set[]{\GTInp{\Labe_i}{\tilde{\Args}_i}{G_i}}}}{\Role_0} \!= \begin{cases} \GTCom{\Role_1}{\Role_2}{\sum_{i \in \indexSet} \Set[]{\GTInp{\Labe_i}{\tilde{\Args}_i}{\!\Rest{G_i}{\Role_0}}}} & \text{if } \Role_0 \in \Set[]{ \Role_1, \Role_2 }\\ \RestS{G_1}{\Role_0} & \text{else} \end{cases}\\
		& \textcolor{blue}{\Rest{\GTOptBl{\widetilde{\Role, \tilde{\Args}}}{G}{G'}}{\Role_0}} \textcolor{blue}{\ = \begin{cases} \GTOptBl{\widetilde{\Role, \tilde{\Args}}}{\left( \RestS{G}{\Role_0} \right)}{\left( \RestS{G'}{\Role_0} \right)} & \text{if } \Role_0 \in \tilde{\Role}\\ \RestS{G'}{\Role_0} & \text{else} \end{cases}}
	\end{align*}
\end{definition}

\noindent
For simplicity we abbreviate $ \Typed{\tilde{\Args[z]}}{\tilde{\Sort}} $ by $ \tilde{\Args[z]} $ for all vectors of values in this definition.
This definition also captures the definition of restriction for the smaller type system without sub-sessions---in this case the first two rules are superfluous.
Here, the restriction of an optional block on one of its participants results in the restriction of both, its inner part as well as its continuation, on that role. If we restrict an optional block on a role that does not participate, the result is the restriction of the continuation only.
Accordingly, the restriction of an unreliable link on a role $ \Role[p]_i $ results in the link itself if $ \Role[p]_i $ is either the source or the target of the unreliable link and else removes the unreliable link.

\begin{example}[Restriction on Participant~$ i $]
	\label{exa:RestRC}
	\begin{align*}
		\RestS{\GRC{n}}{\Role[p]_i} ={}
		& \bigodot_{j = 1..(i{-}1)} \GUL{\Role[p]_j}{\Args[v]_{j, j{-}1}}{\Role[p]_i}{\Args[v]_{i, j{-}1}}. & \tag{a}\label{exa:RestA}\\
		& \bigodot_{j = 1..n, j \neq i} \GUL{\Role[p]_i}{\Args[v]_{i, i{-}1}}{\Role[p]_j}{\Args[v]_{j, i{-}1}}. & \tag{b}\label{exa:RestB}\\
		& \bigodot_{j = (i + 1)..n} \GUL{\Role[p]_j}{\Args[v]_{j, j{-}1}}{\Role[p]_i}{\Args[v]_{i, j{-}1}} & \tag{c}\label{exa:RestC}
	\end{align*}
\end{example}

\noindent
Restricting $ \GRC{n} $ on participant~$ i $ reduces all rounds~$ j $ (except for round~$ j = i $) to the single communication step of round~$ j $ in that participant~$ i $ receives a value.
Accordingly, participant~$ i $ receives $ i{-}1 $ times a value in $ i{-}1 $ rounds in (\ref{exa:RestA}), then broadcasts its current value to all other participants (modelled by $ n{-}1 $ single communication steps) in (\ref{exa:RestB}), and then receives $ n{-}i $ times a value in the remaining rounds in (\ref{exa:RestC}).

For the extended Example~\ref{exa:GTRCWSS} we have $ \RestS{\GRCB{n}}{\Role[p]_i} = \GRCB{n} $, since each call of a sub-session refers to all roles.
The restriction of the protocol for rounds $ \GR{n} $ on the role that coordinates the respective round, \ie for $ \Role[p]_i = \Role[src] $, is (similar to (\ref{exa:RestB})):
\begin{align*}
	\RestS{\GR{n}}{\Role[p]_i} ={} & \bigodot_{j = 1..n, j \neq i} \GUL{\Role[p]_i}{\Args[v]_{i, i{-}1}}{\Role[p]_j}{\Args[v]_{i, i{-}1}}
\end{align*}
whereas its restriction on another role, \ie for $ \Role[p]_i \in \widetilde{\Role[trg]} $ and $ \Role[src] = \Role[p]_j $, leads (similar to (\ref{exa:RestA}) and (\ref{exa:RestC})) to:
\begin{align*}
	\RestS{\GR{n}}{\Role[p]_i} ={} & \GUL{\Role[p]_j}{\Args[v]_{j, j{-}1}}{\Role[p]_i}{\Args[v]_{j, j{-}1}}
\end{align*}

\subsection{Well-Formed Global Types}

Following \cite{DemangeonHonda12} we type all objects appearing in global types with \emph{kinds} (types for types) $ \Sort[K] \deffTerms \Sort[Role] \mid \Sort[Val] \mid \diamond \mid \left( \Sort[K]_1 \times \ldots \times \Sort[K]_n \right) \to \Sort[K] $.
$ \Sort[Val] $ are value-kinds, which are first-order types for values (like $ \mathbb{B} $ for boolean) or data types. $ \Sort[Role] $ is used for identifiers of roles.
We use $ \diamond $ to denote protocol types and $ \to $ to denote parametrisation.
We adopt the definition of \emph{well-kinded} global types from \cite{DemangeonHonda12} that basically ensures that all positions $ \Role, \Role_1, \Role_2, \tilde{\Role}_1, \tilde{\Role}_2 $ in global types can be instantiated only by objects of type $ \Sort[Role] $ and that the type of all sub-protocols in declarations and calls is of the form $ \Sort[K] \to \diamond $, where $ \Sort[K] $ is the product of the types of the internally invited roles and the arguments of the sub-protocol.

According to \cite{DemangeonHonda12} a global type $ G $ is \emph{projectable} if
\begin{inparaenum}[(1)]
	\item for each occurrence of $ \GTChoi{G_1}{\Role}{G_2} $ in the type and for any free role $ \Role' \neq \Role $ we have $ \RestS{G_1}{\Role'} = \RestS{G_2}{\Role'} $,
	\item for each occurrence of $ \GTCom{\Role_1}{\Role_2}{\sum_{i \in \indexSet} \Set[]{\GTInp{\Labe_i}{\tilde{\Args}_i}{G_i}}} $ in the type and for any free role $ \Role' \notin \Set{ \Role_1, \Role_2 } $ we have $ \RestS{G_i}{\Role'} = \RestS{G_j}{\Role'} $ for all $ i, j \in \indexSet $,
	\item for each occurrence of $ \GTPar{G_1}{G_2} $ the types $ G_1 $ and $ G_2 $ do not share the same free role.
\end{inparaenum}
To simplify the definition of projection, we write $ \Role \in G $ if $ \Role $ is a free role in the global type $ G $ and else $ \Role \notin G $.
Additionally we require (similar to sub-sessions in \cite{DemangeonHonda12}) for a global type $ G $ to be projectable that
\begin{inparaenum}[(1)]
	\item for each optional block $ \GTOptBl{\widetilde{\Role, \Typed{\tilde{\Args}}{\tilde{\Sort}}}}{G_1}{G_2} $ in $ G $, all roles in $ G_1 $ are either contained in $ \tilde{\Role} $ or are newly introduced by a sub-session and
	\item for each each $ \GTDecl{\Prot}{\tilde{\Role}_1}{\tilde{\Args[y]}}{\tilde{\Role}_2}{G_1}{G_2} $ in $ G $, all roles in $ G_1 $ are either contained in $ \tilde{\Role}_1 $ or $ \tilde{\Role}_2 $ or are newly introduced by a sub-session.
\end{inparaenum}

A global type is \emph{well-formed} when it is well-kinded and projectable, and satisfies the standard linearity condition \cite{BettiniAtall08}.
For more intuition on the notion of well-formedness and examples for non-well-formed protocols we refer to \cite{DemangeonHonda12}.
In the examples, we use $ \Sort[V] $ as the type of the values $ \Args[v]_{i, j} $. Clearly, $ \GRC{n} $ and $ \GRCB{n} $ are well-formed.

\section{Local Types with Optional Blocks}
\label{sec:localTypes}

Local types describe a local and partial point of view on a global communication protocol \wrt a single participant. They are used to validate and monitor distributed programs.
We extend the basic local types as used \eg in \cite{BettiniAtall08,BocciAtall10} with a local type for optional blocks.

\begin{definition}[Local Types]
	\label{def:localTypes}
	Local types with optional blocks are given by
	\begin{align*}
		T & \deffTerms \LTGet{\Role}{_{i \in \indexSet} \Set{ \LTLab{\Labe_i}{\Typed{\tilde{\Args}_i}{\tilde{\Sort}_i}}{T_i} }}
		\sep \LTSend{\Role}{_{i \in \indexSet} \Set{ \LTLab{\Labe_i}{\Typed{\tilde{\Args}_i}{\tilde{\Sort}_i}}{T_i} }} \sep \textcolor{blue}{\LTOpt{\tilde{\Role}}{T}{\Typed{\tilde{\Args}}{\tilde{\Sort}}}{T'}}\\
		& \sep \LTChoi{T_1}{T_2} \sep \LTPar{T_1}{T_2} \sep \LTRec{\TermV}{T} \sep \TermV \sep \LTEnd
	\end{align*}
\end{definition}

\noindent
The first two operators specify endpoint primitives for communications with $ \mathtt{get} $ for the receiver side---where $ \Role $ is the sender---and $ \mathtt{send} $ for the sender side---where $ \Role $ denotes the receiver. Accordingly, they introduce the two possible local views of a global type for communication.
$ \LTChoi{T_1}{T_2} $ is the local view of the global type $ \GTChoi{G_1}{\Role}{G_2} $ for a choice determined by the role $ \Role $ for which this local type is created.
$ \LTPar{T_1}{T_2} $ represents the local view of the global type for parallel composition, \ie describes independent parts of the protocol for the considered role.
Again $ \LTRec{\TermV}{T} $ and $ \TermV $ are used to introduce recursion and $ \LTEnd $ denotes the successful completion of a protocol.

We add the local type $ \LTOpt{\tilde{\Role}}{T}{\Typed{\tilde{\Args}}{\tilde{\Sort}}}{T'} $.
It initialises an optional block between the roles $ \tilde{\Role} $ around the local type $ T $, where the currently considered participant $ \Role $ (called \textit{owner}) is a participant of this block, \ie $ \Role \in \tilde{\Role} $. After the optional block the local type continues with $ T' $.
Again we usually omit trailing $ \LTEnd $ clauses.

\subsection{Local Types with Optional Blocks and Sub-Sessions}

Local types describe a local and partial point of view on a global communication protocol \wrt a single participant.
To obtain the local types that correspond to global types with sub-sessions we add the three operators of local types introduced by \cite{DemangeonHonda12}:
\begin{align*}
	& \sep \LTCall{\Prot}{G}{\tilde{\Args[v]}}{\Typed{\tilde{\Args[y]}}{\tilde{\Sort}}}{\tilde{\Role}_2}{T} \sep \LTEnt{\Prot}{\Role_1}{\tilde{\Args[v]}}{\Role_2}{T} \sep \LTReq{\Prot}{\Role_1}{\tilde{\Args[v]}}{\Role_2}{T}
\end{align*}
Sub-sessions are created with the $ \mathtt{call} $ operator; internal invitations are handled by the $ \mathtt{req} $-operator for requests and the $ \mathtt{ent} $-operator to accept invitations.
A $ \mathtt{call} $ creates a sub-session for protocol $ \Prot $ of the global type $ G $, where $ \tilde{\Args[v]} $ are value arguments handed to the protocol and $ \tilde{\Role}_2 $ are the external roles that are invited to this sub-session.
In $ \LTEnt{\Prot}{\Role_1}{\tilde{\Args[v]}}{\Role_2}{T} $ the role $ \Role_2 $ refers to the initiator of the sub-session, $ \Role_1 $ denotes the role in the sub-protocol the participant accepts to take, and $ \tilde{\Args[v]} $ are the arguments of the respective protocol.
Similarly, in $ \LTReq{\Prot}{\Role_1}{\tilde{\Args[v]}}{\Role_2}{T} $ the role $ \Role_1 $ is the role of the protocol the participant is invited to take, $ \tilde{\Args[v]} $ are the arguments of the protocol, and $ \Role_2 $ refers to the participant the invitation is directed to.

\begin{definition}[Local Types with Sub-Sessions]
	\label{def:localTypesWSS}
	\begin{align*}
		T & \deffTerms \LTGet{\Role}{_{i \in \indexSet} \Set{ \LTLab{\Labe_i}{\Typed{\tilde{\Args}_i}{\tilde{\Sort}_i}}{T_i} }}
		\sep \LTSend{\Role}{_{i \in \indexSet} \Set{ \LTLab{\Labe_i}{\Typed{\tilde{\Args}_i}{\tilde{\Sort}_i}}{T_i} }} \sep \textcolor{blue}{\LTOpt{\Role}{T}{\Typed{\tilde{\Args}}{\tilde{\Sort}}}{T'}}\\
		& \sep \LTCall{\Prot}{G}{\tilde{\Args[v]}}{\Typed{\tilde{\Args[y]}}{\tilde{\Sort}}}{\tilde{\Role}_2}{T} \sep \LTEnt{\Prot}{\Role_1}{\tilde{\Args[v]}}{\Role_2}{T} \sep \LTReq{\Prot}{\Role_1}{\tilde{\Args[v]}}{\Role_2}{T}\\
		& \sep \LTChoi{T_1}{T_2} \sep \LTPar{T_1}{T_2} \sep \LTRec{\TermV}{T} \sep \TermV \sep \LTEnd
	\end{align*}
\end{definition}

\subsection{Projection}

To ensure that a global type and its local types coincide, global types are projected to their local types.
In \cite{DemangeonHonda12} projection is defined \wrt a protocol environment $ \env $ that associates protocol identifiers to their contents and is updated in $ \mathtt{let} $ constructs. For global types without sub-sessions $ \env $ remains empty.
We inherit the rules to project global types on their local types from \cite{Demangeon15,DemangeonHonda12} and add a rule to cover global types of optional blocks.

\begin{figure}[tp]
	\begin{align*}
		& \Proj{\GTDecl{\Prot}{\tilde{\Role}_1}{\Typed{\tilde{\Args[y]}}{\tilde{\Sort}}}{\tilde{\Role}_2}{G}{G'}}{\env}{\Role_p} = \ProjS{G'}{\env, \PDec{\Prot}{\tilde{\Role}_1}{\Typed{\tilde{\Args[y]}\,}{\,\tilde{\Sort}}}{\tilde{\Role}_2}{G}}{\Role_p}\\
		& \Proj{\GTCall{\Role_A}{\Prot}{\tilde{\Role}}{\tilde{\Args[y]}}{G}}{\env, \PDec{\Prot}{\tilde{\Role}'}{\Typed{\tilde{\Args[v]}\,}{\,\tilde{\Sort}}}{\tilde{\Role}''}{G_{\Prot}}}{\Role_p} \\
		& \hspace{2em} = \begin{cases}
			\begin{array}{l}
				\!\!\LTCallS{\Prot}{G_{\Prot}}{\tilde{\Args[y]}}{\Typed{\tilde{\Args[v]}}{\tilde{\Sort}}}{\tilde{\Role}''}.\\
				\!\!\big( \LTPar{\left( \prod_{i = 1..n} {\LTReqS{\Prot}{\Role'_i}{\tilde{\Args[y]}}{\Role_i}} \right)}{\Proj{G}{\env, \PDec{\Prot}{\tilde{\Role}'}{\Typed{\tilde{\Args[v]}\,}{\,\tilde{\Sort}}}{\tilde{\Role}''}{G_{\Prot}}}{\Role_p}} \big)
			\end{array} & \begin{array}{l} \text{if } \Role_p = \Role_A\\ \text{ and } \Role_A \notin \tilde{\Role} \end{array}\\
			\\
			\begin{array}{l}
				\!\!\LTCallS{\Prot}{G_{\Prot}}{\tilde{\Args[y]}}{\Typed{\tilde{\Args[v]}}{\tilde{\Sort}}}{\tilde{\Role}''}.\\
				\!\!\big( \LTPar{\left( \prod_{i = 1..n} {\LTReqS{\Prot}{\Role'_i}{\tilde{\Args[y]}}{\Role_i}} \right)}{\LTPar{\LTEntS{\Prot}{\Role'_i}{\tilde{\Args[y]}}{\Role_A}}{}}\\
				\!\!\!\ProjS{G}{\env, \PDec{\Prot}{\tilde{\Role}'}{\Typed{\tilde{\Args[v]}\,}{\,\tilde{\Sort}}}{\tilde{\Role}''}{G_{\Prot}}}{\Role_p} \big)
			\end{array} & \begin{array}{l} \text{if } \Role_p = \Role_A\\ \text{ and } \Role_A = \Role_i \end{array}\\
			\\
			\LTEnt{\Prot}{\Role'_i}{\tilde{\Args[y]}}{\Role_A}{\Proj{G}{\env, \PDec{\Prot}{\tilde{\Role}'}{\Typed{\tilde{\Args[v]}\,}{\,\tilde{\Sort}}}{\tilde{\Role}''}{G_{\Prot}}}{\Role_p}} & \begin{array}{l} \text{if } \Role_p \neq \Role_A\\ \text{ and } \Role_p = \Role_i \end{array}\\
			\\
			\ProjS{G}{\env, \PDec{\Prot}{\tilde{\Role}'}{\Typed{\tilde{\Args[v]}\,}{\,\tilde{\Sort}}}{\tilde{\Role}''}{G_{\Prot}}}{\Role_p} & \text{else}
		\end{cases}\\
		& \Proj{\GTCom{\Role_1}{\Role_2}{\sum_{i \in \indexSet} \Set{ \GTInp{\Labe_i}{\Typed{\tilde{\Args}_i}{\tilde{\Sort}_i}}{G_i} }}}{\env}{\Role_p} =
		\begin{cases}
			\LTSend{\Role_2}{_{i \in \indexSet} \Set{ \LTLab{\Labe_i}{\Typed{\tilde{\Args}_i}{\tilde{\Sort}_i}}{\Proj{G_i}{\env}{\Role_p}} }} & \text{if } \Role_p = \Role_1\\
			\LTGet{\Role_1}{_{i \in \indexSet} \Set{ \LTLab{\Labe_i}{\Typed{\tilde{\Args}_i}{\tilde{\Sort}_i}}{\Proj{G_i}{\env}{\Role_p}} }} & \text{if } \Role_p = \Role_2\\
			\ProjS{G_1}{\env}{\Role_p} & \text{else}
		\end{cases}\\
		& \textcolor{blue}{\Proj{\GTOptBl{\widetilde{\Role, \Typed{\tilde{\Args}}{\tilde{\Sort}}}}{G}{G'}}{\env}{\Role_p}}
			\textcolor{blue}{\ = \begin{cases}
				\LTOpt{\tilde{\Role}}{\ProjS{G}{\env}{\Role_p}}{\Typed{\tilde{\Args}_i}{\tilde{\Sort}_i}}{\Proj{G'}{\env}{\Role_p}} & \text{if } \Role_p = \Role_i \in \tilde{\Role} \text{ and } \tilde{\Args}_i \neq \cdot\\
				\LTPar{\LTOptS{\tilde{\Role}}{\ProjS{G}{\env}{\Role_p}}{\cdot}}{\Proj{G'}{\env}{\Role_p}} & \text{if } \Role_p = \Role_i \in \tilde{\Role} \text{ and } \tilde{\Args}_i = \cdot\\
				\ProjS{G'}{\env}{\Role_p} & \text{else}
			\end{cases}}\\
		& \Proj{\GTChoi{G_1}{\Role}{G_2}}{\env}{\Role_p} = 
		\begin{cases}
			\LTChoi{\Proj{G_1}{\env}{\Role_p}}{\Proj{G_2}{\env}{\Role_p}} & \text{if } \Role_p = \Role\\
			\ProjS{G_1}{\env}{\Role_p} & \text{else}
		\end{cases}\\
		& \Proj{\GTPar{G_1}{G_2}}{\env}{\Role_p} =
		\begin{cases}
			\Proj{G_i}{\env}{\Role_p} & \text{if } \Role_p \in G_i \text{ and } \Role_p \notin G_j \text{ and } \Set{ i, j } = \Set{ 1, 2 }\\
			\LTEnd & \text{if } \Role_p \notin G_1 \text{ and } \Role_p \notin G_2
		\end{cases}\\
		& \Proj{\GTRec{\TermV}{G}}{\env}{\Role_p} = \LTRec{\TermV}{\Proj{G}{\env}{\Role_p}} \hspace{2em}
		\ProjS{\TermV}{\env}{\Role_p} = \TermV \hspace{2em}
		\ProjS{\GTEnd}{\env}{\Role_p} = \LTEnd
	\end{align*}
	\caption{Projection Rules}
	\label{fig:projectionRules}
\end{figure}

Figure~\ref{fig:projectionRules} contains all projection rules for both considered type system. Again there are rules---the rules to deal with sub-sessions---that are superfluous in the smaller type system.

The projection rule for optional blocks has three cases.
The last case is used to skip optional blocks when they are projected to roles that do not participate. The first two cases handle projection of optional blocks to one of its participants. A local optional block is generated with the projection of $ G $ as content.

The first two cases check whether the optional block indeed computes any values for the role we project onto. They differ only in the way that the continuation of the optional block and its inner part are connected. If the projected role does not specify default values---because no such values are required---the projected continuation $ \Proj{G'}{}{\Role_p} $ can be placed in parallel to the optional block (second case).
Otherwise, the continuation has to be guarded by the optional block and, thus, by the computation of the computed values (first case).

By distinguishing between these two first cases, we follow the same line of argument as used for sub-sessions in \cite{DemangeonHonda12}, where the projected continuation of a sub-session $ \texttt{call} $ is either in parallel to the projection of the $ \texttt{call} $ itself or connected sequentially.
Intuitively, whenever the continuation depends on the outcome of the optional block it has to be connected sequentially.

The observant reader may have recognised that the global and the local types do not specify any mechanism to install the value computed in a successful optional block in its continuation.
We do not want to restrict the way in which the result of an optional block is computed, except that it has to be derived from the knowledge of the owner together with communications with the other participant of the optional block.
Hence obtaining its values in the projection function is difficult.
For the global and the local type it is however not necessary to derive the correct value but only its kinds and these kinds have to coincide with the kinds of the default values.
We leave the computation of return values and thus the data flow of the algorithm to its actual implementation after introducing the session calculus. The type system will, however, ensure that for each optional block---if no block fails---exactly one vector of return values is computed within each optional block.

\begin{example}[Projection of Unreliable Links]
	\begin{align*}
		\ProjS{\GUL{\Role[src]}{\Args[v]_{\Role[src]}}{\Role[trg]}{\Args[v]_{\Role[trg]}}}{}{\Role[src]}
		& = \LULS{\Role[src]}{\Args[v]_{\Role[src]}}{\Role[trg]} = \LTOptS{\Role[scr], \Role[trg]}{\LTSend{\Role[trg]}{\LTLabS{\Labe[c]}{\Typed{\Args[v]_{\Role[src]}}{\Sort[V]}}}}{\cdot}\\
		\ProjS{\GUL{\Role[src]}{\Args[v]_{\Role[src]}}{\Role[trg]}{\Args[v]_{\Role[trg]}}}{}{\Role[trg]}
		& = \LULT{\Role[src]}{\Args[v]_{\Role[src]}}{\Role[trg]}{\Args[v]_{\Role[trg]}} = \LTOptS{\Role[src], \Role[trg]}{\LTGet{\Role[src]}{\LTLabS{\Labe[c]}{\Typed{\Args[v]_{\Role[src]}}{\Sort[V]}}}}{\Typed{\Args[v]_{\Role[trg]}}{\Sort[V]}}
	\end{align*}
\end{example}

\noindent
When projected onto its sender, the global type for a communication over an unreliable link of Example~\ref{exa:GTunreliableLink} results in the local type $ \LULS{\Role[src]}{\Args[v]_{\Role[src]}}{\Role[trg]} $ that consists of an optional block containing a send operation towards $ \Role[trg] $.
Since the optional block for the sender does not specify any default values, the local type $ \LULS{\Role[src]}{\Args[v]_{\Role[src]}}{\Role[trg]} $ will be placed in parallel to the projection of the continuation.
The projection onto the receiver results in the local type $ \LULT{\Role[src]}{\Args[v]_{\Role[src]}}{\Role[trg]}{\Args[v]_{\Role[trg]}} $ that consists of an optional block containing a receive operation from $ \Role[src] $.
Here a default value is necessary for the case that the message is lost.
So the type $ \LULT{\Role[src]}{\Args[v]_{\Role[src]}}{\Role[trg]}{\Args[v]_{\Role[trg]}} $ has to be composed sequentially with the projection of the continuation.

\begin{example}[Local Types for Rotating Coordinators]
	\label{exa:LTRC}
	\begin{align*}
		\ProjS{\GRC{n}}{}{\Role[p]_i}
		={} & \Proj{\RestS{\GRC{n}}{\Role[p]_i}}{}{\Role[p]_i}\\
		={} & \bigodot_{j = 1..(i{-}1)} \LULT{\Role[p]_j}{\Args[v]_{j, j{-}1}}{\Role[p]_i}{\Args[v]_{i, j{-}1}}.\\
		& \left( \LTPar{\left( \prod_{j = 1..n, j \neq i} \LULS{\Role[p]_i}{\Args[v]_{i, i{-}1}}{\Role[p]_j} \right)}{\left( \bigodot_{j = (i + 1)..n} \LULT{\Role[p]_j}{\Args[v]_{j, j{-}1}}{\Role[p]_i}{\Args[v]_{i, j{-}1}} \right)} \right)
	\end{align*}
\end{example}

\noindent
The projection of Example~\ref{exa:GTRC} onto participant $ \Role[p]_i $ consists of $ i{-}1 $ sequential receptions (rounds $ 1 $ to $ i{-}1 $), then $ n{-}1 $ parallel transmissions of the current value of $ \Role[p]_i $ to the remaining participants (round $ i $), and finally $ n{-}i $ more sequential receptions (round $ i {+} 1 $ to $ n $) in parallel to round $ i $.
Due to the different cases of the projection of optional blocks, we obtain parallel optional blocks in the local type although all blocks are sequential in the global type.

\begin{example}[Rotating Coordinators with Sub-Sessions]
	\label{exa:LTRCWSS}
	\begin{align*}
		\ProjS{\GRCB{n}}{}{\Role[p]_i} ={} & \ProjS{G'}{\PDec{\Prot[R]_n}{\Role[src], \widetilde{\Role[trg]}}{\Args[v]_{\Role[src]}}{\cdot}{\GR{n}}}{\Role[p]_i}\\
		={} & \bigodot_{j = 1..(i{-}1)} \LTEntS{\Prot[R]_n}{\Role[trg]_i}{\Args[v]_{j, j{-}1}}{\Role[p]_j}.\\
		& \LTCall{\Prot[R]}{\GR{n}}{\Args[v]_{i, i{-}1}}{\Args[v]_{\Role[src]}}{\cdot}{\big( }
		\LTPar{\LTEntS{\Prot[R]_n}{\Role[src]}{\Args[v]_{i, i{-}1}}{\Role[p]_i}}{\LTReqS{\Prot[R]_n}{\Role[src]}{\Args[v]_{i, i{-}1}}{\Role[p]_i}}\\
		& \quad \LTPar{}{\LTPar{\prod_{j = 1..i} \LTReqS{\Prot[R]_n}{\Role[trg]_j}{\Args[v]_{i, i{-}1}}{\Role[p]_j}}{\prod_{j = i..(n{-}1)} \LTReqS{\Prot[R]_n}{\Role[trg]_j}{\Args[v]_{i, i{-}1}}{\Role[p]_{j + 1}}}}\\
		& \quad \LTPar{}{\big( \bigodot_{j = (i + 1)..n}\!\!\! \LTEntS{\Prot[R]_n}{\Role[trg]_{i{-}1}}{\Args[v]_{j, j{-}1}}{\Role[p]_j} \big)} \big)
	\end{align*}
\end{example}

\noindent
To project the global type $ \GRCB{n} $ of Example~\ref{exa:GTRCWSS} to the local type of participant~$ i $, we first add the information about the declaration of the protocol $ \Prot[R] $ to the environment and then project the $ n $ rounds.
The first $ i{-}1 $ rounds and the last $ n{-}i $ rounds are projected to sequentially composed acceptance notifications $ \LTEntS{\Prot[R]_n}{\Role[trg]_i}{\Args[v]_{j, j{-}1}}{\Role[p]_j} $ to participate in the sub-session for the respective round as target, \ie receiver.
The projection of round $ i $ on the coordinator participant $ i $ initialises a sub-session using the $ \mathtt{call} $-operator followed by the acceptance notion of participant~$ i $ ($ \mathtt{ent} $) to participate as $ \Role[src] $ (sender) and the invitations for all participants ($ \mathtt{req} $).
Similar to Example~\ref{exa:LTRC}, the projections of the rounds $ j \neq i $ are composed sequentially, whereas the projection of round $ i $---consisting of the parallel composition of the respective invitations and the acceptance of $ \Role[p]_i $---is composed in parallel to the projection of round $ i {+} 1 $.

\section{A Session Calculus with Optional Blocks}
\label{sec:calculus}

Global types (and the local types that are derived from them) can be considered as specifications that describe the desired properties of the system we want to analyse. The process calculus, that we use to model/implement the system, is in the case of session types usually a variant of the $ \pi $-calculus \cite{milnerParrowWalker92}.
We extend a basic session-calculus as used \eg in \cite{BettiniAtall08,BocciAtall10} with two operators.

\begin{definition}[Processes]
	\label{def:processes}
	Processes are given by
	\begin{align*}
		P & \deffTerms \PInp{\Chan}{\tilde{\Args}}{P} \sep \POut{\Chan}{\tilde{\Chan[s]}}{P}
		\sep \PGet{\Chan[k]}{\Role_1}{\Role_2}{_{i \in \indexSet} \Set{ \PLab{\Labe_i}{\tilde{\Args}_i}{P_i} }} \sep \PSend{\Chan[k]}{\Role_1}{\Role_2}{\Labe}{\tilde{\Args[v]}}{P}\\
		& \sep \textcolor{blue}{\POpt{\Role}{\tilde{\Role}}{P}{\tilde{\Args}}{\tilde{\Args[v]}}{P'}} \sep \textcolor{blue}{\POptEnd{\Role}{\tilde{\Args[v]}}}\\
		& \sep \PRes{\Args}{P} \sep \PChoi{P_1}{P_2} \sep \PPar{P_1}{P_2}
		\sep \PRec{\TermV[X]}{P} \sep \PVar{\TermV[X]} \sep \PEnd
	\end{align*}
\end{definition}

The prefixes $ \PInp{\Chan}{\tilde{\Args}}{P} $ and $ \POut{\Chan}{\tilde{\Chan[s]}}{P} $ are inherited from the $ \pi $-calculus and are used for external invitations.
Using the shared channel $ \Chan $, an external participant can be invited with the output $ \POut{\Chan}{\tilde{\Chan[s]}}{P} $ transmitting the session channels $ \tilde{\Chan[s]} $ that are necessary to participate and the external participant can accept the invitation using the input $ \PInp{\Chan}{\tilde{\Args}}{P} $.
The following two operators introduce a branching input and the corresponding transmission on the session channel $ k $ from $ \Role_1 $ to $ \Role_2 $. These two operators correspond to the local types for $ \mathtt{get} $ and $ \mathtt{send} $.
Restriction $ \PRes{\Args}{P} $ allows to generate a fresh name that is not known outside of the scope of this operator unless it was explicitly communicated.
For simplicity and following \cite{Demangeon15} we assume that only shared channels $ \Chan $ for external invitations and session channels $ \Chan[s], \Chan[k] $ for not yet initialised sub-sessions are restricted, because this covers the interesting cases\footnote{Sometimes it might be useful to allow the restriction of values, \eg for security. For this case an additional restriction operator can be introduced.} and simplifies the typing rules in Figure~\ref{fig:typingRules}.
The term $ \PChoi{P_1}{P_2} $ either behaves as $ P_1 $ or $ P_2 $.
$ \PPar{P_1}{P_2} $ defines the parallel composition of the processes $ P_1 $ and $ P_2 $.
$ \PRec{\TermV[X]}{P} $ and $ \PVar{\TermV[X]} $ are used to introduce recursion.
$ \PEnd $ denotes the completion of a process.

To implement optional blocks, we add $ \POpt{\Role}{\tilde{\Role}}{P}{\tilde{\Args}}{\tilde{\Args[v]}_d}{P'} $ and $ \POptEnd{\Role}{\tilde{\Args[v]}} $.
The former defines an optional block between the roles $ \tilde{\Role} $ around the process $ P $ with the default values $ \tilde{\Args[v]}_d $. We require that the owner $ \Role $ of this block is one of its participants $ \tilde{\Role} $, \ie $ \Role \in \tilde{\Role} $.
In the case of success, $ \POptEnd{\Role}{\tilde{\Args[v]}} $ transmits the computed values $ \tilde{\Args[v]} $ from within the optional block to the continuation $ P' $ to be substituted for the variables $ \tilde{x} $ within $ P' $.
If the optional block fails the variables $ \tilde{x} $ of $ P' $ are replaced by the default values $ \tilde{\Args[v]}_d $ instead.
Without loss of generality we assume that the roles $ \tilde{\Role} $ of optional blocks are distinct. Since optional blocks can compute only values and their defaults need to be of the same kind, $ \POptEnd{\Role}{\tilde{\Args[v]}} $ and the defaults cannot carry session names, \ie names used as session channels.
The type system ensures that the inner part $ P $ of a successful optional block reaches some $ \POptEnd{\Role}{\tilde{\Args[v]}} $ and thus transmits computed values of the expected kinds in exactly one of its parallel branches. The semantics presented below ensures that every optional block can transmit at most one vector of computed values and has to fail otherwise.
Similarly optional blocks, that use roles in their inner part $ P $ that are different from $ \tilde{r} $ and are not newly introduced as part of a sub-session within $ P $, cannot be well-typed.
Since optional blocks open a context block around their inner part that separates $ P $ from the continuation $ P' $, scopes as introduced by input prefixes and restriction that are opened within $ P $ cannot cover parts of $ P' $.
If an optional block does not compute any values and consequently the vector of default values is empty, we abbreviate $ \POpt{\Role}{\tilde{\Role}}{P}{\cdot}{\cdot}{Q} $ by $ \POptNV{\Role}{\tilde{\Role}}{P}{Q} $.

Again we usually omit trailing $ \PEnd $.
In Definition~\ref{def:processes} all occurrences of $ \Args $, $ \tilde{\Args} $, and $ \tilde{\Args}_i $ refer to bound names of the respective operators. The set $ \FreeNames{P} $ of free names of $ P $ is the set of names of $ P $ that are not bound.
A substitution $ \Set[]{ \Subst{\Args[y]_1}{\Args_1}, \ldots, \Subst{\Args[y]_n}{\Args_n} } = \Set[]{ \Subst{\tilde{\Args[y]}}{\tilde{\Args}} } $ is a finite mapping from names to names, where the $ \tilde{\Args} $ are pairwise distinct. The application of a substitution on a term $ P\!\Set[]{ \Subst{\tilde{\Args[y]}}{\tilde{\Args}} } $ is defined as the result of simultaneously replacing all free occurrences of $ \Args_i $ by $ \Args[y]_i $, possibly applying alpha-conversion to avoid capture or name clashes. For all names $ n \notin \tilde{x} $ the substitution behaves as the identity mapping.
We use '$ . $' (as \eg in $ \PInp{\Chan}{\tilde{\Args}}{P} $) to denote sequential composition. In all operators the part before '$ . $' guards the continuation after the '$ . $', \ie the continuation cannot reduce before the guard was reduced.
A subprocess of a process is \emph{guarded} if it occurs after such a guard, \ie is the continuation (or part of the continuation) of a guard. Guarded subprocesses can be \emph{unguarded} by steps that remove the guard.

\begin{example}[Implementation of Unreliable Links]
	\label{exa:PUL}
	\begin{align*}
		\PULS{\Role[p]_1}{\Args[v]_1}{\Role[p]_2} &= \POptNVS{\Role[p]_1}{\Role[p]_1, \Role[p]_2}{\PSend{\Chan[s]}{\Role[p]_1}{\Role[p]_2}{\Labe[c]}{\Args[v]_1}{\POptEnd{\Role[p]_1}{\cdot}}}\\
		\PULT{\Role[p]_1}{\Role[p]_2}{\Args[v]_2} &= \POptS{\Role[p]_2}{\Role[p]_1, \Role[p]_2}{\PGet{\Chan[s]}{\Role[p]_1}{\Role[p]_2}{\PLab{\Labe[c]}{\Args}{\POptEnd{\Role[p]_2}{\Args}}}}{\Args[y]}{\Args[v]_2}
	\end{align*}
\end{example}

\noindent
$ \PULS{\Role[p]_1}{\Args[v]_1}{\Role[p]_2} $ is the implementation of a single send action on an unreliable link and $ \PULT{\Role[p]_1}{\Role[p]_2}{\Args[v]_2} $ the corresponding receive action. Here a continuation of the sender cannot gain any information from the modelled communication; not even whether it succeeded, whereas a continuation of the receiver in the case of success obtains the transmitted value $ \Args[v]_1 $ and else its own default value $ \Args[v]_2 $.

To implement the rotating coordinator algorithm of Example~\ref{exa:RCAlgorithm}, we replace the check '\texttt{if alive}$ (p_r) $' by an optional block for communications.
In the first $ i{-}1 $ and the last $ n{-}i $ rounds, participant~$ i $ either receives a value or (if the respective communication fails) uses as default value its value of the round before.
In round~$ i $, participant~$ i $ transmits its current value to each other participant.

\begin{example}[Rotating Coordinator Implementation]
	\label{exa:PRC}
	\begin{align*}
		\PRC{n} ={} & \prod_{i = 1..n} \left( \PPar{\POutS{\Chan_i}{\Chan[s]}}{\PInp{\Chan_i}{\Chan[s]}{\PIN{i}{n}}} \right)\\
		\PIN{i}{n} ={} & ( \bigodot_{j = 1..(i{-}1)} \POptS{\Role[p]_i}{\Role[p]_i, \Role[p]_j}{P_{j \to i, \downarrow}}{\Args[v]_{i, j}}{\Args[v]_{i, j{-}1}} ).\\
		& ( \prod_{j = 1..n, j \neq i} \POptNVS{\Role[p]_i}{\Role[p]_i, \Role[p]_j}{P_{i \to j, \uparrow}}
		\PPar{}{( \bigodot_{j = (i + 1)..n} \POptS{\Role[p]_i}{\Role[p]_i, \Role[p]_j}{P_{j \to i, \downarrow}}{\Args[v]_{i, j}}{\Args[v]_{i, j{-}1}} ))}\\
		P_{j \to i, \downarrow} ={} & \PGet{\Chan[s]}{\Role[p]_j}{\Role[p]_i}{\PLab{\Labe[c]}{\Args[v]_{j, j{-}1}}{\POptEnd{\Role[p]_i}{\Args[v]_{j, j{-}1}}}}\\
		P_{i \to j, \uparrow} ={} & \PSend{\Chan[s]}{\Role[p]_i}{\Role[p]_j}{\Labe[c]}{\Args[v]_{i, i{-}1}}{\POptEnd{\Role[p]_i}{\cdot}}
	\end{align*}
\end{example}

\noindent
The overall system $ \PRC{n} $ consists of the parallel composition of the $ n $ participants. The channel $ \Chan_i $ is used to distribute the initial session channel. Since these communications on $ \tilde{\Chan} $ are used to initialise the system and not to model the algorithm, we assume that they are reliable.

The term $ \PIN{i}{n} $ models participant $ i $. Each participant first optionally receives $ i{-}1 $ times a value from another participant. Therefore, an optional block surrounds the term $ P_{j \to i, \downarrow} $. If the optional block succeeds, then $ P_{j \to i, \downarrow} $ receives the value $ \Args[v]_{j, j{-}1} $ from the current leader of the round and finishes its optional block with the transmission of the computed value $ \POptEnd{\Role[p]_i}{\Args[v]_{j, j{-}1}} $.
In this case, the value $ \Args[v]_{i, j} $ is instantiated with the received value $ \Args[v]_{j, j{-}1} $. If the communication with the current leader $ \Role[p]_j $ fails, then the value $ \Args[v]_{i, j} $ is instantiated instead with the default value $ \Args[v]_{i, j{-}1} $, \ie the last value of participant~$ i $.
In these first $ i{-}1 $ (and also the last $ n{-}i $) rounds, participant~$ i $ is a receiver and consists of exactly one optional block per round.
The last $ n{-}i $ rounds of participant~$ i $ are similar.

In round~$ i $, participant~$ i $ is the sender and consists of $ n{-}1 $ optional blocks (second line of the definition of $ \PIN{i}{n} $), exactly one such block with each other participant.
Here these $ n{-}1 $ optional blocks do not need a default value and accordingly do not compute a value. Note that the continuation of all these $ n{-}1 $ blocks of the coordinator is $ \PEnd $.
For each other participant~$ j $ these blocks surround the term $ P_{i \to j, \uparrow} $ in which $ \Role[p]_i $ transmits its current value $ \Args[v]_{i, i{-}1} $ to $ \Role[p]_j $.
Note that, in round~$ i $, participant~$ i $ does not need to update its own value, since it gains no new information. Therefore, we assumed $ \Args[v]_{i, i} := \Args[v]_{i, i{-}1} $ for all $ i < n $ in the assumed vectors of values.

In Example~\ref{exa:PRC}, the $ n{-}1 $ optional blocks of round~$ i $ are pairwise in parallel and parallel to the optional block of round~$ i {+} 1 $, which guards the block of round~$ i {+} 2 $ and so forth.
This matches an intuitive understanding of this process in terms of asynchronous communications. The sending operations emit the respective value as soon as they are unguarded, but they syntactically remain part of the term until the (possibly later) reception of the respective message consumes it.
In fact, the presented session calculus is synchronous but the examples---including the examples with sub-sessions presented later---can be considered as distributed asynchronous processes, because they use neither choice nor output continuations different from $ \PEnd $ \cite{hondaTokoro91,boudol92,palamidessi03} and because optional blocks of senders have no default values and each send action matches exactly one receive action and vice versa \cite{fossacs12_pi}.

\subsection{A Session Calculus with Optional Blocks and Sub-Sessions}

Again we extend the session calculus, in order to obtain a mechanism to express modularity.
\cite{DemangeonHonda12} introduces three operators for this purpose.
\begin{align*}
	\sep \PDecl{\Chan[k]}{\Chan[s]}{\tilde{\Args[v]}}{\tilde{\Chan}}{\tilde{\Role}}{P}
	\sep \PEnt{\Chan[s]}{\Role_1}{\Role_2}{\Role_3}{\Args}{P} \sep \PReq{\Chan[s]}{\Role_1}{\Role_2}{\Role_3}{\Chan[k]}{P}
\end{align*}
$ \PDecl{\Chan[k]}{\Chan[s]}{\tilde{\Args[v]}}{\tilde{\Chan}}{\tilde{\Role}}{P} $ allows a process to create a sub-session $ \Chan[k] $, where $ \Chan[s] $ is the parent session, $ \tilde{\Args[v]} $ are arguments, $ \tilde{\Role} $ are external participants, and $ \tilde{\Chan} $ are the channels for external invitations.
Internal invitations are handled by $ \PEnt{\Chan[s]}{\Role_1}{\Role_2}{\Role_3}{\Args}{P} $ and $ \PReq{\Chan[s]}{\Role_1}{\Role_2}{\Role_3}{\Chan[k]}{P} $, where $ \Role_1 $ invites $ \Role_2 $ to play role $ \Role_3 $ in a sub-session.
Here $ \Args $ is a name that is bounded in $ P $ within the operator $ \PEnt{\Chan[s]}{\Role_1}{\Role_2}{\Role_3}{\Args}{P} $. All other names of these three operators are free. Again the '$ . $' is used to refer to sequential composition, \ie in all three operators the respective continuation $ P $ is guarded.

\begin{definition}[Processes with Sub-Sessions]
	\label{def:processesWSS}
	\begin{align*}
		P & \deffTerms \PInp{\Chan}{\tilde{\Args}}{P} \sep \POut{\Chan}{\tilde{\Chan[s]}}{P}
		\sep \PGet{\Chan[k]}{\Role_1}{\Role_2}{_{i \in \indexSet} \Set{ \PLab{\Labe_i}{\tilde{\Args}_i}{P_i} }} \sep \PSend{\Chan[k]}{\Role_1}{\Role_2}{\Labe}{\tilde{\Args[v]}}{P}\\
		& \sep \textcolor{blue}{\POpt{\Role}{\tilde{\Role}}{P}{\tilde{\Args}}{\tilde{\Args[v]}}{P'}} \sep \textcolor{blue}{\POptEnd{\Role}{\tilde{\Args[v]}}}\\
		& \sep \PDecl{\Chan[k]}{\Chan[s]}{\tilde{\Args[v]}}{\tilde{\Chan}}{\tilde{\Role}}{P} \sep \PEnt{\Chan[s]}{\Role_1}{\Role_2}{\Role_3}{\Args}{P} \sep \PReq{\Chan[s]}{\Role_1}{\Role_2}{\Role_3}{\Chan[k]}{P}\\
		& \sep \PRes{\Args}{P} \sep \PChoi{P_1}{P_2} \sep \PPar{P_1}{P_2}
		\sep \PRec{\TermV[X]}{P} \sep \PVar{\TermV[X]} \sep \PEnd
	\end{align*}
\end{definition}

Similar to Example~\ref{exa:PRC}, we present an implementation of the rotating coordinators with a sub-session for each round.

\begin{example}[Rotating Coordinators with Sub-Sessions]
	\label{exa:PRCWSS}
	\begin{align*}
		\PRCB{n} ={} & \prod_{i = 1..n} \left( \PPar{\POutS{\Chan_i}{\Chan[s]}}{\PInp{\Chan_i}{\Chan[s]}{\PINB{i}{n}}} \right)\\
		\PINB{i}{n} ={} & \bigodot_{j = 1..(i{-}1)} \PEnt{\Chan[s]}{\Role[p]_j}{\Role[p]_i}{\Role[trg]_i}{\Args}{P'_{\downarrow}\!\left( \Role[trg]_i, i, j \right)}.\\
		& \quad \big( \PRes{\Chan[k]}{\left( \PDecl{\Chan[k]}{\Chan[s]}{\Args[v]_{i, i{-}1}}{\cdot}{\cdot}{P'_{\uparrow}\!\left( i \right)} \right)}
		\PPar{}{\big( \bigodot_{j = (i + 1)..n} \PEnt{\Chan[s]}{\Role[p]_j}{\Role[p]_i}{\Role[trg]_{i{-}1}}{\Args}{P'_{\downarrow}\!\left( \Role[trg]_{i{-}1}, i, j \right)} \big)} \big)\\
		P'_{\downarrow}\!\left( \Role, i, j \right) ={} & \POptS{\Role}{\Role, \Role[src]}{\PGet{\Args}{\Role[src]}{\Role}{\PLab{\Labe[c]}{\Args[y]}{\POptEnd{\Role}{\Args[y]}}}}{\Args[v]_{i, j}}{\Args[v]_{i, j{-}1}}\\
		P'_{\uparrow}\!\left( i \right) ={} & \PPar{\PEnt{\Chan[s]}{\Role[p]_i}{\Role[p]_i}{\Role[src]}{\Args}{P'_{\Prot[R]}\!\left( i, \Args \right)}}{\PReqS{\Chan[s]}{\Role[p]_i}{\Role[p]_i}{\Role[src]}{\Chan[k]}}
		\PPar{}{\PPar{\prod_{j = 1..i} \PReqS{\Chan[k]}{\Role[p]_i}{\Role[p]_j}{\Role[trg]_j}{\Chan[k]}}{\prod_{j = i..(n{-}1)} \PReqS{\Chan[s]}{\Role[p]_i}{\Role[p]_{j + 1}}{\Role[trg]_j}{\Chan[k]}}}\\
		P'_{\Prot[R]}\!\left( i, \Args \right) ={} & \prod_{j = 1..(n{-}1)} \!\!\!\!\!\POptNVS{\Role[scr]}{\Role[src], \Role[trg]_j}{\PSend{\Args}{\Role[src]}{\Role[trg]_j}{\Labe[c]}{\Args[v]_{i, i{-}1}}{\POptEnd{\Role[src]}{\cdot}}}
	\end{align*}
\end{example}

\noindent
$ \PRCB{n} $ is the implementation of the rotating coordinator algorithm using a sub-session for each round in the global type. Accordingly the differences between the Examples~\ref{exa:PRC} and \ref{exa:PRCWSS} are due to the initialisation of the sub-sessions.
In each round in which participant~$ i $ is not the coordinator, participant~$ i $ first accepts the invitation of the current coordinator to participate as receiver in the sub-session of the round and then receives and updates its value similar to Example~\ref{exa:PRC}.
If participant~$ i $ is itself the coordinator, then it initialises a new session~$ k $ for the round and then, in parallel, invites all processes (including itself) to participate and accepts to participate in this sub-session as sender followed by the $ n{-}1 $ optional blocks to transmit its value similar to Example~\ref{exa:PRC}.

\subsection{Reduction Semantics}

We identify processes up structural congruence, where structural congruence is defined by the rules:
\begin{center}
\begin{tabular}{c}
	$ \PPar{P}{\PEnd} \equiv P $ \quad\quad
	$ \PPar{P_1}{P_2} \equiv \PPar{P_2}{P_1} $ \quad\quad
	$ \PPar{P_1}{\left( \PPar{P_2}{P_3} \right)} \equiv \PPar{\left( \PPar{P_1}{P_2} \right)}{P_3} $ \vspace{0.3em}\\
	$ \PRec{\TermV[X]}{P} \equiv P\!\Set[]{ \Subst{\PRec{\TermV[X]}{P}}{\TermV[X]} } $ \quad\quad
	$ \PChoi{P_1}{P_2} \equiv \PChoi{P_2}{P_1} $ \vspace{0.3em}\\
	$ \PRes{\Args}{\PEnd} \equiv \PEnd $ \quad\quad
	$ \PRes{\Args}{\PRes{\Args[y]}{P}} \equiv \PRes{\Args[y]}{\PRes{\Args}{P}} $ \quad\quad
	$ \PRes{\Args}{\left( \PPar{P_1}{P_2} \right)} \equiv \PPar{P_1}{\PRes{\Args}{P_2}} $ \; if $ \Args \notin \FreeNames{P_1} $
\end{tabular}
\end{center}

In \cite{DemangeonHonda12} the semantics is given by a set of reduction rules that are defined \wrt evaluation contexts.
We extend them with optional blocks.

\begin{definition}
	\label{def:evaluationContexts}
	$ \EC \deffTerms \ECHole \sep \ECPar{P}{\EC} \sep \ECRes{\Args}{\EC} \sep \POpt{\Role}{\tilde{\Role}}{\EC}{\tilde{\Args}}{\tilde{\Args[v]}}{P'} $
\end{definition}

\noindent
Intuitively an evaluation context is a term with a single hole that is not guarded.
Additionally, we introduce two variants of evaluation contexts and a context for blocks that are used to simplify the presentation of our new rules.

\begin{definition}
	$ \ECR \deffTerms \ECHole \sep \ECPar{P}{\ECR} \sep \POpt{\Role}{\tilde{\Role}}{\ECR}{\tilde{\Args}}{\tilde{\Args[v]}}{P'} $\\
	$ \ECO \deffTerms \POpt{\Role}{\tilde{\Role}}{\ECP}{\tilde{\Args}}{\tilde{\Args[v]}}{P'} $,  where $ \ECP \deffTerms \ECHole \sep \ECPar{P}{\ECP} $
\end{definition}

\noindent
Accordingly, a $ \ECO $-context consists of exactly one optional block that contains an $ \ECP $-context, \ie a single hole that can occur within the parallel composition of arbitrary processes.
We define the function $ \RolesOf{\POpt{\Role}{\tilde{\Role}}{\ECP}{\tilde{\Args}}{\tilde{\Args[v]}}{P'}} \deff \tilde{\Role} $, to return the roles of the optional block of a $ \ECO $-context, and the function $ \OwnerOf{\POpt{\Role}{\tilde{\Role}}{\ECP}{\tilde{\Args}}{\tilde{\Args[v]}}{P'}} \deff \Role $, to return its owner.

Figure~\ref{fig:reductionRules} presents all reduction rules for the two introduced versions of session calculi: both come with optional blocks, but the first one without sub-sessions, while the second one including sub-sessions.
For the simpler session calculus we need the Rules~$ (\mathsf{comS}) $, $ (\mathsf{choice}) $, and $ (\mathsf{comC}) $ to deal with the standard operators for communication, choice, and external invitations to sessions, respectively.
Since evaluation contexts $ \EC $ contain optional blocks, these rules allow for steps within a single optional block.
To capture optional blocks for this first session calculus, we introduce the new Rules~$ (\mathsf{fail}) $, $ (\mathsf{succ}) $, $ (\mathsf{cSO}) $, and $ (\mathsf{cCO}) $.
For the second session calculus \cite{DemangeonHonda12} add the Rules~$ (\mathsf{subs}) $ and $ (\mathsf{join}) $ to deal with sub-sessions and we introduce the new Rule~$ (\mathsf{jO}) $ to capture sub-sessions within optional blocks.

Here $ \dot{=} $ means that the two compared vectors contain the same roles but not necessarily in the same order, \ie $ \dot{=} $ checks whether the set of participants of two optional blocks are the same.

\begin{figure*}[tp]
	\[ \begin{array}{c}
		(\mathsf{comS}) \dfrac{j \in \indexSet}{\AEC{\PPar{\PSend{\Chan[k]}{\Role_1}{\Role_2}{\Labe_j}{\tilde{\Args[v]}}{P}}{\PGet{\Chan[k]}{\Role_1}{\Role_2}{_{i \in \indexSet} \Set{ \PLab{\Labe_i}{\tilde{\Args}_i}{P_i} }}}} \longmapsto \AEC{\PPar{P}{P_j\!\Set[]{ \Subst{\tilde{\Args[v]}}{\tilde{\Args}_j} }}}} \hspace{2em}
		(\mathsf{choice}) \dfrac{P_i \longmapsto P_i'}{\AEC{\PChoi{P_1}{P_2}} \longmapsto \AEC{P_i'}} \vspace*{0.75em}\\
		(\mathsf{comC}) \dfrac{}{\AEC{\PPar{\POut{\Chan}{\tilde{\Chan[s]}}{P_1}}{\PInp{\Chan}{\tilde{\Args}}{P_2}}} \longmapsto \AEC{\PPar{P_1}{P_2\!\Set[]{ \Subst{\tilde{\Chan[s]}}{\tilde{\Args}} }}}} \vspace*{0.75em}\\
		(\mathsf{subs}) \dfrac{\tilde{\Role} = \left( \Role_1, \ldots, \Role_n \right) \quad \tilde{\Chan} = \left( \Chan_1, \ldots, \Chan_n \right)}{\AEC{\PDecl{\Chan[k]}{\Chan[s]}{\tilde{\Args[v]}}{\tilde{\Chan}}{\tilde{\Role}}{P}} \longmapsto \AEC{\PPar{P}{\PPar{\POutS{\Chan_1}{\Chan[s]}}{\PPar{\ldots}{\POutS{\Chan_n}{\Chan[s]}}}}}} \vspace*{0.75em}\\
		(\mathsf{join}) \dfrac{}{\AEC{\PPar{\PReq{\Chan[s]}{\Role_1}{\Role_2}{\Role_3}{\Chan[k]}{P_1}}{\PEnt{\Chan[s]}{\Role_1}{\Role_2}{\Role_3}{\Args}{P_2}}} \longmapsto \AEC{\PPar{P_1}{P_2\!\Set[]{ \Subst{\Chan[k]}{\Args} }}}} \vspace*{0.75em}\\
		\textcolor{blue}{(\mathsf{fail}) \dfrac{}{\AEC{\POpt{\Role}{\tilde{\Role}}{P}{\tilde{\Args}}{\tilde{\Args[v]}_d}{P'}} \longmapsto \AEC{P'\!\Set[]{ \Subst{\tilde{\Args[v]}_d}{\tilde{\Args}} }}}} \hspace{2em}
		\textcolor{blue}{(\mathsf{succ}) \dfrac{}{\AEC{\POpt{\Role}{\tilde{\Role}}{\POptEnd{\Role}{\tilde{\Args[v]}}}{\tilde{\Args}}{\tilde{\Args[v]}_d}{P}} \longmapsto \AEC{P\!\Set[]{ \Subst{\tilde{\Args[v]}}{\tilde{\Args}} }}}} \vspace*{0.75em}\\
		\textcolor{blue}{(\mathsf{cSO}) \dfrac{j \in \indexSet \quad \RolesOf{\ECO} \dot{=} \RolesOf{\ECO[C']} \quad \OwnerOf{\ECO} = \Role_1 \quad \OwnerOf{\ECO[C']} = \Role_2}{\begin{array}{c} \AEC{\PPar{\AECR{\AECO{\PSend{\Chan[k]}{\Role_1}{\Role_2}{\Labe_j}{\tilde{\Args[v]}}{P}}}}{\AECR[E']{\AECO[C']{\PGet{\Chan[k]}{\Role_1}{\Role_2}{_{i \in \indexSet} \Set{ \PLab{\Labe_i}{\tilde{\Args}_i}{P_i} }}}}}}\\ \longmapsto \AEC{\PPar{\AECR{\AECO{P}}}{\AECR[E']{\AECO[C']{P_j\!\Set[]{ \Subst{\tilde{\Args[v]}}{\tilde{\Args}_j} }}}}} \end{array}}} \vspace*{0.75em}\\
		\textcolor{blue}{(\mathsf{cCO}) \dfrac{\RolesOf{\ECO} \dot{=} \RolesOf{\ECO[C']}}{\AEC{\PPar{\AECR{\AECO{\POut{\Chan}{\tilde{\Chan[s]}}{P_1}}}}{\AECR[E']{\AECO[C']{\PInp{\Chan}{\tilde{\Args}}{P_2}}}}} \longmapsto \AEC{\PPar{\AECR{\AECO{P_1}}}{\AECR[E']{\AECO[C']{P_2\!\Set[]{ \Subst{\tilde{\Chan[s]}}{\tilde{\Args}} }}}}}}} \vspace*{0.75em}\\
		\textcolor{blue}{(\mathsf{jO}) \dfrac{\RolesOf{\ECO} \dot{=} \RolesOf{\ECO[C']} \quad \OwnerOf{\ECO} = \Role_1 \quad \OwnerOf{\ECO[C']} = \Role_2}{\AEC{\PPar{\AECR{\AECO{\PReq{\Chan[s]}{\Role_1}{\Role_2}{\Role_3}{\Chan[k]}{P_1}}}}{\AECR[E']{\AECO[C']{\PEnt{\Chan[s]}{\Role_1}{\Role_2}{\Role_3}{\Args}{P_2}}}}} \longmapsto \AEC{\PPar{\AECR{\AECO{P_1}}}{\AECR[E']{\AECO[C']{P_2\!\Set[]{ \Subst{\Chan[k]}{\Args} }}}}}}}
	\end{array} \]
	\caption{Reduction Rules}
	\label{fig:reductionRules}
\end{figure*}

The Rules~(\textsf{comS}), (\textsf{comC}), and (\textsf{join}) represent three different kinds of communication. They define communications within a session, external session invitations, and internal session invitations, respectively.
In all three cases communication is an axiom that requires the occurrence of two matching counterparts of communication primitives (of the respective kind) to be placed in parallel within an evaluation context. As a consequence of the respective communication step the continuations of the communication primitives are unguarded and the values transmitted in the communication step are instantiated (substituted) in the receiver continuation.
(\textsf{choice}) allows the reduction of either side of a choice, if the respective side can perform a step.
(\textsf{subs}) initialises a sub-session by transmitting external invitations.

The two rules (\textsf{succ}) and (\textsf{fail}) describe the main features of optional blocks, namely how they succeed (\textsf{succ}) and what happens if they fail (\textsf{fail}).
(\textsf{fail}) aborts an optional block, \ie removes it and unguards its continuation instantiated with the default values.
This rule can be applied whenever an optional block is unguarded, \ie there is no way to ensure that an optional block does indeed perform any step (or terminates after successfully doing some of its steps).
In combination with (\textsf{succ}), it introduces the non-determinism that is used to express the random nature in that system errors may occur.
If we use optional blocks the cover a single transmission over an unreliable link, each use of the Rule~(\textsf{fail}) refers to a single link failure.

(\textsf{succ}) is the counterpart of (\textsf{fail}); it removes a successfully completed optional block and unguards its continuation instantiated with the computed results.
To successfully complete an optional block, we require that its content has to reduce to a single occurrence of $ \POptEnd{\Role}{\tilde{\Args[v]}} $, where $ \Role $ is the owner of the block and accordingly one of the participating roles.
Since (\textsf{succ}) and (\textsf{fail}) are the only ways to reduce $ \POptEnd{\Role}{\tilde{\Args[v]}} $, this ensures that a successful optional block can compute only a single vector of return values. Other parallel branches in the inner part of an optional block have to terminate with $ \PEnd $.
This ensures that no confusion can arise from the computation of different values in different parallel branches. Since at the process-level an optional block covers only a single participant, this limitation does not restrict the expressive power of the considered processes.
If the content of an optional block cannot reduce to $ \POptEnd{\Role}{\tilde{\Args[v]}} $ the optional block is doomed to fail.

The remaining rules describe how different optional blocks can interact.
Here, we need to ensure that communication from within an optional block ensures isolation, \ie that such communications are restricted to the encapsulated parts of other optional blocks.
The $ \ECR $-contexts allow for two such blocks to be nested within different optional blocks.
The exact definition of such a communication rule depends on the semantics of the considered calculi and their communication rules. Here there are the Rules (\textsf{cSO}), (\textsf{cCO}), and (\textsf{jO}). They are the counterparts of (\textsf{comS}), (\textsf{comC}), and (\textsf{join}) and accordingly allow for the respective kind of communication step. As an example consider Rule (\textsf{cSO}). In comparison to (\textsf{comS}), Rule~(\textsf{cSO}) ensures that communications involving the content of an optional block are limited to two such contents of optional blocks with the same participants.
This ensures that optional blocks describe the local view-points of the encapsulated protocol.

Optional blocks do not allow for scope extrusion of restricted names, \ie a name restricted within an optional block cannot be transmitted nor can an optional block successfully be terminated if the computed result values are subject to a restriction from the content of the optional block.
Also values that are communicated between optional blocks can be used only by the continuation of the optional block and only if the optional block was completed successfully.
If an optional block fails while another process is still waiting for a communication within its optional block, the latter optional block is doomed to fail.
Note that the semantics of optional blocks is inherently synchronous, since an optional sending operation can realise the failing of its matching receiver (\eg by $ \POpt{\Role_1}{\Role_2}{\ldots\POptEnd{\Role_1}{\Args[ok]}}{\Args}{\Args[fail]}{P} $).

Let $ \longmapsto^+ $ denote the transitive closure of $ \longmapsto $ and let $ \longmapsto^* $ denote the reflexive and transitive closure of $ \longmapsto $, respectively.

\subsubsection{Reaching Consensus Despite Crash Failures}

To illustrate the semantics of optional blocks and our implementation of the rotating coordinator algorithm (Example~\ref{exa:PRC}), we present one execution for the case of $ n = 3 $. Assume that $ \Args[v]_{1, 0} = 0 $, $ \Args[v]_{2, 0} = 1 = \Args[v]_{3, 0} $, and, since the coordinator does not update its value, $ \Args[v]_{i, i} = \Args[v]_{i, i-1} $ for all $ 1 \leq i \leq 3 $. Moreover, assume that participant~$ 1 $ crashes in round~$ 1 $ after delivering its value to participant~$ 3 $ but before participant~$ 2 $ obtains the value.
\begin{align*}
	\PRC{3} &= \PPar{\left( \PPar{\POutS{\Chan_1}{\Chan[s]}}{\PInp{\Chan_1}{\Chan[s]}{\PIN{1}{3}}} \right)}{\PPar{\left( \PPar{\POutS{\Chan_2}{\Chan[s]}}{\PInp{\Chan_2}{\Chan[s]}{\PIN{2}{3}}} \right)}{\left( \PPar{\POutS{\Chan_3}{\Chan[s]}}{\PInp{\Chan_3}{\Chan[s]}{\PIN{3}{3}}} \right)}}\\
	&\longmapsto^3 \PPar{\PIN{1}{3}}{\PPar{\PIN{2}{3}}{\PIN{3}{3}}}
\end{align*}
where
\begin{align*}
	\PIN{i}{n} ={} & ( \bigodot_{j = 1..(i{-}1)} \POptS{\Role[p]_i}{\Role[p]_i, \Role[p]_j}{P_{j \to i, \downarrow}}{\Args[v]_{i, j}}{\Args[v]_{i, j{-}1}} ).\\
	& ( \prod_{j = 1..n, j \neq i} \POptNVS{\Role[p]_i}{\Role[p]_i, \Role[p]_j}{P_{i \to j, \uparrow}}
	\PPar{}{( \bigodot_{j = (i + 1)..n} \POptS{\Role[p]_i}{\Role[p]_i, \Role[p]_j}{P_{j \to i, \downarrow}}{\Args[v]_{i, j}}{\Args[v]_{i, j{-}1}} ))}\\
	P_{j \to i, \downarrow} ={} & \PGet{\Chan[s]}{\Role[p]_j}{\Role[p]_i}{\PLab{\Labe[c]}{\Args[v]_{j, j{-}1}}{\POptEnd{\Role[p]_i}{\Args[v]_{j, j{-}1}}}}\\
	P_{i \to j, \uparrow} ={} & \PSend{\Chan[s]}{\Role[p]_i}{\Role[p]_j}{\Labe[c]}{\Args[v]_{i, i{-}1}}{\POptEnd{\Role[p]_i}{\cdot}}
\end{align*}
The first three steps initialise the session using three times Rule~$ (\mathsf{comC}) $. We assume here that these steps belong to the environment and do never fail. If one of these steps fails, the respective participant does not know the global session channel and thus cannot participate in the algorithm, \ie is crashed from the beginning.
After the initialisation all participants consist of sequential and parallel optional blocks. Each of these optional blocks can fail any time. The coordinator of the first round $ \Role[p]_1 $ can transmit its value to the other two participants.
Since the respective two blocks of the sender and each of the matching blocks of the two receivers are all in parallel, both communications can happen. Intuitively, by unguarding a sending operation, we can consider the message as already being emitted by the sender. It remains syntactically present until the receiver captures it to complete the transmission. Accordingly, $ \Role[p]_1 $ directly moves to round~$ 2 $.

\begin{align*}
	\longmapsto^3 \PPar{\PIN{1}{3}'}{\PPar{\PIN{2}{3}}{\PIN{3}{3}'}}
\end{align*}
where
\begin{align*}
	\PIN{1}{3}' &= \PPar{\POptNVS{\Role[p]_1}{\Role[p]_1, \Role[p]_2}{P_{1 \to 2, \uparrow}}}{\POpt{\Role[p]_1}{\Role[p]_1, \Role[p]_2}{P_{2 \to 1, \downarrow}}{\Args[v]_{1, 2}}{0}{\POptS{\Role[p]_1}{\Role[p]_1, \Role[p]_3}{P_{3 \to 1, \downarrow}}{\Args[v]_{1, 3}}{\Args[v]_{1, 2}}}}\\
	\PIN{3}{3}' &= \POpt{\Role[p]_3}{\Role[p]_3, \Role[p]_2}{P_{2 \to 3, \downarrow}}{\Args[v]_{3, 2}}{0}{\left( \PPar{\POptNVS{\Role[p]_3}{\Role[p]_3, \Role[p]_1}{P_{3 \to 1, \uparrow}}}{\POptNVS{\Role[p]_3}{\Role[p]_3, \Role[p]_2}{P_{3 \to 2, \uparrow}}} \right)}
\end{align*}
Next $ \Role[p]_3 $ receives the value~$ 0 $ from $ \Role[p]_1 $ using Rule~$ (\mathsf{cSO}) $. Then both optional blocks that participate in this communication are completed successfully using Rule~$ (\mathsf{succ}) $ such that $ \Role[p]_3 $ updates its current value to the received $ 0 $. This completes the first round for $ \Role[p]_3 $ and it moves to round~$ 2 $.
The remainder of $ \Role[p]_1 $, \ie $ \PIN{1}{3}' = \PPar{\POptNVS{\Role[p]_1}{\Role[p]_1, \Role[p]_2}{\ldots}}{\POpt{\Role[p]_1}{\Role[p]_1, \Role[p]_2}{\ldots}{\Args[v]_{1, 2}}{0}{\ldots}} $, consists of the remaining optional block towards $ \Role[p]_2 $ in parallel with the optional block for round~$ 2 $ of $ \Role[p]_1 $ that guards the optional block for round~$ 3 $.
$ \PIN{3}{3}' = \POpt{\Role[p]_3}{\Role[p]_3, \Role[p]_2}{\ldots}{\Args[v]_{3,2}}{0}{\ldots} $ is guarded by its optional block to receive in round~$ 2 $, where its current value $ \Args[v]_{3, 1} $ was instantiated with $ 0 $ received from $ \Role[p]_1 $.

\begin{align*}
	\longmapsto^5 \PPar{\PIN{2}{3}'}{\PIN{3}{3}'}
\end{align*}
where
\begin{align*}
	\PIN{2}{3}' = \PPar{\POptNVS{\Role[p]_2}{\Role[p]_2, \Role[p]_3}{P_{2 \to 3, \uparrow}}}{\POptS{\Role[p]_2}{\Role[p]_2, \Role[p]_3}{P_{3 \to 2, \downarrow}}{\Args[v]_{2, 3}}{1}}
\end{align*}
Now $ \Role[p]_1 $ crashes, \ie its three remaining optional blocks are removed using Rule~$ (\mathsf{fail}) $ three times. Since $ \Role[p]_2 $ cannot receive a value from $ \Role[p]_1 $ after that we also remove its optional block of round~$ 1 $ using the default value $ \Args[v]_{2, 0} = 1 $ to instantiate $ \Args[v]_{2, 1} = \Args[v]_{2, 2} $.
With that $ \Role[p]_2 $ moves to round~$ 2 $, unguards its two optional blocks to transmit its current value, \ie $ \Args[v]_{2,0} = \Args[v]_{2, 1} = 1 $, and, by unguarding also its optional block of round~$ 3 $, directly moves forward to round~$ 3 $.
Since $ \Role[p]_1 $ is crashed, the first block of $ \Role[p]_2 $ in round~$ 2 $ towards $ \Role[p]_1 $ is doomed to fail causing another application of Rule~$ (\mathsf{fail}) $.
As result we obtain $ \PIN{2}{3}' = \PPar{\POptNVS{\Role[p]_2}{\Role[p]_2, \Role[p]_3}{\ldots}}{\POptS{\Role[p]_2}{\Role[p]_2, \Role[p]_3}{\ldots}{\Args[v]_{2,3}}{1}} $.

\begin{align*}
	\longmapsto^3 \PPar{\PIN{2}{3}''}{\PIN{3}{3}''}
\end{align*}
where
\begin{align*}
	\PIN{2}{3}'' &= \POptS{\Role[p]_2}{\Role[p]_2, \Role[p]_3}{P_{3 \to 2, \downarrow}}{\Args[v]_{2, 3}}{1}\\
	\PIN{3}{3}'' &= \PPar{\POptNVS{\Role[p]_3}{\Role[p]_3, \Role[p]_1}{P_{3 \to 1, \uparrow}}}{\POptNVS{\Role[p]_3}{\Role[p]_3, \Role[p]_2}{P_{3 \to 2, \uparrow}}}
\end{align*}
$ \Role[p]_3 $ completes round~$ 2 $ by receiving the value $ 1 $ that was transmitted by $ \Role[p]_2 $ in round~$ 2 $ using Rule~$ (\mathsf{cSO}) $. After this communication the respective two optional blocks are resolved by Rule~$ (\mathsf{succ}) $ that also updates the current value of $ \Role[p]_3 $ to $ \Args[v]_{3, 2} = 1 $. With that $ \Role[p]_3 $ moves to round~$ 3 $. As results we obtain $ \PIN{2}{3}'' = \POptS{\Role[p]_2}{\Role[p]_2, \Role[p]_3}{\ldots}{\Args[v]_{2,3}}{1} $ for $ \Role[p]_2 $ and for $ \Role[p]_3 $ we obtain $ \PIN{3}{3}'' = \PPar{\POptNVS{\Role[p]_3}{\Role[p]_3, \Role[p]_1}{\ldots}}{\POptNVS{\Role[p]_3}{\Role[p]_3, \Role[p]_2}{\ldots}} $.

\begin{align*}
	\longmapsto^4 \PEnd
\end{align*}
Similarly, round~$ 3 $ is completed by the reception of $ 1 $ by $ \Role[p]_2 $ from $ \Role[p]_3 $ and two steps to resolve the optional blocks. Additionally the remaining block of $ \Role[p]_3 $ towards the crashed $ \Role[p]_1 $ is removed using Rule~$ (\mathsf{fail}) $. The last values of $ \Role[p]_2 $ and $ \Role[p]_3 $ were $ 1 $. Since $ \Role[p]_1 $ (that still holds $ 0 $) crashed, this solves Consensus (although we abstract from the outputs of the results).

This example visualises how the rotating coordinator algorithm allows processes to reach Consensus despite crash failures.
Please observe that, due to the asynchronous nature of the processes, rounds can overlap, \ie there are derivatives in which the participants are situated in different rounds.
Overlapping rounds are an important property of round-based distributed algorithms that significantly complicate their analysis. Hence it is important to model them properly.
We gain overlapping rounds by
\begin{inparaenum}[(1)]
	\item placing the optional blocks of senders in parallel and
	\item (for the case of sub-sessions as visualised in the next section) using the sub-sessions of \cite{DemangeonHonda12} that similarly place acceptance notions (and thus the content of sub-sessions) and the continuation of this sub-session in parallel.
\end{inparaenum}

\subsubsection{Reaching Consensus with Sub-Sessions}

To show that the overlapping of rounds for these cases is the same, we map the above reduction of Example~\ref{exa:PRC} on Example~\ref{exa:PRCWSS}.

In contrast to the first example the second wraps each round within a sub-session.
We start again with
\begin{align*}
	\PRCB{3} \longmapsto^3{} & \PPar{\PIN{1}{3}}{\PPar{\PIN{2}{3}}{\PIN{3}{3}}}
\end{align*}
to initialise the parent session using three times Rule~$ (\mathsf{comC}) $.
This unguards the first sub-session call that is performed by the first co-ordinator $ \Role[p]_i $ to initialise a sub-session for round~$ 1 $.
Accordingly, in the following four steps
\begin{align*}
	\longmapsto^4{} & \PPar{\PINB{1}{3}}{\PPar{\PINB{2}{3}}{\PINB{3}{3}}}
\end{align*}
$ \Role[p]_1 $ calls the sub-session using Rule~$ (\mathsf{subs}) $ and unguards the corresponding three internal session invitations $ \PReqS{\Chan[s]}{\Role[p]_1}{\Role[p]_1}{\Role[scr]}{\Chan[k]} $, $ \PReqS{\Chan[s]}{\Role[p]_1}{\Role[p]_2}{\Role[trg]_1}{\Chan[k]} $, and $ \PReqS{\Chan[s]}{\Role[p]_1}{\Role[p]_3}{\Role[trg]_2}{\Chan[k]} $ that are answered using three applications of Rule~$ (\mathsf{join}) $.
Therefore the acceptance notification $ \PEntS{\Chan[s]}{\Role[p]_1}{\Role[p]_1}{\Role[src]}{\Args} $ of $ \Role[p]_1 $ is unguarded by Rule~$ (\mathsf{subs}) $ and the other two acceptance notifications $ \PEntS{\Chan[s]}{\Role[p]_2}{\Role[p]_1}{\Role[trg]_1}{\Args} $ and $ \PEntS{\Chan[s]}{\Role[p]_3}{\Role[p]_1}{\Role[trg]_2}{\Args} $ are unguarded in the first three steps.
Again $ \Role[p]_1 $ directly moves to round~$ 2 $.

The following three steps
\begin{align*}
	\longmapsto^3{} & \PPar{\PIN[P'']{1}{3}}{\PPar{\PIN[P']{2}{3}}{\PIN[P'']{3}{3}}}
\end{align*}
are similar to the first example.
$ \Role[p]_3 $ receives the value~$ 0 $ from $ \Role[p]_1 $ using Rule~$ (\mathsf{cSO}) $. Then both optional blocks that participate in this communication are completed successfully using Rule~$ (\mathsf{succ}) $ such that $ \Role[p]_3 $ updates its current value to the received $ 0 $. This completes the first round for $ \Role[p]_3 $ and it moves to round~$ 2 $.

Now $ \Role[p]_1 $ crashes, \ie its three remaining optional blocks will be removed using Rule~$ (\mathsf{fail}) $ as soon as they are unguarded but $ \Role[p]_1 $ still answers session invitations. One optional block of $ \Role[p]_1 $ is already unguarded and thus removed.
Since $ \Role[p]_2 $ cannot receive a value from $ \Role[p]_1 $ after that we also remove its optional block of round~$ 1 $.
\begin{align*}
	\longmapsto^2{} & \PPar{\PIN[P''']{1}{3}}{\PPar{\PIN[P'']{2}{3}}{\PIN[P'']{3}{3}}}
\end{align*}
With that $ \Role[p]_2 $ moves to round~$ 2 $.

Next $ \Role[p]_2 $ initialises the sub-session for round~$ 2 $ in the same way as $ \Role[p]_1 $ did for round~$ 1 $ and $ \Role[p]_1 $ removes the next optional block using Rule~$ (\mathsf{fail}) $
\begin{align*}
	\longmapsto^5{} & \PPar{\PIN[P'''']{1}{3}}{\PPar{\PIN[P''']{2}{3}}{\PIN[P''']{3}{3}}}
\end{align*}
The session initialisation unguards its two optional blocks of $ \Role[p]_2 $ as co-ordinator to transmit its current value, \ie $ \Args[v]_{2,0} = \Args[v]_{2, 1} = 1 $.
$ \Role[p]_3 $ directly moves forward to round~$ 3 $.

Since $ \Role[p]_1 $ is crashed, the first block of $ \Role[p]_2 $ in round~$ 2 $ towards $ \Role[p]_1 $ is doomed to fail causing another application of Rule~$ (\mathsf{fail}) $
\begin{align*}
	\longmapsto & \PPar{\PIN[P'''']{1}{3}}{\PPar{\PIN[P'''']{2}{3}}{\PIN[P''']{3}{3}}}
\end{align*}

$ \Role[p]_3 $ completes round~$ 2 $ by receiving the value $ 1 $ that was transmitted by $ \Role[p]_2 $ in round~$ 2 $ using Rule~$ (\mathsf{cSO}) $. After this communication the respective two optional blocks are resolved by Rule~$ (\mathsf{succ}) $ that also updates the current value of $ \Role[p]_3 $ to $ \Args[v]_{3,2} = 1 $
\begin{align*}
	\longmapsto^3 \PPar{\PIN[P'''']{1}{3}}{\PPar{\PIN[P''''']{2}{3}}{\PIN[P'''']{3}{3}}}
\end{align*}
With that $ \Role[p]_3 $ moves to round~$ 3 $.

The last sub-session for round~$ 3 $ is initialised and $ \Role[p]_1 $ removes it last optional block
\begin{align*}
	\longmapsto^5 \PPar{\PIN[P'''''']{2}{3}}{\PIN[P''''']{3}{3}}
\end{align*}
With that $ \Role[p]_1 $ is reduced to $ \PEnd $ and $ \Role[p]_3 $ finishes its last round.

The last steps
\begin{align*}
	\longmapsto^4 \PEnd
\end{align*}
are used to remove the optional block of $ \Role[p]_3 $ towards the crashed $ \Role[p]_1 $, to complete the reception of the value from $ \Role[p]_3 $ by $ \Role[p]_2 $, and to resolve the remaining optional blocks.
The last values of $ \Role[p]_2 $ and $ \Role[p]_3 $ were $ 1 $. As above, since $ \Role[p]_1 $ (that still holds $ 0 $) crashed, this solves Consensus.

\section{Well-Typed Processes}
\label{sec:wellTypedness}

In the Sections~\ref{sec:globalTypes} and \ref{sec:localTypes} we provided the types of our type system. Now we connect types with processes from Section~\ref{sec:calculus} by the notion of well-typedness.
A process $ P $ is \emph{well-typed} if it satisfies a typing judgement of the form $ \Gamma \vdash P \triangleright \Delta $, \ie under the \emph{global environment}~$ \Gamma $, $ P $ is validated by the \emph{session environment}~$ \Delta $.
We extend environments defined in \cite{DemangeonHonda12} with a primitive for session environments.

\begin{definition}[Environments]
	\begin{align*}
		\Gamma & \deffTerms \emptyset \sep \Gamma, \Typed{\Chan}{\AT{T}{\Role}} \sep \Gamma, \TypedProt{\Prot}{\tilde{\Role}_1}{\tilde{\Args[y]}}{\tilde{\Role}_2}{G} \sep \Gamma, \Typed{\Chan[s]}{G}\\
		\Delta & \deffTerms \emptyset \sep \Delta, \Typed{\AT{\Chan[s]}{\Role}}{T} \sep \Delta, \Typed{\ATE{\Chan[s]}{\Role}}{T} \sep \Delta, \Typed{\ATI{\Chan[s]}{\Role}}{T}
		\sep \textcolor{blue}{\Delta, \Typed{\Role}{\OV{\tilde{\Sort}}}}
	\end{align*}
\end{definition}

\noindent
The global environment $ \Gamma $ relates shared channels to the type of the invitation they carry, protocol names to their code, and session channels $ \Chan[s] $ to the global type $ G $ they implement.
$ \Typed{\Chan}{\AT{T}{\Role}} $ means that $ \Chan $ is used to send and receive invitations to play role $ \Role $ with local type $ T $.
In $ \TypedProt{\Prot}{\tilde{\Role}_1}{\tilde{\Args[y]}}{\tilde{\Role}_2}{G} $, $ \Prot $ is a protocol of the global type $ G $ with the internal (external) participants $ \tilde{\Role}_1 $ ($ \tilde{\Role}_2 $) and the arguments $ \tilde{\Args[y]} $.

The session environment $ \Delta $ relates pairs of session channels $ \Chan[s] $ and roles $ \Role $ to local types $ T $. We use $ \Typed{\ATE{\Chan[s]}{\Role}}{T} $ ($ \Typed{\ATI{\Chan[s]}{\Role}}{T} $) to denote the capability to invite externally (internally) someone to play role $ \Role $ in $ \Chan[s] $.

We add the declaration $ \Typed{\Role}{\OV{\tilde{\Sort}}} $, to cover the kinds of the return values of an optional block of the owner $ \Role $. A session environment is \emph{closed} if it does not contain declarations $ \Typed{\Role}{\OV{\tilde{\Sort}}} $.
We assume that initially session environments do not contain declarations $ \Typed{\Role}{\OV{\tilde{\Sort}}} $, \ie are closed. Such declarations are introduced while typing the content of an optional block. Whereby the typing rules ensure that environments can never contain more than one declaration $ \Typed{\Role}{\OV{\tilde{\Sort}}} $.

Let $ \left( \Delta, \Typed{\AT{\Chan[s]}{\Role}}{\LTEnd} \right) = \Delta $.
Let $ \AT{\Chan[s]}{\Role}^- $ denote either $ \ATE{\Chan[s]}{\Role} $ or $ \ATI{\Chan[s]}{\Role} $ or $ \AT{\Chan[s]}{\Role} $.
If $ \AT{\Chan[s]}{\Role}^- $ does not appear in $ \Delta $, we write $ \GetType[\Delta]{\AT{\Chan[s]}{\Role}} = 0 $.
Following \cite{DemangeonHonda12} we assume an operator $ \otimes $ such that
\begin{enumerate}[(1)]
	\item $ \Delta \otimes \emptyset = \Delta $,
	\item $ \Delta_1 \otimes \Delta_2 = \Delta_2 \otimes \Delta_1 $,
	\item $ \Delta_1 \otimes \left( \Delta_2, \Typed{\Role}{\OV{\tilde{\Sort}}} \right) = \left( \Delta_1, \Typed{\Role}{\OV{\tilde{\Sort}}} \right) \otimes \Delta_2 $,
	\item $ \Delta_1 \otimes \left( \Delta_2, \Typed{\AT{\Chan[s]}{\Role}^-}{T} \right) = \left( \Delta_1, \Typed{\AT{\Chan[s]}{\Role}^-}{T} \right) \otimes \Delta_2 $ if $ \GetType[\Delta_1]{\AT{\Chan[s]}{\Role}} = 0 = \GetType[\Delta_2]{\AT{\Chan[s]}{\Role}} $, and
	\item $ \left( \Delta_1, \Typed{\AT{\Chan[s]}{\Role}^-}{T_1} \right) \otimes \left( \Delta_2, \Typed{\AT{\Chan[s]}{\Role}^-}{T_2} \right) = \left( \Delta_1, \Typed{\AT{\Chan[s]}{\Role}^-}{\LTPar{T_1}{T_2}} \right) \otimes \Delta_2 $.
\end{enumerate}
Thus $ \otimes $ allows to split parallel parts of local types.
We write $ \vdash \Args[v] : \Sort $ if value $ \Args[v] $ is of kind $ \Sort $.

\begin{figure*}[tp]
	\[ \begin{array}{c}
		(\mathsf{I}) \dfrac{\Gamma \vdash P \triangleright \Delta, \Typed{\AT{\Args}{\Role}}{T} \quad \GetType{\Chan} = \AT{T}{\Role}}{\Gamma \vdash \PInp{\Chan}{\Args}{P} \triangleright \Delta}
		\hspace*{1.5em}
		(\mathsf{O}) \dfrac{\Gamma \vdash P \triangleright \Delta \quad \GetType{\Chan} = \AT{T}{\Role}}{\Gamma \vdash \POut{\Chan}{\Chan[s]}{P} \triangleright \Delta, \Typed{\ATE{\Chan[s]}{\Role}}{T}}
		\hspace*{1.5em}
		(\mathsf{N}) \dfrac{}{\Gamma \vdash \PEnd \triangleright \emptyset}
		\vspace*{0.75em}\\
		(\mathsf{C}) \dfrac{\left( \Gamma \vdash P_i \triangleright \Delta, \Typed{\AT{\Chan[k]}{\Role_2}}{T_i} \quad \vdash \Typed{\tilde{\Args[y]}_i}{\tilde{\Sort}_i} \right)_{i \in \indexSet}}{\Gamma \vdash \PGet{\Chan[k]}{\Role_1}{\Role_2}{_{i \in \indexSet} \Set{ \PLab{\Labe_i}{\tilde{\Args[y]}_i}{P_i} }} \triangleright \Delta, \Typed{\AT{\Chan[k]}{\Role_2}}{\LTGet{\Role_1}{_{i \in \indexSet{}} \Set{ \LTLab{\Labe_i}{\Typed{\tilde{\Args}_i}{\tilde{\Sort}_i}}{T_i} }}}}
		\hspace*{1.5em}
		(\mathsf{R}) \dfrac{\Gamma, \Typed{\Args}{\AT{T}{\Role}} \vdash P \triangleright \Delta}{\Gamma \vdash \PRes{\Args}{P} \triangleright \Delta}
		\vspace*{0.75em}\\
		(\mathsf{S}) \dfrac{\Gamma \vdash P \triangleright \Delta, \Typed{\AT{\Chan[k]}{\Role_1}}{T_j} \quad \vdash \Typed{\tilde{\Args[v]}}{\tilde{\Sort}_j} \quad j \in \indexSet}{\Gamma \vdash \PSend{\Chan[k]}{\Role_1}{\Role_2}{\Labe_j}{\tilde{\Args[v]}}{P} \triangleright \Delta, \Typed{\AT{\Chan[k]}{\Role_1}}{\LTSend{\Role_2}{_{i \in \indexSet} \Set{ \LTLab{\Labe_i}{\Typed{\tilde{\Args}_i}{\tilde{\Sort}_i}}{T_i} }}}}
		\hspace*{1.5em}
		(\mathsf{Pa}) \dfrac{\Gamma \vdash P_1 \triangleright \Delta_1 \quad \Gamma \vdash P_2 \triangleright \Delta_2}{\Gamma \vdash \PPar{P_1}{P_2} \triangleright \Delta_1 \otimes \Delta_2}
		\vspace*{0.75em}\\
		(\mathsf{S1}) \dfrac{\Gamma \vdash P_1 \triangleright \Delta, \Typed{\AT{\Chan[s]}{\Role}}{T_1} \quad \Gamma \vdash P_2 \triangleright \Delta, \Typed{\AT{\Chan[s]}{\Role}}{T_2}}{\Gamma \vdash \PChoi{P_1}{P_2} \triangleright \Delta, \Typed{\AT{\Chan[s]}{\Role}}{T_1 \oplus T_2}}
		\hspace*{1.5em}
		(\mathsf{S2}) \dfrac{\Gamma \vdash P \triangleright \Delta, \Typed{\AT{\Chan[s]}{\Role}}{T_i} \quad i \in \Set[]{ 1, 2 }}{\Gamma \vdash P \triangleright \Delta, \Typed{\AT{\Chan[s]}{\Role}}{T_1 \oplus T_2}}
		\vspace*{0.75em}\\
		(\mathsf{J}) \dfrac{\Gamma \vdash P \triangleright \Delta, \Typed{\AT{\Chan[s]}{\Role_2}}{T_2}, \Typed{\AT{\Args}{\Role_3}}{T_3} \quad \GetType{\Prot} = \TypeOfProt{\tilde{\Role}_4}{\tilde{\Args[y]}}{\tilde{\Role}_5}{G} \quad \ProjS{G\!\Set[]{ \Subst{\tilde{\Args[v]}}{\tilde{\Args[y]}} }}{}{\Role_3} = T_3}{\Gamma \vdash \PEnt{\Chan[s]}{\Role_1}{\Role_2}{\Role_3}{\Args}{P} \triangleright \Delta, \Typed{\AT{\Chan[s]}{\Role_2}}{\LTEnt{\Prot}{\Role_3}{\tilde{\Args[v]}}{\Role_1}{T_2}}}
		\vspace*{0.75em}\\
		(\mathsf{P}) \dfrac{\Gamma \vdash P \triangleright \Delta, \Typed{\AT{\Chan[s]}{\Role_1}}{T_1} \quad \GetType{\Prot} = \TypeOfProt{\tilde{\Role}_4}{\tilde{\Args[y]}}{\tilde{\Role}_5}{G} \quad \ProjS{G\!\Set[]{ \Subst{\tilde{\Args[v]}}{\tilde{\Args[y]}} }}{}{\Role_3} = T_3}{\Gamma \vdash \PReq{\Chan[s]}{\Role_1}{\Role_2}{\Role_3}{\Chan[k]}{P} \triangleright \Delta, \Typed{\AT{\Chan[s]}{\Role_1}}{\LTReq{\Prot}{\Role_3}{\tilde{\Args[v]}}{\Role_2}{T_1}}, \Typed{\ATI{\Chan[k]}{\Role_3}}{T_3}}
		\vspace*{0.75em}\\
		(\mathsf{New}) \dfrac{\begin{array}{c} \Gamma \vdash P \triangleright \Delta, \Typed{\AT{\Chan[s]}{\Role''}}{T}, \Typed{\ATI{\Chan[k]}{\Role_1}}{T'_1}, \ldots, \Typed{\ATI{\Chan[k]}{\Role_n}}{T'_n}, \Typed{\ATE{\Chan[k]}{\Role'_1}}{T'_{n + 1}}, \ldots, \Typed{\ATE{\Chan[k]}{\Role'_m}}{T'_{n + m}} \quad \GetType{\Prot} = \TypeOfProt{\tilde{\Role}}{\tilde{\Args[y]}}{\tilde{\Role}'}{G}\\ \forall i \logdot \GetType{\Chan_i} = \AT{T'_{i + n}}{\Role'_{i + n}} \quad \forall i \logdot \ProjS{G\!\Set[]{ \Subst{\tilde{\Args[v]}}{\tilde{\Args[y]}} }}{}{\Role_i} = T'_i \quad \forall j \logdot \ProjS{G\!\Set[]{ \Subst{\tilde{\Args[v]}}{\tilde{\Args[y]}} }}{}{\Role'_j} = T'_{j + n} \quad \vdash \Typed{\tilde{\Args[v]}}{\tilde{\Sort}} \quad \GetType{\Chan[k]} = {\Prot\!\Set[]{ \Subst{\tilde{\Args[v]}}{\tilde{\Args[y]}} }} \end{array}}{\Gamma \vdash \PDecl{\Chan[k]}{\Chan[s]}{\tilde{\Args[v]}}{\tilde{\Chan}}{\tilde{\Role}'}{P} \triangleright \Delta, \Typed{\AT{\Chan[s]}{\Role''}}{\LTCall{\Prot}{G}{\tilde{\Args[v]}}{\Typed{\tilde{\Args[y]}}{\tilde{\Sort}}}{\tilde{\Role}'}{T}}}
		\vspace*{0.75em}\\
		\textcolor{blue}{(\mathsf{OptE}) \dfrac{\vdash \Typed{\tilde{\Args[v]}}{\tilde{\Sort}}}{\Gamma \vdash \POptEnd{\Role}{\tilde{\Args[v]}} \triangleright \Typed{\Role}{\OV{\tilde{\Sort}}}}}
		\hspace*{1.5em}
		\textcolor{blue}{(\mathsf{Opt}) \dfrac{\begin{array}{c} \tilde{\Role} \ \dot{=} \ \tilde{\Role}' \quad \Gamma \vdash P \triangleright \Delta, \Typed{\AT{\Chan[s]}{\Role_1}}{T}, \Typed{\Role_1}{\OV{\tilde{\Sort}}} \quad \nexists \Role'', \tilde{\Sort[K]} \logdot \Typed{\Role''}{\OV{\tilde{\Sort[K]}}} \in \Delta\\ \Gamma \vdash P' \triangleright \Delta', \Typed{\AT{\Chan[s]}{\Role_1}}{T'} \quad \vdash \Typed{\tilde{\Args}}{\tilde{\Sort}} \quad \vdash \Typed{\tilde{\Args[v]}}{\tilde{\Sort}} \end{array}}{\Gamma \vdash \POpt{\Role_1}{\tilde{\Role}}{P}{\tilde{\Args}}{\tilde{\Args[v]}}{P'} \triangleright \Delta \otimes \Delta', \Typed{\AT{\Chan[s]}{\Role_1}}{\LTOpt{\tilde{\Role}'}{T}{\Typed{\tilde{\Args[y]}}{\tilde{\Sort}}}{T'}}}}
	\end{array} \]
	\caption{Typing Rules}
	\label{fig:typingRules}
\end{figure*}

In Figure~\ref{fig:typingRules} we extend the typing rules of \cite{DemangeonHonda12} with the Rules~(\textsf{Opt}) and (\textsf{OptE}) for optional blocks.
(\textsf{Opt}) ensures that
\begin{inparaenum}[(1)]
	\item the process and the local type specify the same set of roles $ \tilde{\Role} \ \dot{=} \ \tilde{\Role}' $ as participants of the optional block,
	\item the kinds of the default values $ \tilde{\Args[v]} $, the arguments $ \tilde{\Args} $ of the continuation $ P' $, and the respective variables $ \tilde{\Args[y]} $ in the local type coincide,
	\item the continuation $ P' $ is well-typed \wrt the part $ \Delta' $ of the current session environment and the remainder $ T' $ of the local type of $ \AT{\Chan[s]}{\Role_1} $,
	\item the content $ P $ of the block is well-typed \wrt the session environment $ \Delta, \Typed{\AT{\Chan[s]}{\Role_1}}{T}, \Typed{\Role_1}{\OV{\tilde{\Sort}}} $, where $ \Typed{\Role_1}{\OV{\tilde{\Sort}}} $ ensures that $ P $ computes return values of the kinds $ \tilde{\Sort} $ if no failure occurs, and
	\item the return values of a surrounding optional block cannot be returned in a nested block, because of the condition $ \nexists \Role'', \tilde{\Sort[K]} \logdot \Typed{\Role''}{\OV{\tilde{\Sort[K]}}} \in \Delta $.
\end{inparaenum}
(\textsf{OptE}) ensures that the kinds of the values computed by a successful completion of an optional block match the kinds of the respective default values. Apart from that this rule is similar to (\textsf{N}) in Figure~\ref{fig:typingRules}.
Since (\textsf{OptE}) is the only way to consume an instance of $ \Typed{\Role}{\OV{\tilde{\Sort}}} $, this rule checks that---ignoring the possibility to fail---the content of an optional block reduces to $ \POptEnd{\Role}{\tilde{\Args[v]}} $, if the corresponding local type requires it to do so.
Combining these rules, (\textsf{Opt}) introduces exactly one occurrence of $ \Typed{\Role}{\OV{\tilde{\Sort}}} $ in the session environment, the function $ \otimes $ in (\textsf{Pa}) for parallel processes in Figure~\ref{fig:typingRules} ensures that this occurrence reaches exactly one of the parallel branches of the content of the optional block, and finally only (\textsf{OptE}) allows to terminate a branch with this occurrence. This ensures that---ignoring the possibility to fail---each block computes exactly one vector of return values $ \POptEnd{\Role}{\tilde{\Args[v]}} $ (or, more precisely, one such vector for each choice-branch).
For an explanation of the remaining rules we refer to \cite{BettiniAtall08,BocciAtall10} and \cite{DemangeonHonda12}. Instead we present the derivation of the type judgements of some examples starting with Example~\ref{exa:PRC}.

Applying these typing rules is elaborate but straightforward and can be automated easily, since for all processes except choice exactly one rule applies and all parameters except for restriction are determined by the respective process and the given type environments. Thus, the number of different proof-trees is determined by the number of choices and the type of restricted channels can be derived using back-tracking.

\subsection{Our Implementation of the Rotating Co-ordinators is Well-Typed}

When testing the type of a process, the first step is to choose a suitable global and local environment for the type judgement.

The global environment initially contains
\begin{inparaenum}[(1)]
	\item the channels for the invitations to the session to that the projection of the global type on the respective participant is assigned,
	\item the session channel to that the complete global type is assigned, and
	\item the global types of all sub-sessions.
\end{inparaenum}
For Example~\ref{exa:PRC} this means
\begin{align*}
	\Gamma = \Typed{\Chan_1}{\ProjS{\GRC{n}}{}{\Role[p]_1}}, \ldots, \Typed{\Chan_n}{\ProjS{\GRC{n}}{}{\Role[p]_n}}, \Typed{\Chan[s]}{\GRC{n}}
\end{align*}
where Example~\ref{exa:GTRC} provides $ \GRC{n} $ and its projection $ \ProjS{\GRC{n}}{}{\Role[p]_i} $ is given in Example~\ref{exa:LTRC}.

The session environment initially maps the session channel for each participant to the local type of the respective participant. Here we have:
\begin{align*}
	\Delta = \Typed{\ATE{\Chan[s]}{\Role[p]_1}}{\ProjS{\GRC{n}}{}{\Role[p]_1}}, \ldots, \Typed{\ATE{\Chan[s]}{\Role[p]_n}}{\ProjS{\GRC{n}}{}{\Role[p]_n}}
\end{align*}
Notice that $ \Delta $ is closed.
We have to prove $ \Gamma \vdash \PRC{n} \triangleright \Delta $, where $ \PRC{n} $ is given in Example~\ref{exa:PRC}.

First we apply Rule~$ (\mathsf{Pa}) $ $ n $ times to separate the $ n $ participants $ \PPar{\POutS{\Chan_i}{\Chan[s]}}{\PInp{\Chan_i}{\Chan[s]}{\PIN{i}{n}}} $, whereby we split $ \Delta $ into $ \Delta_1 \otimes \ldots \otimes \Delta_n $ with $ \Delta_i = \Typed{\ATE{\Chan[s]}{\Role[p]_i}}{\ProjS{\GRC{n}}{}{\Role[p]_i}} $. For each participant we separate the output $ \POutS{\Chan_i}{\Chan[s]} $ and $ \PInp{\Chan_i}{\Chan[s]}{\PIN{i}{n}} $ by another application of $ (\mathsf{Pa}) $.
\begin{align*}
	\dfrac{\dfrac{}{\Gamma \vdash \PEnd \triangleright \emptyset} (\mathsf{N}) \quad \Gamma(\Chan_i) = \ProjS{\GRC{n}}{}{\Role[p]_i}}{\Gamma \vdash \POutS{\Chan_i}{\Chan[s]} \triangleright \Typed{\ATE{\Chan[s]}{\Role[p]_i}}{\ProjS{\GRC{n}}{}{\Role[p]_i}}} (\mathsf{O})
\end{align*}
\begin{align*}
	\dfrac{\Gamma \vdash \PIN{i}{n} \triangleright \Typed{\AT{\Chan[s]}{\Role[p]_i}}{\ProjS{\GRC{n}}{}{\Role[p]_i}} \quad \Gamma(\Chan_i) = \ProjS{\GRC{n}}{}{\Role[p]_i}}{\Gamma \vdash \PInp{\Chan_i}{\Chan[s]}{\PIN{i}{n}} \triangleright \emptyset} (\mathsf{I})
\end{align*}
It remains to show that $ \Gamma \vdash \PIN{i}{n} \triangleright \Typed{\AT{\Chan[s]}{\Role[p]_i}}{\ProjS{\GRC{n}}{}{\Role[p]_i}} $.

$ \PIN{i}{n} $ starts with $ i{-}1 $ sequentially composed optional blocks $ \POptS{\Role[p]_i}{\Role[p]_i, \Role[p]_j}{P_{j \to i, \downarrow}}{\Args[v]_{i, j}}{\Args[v]_{i, j{-}1}} $ and similarly $ \ProjS{\GRC{n}}{}{\Role[p]_i} $ with $ i{-}1 $ sequential local types $ \LULT{\Role[p]_j}{\Args[v]_{j, j{-}1}}{\Role[p]_i}{\Args[v]_{i, j{-}1}} = \LTOptS{\Role[p]_i, \Role[p]_j}{\LTGet{\Role[p]_j}{\LTLabS{\Labe[c]}{\Typed{\Args[v]_{j, j{-}1}}{\Sort[V]}}}}{\Typed{\Args[v]_{i, j{-}1}}{\Sort[V]}} $.
For each of these blocks we apply Rule~$ (\mathsf{Opt}) $ and have to show:
\begin{enumerate}
	\item $ \Gamma \vdash P_{j \to i, \downarrow} \triangleright \Typed{\AT{\Chan[s]}{\Role[p]_i}}{\LTGet{\Role[p]_j}{\LTLabS{\Labe[c]}{\Typed{\Args[v]_{j, j{-}1}}{\Sort[V]}}}}, \Typed{\Role[p]_i}{\OV{\Sort[V]}} $
	\item $ \Gamma \vdash P' \triangleright \Typed{\AT{\Chan[s]}{\Role[p]_i}}{T'} $, where $ P' $ is the continuation of the optional block and $ T' $ the continuation of the local type
	\item $ \vdash \Typed{\Args[v]_{i, j}}{\Sort[V]} $ and $ \vdash \Typed{\Args[v]_{i, j{-}1}}{\Sort[V]} $
\end{enumerate}
The third condition checks whether the variable $ \Args[v]_{i, j} $ and the default value $ \Args[v]_{i, j{-}1} $ of the optional block are values and of the same type as the value $ \Typed{\Args[v]_{i, j{-}1}}{\Sort[V]} $ of the local type. Since all $ \Args[v]_{k, l} $ are of kind $ \Sort[V] $, this condition is satisfied.
The first condition checks the type of the content of the optional block, where $ P_{j \to i, \downarrow} = \PGet{\Chan[s]}{\Role[p]_j}{\Role[p]_i}{\PLab{\Labe[c]}{\Args[v]_{j, j{-}1}}{\POptEnd{\Role[p]_i}{\Args[v]_{j, j{-}1}}}} $ and
\begin{align*}
	\dfrac{\dfrac{\vdash \Typed{\Args[v]_{j, j{-}1}}{\Sort[V]}}{\Gamma \vdash \POptEnd{\Role[p]_i}{\Args[v]_{j, j{-}1}} \triangleright \Typed{\Role[p]_i}{\OV{\Sort[V]}}} (\mathsf{OptE}) \quad \vdash \Typed{\Args[v]_{j, j{-}1}}{\Sort[V]}}{\Gamma \vdash P_{j \to i, \downarrow} \triangleright \Typed{\AT{\Chan[s]}{\Role[p]_i}}{\LTGet{\Role[p]_j}{\LTLabS{\Labe[c]}{\Typed{\Args[v]_{j, j{-}1}}{\Sort[V]}}}}, \Typed{\Role[p]_i}{\OV{\Sort[V]}}} (\mathsf{C})
\end{align*}

The second condition refers to the respective next part of the process and the local type.
After the first $ i{-}1 $ sequential optional blocks, $ P' $ is the parallel composition of $ n{-}1 $ optional blocks $ \POptNVS{\Role[p]_i}{\Role[p]_i, \Role[p]_j}{P_{i \to j, \uparrow}} $ and the remaining sequential blocks:
\begin{align*}
	P'' = \bigodot_{j = (i + 1)..n} \POptS{\Role[p]_i}{\Role[p]_i, \Role[p]_j}{P_{j \to i, \downarrow}}{\Args[v]_{i, j}}{\Args[v]_{i, j{-}1}}
\end{align*}
Similarly, $ T' $ is the parallel composition of the $ \LULS{\Role[p]_i}{\Args[v]_{i, i{-}1}}{\Role[p]_j} $ and the remaining (sequentially composed) $ \LULT{\Role[p]_j}{\Args[v]_{j, j{-}1}}{\Role[p]_i}{\Args[v]_{i, j{-}1}} $.
We use $ (\mathsf{Pa}) $ to separate the parallel components in both the process and the local type.
Thus we have to show $ \Gamma \vdash \POptNVS{\Role[p]_i}{\Role[p]_i, \Role[p]_j}{P_{i \to j, \uparrow}} \triangleright \Typed{\AT{\Chan[s]}{\Role[p]_i}}{\LULS{\Role[p]_i}{\Args[v]_{i, i{-}1}}{\Role[p]_j}} $ for $ j = 1..n, i \neq j $ and $ \Gamma \vdash P'' \triangleright \Typed{\AT{\Chan[s]}{\Role[p]_i}}{\LULT{\Role[p]_j}{\Args[v]_{j, j{-}1}}{\Role[p]_i}{\Args[v]_{i, j{-}1}}.T''} $.
The proof of the last typing judgement for the $ n{-}i $ last sequential blocks is very similar to the proof for the first $ i{-}1 $ sequential blocks with an application of Rule~$ (\mathsf{N}) $ in the end.

For each $ j \in \Set[]{ 1, \ldots, i{-}1, i + 1, \ldots, n } $ we apply Rule~$ (\mathsf{Opt}) $ and have to show:
\begin{enumerate}
	\item $ \Gamma \vdash P_{i \to j, \uparrow} \triangleright \Typed{\AT{\Chan[s]}{\Role[p]_i}}{\LTSend{\Role[p]_j}{\LTLabS{\Labe[c]}{\Typed{\Args[v]_{i, i{-}1}}{\Sort[V]}}}}, \Typed{\Role[p]_i}{\OV{\cdot}} $
	\item $ \Gamma \vdash \PEnd \triangleright \emptyset $
\end{enumerate}
There are no default values and thus the conditions $ \vdash \Typed{\tilde{\Args}}{\tilde{\Sort}} $ and $ \vdash \Typed{\tilde{\Args[v]}}{\tilde{\Sort}} $ of Rule~$ (\mathsf{Opt}) $ hold trivially.
The second condition, for the continuation of the optional blocks, is in all cases of $ j $ the same and follows from Rule~$ (\mathsf{N}) $.
For the first condition we have $ P_{i \to j, \uparrow} = \PSend{\Chan[s]}{\Role[p]_i}{\Role[p]_j}{\Labe[c]}{\Args[v]_{i, i{-}1}}{\POptEnd{\Role[p]_i}{\cdot}} $ and thus
\begin{align*}
	\dfrac{\dfrac{\vdash \Typed{\cdot}{\cdot}}{\Gamma \vdash \POptEnd{\Role[p]_i}{\cdot} \triangleright \Typed{\Role[p]_i}{\OV{\cdot}}} (\mathsf{OptE}) \quad \vdash \Typed{\Args[v]_{i, i{-}1}}{\Sort[V]}}{\Gamma \vdash \PSend{\Chan[s]}{\Role[p]_i}{\Role[p]_j}{\Labe[c]}{\Args[v]_{i, i{-}1}}{\POptEnd{\Role[p]_i}{\cdot}} \triangleright \Typed{\AT{\Chan[s]}{\Role[p]_i}}{\LTSend{\Role[p]_j}{\LTLabS{\Labe[c]}{\Typed{\Args[v]_{i, i{-}1}}{\Sort[V]}}}}, \Typed{\Role[p]_i}{\OV{\cdot}}} (\mathsf{S})
\end{align*}

\subsection{Our Implementation with Sub-Sessions is Well-Typed}

To check the type of $ \PRCB{n} $ of Example~\ref{exa:PRCWSS} we need to add the type of the protocol to the global environment used for $ \PRC{n} $:
\begin{align*}
	\Gamma = \Typed{\Chan_1}{\ProjS{\GRCB{n}}{}{\Role[p]_1}}, \ldots, \Typed{\Chan_n}{\ProjS{\GRCB{n}}{}{\Role[p]_n}}, \Typed{\Chan[s]}{\GRCB{n}}, \TypedProt{\Prot[R]_n}{\Role[scr], \tilde{\Role[trg]}}{\Args[v]_{\Role[scr]}}{\cdot}{\GR{n}}
\end{align*}

The session environment initially is the same as for the first example:
\begin{align*}
	\Delta = \Typed{\ATE{\Chan[s]}{\Role[p]_1}}{\ProjS{\GRCB{n}}{}{\Role[p]_1}}, \ldots, \Typed{\ATE{\Chan[s]}{\Role[p]_n}}{\ProjS{\GRCB{n}}{}{\Role[p]_n}}
\end{align*}
Again $ \Delta $ is closed.
We have to prove $ \Gamma \vdash \PRCB{n} \triangleright \Delta $.
Also this type derivation is very similar to our first example.
We provide one derivation for the sub-session of a round, to demonstrate the additional steps.

\begin{align*}
	\dfrac{\begin{array}{c} \Gamma \vdash P' \triangleright \Typed{\AT{\Chan[s]}{\Role[p]_i}}{T}, \Delta_k \quad \GetType{\Prot[R]_n} = \TypedProt{\Prot[R]_n}{\Role[scr], \tilde{\Role[trg]}}{\Args[v]_{\Role[scr]}}{\cdot}{\GR{n}}\\ \ProjS{\GR{n}\!\Set[]{ \Subst{\Args[v]_{i, i - 1}}{\Args[v]_{\Role[scr]}} }}{}{\Role[scr]} = T_{\Role[scr]} \quad \forall j < n \logdot \ProjS{\GR{n}\!\Set[]{ \Subst{\Args[v]_{i, i - 1}}{\Args[v]_{\Role[scr]}} }}{}{\Role[trg]_j} = T_{\Role[trg], j}\\ \vdash \Typed{\Args[v]_{i, i - 1}}{\Sort[V]} \quad \GetType{\Chan[k]} = \GR{n}\!\Set[]{ \Subst{\Args[v]_{i, i - 1}}{\Args[v]_{\Role[scr]}} } \end{array}}{\Gamma \vdash \PDecl{\Chan[k]}{\Chan[s]}{\Args[v]_{i, i - 1}}{\cdot}{\cdot}{P'} \triangleright \; \Typed{\AT{\Chan[s]}{\Role[p]_i}}{\LTCall{\Prot[R]_n}{\GR{n}}{\Args[v]_{i, i - 1}}{\Typed{\Args[v]_{\Role[scr]}}{\!\Sort[V]}}{\cdot}{T}}}(\mathsf{New})
\end{align*}
and
\begin{align*}
	\Delta_k ={} & \Typed{\ATI{\Chan[k]}{\Role[scr]}}{\LTSend{\Role[trg]}{\LTLabS{\Labe[bc]}{\Args[v]_{i, i - 1}}}}, \Typed{\ATI{\Chan[k]}{\Role[trg]_1}}{\LTGet{\Role[scr]}{\LTLabS{\Labe[bc]}{\Args[v]_{i, i - 1}}}}, \ldots, \Typed{\ATI{\Chan[k]}{\Role[trg]_{i - 1}}}{\LTGet{\Role[scr]}{\LTLabS{\Labe[bc]}{\Args[v]_{i, i - 1}}}}
\end{align*}
Then we have to show that $ \Gamma \vdash P' \triangleright \Typed{\AT{\Chan[s]}{\Role[p]_i}}{T}, \Delta_k $.

The internal session invitations within $ P' $ are handled by the Rules~$ (\mathsf{P}) $ and $ (\mathsf{J}) $ similar to
\begin{align*}
	\dfrac{\dfrac{}{\Gamma \vdash \PEnd \triangleright \emptyset}(\mathsf{N}) \quad \GetType{\Prot[R]_n} = \TypedProt{\Prot[R]_n}{\Role[scr], \tilde{\Role[trg]}}{\Args[v]_{\Role[scr]}}{\cdot}{\GR{n}} \quad \ProjS{\GR{n}\!\Set[]{ \Subst{\Args[v]_{i, i - 1}}{\Args[v]_{\Role[scr]}} }}{}{\Role[scr]} = T_{\Role[scr]}}{\Gamma'' \vdash \PReqS{\Chan[s]}{\Role[p]_i}{\Role[p]_i}{\Role[scr]}{\Chan[k]} \triangleright \Typed{\AT{\Chan[s]}{\Role[p]_i}}{\LTReqS{\Prot[R]_n}{\Role[src]}{\Args[v]_{i, i - 1}}{\Role[p]_i}}, \Typed{\ATI{\Chan[k]}{\Role[scr]}}{T_{\Role[scr]}}} (\mathsf{P})
\end{align*}
and
\begin{align*}
	\dfrac{\Gamma \vdash P'' \triangleright \Typed{\AT{\Args[z]}{\Role[scr]}}{T_{\Role[scr]}'} \quad \GetType{\Prot[R]_n} = \TypedProt{\Prot[R]_n}{\Role[scr], \tilde{\Role[trg]}}{\Args[v]_{\Role[scr]}}{\cdot}{\GR{n}} \quad \ProjS{\GR{n}\!\Set[]{ \Subst{\Args[v]_{i, i}}{\Args[v]_{\Role[scr]}} }}{}{\Role[scr]} = T_{\Role[scr]}'}{\Gamma \vdash \PEnt{\Chan[s]}{\Role[p]_i}{\Role[p]_i}{\Role[scr]}{\Args[z]}{P''} \triangleright \Typed{\AT{\Chan[s]}{\Role[p]_i}}{\LTEnt{\Prot[R]_n}{\Role[scr]}{\Args[v]_{i, i - 1}}{\Role[p]_i}{T_{\Role[scr]}'}}} (\mathsf{J})
\end{align*}
for the coordinator inviting himself.

\section{An Example with Sub-Sessions within Optional Blocks}
\label{sec:ExaSubSessionsWithinOptionalBlocks}

We present a third example---again a variant of the rotating coordinator algorithm in Example~\ref{exa:RCAlgorithm}---to demonstrate the use of sub-sessions within optional blocks.

\subsection{Global and Local Types}

\begin{example}[Global Type for Rotating Coordinators]
	\label{exa:globalType}
	\begin{align*}
		\GRC{n} ={} & \GTDecl{\Prot[C]}{\Role[scr],\Role[trg]}{\Typed{\Args[val]}{\Sort[V]}}{\cdot}{G_{\Prot[C]}}{} \bigodot_{i=1..n} G_{\text{Round}}(i, n)\\
		G_{\Prot[C]} ={} & \GTCom{\Role[scr]}{\Role[trg]}{\GTInpS{\Labe[bc]}{\Typed{\Args[val]}{\Sort[V]}}}\\
		G_{\text{Round}}(i, n) ={} & \bigodot_{j = 1..n, \; j \neq i} {\GTOptBlS{\Role[p]_i, \cdot, \Role[p]_j, \Typed{\Args[v]_{j, i - 1}}{\Sort[V]}}{\left( \GTCallS{\Role[p]_i}{\Prot[C]}{\Role[p]_i, \Role[p]_j}{\Args[v]_{i, i - 1}} \right)}}
	\end{align*}
\end{example}

\noindent
$ \GRC{n} $ first declares a sub-protocol and then performs the $ n $ rounds of the algorithm sequentially.
The sub-protocol $ G_{\Prot[C]} $, identified with $ \Prot[C] $, specifies a single communication as part of the broadcast in Line~4 of Example~\ref{exa:RCAlgorithm}.
This communication step covers the transmission of the value $ \Args[val] $ (under the label $ \Labe[bc] $ for broadcast) from $ \Role[scr] $ to $ \Role[trg] $.
In each round the current coordinator participant $ \Role[p]_i $ calls this sub-protocol sequentially $ n - 1 $ times, in order to transmit its current value $ \Args[v]_{i, i - 1} $ to all other participants.
To simulate link failures we capture each communication of the broadcast in a single optional block.
Here the sender $ \Role[p]_i $ does not need to specify a default value, whereas the continuation of the receiver $ \Role[p]_j $ uses its last known value $ \Args[v]_{j, i - 1} $ if the communication fails.
Since global types describe a global point of view, the communication steps modelled above also cover the reception of values in Line~5.
For simplicity we omit the Lines~1 and 7 from our consideration.

Restricting $ \GRC{n} $ on participant~$ i $ reduces all rounds~$ j $ (except for round~$ j = i $) to the single communication step of round~$ j $ in that participant~$ i $ receives a value.
Accordingly, participant~$ i $ receives $ i - 1 $ times a value in $ i - 1 $ rounds, then broadcasts its current value to all other participants (modelled by $ n - 1 $ single communication steps), and then receives $ n - i $ times a value in the remaining $ n - i $ rounds.

\begin{example}[Restriction on Participant~$ i $]
	\label{exa:restriction}
	\begin{align*}
		\RestS{\GRC{n}}{\Role[p]_i} ={} & \GTDecl{\Prot[C]}{\Role[scr],\Role[trg]}{\Typed{\Args[val]}{\!\Sort[V]}}{\cdot}{G_{\Prot[C]}}{\!\RestS{G'}{\Role[p]_i}}\\
		\RestS{G'}{\Role[p]_i} ={} & \left( \bigodot_{j = 1..(i - 1)} \GTOptBlS{\Role[p]_j, \cdot, \Role[p]_i, \Typed{\Args[v]_{i, j - 1}}{\!\Sort[V]}}{\left( \GTCallS{\Role[p]_j}{\Prot[C]}{\Role[p]_j, \Role[p]_i}{\Args[v]_{j, j - 1}} \right)} \right).G_{\text{Round}}(i, n).\\
		& \bigodot_{j = (i + 1)..n} \GTOptBlS{\Role[p]_j, \cdot, \Role[p]_i, \Typed{\Args[v]_{i, j - 1}}{\!\Sort[V]}}{\left( \GTCallS{\Role[p]_j}{\Prot[C]}{\Role[p]_j, \Role[p]_i}{\Args[v]_{j, j - 1}} \right)}
	\end{align*}
\end{example}

\begin{example}[Projection to the Local Type of Participant~$ i $]
	\label{exa:localType}
	\begin{align*}
		\ProjS{G_{\text{RC}}(n)}{}{\Role[p]_i} ={} & \ProjS{( \bigodot_{i=1..n} G_{\text{Round}}(i, n) )}{\PDec{\Prot[C]}{\Role[scr],\Role[trg]}{\Typed{\Args[val]\,}{\Sort[V]}}{\cdot}{G_{\Prot[C]}}}{\Role[p]_i}\\
		={} & \ProjS{( \RestS{G'}{\Role[p]_i} )}{\PDec{\Prot[C]}{\Role[scr],\Role[trg]}{\Typed{\Args[val]\,}{\Sort[V]}}{\cdot}{G_{\Prot[C]}}}{\Role[p]_i}\\
		={} & ( \bigodot_{j = 1..(i - 1)} \LTOptS{\Role[p]_i, \Role[p]_j}{T_{j \to i}}{\Typed{\Args[v]_{i, j - 1}}{\Sort[V]}} ).\\
		& ( \prod_{j = 1..n, j \neq i} \LTOptS{\Role[p]_i, \Role[p]_j}{T_{i \text{ calls } j}}{\cdot} \LTPar{}{( \bigodot_{j = (i + 1)..n} \LTOptS{\Role[p]_i, \Role[p]_j}{T_{j \to i}}{\Typed{\Args[v]_{i, j - 1}}{\Sort[V]}} )} )\\
		T_{j \to i} ={} & \LTEntS{\Prot[C]}{\Role[trg]}{\Args[v]_{j, j - 1}}{\Role[p]_j}\\
		T_{i \text{ calls } j} ={} & \LTCall{\Prot[C]}{G_{\Prot[C]}}{\Args[v]_{i, i -1}}{\Typed{\Args[val]}{\!\Sort[V]}}{\cdot}{T_{i \to j}}\\
		T_{i \to j} ={} & \LTPar{\LTReqS{\Prot[C]}{\Role[src]}{\Args[v]_{i, i - 1}}{\!\Role[p]_i}}{\LTReqS{\Prot[C]}{\Role[trg]}{\Args[v]_{i, i - 1}}{\!\Role[p]_j}} \LTPar{}{\LTEntS{\Prot[C]}{\Role[scr]}{\Args[v]_{i, i - 1}}{\Role[p]_i}}
	\end{align*}
\end{example}

\noindent
To project the global type $ G_{\text{RC}}(n) $ on the local type of participant~$ i $ we first add the information about the declaration of the protocol $ \Prot[C] $ to the environment and then project the $ n $ rounds.
Since the projection of optional blocks to a role that does not participate in that optional block simply removes the respective block, the projection of the $ n $ rounds on the local type of participant~$ i $ is same as the projection of the restriction $ \Rest{G'}{\Role[p]_i} $ (see Example~\ref{exa:restriction}) on the local type of participant~$ i $.
Accordingly participant~$ i $ is $ i - 1 $ times the target $ \Role[trg] $ of the protocol $ \Prot[C] $ (\cf $ T_{j \to i} $), \ie optionally receives $ n - i $ values, then initiates round~$ i $ (\cf $ T_{i \text{ calls } j} $) and broadcasts its current value by calling the protocol $ \Prot[C] $ $ n - 1 $ times as source $ \Role[scr] $ (\cf $ T_{i \to j} $), and finally participant~$ i $ is $ n - i $ more times the target $ \Role[trg] $ of $ \Prot[C] $ (\cf $ T_{j \to i} $).
Observe that, due to the different cases of the projection of optional blocks, receiving values from rounds different from $ i $ guards the continuation of participant~$ i $ while broadcasting its own value is performed in parallel (although in the global type all optional blocks guard the respective continuation).

\subsection{Implementation}

Based on the local type of participant~$ i $ in Example~\ref{exa:localType} we provide an implementation of the rotating coordinator. Therefore we replace the check '\texttt{if alive}$ (p_r) $' by an optional block for communications.
In the first $ i - 1 $ and the last $ n - i $ rounds participant~$ i $ either receives a value or (if the respective communication fails) uses as default value its value from the former round.
In round~$ i $ participant~$ i $ initiates $ n - 1 $ new sub-sessions---each covered within an optional block---to transmit its current value to each other participant.

\begin{example}[Rotating Coordinator Implementation]
	\label{exa:process}
	\begin{align*}
		P_{\text{RC}}(n) ={} & \prod_{i = 1..n} \left( \PPar{\POutS{\Chan_i}{\Chan[s]}}{\PInp{\Chan_i}{\Chan[s]}{P_i}} \right)\\
		P_i ={} & ( \bigodot_{j = 1..(i - 1)} \POptS{\Role[p]_i}{\Role[p]_i, \Role[p]_j}{P_{j \to i}}{\Args[v]_{i, j}}{\Args[v]_{i, j - 1}} ).\\
		& ( \prod_{j = 1..n, j \neq i} \PRes{\Chan[k]}{\POptNVS{\Role[p]_i}{\Role[p]_i, \Role[p]_j}{P_{i \text{ calls } j}}} \PPar{}{( \bigodot_{j = (i + 1)..n} \POptS{\Role[p]_i}{\Role[p]_i, \Role[p]_j}{P_{j \to i}}{\Args[v]_{i, j}}{\Args[v]_{i, j - 1}} ))}\\
		P_{j \to i} ={} & \PEnt{\Chan[s]}{\Role[p]_j}{\Role[p]_i}{\Role[trg]}{\Args}{\PGet{\Args}{\Role[scr]}{\Role[trg]}{\PLab{\Labe[bc]}{\Args[v]}{\POptEnd{\Role[p]_i}{\Args[v]}}}}\\
		P_{i \text{ calls } j} ={} & \PDecl{\Chan[k]}{\Chan[s]}{\Args[v]_{i, i - 1}}{\cdot}{\cdot}{P_{i \to j, \Chan[k]}}\\
		P_{i \to j, \Chan[k]} ={} & \PPar{\PPar{\PReqS{\Chan[s]}{\Role[p]_i}{\Role[p]_i}{\Role[scr]}{\Chan[k]}}{\PReqS{\Chan[s]}{\Role[p]_i}{\Role[p]_j}{\Role[trg]}{\Chan[k]}}}{\PEnt{\Chan[s]}{\Role[p]_i}{\Role[p]_i}{\Role[scr]}{\Args[z]}{\PSend{\Args[z]}{\Role[scr]}{\Role[trg]}{\Labe[bc]}{\Args[v]_{i, i - 1}}{\POptEnd{\Role[p]_i}{\cdot}}}}
	\end{align*}
\end{example}

\noindent
The overall system $ P_{\text{RC}}(n) $ consists of the parallel composition of the $ n $ participants. The channel $ \Chan_i $ is used to distribute the initial session channel. Since these communications on $ \tilde{\Chan} $ are used to initialise the system and not to model the algorithm, we assume that they are reliable.

The term $ P_i $ models participant $ i $. Each participant first optionally receives $ i - 1 $ times a value from another participant. Therefore an optional block surrounds the term $ P_{j \to i} $ that first answers the sub-session request of participant~$ j $, then receives (as target) in the respective sub-session a value from participant~$ j $ (the source), and finally outputs $ \POptEnd{\Role[p]_i}{\Args[v]} $. This last output terminates the optional block and transmits the received value to its continuation. If this communication succeeds, the respective optional block succeeds, and the received value replaces the current value of participant~$ i $. Otherwise the default value of the former round is used, \ie the current value of participant~$ i $ remains unchanged.

In round~$ i $ participant~$ i $ initiates $ n - 1 $ parallel sub-sessions; one for each other participant. For each sub-session a private version of the sub-session channel $ \Chan[k] $ is restricted and an optional block is created. Within the optional block, the term $ P_{i \text{ calls } j} $ creates a sub-session between $ i $ (source) and $ j $ (target). This sub-session $ P_{i \to j, k} $ consists of the parallel composition of the invitations of the source and the target to participate in the sub-session using $ \Chan[k] $, and the session acceptance of the source (participant~$ i $) followed by the transmission of its current value towards the target (participant~$ j $) and the empty transmission $ \POptEnd{\Role[p]_i}{\cdot} $ that terminates the optional block of participant~$ i $.

Finally participant~$ i $ optionally receives $ n - i $ more values from other participants in the same way as in its first $ i - 1 $ rounds.

The sub-sessions initiated by participant~$ i $ for the broadcast are in parallel to the reception of the value for round~$ i + 1 $. The remaining rounds are composed sequentially.
In round~$ i $ the value $ \Args[v]_{i, i - 1} $ is emitted, \ie the (initial value or) last value that is received in the sequential $ n - 1 $ rounds that guard the parallel composition of the sub-sessions to transmit this value.
This matches an intuitive understanding of this process in terms of asynchronous communications. The sending operations emit the respective value as soon as they are unguarded but they syntactically remain part of the term until the (possibly later) reception of the respective message consumes it.
In fact the presented session calculus is synchronous but, since Example~\ref{exa:process} uses neither choice nor output continuations different from $ \PEnd $ or $ \POptEnd{\Role[p]_i}{\cdot} $, the process in Example~\ref{exa:process} can be considered as an asynchronous process \cite{hondaTokoro91, boudol92, palamidessi03, fossacs12_pi}.

\subsection{Reaching Consensus}

The first three steps initialise the outermost session using three times Rule~$ (\mathsf{comC}) $. We assume here that these steps belong to the environment and do never fail. If one of these steps fails, the respective participant does not know the global session channel and thus cannot participate in the algorithm, \ie is crashed from the beginning.
\begin{align*}
	P_{\text{RC}}(3) &
	\longmapsto^3 \PPar{P_1}{\PPar{P_2}{P_3}}
	\intertext{After the initialisation all participants consist of sequential and parallel optional blocks. Each of these optional blocks can fail any time. $ \Role[p]_1 $ can initialise one of its two sub-sessions to transmit its value to one of the other participants.
		Since these two blocks are in parallel, $ \Role[p]_1 $ can start with either of them. We assume however that in the next two steps it successfully initialise both sub-sessions within its unguarded optional blocks using two times Rule~$ (\mathsf{subs}) $. There is no external partner to invite, so the initialisation of the sub-sessions does not generate output messages but only unguards $ P_{1 \to 2, \Chan[k]} $ and $ P_{1 \to 3, \Chan[k]} $.}
	& \longmapsto^2 \PRes{\Chan[k]}{\POptNVS{\Role[p]_1}{\Role[p]_1, \Role[p]_2}{P_{1 \to 2, \Chan[k]}}}
	\PPar{\PPar{}{\PRes{\Chan[k]}{\POptNVS{\Role[p]_1}{\Role[p]_1, \Role[p]_3}{P_{1 \to 3, \Chan[k]}}}}}{\PPar{P_1'}{\PPar{P_2}{P_3}}}
	\intertext{Next $ \Role[p]_1 $ accepts the invitation to its own session---using Rule~$ (\mathsf{join}) $---within the second sub-session ($ P_{1 \to 3}, \Chan[k] $) and $ \Role[p]_3 $ accepts the invitation from $ \Role[p]_1 $---using Rule~$ (\mathsf{jO}) $.}
	& \longmapsto^2 \PPar{\PRes{\Chan[k]}{(\POptNVS{\Role[p]_1}{\Role[p]_1, \Role[p]_2}{P_{1 \to 2, \Chan[k]}})}}{\PPar{P_1'}{P_2}}\\
	& \hspace{2em} \PPar{}{\PRes{\Chan[k]}{( \POptNVS{\Role[p]_1}{\Role[p]_1, \Role[p]_3}{\PSend{\Chan[k]}{\Role[scr]}{\Role[trg]}{\Labe[bc]}{0}{\POptEnd{\Role[p]_1}{\cdot}}} }}
	\PPar{}{\POpt{\Role[p]_3}{\Role[p]_3, \Role[p]_1}{\PGet{\Chan[k]}{\Role[scr]}{\Role[trg]}{\PLab{\Labe[bc]}{\Args[v]}}{\POptEnd{\Role[p]_3}{\Args[v]}}}{\Args[v]_{3, 1}}{1}{P_3'}})
	\intertext{After transmitting its value to $ \Role[p]_3 $---using Rule~$ (\mathsf{cSO}) $---the content of the second optional block (from $ \Role[p]_1 $ to $ \Role[p]_3 $) is reduced to $ \POptEnd{\Role[p]_1}{\cdot} $ and the block can be removed by Rule~$ (\mathsf{succ}) $.}
	& \longmapsto^2 \PPar{\PRes{\Chan[k]}{(\POptNVS{\Role[p]_1}{\Role[p]_1, \Role[p]_2}{P_{1 \to 2, \Chan[k]}})}}{\PPar{P_1'}{P_2}}
	\PPar{}{\POpt{\Role[p]_3}{\Role[p]_3, \Role[p]_1}{\POptEnd{\Role[p]_3}{0}}{\Args[v]_{3, 1}}{1}{P_3'}})
	\intertext{Finally, $ \Role[p]_3 $ replaces its own value by the value $ 0 $ it received from $ \Role[p]_1 $---using Rule~$ (\mathsf{succ}) $. With that $ \Role[p]_3 $ finishes round~$ 1 $ and moves to round~$ 2 $, whereas the other two participants still remain in round~$ 1 $.}
	& \longmapsto \PPar{\PRes{\Chan[k]}{(\POptNVS{\Role[p]_1}{\Role[p]_1, \Role[p]_2}{P_{1 \to 2, \Chan[k]}})}}{\PPar{P_1'}{\PPar{P_2}{P_3'\Set[]{\Subst{0}{\Args[v]_{3, 1}}}}}}
	\intertext{$ \Role[p]_1 $ and $ \Role[p]_3 $ are waiting for a value from $ \Role[p]_2 $ and $ \Role[p]_2 $ is waiting for a value from $ \Role[p]_1 $. $ \Role[p]_1 $ fails to deliver its value to $ \Role[p]_2 $. It does not really matter whether it crashes while the invitations for the sub-session are accepted or before sending the value. The result is the same: The first optional block of $ \Role[p]_1 $ is removed by Rule~$ (\mathsf{fail}) $. After that there is no other optional block with matching roles for the first optional block of $ \Role[p]_2 $ and thus no communication can take place. Hence it has to be aborted as well. With that all three participants move to round~$ 2 $. Moreover, since we assume that $ \Role[p]_1 $ is crashed, we also abort the remaining two optional blocks of $ \Role[p]_1 $ in $ P_1' $ and $ \Role[p]_1 $ completes round~$ 3 $.}
	& \longmapsto^4 \PPar{P_2'}{P_3'\Set[]{\Subst{0}{\Args[v]_{3, 1}}}}
	\intertext{$ \Role[p]_2 $ in $ P_2' $ holds the value $ 1 $ (its initial value) and $ \Role[p]_3 $ in $ P_3'\!\Set[]{ \Subst{0}{\Args[v]_{3, 1}} } $ holds the value $ 0 $. To complete round~$ 2 $ Participant~$ 2 $ aborts its attempt to send to $ \Role[p]_1 $ and (successfully) completes the sub-session with $ \Role[p]_3 $.
	Round~$ 3 $ is completed in the same way and finally the two remaining participants both hold the value $ 1 $.}
	& \longmapsto^7 \PPar{P_2''}{P_3''\!\Set[]{ \Subst{1}{\Args[v]_{3, 2}} }}
	\longmapsto^7 \PPar{P_2'''\!\Set[]{ \Subst{1}{\Args[v]_{2, 3}} }}{P_3'''\!\Set[]{ \Subst{1}{\Args[v]_{3, 2}} }}
\end{align*}

\subsection{Well-Typed Processes}

Let
\begin{align*}
	\Gamma = \Typed{\Chan[a]_1}{\Rest{G_{\text{RC}}^n}{\Role[p]_1}}, \ldots, \Typed{\Chan[a]_n}{\Rest{G_{\text{RC}}^n}{\Role[p]_n}}, \TypedProt{\Prot[C]}{\Role[scr], \Role[trg]}{\Args[val]}{\cdot}{G_{\Prot[C]}}, \Typed{\Chan[s]}{G_{\text{RC}}^n}
\end{align*}
where $ G_{\text{RC}}^n $, $ G_{\Prot[C]} $, and $ \Rest{G_{\text{RC}}^n}{\Role[p]_i} $ are provided by the Examples~\ref{exa:globalType} and \ref{exa:restriction}.
Similarly, let
\begin{align*}
	\Delta = \Typed{\ATE{\Chan[s]}{\Role[p]_1}}{\ProjS{G_{\text{RC}}^n}{}{\Role[p]_1}}, \ldots, \Typed{\ATE{\Chan[s]}{\Role[p]_1}}{\ProjS{G_{\text{RC}}^n}{}{\Role[p]_1}}
\end{align*}
where $ \ProjS{G_{\text{RC}}^n}{}{\Role[p]_i} $ is provided by Example~\ref{exa:localType}.
We notice that $ \Delta $ is closed.
We first apply the Rule~$ (\mathsf{Pa}) $ $ n $ times to separate $ P_{\text{RC}}^n = \prod_{i = 1..n} \left( \PPar{\POutS{\Chan}{\Chan[s]}}{\PInp{\Chan}{\Chan[s]}{P_i}} \right) $ into $ n $ participants $ \PPar{\POutS{\Chan}{\Chan[s]}}{\PInp{\Chan}{\Chan[s]}{P_i}} $, whereby we split $ \Delta $ into $ \Delta_1 \otimes \ldots \otimes \Delta_n $ with $ \Delta_i = \Typed{\ATE{\Chan[s]}{\Role[p]_i}}{\ProjS{G_{\text{RC}}^n}{}{\Role[p]_i}} $.

Since $ \Delta_i \otimes \emptyset = \Delta_i = \Typed{\ATE{\Chan[s]}{\Role[p]_i}}{\ProjS{G_{\text{RC}}^n}{}{\Role[p]_i}} $ and $ \ProjS{G_{\text{RC}}^n}{}{\Role[p]_i} = \Proj{\Rest{G_{\text{RC}}^n}{\Role[p]_i}}{}{\Role[p]_i} $, we have:
\begin{align*}
	\hspace*{-1em}\dfrac{\dfrac{\dfrac{}{\Gamma \vdash \PEnd \triangleright \emptyset}(\mathsf{N}) \quad \GetType{\Chan_i} = \Rest{G_{\text{RC}}^n}{\Role[p]_i}}{\Gamma \vdash \POutS{\Chan_i}{\Chan[s]} \triangleright \Delta_i}(\mathsf{O}) \quad \dfrac{\Gamma \vdash P_i \triangleright \Typed{\AT{\Chan[s]}{\Role[p]_i}}{\ProjS{G_{\text{RC}}^n}{}{\Role[p]_i}} \quad \GetType{\Chan_i} = \Rest{G_{\text{RC}}^n}{\Role[p]_i}}{\Gamma \vdash \PInp{\Chan_i}{\Chan[s]}{P_i} \triangleright \emptyset}(\mathsf{I})}{\Gamma \vdash \PPar{\POutS{\Chan_i}{\Chan[s]}}{\PInp{\Chan_i}{\Chan[s]}{P_i}} \triangleright \Delta_i}(\mathsf{Pa})
\end{align*}
It remains to prove that $ \Gamma \vdash P_i \triangleright \Typed{\AT{\Chan[s]}{\Role[p]_i}}{\ProjS{G_{\text{RC}}^n}{}{\Role[p]_i}} $.

By Example~\ref{exa:process}, $ P_i $ starts with $ i - 1 $ sequential optional blocks and, by Example~\ref{exa:localType}, $ \ProjS{G_{\text{RC}}^n}{}{\Role[p]_i} $ similarly starts with $ i - 1 $ sequential local types of optional blocks. For each of theses blocks
\begin{align*}
	\dfrac{\begin{array}{c} \Role[p]_i, \Role[p]_j \ \dot{=} \ \Role[p]_i, \Role[p]_j \quad \Gamma \vdash P_{j \to i} \triangleright \Typed{\AT{\Chan[s]}{\Role[p]_i}}{T_{j \to i}}, \Typed{\Role[p]_i}{\OV{\Sort[V]}} \quad \nexists \Role, \tilde{\Sort[K]} \logdot \Typed{\Role}{\OV{\tilde{\Sort[K]}}} \in \emptyset\\ \Gamma \vdash P_{j \to i}' \triangleright \Typed{\AT{\Chan[s]}{\Role[p]_i}}{T_{j \to i}'} \quad \vdash \Typed{\Args[v]_{i, j}}{\Sort[V]} \quad \vdash \Typed{\Args[v]_{i, j - 1}}{\Sort[V]} \end{array}}{\Gamma \vdash \POpt{\Role[p]_i}{\Role[p]_i, \Role[p]_j}{P_{j \to i}}{\Args[v]_{i, j}}{\Args[v]_{i, j - 1}}{P_{j \to i}'} \triangleright \Typed{\AT{\Chan[s]}{\Role[p]_i}}{\LTOpt{\Role[p]_i, \Role[p]_j}{T_{j \to i}}{\Typed{\Args[v]_{i, j - 1}}{\!\Sort[V]}}{T_{j \to i}'}}}(\mathsf{Opt})
\end{align*}
with $ P_{j \to i} = \PEnt{\Chan[s]}{\Role[p]_j}{\Role[p]_i}{\Role[trg]}{\Args}{\PGet{\Args}{\Role[scr]}{\Role[trg]}{\PLab{\Labe[bc]}{\Args[v]}{\POptEnd{\Role[p]_i}{\Args[v]}}}} $, the type $ T_{j \to i} = \LTEntS{\Prot[C]}{\Role[trg]}{\Args[v]_{j, j - 1}}{\Role[p]_j} $ and where $ P_{j \to i}' $ and $ T_{j \to i}' $ are the respective continuations. For the content of the optional blocks we have to check the type of $ P_{j \to i} $.
\begin{align*}
	\hspace{-0.2em}\hspace{-1.5em}\dfrac{D_1 \quad \GetType{\Prot[C]} = \TypeOfProt{\Role[scr], \Role[trg]}{\Args[val]}{\cdot}{G_{\Prot[C]}} \quad \ProjS{G_{\Prot[C]}\!\Set[]{ \Subst{\Args[v]_{j, j - 1}}{\Args[val]} }}{}{\Role[trg]} = \LTGet{\Role[scr]}{\LTLabS{\Labe[bc]}{\Typed{\Args[v]_{j, j - 1}}{\Sort[V]}}}}{\Gamma \vdash \PEnt{\Chan[s]}{\Role[p]_j}{\Role[p]_i}{\Role[trg]}{\Args}{\PGet{\Args}{\Role[scr]}{\Role[trg]}{\PLab{\Labe[bc]}{\Args[v]}{\POptEnd{\Role[p]_i}{\Args[v]}}}} \triangleright \Typed{\AT{\Chan[s]}{\Role[p]_i}}{\LTEntS{\Prot[C]}{\Role[trg]}{\Args[v]_{j, j - 1}}{\Role[p]_j}}, \Typed{\Role[p]_i}{\OV{\Sort[V]}}}(\mathsf{J})
\end{align*}
with
\begin{align*}
	D_1 = \dfrac{\dfrac{\vdash \Typed{\Args[v]}{\Sort[V]}}{\Gamma \vdash \POptEnd{\Role[p]_i}{\Args[v]} \triangleright \Typed{\Role[p]_i}{\OV{\Sort[V]}}}(\mathsf{OptE}) \quad \vdash \Typed{\Args[v]}{\Sort[V]}}{\Gamma \vdash \PGet{\Args}{\Role[scr]}{\Role[trg]}{\PLab{\Labe[bc]}{\Args[v]}{\POptEnd{\Role[p]_i}{\Args[v]}}} \triangleright \Typed{\AT{\Args[x]}{\Role[trg]}}{\LTGet{\Role[scr]}{\LTLabS{\Labe[bc]}{\Typed{\Args[v]_{j, j - 1}}{\Sort[V]}}}}, \Typed{\Role[p]_i}{\OV{\Sort[V]}}}(\mathsf{C})
\end{align*}

After removing $ i - 1 $ optional blocks this way, $ P_{i - 1 \to i}' $ and $ T_{i - 1 \to i}' $ consist of $ n $ parallel components, respectively. We use the Rule~$ (\mathsf{Pa}) $ $ n $ times to separate these components. The $ n $'th component, we obtain this way, consists of $ n - 1 $ sequential optional blocks, respectively. Their type is checked similar to the first $ i - 1 $ such sequential optional blocks with a derivation for $ \PEnd $ using Rule~$ (\mathsf{N}) $ in the end.

It remains to show that $ \Gamma \vdash \PRes{\Chan[k]}{\POptNVS{\Role[p]_i}{\Role[p]_i, \Role[p]_j}{P_{i \text{ calls } j}}} \triangleright \Typed{\AT{\Chan[s]}{\Role[p]_i}}{\LTOptS{\Role[p]_i, \Role[p]_j}{T_{i \text{ calls } j}}{\cdot}} $ holds for all $ j = 1..n $ with $ j \neq i $, where
\begin{align*}
	P_{i \text{ calls } j} &= \PDecl{\Chan[k]}{\Chan[s]}{\Args[v]_{i, i - 1}}{\cdot}{\cdot}{P_{i \to j, \Chan[k]}}\\
	T_{i \text{ calls } j} &= \LTCall{\Prot[C]}{G_{\Prot[C]}}{\Args[v]_{i, i - 1}}{\Typed{\Args[val]}{\!\Sort[V]}}{\cdot}{T_{i \to j}}
\end{align*}
Let $ \Gamma' = \Gamma, \Typed{\Chan[k]}{G_{\Prot[C]}\!\Set[]{ \Subst{\Args[v]_{i, i - 1}}{\Args[val]} }} $.
\begin{align*}
	\dfrac{\dfrac{\Role[p]_i, \Role[p]_j \ \dot{=} \ \Role[p]_i, \Role[p]_j \quad \Gamma' \vdash P_{i \text{ calls } j} \triangleright \Typed{\AT{\Chan[s]}{\Role[p]_i}}{T_{i \text{ calls } j}}, \Typed{\Role[p]_i}{\OV{\cdot}} \quad \nexists \Role, \tilde{\Sort[K]} \logdot \Typed{\Role}{\OV{\tilde{\Sort[K]}}} \in \emptyset \quad \dfrac{}{\Gamma' \vdash \PEnd \triangleright \emptyset}(\mathsf{N}) \quad \vdash \Typed{\cdot}{\cdot} \quad \vdash \Typed{\cdot}{\cdot}}{\Gamma' \vdash \POptNVS{\Role[p]_i}{\Role[p]_i, \Role[p]_j}{P_{i \text{ calls } j}} \triangleright \Typed{\AT{\Chan[s]}{\Role[p]_i}}{\LTOptS{\Role[p]_i, \Role[p]_j}{T_{i \text{ calls } j}}{\cdot}}}(\mathsf{Opt})}{\Gamma \vdash \PRes{\Chan[k]}{\POptNVS{\Role[p]_i}{\Role[p]_i, \Role[p]_j}{P_{i \text{ calls } j}}} \triangleright \Typed{\AT{\Chan[s]}{\Role[p]_i}}{\LTOptS{\Role[p]_i, \Role[p]_j}{T_{i \text{ calls } j}}{\cdot}}}(\mathsf{R})
\end{align*}
with $ G_{\Prot[C]}\!\Set[]{ \Subst{\Args[v]_{i, i - 1}}{\Args[val]} } = \GTCom{\Role[scr]}{\Role[trg]}{\GTInpS{\Labe[bc]}{\Typed{\Args[v]_{i, i - 1}}{\!\Sort[V]}}} $,
\begin{align*}
	\dfrac{\begin{array}{c} \Gamma' \vdash P_{i \to j, \Chan[k]} \triangleright \Typed{\AT{\Chan[s]}{\Role[p]_i}}{T_{i \to j}}, \Delta_k, \Typed{\Role[p]_i}{\OV{\cdot}}\\ \GetType[\Gamma']{\Prot[C]} = \TypedProt{\Prot[C]}{\Role[scr], \Role[trg]}{\Args[val]}{\cdot}{G_{\Prot[C]}} \quad \ProjS{G_{\Prot[C]}\!\Set[]{ \Subst{\Args[v]_{i, i - 1}}{\Args[val]} }}{}{\Role[scr]} = \LTSend{\Role[trg]}{\LTLabS{\Labe[bc]}{\Args[v]_{i, i - 1}}}\\ \ProjS{G_{\Prot[C]}\!\Set[]{ \Subst{\Args[v]_{i, i - 1}}{\Args[val]} }}{}{\Role[trg]} = \LTGet{\Role[scr]}{\LTLabS{\Labe[bc]}{\Args[v]_{i, i - 1}}} \quad \vdash \Typed{\Args[v]_{i, i - 1}}{\Sort[V]} \quad \GetType[\Gamma']{\Chan[k]} = G_{\Prot[C]}\!\Set[]{ \Subst{\Args[v]_{i, i - 1}}{\Args[val]} } \end{array}}{\Gamma' \vdash \PDecl{\Chan[k]}{\Chan[s]}{\Args[v]_{i, i - 1}}{\cdot}{\cdot}{P_{i \to j, \Chan[k]}} \triangleright \Typed{\AT{\Chan[s]}{\Role[p]_i}}{\LTCall{\Prot[C]}{G_{\Prot[C]}}{\Args[v]_{i, i - 1}}{\Typed{\Args[val]}{\!\Sort[V]}}{\cdot}{T_{i \to j}}}, \Typed{\Role[p]_i}{\OV{\cdot}}}(\mathsf{New})
\end{align*}
and $ \Delta_k = \Typed{\ATI{\Chan[k]}{\Role[scr]}}{\LTSend{\Role[trg]}{\LTLabS{\Labe[bc]}{\Args[v]_{i, i - 1}}}}, \Typed{\ATI{\Chan[k]}{\Role[trg]}}{\LTGet{\Role[scr]}{\LTLabS{\Labe[bc]}{\Args[v]_{i, i - 1}}}} $.

It remains to show that $ \Gamma' \vdash P_{i \to j, \Chan[k]} \triangleright \Typed{\AT{\Chan[s]}{\Role[p]_i}}{T_{i \to j}}, \Delta_k, \Typed{\Role[p]_i}{\OV{\cdot}} $.
By Example~\ref{exa:process},
\begin{align*}
	P_{i \to j, \Chan[k]} = \PPar{\PReqS{\Chan[s]}{\Role[p]_i}{\Role[p]_i}{\Role[scr]}{\Chan[k]}}{\PReqS{\Chan[s]}{\Role[p]_i}{\Role[p]_j}{\Role[trg]}{\Chan[k]}}\PPar{}{\PEnt{\Chan[s]}{\Role[p]_i}{\Role[p]_i}{\Role[scr]}{\Args[z]}{\PSend{\Args[z]}{\Role[scr]}{\Role[trg]}{\Labe[bc]}{\Args[v]_{i, i - 1}}{\POptEnd{\Role[p]_i}{\cdot}}}}
\end{align*}
and, by Example~\ref{exa:localType},
\begin{align*}
	T_{i \to j} = \LTPar{\LTReqS{\Prot[C]}{\Role[src]}{\Args[v]_{i, i - 1}}{\!\Role[p]_i}}{\LTReqS{\Prot[C]}{\Role[trg]}{\Args[v]_{i, i - 1}}{\!\Role[p]_j}} \LTPar{}{\LTEntS{\Prot[C]}{\Role[scr]}{\Args[v]_{i, i - 1}}{\Role[p]_i}}
\end{align*}
We apply the Rule~$ (\mathsf{Pa}) $ two times such that it remains to show:
\begin{enumerate}[(1)]
	\item $ \Gamma' \vdash \PReqS{\Chan[s]}{\Role[p]_i}{\Role[p]_i}{\Role[scr]}{\Chan[k]} \triangleright \Typed{\AT{\Chan[s]}{\Role[p]_i}}{\LTReqS{\Prot[C]}{\Role[src]}{\Args[v]_{i, i - 1}}{\!\Role[p]_i}}, \Typed{\ATI{\Chan[k]}{\Role[scr]}}{\LTSend{\Role[trg]}{\LTLabS{\Labe[bc]}{\Args[v]_{i, i - 1}}}} $
	\item $ \Gamma' \vdash \PReqS{\Chan[s]}{\Role[p]_i}{\Role[p]_j}{\Role[trg]}{\Chan[k]} \triangleright \Typed{\AT{\Chan[s]}{\Role[p]_i}}{\LTReqS{\Prot[C]}{\Role[trg]}{\Args[v]_{i, i - 1}}{\!\Role[p]_j}}, \Typed{\ATI{\Chan[k]}{\Role[trg]}}{\LTGet{\Role[scr]}{\LTLabS{\Labe[bc]}{\Args[v]_{i, i - 1}}}} $
	\item $ \Gamma' \vdash \PEnt{\Chan[s]}{\Role[p]_i}{\Role[p]_i}{\Role[scr]}{\Args[z]}{\PSend{\Args[z]}{\Role[scr]}{\Role[trg]}{\Labe[bc]}{\Args[v]_{i, i - 1}}{\POptEnd{\Role[p]_i}{\cdot}}} \triangleright \Typed{\AT{\Chan[s]}{\Role[p]_i}}{\LTEntS{\Prot[C]}{\Role[scr]}{\Args[v]_{i, i - 1}}{\Role[p]_i}}, \Typed{\Role[p]_i}{\OV{\cdot}} $
\end{enumerate}

For the first case we have:
\begin{align*}
	\dfrac{\dfrac{}{\Gamma' \vdash \PEnd \triangleright \emptyset}(\mathsf{N}) \quad \GetType{\Prot[C]} = \TypedProt{\Prot[C]}{\Role[scr], \Role[trg]}{\Args[val]}{\cdot}{G_{\Prot[C]}} \quad \ProjS{G_{\Prot[C]}\!\Set[]{ \Subst{\Args[v]_{i, i}}{\Args[val]} }}{}{\Role[scr]} = \LTSend{\Role[trg]}{\LTLabS{\Labe[bc]}{\Args[v]_{i, i - 1}}}}{\Gamma' \vdash \PReqS{\Chan[s]}{\Role[p]_i}{\Role[p]_i}{\Role[scr]}{\Chan[k]} \triangleright \Typed{\AT{\Chan[s]}{\Role[p]_i}}{\LTReqS{\Prot[C]}{\Role[src]}{\Args[v]_{i, i - 1}}{\!\Role[p]_i}}, \Typed{\ATI{\Chan[k]}{\Role[scr]}}{\LTSend{\Role[trg]}{\LTLabS{\Labe[bc]}{\Args[v]_{i, i - 1}}}}}(\mathsf{P})
\end{align*}

The second case is similar. We have:
\begin{align*}
	\dfrac{\dfrac{}{\Gamma' \vdash \PEnd \triangleright \emptyset}(\mathsf{N}) \quad \GetType{\Prot[C]} = \TypedProt{\Prot[C]}{\Role[scr], \Role[trg]}{\Args[val]}{\cdot}{G_{\Prot[C]}} \quad \ProjS{G_{\Prot[C]}\!\Set[]{ \Subst{\Args[v]_{i, i}}{\Args[val]} }}{}{\Role[trg]} = \LTGet{\Role[scr]}{\LTLabS{\Labe[bc]}{\Args[v]_{i, i - 1}}}}{\Gamma' \vdash \PReqS{\Chan[s]}{\Role[p]_i}{\Role[p]_j}{\Role[trg]}{\Chan[k]} \triangleright \Typed{\AT{\Chan[s]}{\Role[p]_i}}{\LTReqS{\Prot[C]}{\Role[trg]}{\Args[v]_{i, i - 1}}{\!\Role[p]_j}}, \Typed{\ATI{\Chan[k]}{\Role[trg]}}{\LTGet{\Role[scr]}{\LTLabS{\Labe[bc]}{\Args[v]_{i, i - 1}}}}}(\mathsf{P})
\end{align*}

For the third case we have:
\begin{align*}
	\dfrac{\begin{array}{c}
		\dfrac{\dfrac{\vdash \Typed{\cdot}{\cdot}}{\Gamma' \vdash \POptEnd{\Role[p]_i}{\cdot} \triangleright \Typed{\Role[p]_i}{\OV{\cdot}}}(\mathsf{OptE}) \quad \vdash \Typed{\Args[v]_{i, i - 1}}{\Sort[V]}}{\Gamma' \vdash \PSend{\Args[z]}{\Role[scr]}{\Role[trg]}{\Labe[bc]}{\Args[v]_{i, i - 1}}{\POptEnd{\Role[p]_i}{\cdot}} \triangleright \Typed{\AT{\Args[z]}{\Role[scr]}}{\LTSend{\Role[trg]}{\LTLabS{\Labe[bc]}{\Args[v]_{i, i - 1}}}}, \Typed{\Role[p]_i}{\OV{\cdot}}}(\mathsf{S})\\
		\GetType[\Gamma']{\Prot[C]} = \TypedProt{\Prot[C]}{\Role[scr], \Role[trg]}{\Args[val]}{\cdot}{G_{\Prot[C]}} \quad
		\ProjS{G_{\Prot[C]}\!\Set[]{ \Subst{\Args[v]_{i, i}}{\Args[val]} }}{}{\Role[scr]} = \LTSend{\Role[trg]}{\LTLabS{\Labe[bc]}{\Args[v]_{i, i - 1}}}
	\end{array}}{\Gamma' \vdash \PEnt{\Chan[s]}{\Role[p]_i}{\Role[p]_i}{\Role[scr]}{\Args[z]}{\PSend{\Args[z]}{\Role[scr]}{\Role[trg]}{\Labe[bc]}{\Args[v]_{i, i - 1}}{\POptEnd{\Role[p]_i}{\cdot}}} \triangleright \Typed{\AT{\Chan[s]}{\Role[p]_i}}{\LTEntS{\Prot[C]}{\Role[scr]}{\Args[v]_{i, i - 1}}{\Role[p]_i}}, \Typed{\Role[p]_i}{\OV{\cdot}}}(\mathsf{J})
\end{align*}

We conclude that $ \Gamma \vdash P_{\text{RC}}^n \triangleright \Delta $ holds.

\section{Properties of the Type Systems}
\label{sec:properties}

In the following we analyse the properties of the (two versions of the) type systems.

We formally distinguish between the following sets:
\begin{compactitem}
	\item The set $ \nameSet $ of names that captures all kinds of channel names, session names, and names for values. We often use different identifiers to hint on the different purpose of a name, \eg we use $ \Chan $ for shared channels, $ \Chan[s], \Chan[k] $ for session names, and $ \Args[v] $ for values. We do however not formally distinguish between these different kinds of names but formally distinguish names from the following sets.
	\item The set $ \roleSet $ of roles, usually identified by $ \Role[r], \Role[r]', \Role[r]_i, \ldots $ (in the examples we used the roles $ \Role[p]_1, \ldots, \Role[p]_n, \Role[scr] $, and $ \Role[trg] $).
	\item The set $ \labelSet $ of labels, usually identified by $ \Labe, \Labe', \Labe_i, \ldots $ (in the examples we used the labels $ \Labe[c] $ and $ \Labe[bc] $ for communication).
	\item The set $ \procVarSet $ of process variables, usually identified by $ \TermV[X] $.
	\item The set $ \typeVarSet $ of type variables, usually identified by $ \TermV $.
\end{compactitem}
Moreover notice that kinds, usually identified by $ \Sort[S], \Sort[S]', \Sort[S]_, \ldots $ (and $ \Sort[V] $ in our examples), are neither global nor local types and can be formally distinguished from every global or local type.
Because of that, a statement $ \vdash \Typed{\Args[v]}{\Sort[S]} $ tells us that $ \Args[v] $ is a value that is different from all names that are used \eg as (either shared or session) channel.
Remember the type environments---$ \Gamma $ for the global types and $ \Delta $ for the session types---cannot contain multiple type statements for the same name or the same combination of a name and a role, respectively.

\subsection{Structural Congruence, Substitution, and Evaluation Contexts}

We start with a few auxiliary results.
The first Lemma tells us, that the property of being well-typed is preserved by structural congruence.
Note that following \cite{DemangeonHonda12} we handle recursion implicitly using the rules
\[ \begin{array}{c}
	\GTPar{G_1}{G_2} \equiv \GTPar{G_2}{G_1} \hspace*{1.4em}
	\GTPar{G_1}{\left( \GTPar{G_2}{G_3} \right)} \equiv \GTPar{\left( \GTPar{G_1}{G_2} \right)}{G_3} \hspace*{1.4em}
	\GTChoi{G_1}{\Role}{G_2} \equiv \GTChoi{G_2}{\Role}{G_1} \hspace*{1.4em}
	\GTRec{\TermV}{G} \equiv G\!\Set{ \Subst{\GTRec{\TermV}{P}}{\TermV} }
\end{array} \]
for global types and the rules
\[ \begin{array}{c}
	\LTPar{T_1}{T_2} \equiv \LTPar{T_2}{T_1} \hspace*{1.4em}
	\LTPar{T_1}{\left( \LTPar{T_2}{T_3} \right)} \equiv \LTPar{\left( \LTPar{T_1}{T_2} \right)}{T_3} \hspace*{1.4em}
	\LTChoi{T_1}{T_2} \equiv \LTChoi{T_2}{T_1} \hspace*{1.4em}
	\LTRec{\TermV}{T} \equiv T\!\Set{ \Subst{\LTRec{\TermV}{P}}{\TermV} }
\end{array} \]
for local types and usually equate structural equivalent types and processes.

\begin{lemma}
	\label{lem:typedStructuralCongruence}
	For both type systems:
	If $ \Gamma \vdash P \triangleright \Delta $ and $ P \equiv P' $ then $ \Gamma \vdash P' \triangleright \Delta $.
\end{lemma}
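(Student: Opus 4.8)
The plan is to proceed by induction on the derivation of $P \equiv P'$. Since $\equiv$ is the least congruence generated by the listed axioms (plus the recursion-unfolding axioms and alpha-conversion), it suffices to treat, for each generating axiom, both directions of the statement, then the reflexive/symmetric/transitive closure, and finally the congruence-closure cases, one per process constructor. The closure under reflexivity is trivial, symmetry is absorbed by handling every axiom bidirectionally, and transitivity composes the two instances of the claim delivered by the sub-derivations. The congruence-closure cases are routine inversion–reconstruction arguments: for a constructor $\mathcal{C}$ with $\mathcal{C}[\ldots,P_i,\ldots] \equiv \mathcal{C}[\ldots,P_i',\ldots]$ arising from $P_i \equiv P_i'$, one inverts the (unique, for non-choice processes) typing rule for $\mathcal{C}$ to expose the typing of $P_i$, applies the induction hypothesis, and re-applies the rule; for $\mathcal{C}$ an optional block, the side conditions of $(\mathsf{Opt})$ and $(\mathsf{OptE})$ (the $\dot{=}$-check on role lists, the uniqueness of the $\OV{\cdot}$ declaration) constrain only the environments and role lists, not $P_i$, so they survive the substitution of $P_i'$ for $P_i$.

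Before attacking the axioms I would collect three auxiliary facts. First, the operator $\otimes$ is commutative (given) and, where defined, associative with unit $\emptyset$; associativity follows from its defining clauses, since $\otimes$ merges $\Typed{\AT{\Chan[s]}{\Role}^-}{\cdot}$-entries pointwise (combining colliding local types with $\LTPar{}{}$, which is associative and commutative up to $\equiv$) and merely accumulates $\OV{\cdot}$-entries on one side. Second, the relation $\equiv$ on local types makes $\oplus$ commutative and identifies $\LTRec{\TermV}{T}$ with its unfolding, and — following \cite{DemangeonHonda12} — processes are typed up to $\equiv$ on types, so identities such as $T_1 \oplus T_2 \equiv T_2 \oplus T_1$ may be used silently inside session environments. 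Third, I would prove (or cite) the standard weakening/strengthening lemma for the global environment, namely that for $x \notin \FreeNames{P}$ one has $\Gamma \vdash P \triangleright \Delta$ iff $\Gamma, \Typed{x}{\AT{T}{\Role}} \vdash P \triangleright \Delta$, together with the observation that entries of $\Gamma$, and of $\Delta$, concerning distinct names (resp. distinct channel/role pairs) may be reordered freely, since neither environment ever contains two statements about the same name.

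With these in hand each axiom is immediate. For $\PPar{P}{\PEnd} \equiv P$: inversion of $(\mathsf{Pa})$ and $(\mathsf{N})$ gives $\Delta = \Delta_1 \otimes \emptyset = \Delta_1$, and the converse recombines with $(\mathsf{N})$. The two axioms for commutativity and associativity of parallel composition reduce to inversion of $(\mathsf{Pa})$ plus the algebra of $\otimes$. For $\PChoi{P_1}{P_2} \equiv \PChoi{P_2}{P_1}$, inversion of $(\mathsf{S1})$ (possibly preceded by applications of $(\mathsf{S2})$) exposes $\Gamma \vdash P_i \triangleright \Delta', \Typed{\AT{\Chan[s]}{\Role}}{T_i}$; re-applying $(\mathsf{S1})$ yields a session environment differing only by $T_2 \oplus T_1 \equiv T_1 \oplus T_2$. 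For the restriction axioms: $\PRes{x}{\PEnd} \equiv \PEnd$ follows from inversion of $(\mathsf{R})$ and $(\mathsf{N})$; $\PRes{x}{\PRes{y}{P}} \equiv \PRes{y}{\PRes{x}{P}}$ (with $x \neq y$ by alpha-conversion) from inversion of $(\mathsf{R})$ twice and reordering of $\Gamma$; and $\PRes{x}{(\PPar{P_1}{P_2})} \equiv \PPar{P_1}{\PRes{x}{P_2}}$ for $x \notin \FreeNames{P_1}$ from inversion of $(\mathsf{R})$ and $(\mathsf{Pa})$ together with strengthening/weakening applied to $P_1$. The argument is uniform for both calculi, as the sub-session constructors add no new axioms to $\equiv$.

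The step I expect to be the main obstacle is the recursion-unfolding axiom $\PRec{\TermV[X]}{P} \equiv P\!\Set{\Subst{\PRec{\TermV[X]}{P}}{\TermV[X]}}$, because the typing rules displayed in Figure~\ref{fig:typingRules} do not make explicit how $\PRec{\TermV[X]}{P}$ and $\PVar{\TermV[X]}$ are typed — the paper, after \cite{DemangeonHonda12}, treats process- and type-level recursion "implicitly", typing against recursive local types that are themselves identified with their unfoldings. Under that convention the two processes are validated by the same session environment essentially by definition, via the type-level unfolding rule together with a standard substitution property that aligns process-variable substitution with the corresponding type-variable substitution. Making this fully precise requires committing to the exact recursion rules one intends; once that choice is fixed, however, no genuine difficulty remains.
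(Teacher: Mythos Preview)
Your proposal is correct and follows essentially the same approach as the paper: induction on the derivation of $P \equiv P'$, handling each axiom by inversion of the relevant typing rule(s) and the algebra of $\otimes$, with recursion treated via the implicit type-level unfolding $\LTRec{\TermV}{T} \equiv T\{\Subst{\LTRec{\TermV}{T}}{\TermV}\}$ inherited from \cite{DemangeonHonda12}. The only presentational differences are that you are more explicit about the congruence-closure and transitivity cases (which the paper silently omits), and you package the scope-extrusion argument as a weakening/strengthening lemma for $\Gamma$ whereas the paper argues inline that the only rules inspecting $\Gamma$ on a channel name are $(\mathsf{I})$, $(\mathsf{O})$, and $(\mathsf{New})$, none of which are triggered for a name not free in the process; the paper also threads possible leading applications of $(\mathsf{S2})$ through each case, a detail you mention only for choice.
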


\begin{proof}
	We start with the larger type system, \ie the session types with optional blocks and sub-sessions.
	The proof is by induction on the structural congruence $ \equiv $ between processes.
	\begin{description}
		\item[Case $ \PPar{P}{\PEnd} \equiv P $:]
			Assume $ \Gamma \vdash \PPar{P}{\PEnd} \triangleright \Delta $.
			Then, by the typing rules of Figure~\ref{fig:typingRules}, the proof of this judgement has (modulo applications of Rule~(\textsf{S2}) that can be moved towards the type check of $ P $) to start with
			\begin{align*}
				\dfrac{\Gamma \vdash P \triangleright \Delta_{P} \quad \dfrac{}{\Gamma \vdash \PEnd \triangleright \emptyset}(\mathsf{N})}{\Gamma \vdash \PPar{P}{\PEnd} \triangleright \Delta}(\mathsf{Pa})
			\end{align*}
			where $ \Delta = \Delta_P \otimes \emptyset $ and thus $ \Delta_P = \Delta $.
			Then also $ \Gamma \vdash P \triangleright \Delta $.

			Assume $ \Gamma \vdash P \triangleright \Delta $. With the Rules~$ (\mathsf{N}) $ and $ (\mathsf{Pa}) $ and, because $ \Delta \otimes \emptyset = \Delta $, we then have
			\begin{align*}
				\dfrac{\Gamma \vdash P \triangleright \Delta \quad \dfrac{}{\Gamma \vdash \PEnd \triangleright \emptyset}(\mathsf{N})}{\Gamma \vdash \PPar{P}{\PEnd} \triangleright \Delta}(\mathsf{Pa})
			\end{align*}
			Hence also $ \Gamma \vdash \PPar{P}{\PEnd} \triangleright \Delta $.
		\item[Case $ \PPar{P_1}{P_2} \equiv \PPar{P_2}{P_1} $:]
			Assume $ \Gamma \vdash \PPar{P_1}{P_2} \triangleright \Delta $.
			Then, by the typing rules of Figure~\ref{fig:typingRules}, the proof of this judgement has to start with a number of applications of (\textsf{S2}) that reduce $ \Gamma \vdash \PPar{P_1}{P_2} \triangleright \Delta $ to $ \Gamma \vdash \PPar{P_1}{P_2} \triangleright \Delta' $ for some $ \Delta' $ such that
			\begin{align*}
				\dfrac{\Gamma \vdash P_1 \triangleright \Delta_{P1} \quad \Gamma \vdash P_2 \triangleright \Delta_{P2}}{\Gamma \vdash \PPar{P_1}{P_2} \triangleright \Delta'}(\mathsf{Pa})
			\end{align*}
			where $ \Delta' = \Delta_{P1} \otimes \Delta_{P2} $.
			With Rule~$ (\mathsf{Pa}) $ and, because $ \Delta_{P1} \otimes \Delta_{P2} = \Delta_{P2} \otimes \Delta_{P1} $, then also $ \Gamma \vdash \PPar{P_2}{P_1} \triangleright \Delta' $. We use the same applications of (\textsf{S2}) to derive $ \Gamma \vdash \PPar{P_2}{P_1} \triangleright \Delta $.

			The other direction is similar.
		\item[Case $ \PPar{P_1}{\left( \PPar{P_2}{P_3} \right)} \equiv \PPar{\left( \PPar{P_1}{P_2} \right)}{P_3} $:]
			Assume $ \Gamma \vdash \PPar{P_1}{\left( \PPar{P_2}{P_3} \right)} \triangleright \Delta $.
			Then, by the typing rules of Figure~\ref{fig:typingRules}, the proof of this judgement has to start with a number of applications of (\textsf{S2}) that reduce $ \Gamma \vdash \PPar{P_1}{\left( \PPar{P_2}{P_3} \right)} \triangleright \Delta $ to $ \Gamma \vdash \PPar{P_1}{\left( \PPar{P_2}{P_3} \right)} \triangleright \Delta' $ for some $ \Delta' $ such that
			\begin{align*}
				\dfrac{\Gamma \vdash P_1 \triangleright \Delta_{P1} \quad \dfrac{\Gamma \vdash P_2 \triangleright \Delta_{P2} \quad \Gamma \vdash P_3 \triangleright \Delta_{P3}}{\Gamma \vdash \PPar{P_2}{P_3} \triangleright \Delta_{P2 - 3}}(\mathsf{Pa})}{\Gamma \vdash \PPar{P_1}{\left( \PPar{P_2}{P_3} \right)} \triangleright \Delta'}(\mathsf{Pa})
			\end{align*}
			where $ \Delta' = \Delta_{P1} \otimes \Delta_{P2 - 3} $ and $ \Delta_{P2  - 3} = \Delta_{P2} \otimes \Delta_{P3} $.
			With Rule~$ (\mathsf{Pa}) $ and, because $ \Delta_{P1} \otimes \left( \Delta_{P2} \otimes \Delta_{P3} \right) = \left( \Delta_{P1} \otimes \Delta_{P2} \right) \otimes \Delta_{P3} $, we have
			\begin{align*}
				\dfrac{\dfrac{\Gamma \vdash P_1 \triangleright \Delta_{P1} \quad \Gamma \vdash P_2 \triangleright \Delta_{P2}}{\Gamma \vdash \PPar{P_1}{P_2} \triangleright \Delta_{P1} \otimes \Delta_{P2}}(\mathsf{Pa}) \quad \Gamma \vdash P_3 \triangleright \Delta_{P3}}{\Gamma \vdash \PPar{\left( \PPar{P_1}{P_2} \right)}{P_3} \triangleright \Delta'}(\mathsf{Pa})
			\end{align*}
			Using the same applications of (\textsf{S2}) we obtain $ \Gamma \vdash \PPar{\left( \PPar{P_1}{P_2} \right)}{P_3} \triangleright \Delta $.

			The other direction is similar.
		\item[Case $ \PRes{\Args}{\PEnd} \equiv \PEnd $:]
			Assume $ \Gamma \vdash \PRes{\Chan}{\PEnd} \triangleright \Delta $.
			Then, by the typing rules of Figure~\ref{fig:typingRules}, the proof of this judgement has to start with a number of applications of (\textsf{S2}) that reduce $ \Gamma \vdash \PRes{\Chan}{\PEnd} \triangleright \Delta $ to $ \Gamma \vdash \PRes{\Chan}{\PEnd} \triangleright \Delta' $ for some $ \Delta' $ such that
			\begin{align*}
				\dfrac{\dfrac{}{\Gamma, \Typed{\Args}{\AT{T}{\Role}} \vdash \PEnd \triangleright \Delta'}(\mathsf{N})}{\Gamma \vdash \PRes{\Args}{\PEnd} \triangleright \Delta'}(\mathsf{R})
			\end{align*}
			where $ \Delta' = \emptyset $.
			By Rule~$ (\mathsf{N}) $ and because $ \Delta' = \emptyset $, we have $ \Gamma \vdash \PEnd \triangleright \Delta' $.
			Using the same applications of (\textsf{S2}) we obtain $ \Gamma \vdash \PEnd \triangleright \Delta $.

			Assume $ \Gamma \vdash \PEnd \triangleright \Delta $.
			Then, by the typing rules of Figure~\ref{fig:typingRules}, the proof of this judgement has to start with a number of applications of (\textsf{S2}) followed by one application of (\textsf{N}), where for the last step the session environment has to be empty.
			With the Rules~$ (\mathsf{N}) $ and $ (\mathsf{R}) $ we then have
			\begin{align*}
				\dfrac{\dfrac{}{\Gamma, \Typed{\Args}{\AT{T'}{\Role'}} \vdash \PEnd \triangleright \emptyset}(\mathsf{N})}{\Gamma \vdash \PRes{\Args}{\PEnd} \triangleright \emptyset}(\mathsf{R})
			\end{align*}
			Using the same applications of (\textsf{S2}) we obtain $ \Gamma \vdash \PRes{\Args}{\PEnd} \triangleright \Delta $.
		\item[Case $ \PRes{\Args}{\PRes{\Args[y]}{P}} \equiv \PRes{\Args[y]}{\PRes{\Args}{P}} $:]
			Assume $ \Gamma \vdash \PRes{\Args}{\PRes{\Args[y]}{P}} \triangleright \Delta $.
			Then, by the typing rules of Figure~\ref{fig:typingRules}, the proof of this judgement has to start with a number of applications of (\textsf{S2}) that reduce $ \Gamma \vdash \PRes{\Args}{\PRes{\Args[y]}{P}} \triangleright \Delta $ to $ \Gamma \vdash \PRes{\Args}{\PRes{\Args[y]}{P}} \triangleright \Delta' $ for some $ \Delta' $ such that
			\begin{align*}
				\dfrac{\dfrac{\Gamma, \Typed{\Args}{\AT{T}{\Role}}, \Typed{\Args[y]}{\AT{T'}{\Role'}} \vdash P \triangleright \Delta'}{\Gamma, \Typed{\Args}{\AT{T}{\Role}} \vdash \PRes{\Args[y]}{P} \triangleright \Delta'}(\mathsf{R})}{\Gamma \vdash \PRes{\Args}{\PRes{\Args[y]}{P}} \triangleright \Delta'}(\mathsf{R})
			\end{align*}
			By $ \Gamma, \Typed{\Args}{\AT{T}{\Role}}, \Typed{\Args[y]}{\AT{T'}{\Role'}} \vdash P \triangleright \Delta' $, Rule~$ (\mathsf{R}) $ and because $ \Gamma, \Typed{\Args}{\AT{T}{\Role}}, \Typed{\Args[y]}{\AT{T'}{\Role'}} = \Gamma, \Typed{\Args[y]}{\AT{T'}{\Role'}}, \Typed{\Args}{\AT{T}{\Role}} $, we have
			\begin{align*}
				\dfrac{\dfrac{\Gamma, \Typed{\Args}{\AT{T}{\Role}}, \Typed{\Args[y]}{\AT{T'}{\Role'}} \vdash P \triangleright \Delta'}{\Gamma, \Typed{\Args[y]}{\AT{T'}{\Role'}} \vdash \PRes{\Args}{P} \triangleright \Delta'}(\mathsf{R})}{\Gamma \vdash \PRes{\Args[y]}{\PRes{\Args}{P}} \triangleright \Delta'}(\mathsf{R})
			\end{align*}
			Using the same applications of (\textsf{S2}) we obtain $ \Gamma \vdash \PRes{\Args[y]}{\PRes{\Args}{P}} \triangleright \Delta $.

			The other direction is similar.
		\item[Case $ \PRes{\Args}{\left( \PPar{P_1}{P_2} \right)} \equiv \PPar{P_1}{\PRes{\Args}{P_2}} $ if $ \Args \notin \FreeNames{P_1} $:]
			Assume $ \Gamma \vdash \PRes{\Args}{\left( \PPar{P_1}{P_2} \right)} \triangleright \Delta $.
			Then, by the typing rules of Figure~\ref{fig:typingRules}, the proof of this judgement has to start with a number of applications of (\textsf{S2}) that reduce $ \Gamma \vdash \PRes{\Args}{\left( \PPar{P_1}{P_2} \right)} \triangleright \Delta $ to $ \Gamma \vdash \PRes{\Args}{\left( \PPar{P_1}{P_2} \right)} \triangleright \Delta' $ for some $ \Delta' $ such that
			\begin{align*}
				\dfrac{\dfrac{\Gamma, \Typed{\Args}{\AT{T}{\Role}} \vdash P_1 \triangleright \Delta_{P1} \quad \Gamma, \Typed{\Args}{\AT{T}{\Role}} \vdash P_2 \triangleright \Delta_{P2}}{\Gamma, \Typed{\Args}{\AT{T}{\Role}} \vdash \PPar{P_1}{P_2} \triangleright \Delta'}(\mathsf{Pa})}{\Gamma \vdash \PRes{\Args}{\left( \PPar{P_1}{P_2} \right)} \triangleright \Delta'}(\mathsf{R})
			\end{align*}
			where $ \Delta' = \Delta_{P1} \otimes \Delta_{P2} $.
			The only rules that makes use of type declarations of channels from the global environment are the Rules~$ (\mathsf{New}) $, $ (\mathsf{I}) $, and $ (\mathsf{O}) $.
			Since $ \Args \notin \FreeNames{P_1} $, the Rules~$ (\mathsf{New}) $, $ (\mathsf{I}) $, and $ (\mathsf{O}) $---even if they are used---will not check for the type of $ \Args $ in the judgement $ \Gamma, \Typed{\Args}{\AT{T}{\Role}} \vdash P_1 \triangleright \Delta_{P1} $.
			Hence also $ \Gamma \vdash P_1 \triangleright \Delta_{P1} $.
			With $ \Gamma, \Typed{\Args}{\AT{T}{\Role}} \vdash P_2 \triangleright \Delta_{P2} $ and the Rules~$ (\mathsf{Pa}) $ and $ (\mathsf{R}) $ then
			\begin{align*}
				\dfrac{\Gamma \vdash P_1 \triangleright \Delta_{P1} \quad \dfrac{\Gamma, \Typed{\Args}{\AT{T}{\Role}} \vdash P_2 \triangleright \Delta_{P2}}{\Gamma \vdash \PRes{\Args}{P_2} \triangleright \Delta_{P2}}(\mathsf{R})}{\Gamma \vdash \PPar{P_1}{\PRes{\Args}{P_2}} \triangleright \Delta'}(\mathsf{Pa})
			\end{align*}
			Using the same applications of (\textsf{S2}) we obtain $ \Gamma \vdash \PPar{P_1}{\PRes{\Args}{P_2}} \triangleright \Delta $.

			Assume $ \Gamma \vdash \PPar{P_1}{\PRes{\Args}{P_2}} \triangleright \Delta $.
			Then, by the typing rules of Figure~\ref{fig:typingRules}, the proof of this judgement has to start with a number of applications of (\textsf{S2}) that reduce $ \Gamma \vdash \PPar{P_1}{\PRes{\Args}{P_2}} \triangleright \Delta $ to $ \Gamma \vdash \PPar{P_1}{\PRes{\Args}{P_2}} \triangleright \Delta' $ for some $ \Delta' $ such that
			\begin{align*}
				\dfrac{\Gamma \vdash P_1 \triangleright \Delta_{P1} \quad \dfrac{\Gamma, \Typed{\Args}{\AT{T}{\Role}} \vdash P_2 \triangleright \Delta_{P2}}{\Gamma \vdash \PRes{\Args}{P_2} \triangleright \Delta_{P2}}(\mathsf{R})}{\Gamma \vdash \PPar{P_1}{\PRes{\Args}{P_2}} \triangleright \Delta'}(\mathsf{Pa})
			\end{align*}
			where $ \Delta' = \Delta_{P1} \otimes \Delta_{P2} $.
			Since $ \Args \notin \FreeNames{P_1} $, no rule will check for the type of $ \Args $ in the judgement $ \Gamma \vdash P_1 \triangleright \Delta_{P1} $.
			Hence also $ \Gamma, \Typed{\Args}{\AT{T}{\Role}} \vdash P_1 \triangleright \Delta_{P1} $.
			With $ \Gamma, \Typed{\Args}{\AT{T}{\Role}} \vdash P_2 \triangleright \Delta_{P2} $ and the Rules~$ (\mathsf{Pa}) $ and $ (\mathsf{R}) $ then
			\begin{align*}
				\dfrac{\dfrac{\Gamma, \Typed{\Args}{\AT{T}{\Role}} \vdash P_1 \triangleright \Delta_{P1} \quad \Gamma, \Typed{\Args}{\AT{T}{\Role}} \vdash P_2 \triangleright \Delta_{P2}}{\Gamma, \Typed{\Args}{\AT{T}{\Role}} \vdash \PPar{P_1}{P_2} \triangleright \Delta'}(\mathsf{Pa})}{\Gamma \vdash \PRes{\Args}{\left( \PPar{P_1}{P_2} \right)} \triangleright \Delta'}(\mathsf{R})
			\end{align*}
			Using the same applications of (\textsf{S2}) we obtain $ \Gamma \vdash \PRes{\Args}{\left( \PPar{P_1}{P_2} \right)} \triangleright \Delta $.
		\item[Case $ \PChoi{P_1}{P_2} \equiv \PChoi{P_2}{P_1} $:]
			Assume $ \Gamma \vdash \PChoi{P_1}{P_2} \triangleright \Delta $.
			Then, by the typing rules of Figure~\ref{fig:typingRules}, the proof of this judgement has to start with a number of applications of (\textsf{S2}) that reduce $ \Gamma \vdash \PChoi{P_1}{P_2} \triangleright \Delta $ to $ \Gamma \vdash \PChoi{P_1}{P_2} \triangleright \Delta' $ for some $ \Delta' $ such that
			\begin{align*}
				\dfrac{\Gamma \vdash P_1 \triangleright \Delta'', \Typed{\AT{\Chan[s]}{\Role}}{T_1} \quad \Gamma \vdash P_2 \triangleright \Delta'', \Typed{\AT{\Chan[s]}{\Role}}{T_2}}{\Gamma \vdash \PChoi{P_1}{P_2} \triangleright \Delta'}(\mathsf{S1})
			\end{align*}
			where $ \Delta' = \Delta'', \Typed{\AT{\Chan[s]}{\Role}}{T_1 \oplus T_2} $.
			With Rule~$ (\mathsf{S1}) $ and, because $ T_1 \oplus T_2 = T_2 \oplus T_1 $, then also $ \Gamma \vdash \PChoi{P_2}{P_1} \triangleright \Delta' $.
			Using the same applications of (\textsf{S2}) we obtain $ \Gamma \vdash \PChoi{P_2}{P_1} \triangleright \Delta $.

			The other direction is similar.
		\item[Case $ \PRec{\TermV[X]}{P} \equiv P\!\Set{ \Subst{\PRec{\TermV[X]}{P}}{\TermV[X]} } $:]
			Assume $ \Gamma \vdash \PRec{\TermV[X]}{P} \triangleright \Delta $.
			Then, by the typing rules of Figure~\ref{fig:typingRules}, $ \Delta $ contains (modulo some applications of (\textsf{S2})) some $ \Typed{\AT{\Chan[s]}{\Role}}{\LTRec{\TermV}{T}} $ to check the type of $ \PRec{\TermV[X]}{P} $.
			Because of $ \LTRec{\TermV}{T} \equiv T\!\Set[]{ \Subst{\LTRec{\TermV}{T}}{\TermV} } $, then also $ \Gamma \vdash P\!\Set{ \Subst{\PRec{\TermV[X]}{P}}{\TermV[X]} } \triangleright \Delta $.

			The other direction is similar.
	\end{description}
	Since the rules of structural congruence $ \equiv $ are the same for both type systems and because none of the above cases relies on one of the Rules~$ (\mathsf{P}) $, $ (\mathsf{J}) $, or $ (\mathsf{New}) $, this lemma also holds for the session types with optional blocks but without sub-sessions.
\end{proof}

The next Lemma allows us to substitute (session) names within type judgements if the session environment is adapted accordingly.

\begin{lemma}
	\label{lem:typeSubstA}
	For both type systems:
	If $ \Gamma \vdash P \triangleright \Delta, \Typed{\AT{\Args}{\Role}}{T} $ then $ \Gamma \vdash P\!\Set[]{ \Subst{\Chan[s]}{\Args} } \triangleright \Delta, \Typed{\AT{\Chan[s]}{\Role}}{T} $.
\end{lemma}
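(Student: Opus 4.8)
The plan is to proceed by induction on the derivation of $\Gamma \vdash P \triangleright \Delta, \Typed{\AT{\Args}{\Role}}{T}$, with a case split on the last typing rule applied; the two choice rules $(\mathsf{S1})$ and $(\mathsf{S2})$ are then simply two of the cases. Since the definition of $P\Set[]{\Subst{\Chan[s]}{\Args}}$ in Section~\ref{sec:calculus} already renames bound names to avoid capture, I may assume throughout that the chosen representative of $P$ has all its binders --- the restricted name, the names received by an input prefix, the value variables of a branching input, the continuation variable of an $\mathtt{ent}$-prefix, and the argument and continuation variables of an optional block --- distinct from both $\Args$ and $\Chan[s]$. Under this convention $\Set[]{\Subst{\Chan[s]}{\Args}}$ commutes with every process constructor without capture, and it also commutes with the operator $\otimes$, which manipulates channel names purely symbolically.

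In the base cases $(\mathsf{N})$ and $(\mathsf{OptE})$ the session environment of the conclusion is $\emptyset$ or $\Typed{\Role}{\OV{\tilde{\Sort}}}$, so for it to have the form $\Delta, \Typed{\AT{\Args}{\Role}}{T}$ we must have $T = \LTEnd$ (using the identification $(\Delta, \Typed{\AT{\Chan[s]}{\Role}}{\LTEnd}) = \Delta$), and the claim is immediate: $\PEnd\Set[]{\Subst{\Chan[s]}{\Args}} = \PEnd$, and applying the substitution to $\POptEnd{\Role}{\tilde{\Args[v]}}$ leaves it unchanged since its value arguments are never session channels. Every inductive case then follows the same routine pattern: the substitution is pushed through the head operator of $P$; the induction hypothesis is invoked on each premise, tracking for $(\mathsf{Pa})$ and $(\mathsf{Opt})$ which premise carries the entry $\Typed{\AT{\Args}{\Role}}{T}$ along the $\otimes$-decomposition (and, should $\otimes$ have split that entry into parallel components, applying the hypothesis to each component and recombining); and the same typing rule is re-applied to the substituted premises. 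The rules' side conditions are preserved: kind judgements $\vdash \Typed{\tilde{\Args[v]}}{\tilde{\Sort}}$ never mention a session channel, the projection equalities in $(\mathsf{J})$, $(\mathsf{P})$ and $(\mathsf{New})$ concern global types which the substitution leaves untouched, and the environment lookups $\GetType{\Chan}$, $\GetType{\Prot}$, $\GetType{\Chan[k]}$ ask for shared channels, protocol names and sub-session channels rather than for the plain session channel $\Args$.

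The one point needing a little care --- bookkeeping rather than a genuine obstacle --- is the interaction with the global environment $\Gamma$, which the statement keeps fixed. Because $\Args$ appears in the session environment in the plain form $\AT{\Args}{\Role}$, it can be handled within this sub-derivation only by the communication rules $(\mathsf{C})$, $(\mathsf{S})$, $(\mathsf{I})$, $(\mathsf{J})$, none of which reads $\Args$ from $\Gamma$; hence replacing it by $\Chan[s]$ cannot break any $\Gamma$-lookup, even if $\Gamma$ already assigns a type to $\Chan[s]$. Dually, in the rules $(\mathsf{R})$ and $(\mathsf{New})$, which extend respectively query $\Gamma$, the name concerned is the restricted name or the sub-session channel $\Chan[k]$, kept distinct from $\Args$ and $\Chan[s]$ by the renaming convention, so the augmented global environment is unchanged and the induction hypothesis applies to the premise verbatim. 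Collecting the cases yields $\Gamma \vdash P\Set[]{\Subst{\Chan[s]}{\Args}} \triangleright \Delta, \Typed{\AT{\Chan[s]}{\Role}}{T}$; and, since the rules $(\mathsf{P})$, $(\mathsf{J})$ and $(\mathsf{New})$ are simply absent from the calculus without sub-sessions, the very same argument establishes the statement there as well.
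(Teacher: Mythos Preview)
Your approach is essentially the same as the paper's: induction on the typing derivation with a case split on the last rule applied, using alpha-conversion to keep binders distinct from $\Args$ and $\Chan[s]$. The paper spells out every case explicitly while you give a condensed sketch, but the structure and the key observations coincide.

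Two small remarks. First, your handling of the base cases $(\mathsf{N})$ and $(\mathsf{OptE})$ via the convention $(\Delta,\Typed{\AT{\Chan[s]}{\Role}}{\LTEnd})=\Delta$ is arguably more careful than the paper, which simply declares these cases contradictory. Second, your justification for $(\mathsf{New})$ is slightly off: the sub-session channel $\Chan[k]$ in $\PDecl{\Chan[k]}{\Chan[s]}{\tilde{\Args[v]}}{\tilde{\Chan}}{\tilde{\Role}}{P}$ is \emph{free}, not bound, so it cannot be kept distinct from $\Args$ by the renaming convention alone. The paper's case analysis for $(\mathsf{New})$ likewise only distinguishes $\Args=\Chan[s]'$ versus $\Args\neq\Chan[s]'$ and does not discuss $\Args=\Chan[k]$ either; in the intended uses of the lemma $\Args$ is always a freshly bound variable, so the issue does not arise, but your phrasing suggests a justification that does not actually apply.
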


\begin{proof}
	We start with the larger type system, \ie the session types with optional blocks and sub-sessions.
	Assume $ \Gamma \vdash P \triangleright \Delta, \Typed{\AT{\Args}{\Role}}{T} $.
	We perform an induction on the derivation of this judgement from the typing rules of Figure~\ref{fig:typingRules}.
	Note that the Rules~$ (\mathsf{N}) $ and $ (\mathsf{OptE}) $ refer to base cases, while the remaining rules refer to the induction steps of the induction.
	\begin{description}
		\item[Case Rule~$ (\mathsf{N}) $:]
			In this case $ P = \PEnd $ and $ \Delta, \Typed{\AT{\Args}{\Role}}{T} = \emptyset $.
			This is a contradiction.
			Hence the implication holds trivially.
		\item[Case Rule~$ (\mathsf{OptE}) $:]
			In this case $ P = \POptEnd{\Role'}{\tilde{\Args[v]}} $ and $ \Delta, \Typed{\AT{\Args}{\Role}}{T} = \Typed{\Role'}{\OV{\tilde{\Sort}}} $.
			Again this is a contradiction.
		\item[Case Rule~$ (\mathsf{I}) $:]
			In this case $ P = \PInp{\Chan}{\Args'}{P'} $ and $ \Delta, \Typed{\AT{\Args}{\Role}}{T} = \Delta' $ and we have $ \Gamma \vdash P' \triangleright \Delta', \Typed{\AT{\Args'}{\Role'}}{T'} $ and $ \GetType{\Chan} = \AT{T'}{\Role'} $.
			Using alpha-conversion before Rule~$ (\mathsf{I}) $ we can ensure that $ \Args \neq \Args' $.
			Then, by the induction hypothesis, $ \Gamma \vdash P' \triangleright \Delta, \Typed{\AT{\Args}{\Role}}{T}, \Typed{\AT{\Args'}{\Role'}}{T'} $ implies $ \Gamma \vdash P'\!\Set[]{ \Subst{\Chan[s]}{\Args} } \triangleright \Delta, \Typed{\AT{\Chan[s]}{\Role}}{T}, \Typed{\AT{\Args'}{\Role'}}{T'} $.
			With Rule~$ (\mathsf{I}) $ and $ \GetType{\Chan} = \AT{T'}{\Role'} $ we have $ \Gamma \vdash \PInp{\Chan}{\Args'}{\left( P'\!\Set[]{ \Subst{\Chan[s]}{\Args} } \right)} \triangleright \Delta, \Typed{\AT{\Chan[s]}{\Role}}{T} $.
			Since $ \Args \neq \Args' $, then $ \Gamma \vdash P\!\Set[]{ \Subst{\Chan[s]}{\Args} } \triangleright \Delta, \Typed{\AT{\Chan[s]}{\Role}}{T} $.
		\item[Case Rule~$ (\mathsf{O}) $:]
			In this case $ P = \POut{\Chan}{\Chan[s]'}{P'} $ and $ \Delta, \Typed{\AT{\Args}{\Role}}{T} = \Delta', \Typed{\ATE{\Chan[s]'}{\Role'}}{T'} $ and we have $ \Gamma \vdash P' \triangleright \Delta' $ and $ \GetType{\Chan} = \AT{T'}{\Role'} $.
			Hence $ \Args \neq \Chan[s]' $, $ \Delta = \Delta'', \Typed{\ATE{\Chan[s]'}{\Role'}}{T'} $, and $ \Delta' = \Delta'', \Typed{\AT{\Args}{\Role}}{T} $.
			By the induction hypothesis, $ \Gamma \vdash P' \triangleright \Delta'', \Typed{\AT{\Args}{\Role}}{T} $ implies $ \Gamma \vdash P'\!\Set[]{ \Subst{\Chan[s]}{\Args} } \triangleright \Delta'', \Typed{\AT{\Chan[s]}{\Role}}{T} $.
			With Rule~$ (\mathsf{O}) $ and $ \GetType{\Chan} = \AT{T'}{\Role'} $ we have $ \Gamma \vdash \POut{\Chan}{\Chan[s]'}{\left( P'\!\Set[]{ \Subst{\Chan[s]}{\Args} } \right)} \triangleright \Delta'', \Typed{\AT{\Chan[s]}{\Role}}{T}, \Typed{\ATE{\Chan[s]'}{\Role'}}{T'} $.
			Since $ \Args \neq \Chan[s]' $, then $ \Gamma \vdash P\!\Set[]{ \Subst{\Chan[s]}{\Args} } \triangleright \Delta, \Typed{\AT{\Chan[s]}{\Role}}{T} $.
		\item[Case Rule~$ (\mathsf{C}) $:]
			In this case $ P = \PGet{\Chan[k]}{\Role_1}{\Role_2}{_{i \in \indexSet} \Set{ \PLab{\Labe_i}{\tilde{\Args[y]}_i}{P_i} }} $ and $ \Delta, \Typed{\AT{\Args}{\Role}}{T} = \Delta', \Typed{\AT{\Chan[k]}{\Role_2}}{\LTGet{\Role_1}{_{i \in \indexSet{}} \Set{ \LTLab{\Labe_i}{\Typed{\tilde{\Args}_i}{\tilde{\Sort}_i}}{T_i} }}} $ and we have $ \Gamma \vdash P_i \triangleright \Delta', \Typed{\AT{\Chan[k]}{\Role_2}}{T_i} $ and $ \vdash \Typed{\tilde{\Args[y]}_i}{\tilde{\Sort}_i} $ for all $ i \in \indexSet $.
			Using alpha-conversion before Rule~$ (\mathsf{C}) $ we can ensure that $ \Args \notin \tilde{\Args[y]}_i $ for all $ i \in \indexSet $.
			We distinguish between the cases (1)~$ \Args = \Chan[k] $ and (2)~$ \Args \neq \Chan[k] $.
			\begin{enumerate}[(1)]
				\item Then $ \Delta = \Delta' $, $ \Role = \Role_2 $, and $ T = \LTGet{\Role_1}{_{i \in \indexSet{}} \Set{ \LTLab{\Labe_i}{\Typed{\tilde{\Args}_i}{\tilde{\Sort}_i}}{T_i} }} $.
					By the induction hypothesis, $ \Gamma \vdash P_i \triangleright \Delta, \Typed{\AT{\Args}{\Role}}{T_i} $ implies $ \Gamma \vdash P_i\!\Set[]{ \Subst{\Chan[s]}{\Args} } \triangleright \Delta, \Typed{\AT{\Chan[s]}{\Role}}{T_i} $ for all $ i \in \indexSet $.
					With Rule~$ (\mathsf{C}) $ and $ \vdash \Typed{\tilde{\Args[y]}_i}{\tilde{\Sort}_i} $ for all $ i \in \indexSet $ we have $ \Gamma \vdash \PGet{\Chan[s]}{\Role_1}{\Role}{_{i \in \indexSet} \Set{ \PLab{\Labe_i}{\tilde{\Args[y]}_i}{\left( P_i\!\Set[]{ \Subst{\Chan[s]}{\Args} } \right)} }} \triangleright \Delta, \Typed{\AT{\Chan[s]}{\Role}}{\LTGet{\Role_1}{_{i \in \indexSet{}} \Set{ \LTLab{\Labe_i}{\Typed{\tilde{\Args}_i}{\tilde{\Sort}_i}}{T_i} }}} $.
					Since $ \Args \notin \tilde{\Args[y]}_i $ for all $ i \in \indexSet $, $ \Gamma \vdash P\!\Set[]{ \Subst{\Chan[s]}{\Args} } \triangleright \Delta, \Typed{\AT{\Chan[s]}{\Role}}{T} $.
				\item Then $ \Delta = \Delta'', \Typed{\AT{\Chan[k]}{\Role_2}}{\LTGet{\Role_1}{_{i \in \indexSet{}} \Set{ \LTLab{\Labe_i}{\Typed{\tilde{\Args}_i}{\tilde{\Sort}_i}}{T_i} }}} $ and $ \Delta' = \Delta'', \Typed{\AT{\Args}{\Role}}{T} $.
					By the induction hypothesis, $ \Gamma \vdash P_i \triangleright \Delta'', \Typed{\AT{\Args}{\Role}}{T}, \Typed{\AT{\Chan[k]}{\Role_2}}{T_i} $ implies $ \Gamma \vdash P_i\!\Set[]{ \Subst{\Chan[s]}{\Args} } \triangleright \Delta'', \Typed{\AT{\Chan[s]}{\Role}}{T}, \Typed{\AT{\Chan[k]}{\Role_2}}{T_i} $ for all $ i \in \indexSet $.
					With Rule~$ (\mathsf{C}) $ and $ \vdash \Typed{\tilde{\Args[y]}_i}{\tilde{\Sort}_i} $ for all $ i \in \indexSet $ we have $ \Gamma \vdash \PGet{\Chan[k]}{\Role_1}{\Role_2}{_{i \in \indexSet} \Set{ \PLab{\Labe_i}{\tilde{\Args[y]}_i}{\left( P_i\!\Set[]{ \Subst{\Chan[s]}{\Args} } \right)} }} \triangleright \Delta'', \Typed{\AT{\Chan[s]}{\Role}}{T}, \Typed{\AT{\Chan[k]}{\Role_2}}{\LTGet{\Role_1}{_{i \in \indexSet{}} \Set{ \LTLab{\Labe_i}{\Typed{\tilde{\Args}_i}{\tilde{\Sort}_i}}{T_i} }}} $.
					Since $ \Args \notin \tilde{\Args[y]}_i $ for all $ i \in \indexSet $, $ \Gamma \vdash P\!\Set[]{ \Subst{\Chan[s]}{\Args} } \triangleright \Delta, \Typed{\AT{\Chan[s]}{\Role}}{T} $.
			\end{enumerate}
		\item[Case Rule~$ (\mathsf{S}) $:]
			In this case $ P = \PSend{\Chan[k]}{\Role_1}{\Role_2}{\Labe_j}{\tilde{\Args[v]}}{P'} $ and $ \Delta, \Typed{\AT{\Args}{\Role}}{T} = \Delta', \Typed{\AT{\Chan[k]}{\Role_1}}{\LTSend{\Role_2}{_{i \in \indexSet} \Set{ \LTLab{\Labe_i}{\Typed{\tilde{\Args_i}}{\tilde{\Sort}_i}}{T_i} }}} $ and we have $ \Gamma \vdash P' \triangleright \Delta', \Typed{\AT{\Chan[k]}{\Role_1}}{T_j} $, $ \vdash \Typed{\tilde{\Args[v]}}{\tilde{\Sort}_j} $, and $ \Args \notin \tilde{\Args[v]} $.
			We distinguish between the cases (1)~$ \Args = \Chan[k] $ and (2)~$ \Args \neq \Chan[k] $.
			\begin{enumerate}[(1)]
				\item Then $ \Delta = \Delta' $, $ \Role = \Role_1 $, and $ T = \LTSend{\Role_2}{_{i \in \indexSet} \Set{ \LTLab{\Labe_i}{\Typed{\tilde{\Args_i}}{\tilde{\Sort_i}}}{T_i} }} $.
					By the induction hypothesis, $ \Gamma \vdash P' \triangleright \Delta, \Typed{\AT{\Args}{\Role}}{T_j} $ implies $ \Gamma \vdash P'\!\Set[]{ \Subst{\Chan[s]}{\Args} } \triangleright \Delta, \Typed{\AT{\Chan[s]}{\Role}}{T_j} $.
					With Rule~$ (\mathsf{S}) $ and $ \vdash \Typed{\tilde{\Args[v]}}{\tilde{\Sort}_j} $ we have $ \Gamma \vdash \PSend{\Chan[s]}{\Role}{\Role_2}{\Labe_j}{\tilde{\Args[v]}}{\left( P'\!\Set[]{ \Subst{\Chan[s]}{\Args} } \right)} \triangleright \Delta, \Typed{\AT{\Chan[s]}{\Role}}{\LTSend{\Role_2}{_{i \in \indexSet} \Set{ \LTLab{\Labe_i}{\Typed{\tilde{\Args_i}}{\tilde{\Sort_i}}}{T_i} }}} $.
					Since $ \Args \notin \tilde{\Args[v]} $, then $ \Gamma \vdash P\!\Set[]{ \Subst{\Chan[s]}{\Args} } \triangleright \Delta, \Typed{\AT{\Chan[s]}{\Role}}{T} $.
				\item Then $ \Delta = \Delta'', \Typed{\AT{\Chan[k]}{\Role_1}}{\LTSend{\Role_2}{_{i \in \indexSet} \Set{ \LTLab{\Labe_i}{\Typed{\tilde{\Args_i}}{\tilde{\Sort_i}}}{T_i} }}} $ and $ \Delta' = \Delta'', \Typed{\AT{\Args}{\Role}}{T} $.
					By the induction hypothesis, $ \Gamma \vdash P' \triangleright \Delta'', \Typed{\AT{\Args}{\Role}}{T}, \Typed{\AT{\Chan[k]}{\Role_1}}{T_j} $ implies $ \Gamma \vdash P'\!\Set[]{ \Subst{\Chan[s]}{\Args} } \triangleright \Delta'', \Typed{\AT{\Chan[s]}{\Role}}{T}, \Typed{\AT{\Chan[k]}{\Role_1}}{T_j} $.
					With Rule~$ (\mathsf{S}) $ and $ \vdash \Typed{\tilde{\Args[v]}}{\tilde{\Sort}_j} $ we have $ \Gamma \vdash \PSend{\Chan[k]}{\Role_1}{\Role_2}{\Labe_j}{\tilde{\Args[v]}}{\left( P'\!\Set[]{ \Subst{\Chan[s]}{\Args} } \right)} \triangleright \Delta'', \Typed{\AT{\Chan[s]}{\Role}}{T}, \Typed{\AT{\Chan[k]}{\Role_1}}{\LTSend{\Role_2}{_{i \in \indexSet} \Set{ \LTLab{\Labe_i}{\Typed{\tilde{\Args_i}}{\tilde{\Sort_i}}}{T_i} }}} $.
					Since $ \Args \notin \tilde{\Args[v]} $, then $ \Gamma \vdash P\!\Set[]{ \Subst{\Chan[s]}{\Args} } \triangleright \Delta, \Typed{\AT{\Chan[s]}{\Role}}{T} $.
			\end{enumerate}
		\item[Case Rule~$ (\mathsf{R}) $:]
			In this case $ P = \PRes{\Args'}{P'} $ and $ \Delta, \Typed{\AT{\Args}{\Role}}{T} = \Delta' $ and we have $ \Gamma, \Typed{\Args'}{\AT{T'}{\Role'}} \vdash P' \triangleright \Delta' $.
			Using alpha-conversion before Rule~$ (\mathsf{R}) $ we can ensure that $ \Args \neq \Args' $.
			By the induction hypothesis, $ \Gamma, \Typed{\Args'}{\AT{T'}{\Role'}} \vdash P' \triangleright \Delta, \Typed{\AT{\Args}{\Role}}{T} $ implies $ \Gamma, \Typed{\Args'}{\AT{T'}{\Role'}} \vdash P'\!\Set[]{ \Subst{\Chan[s]}{\Args} } \triangleright \Delta, \Typed{\AT{\Chan[s]}{\Role}}{T} $.
			With Rule~$ (\mathsf{R}) $ we have $ \Gamma \vdash \PRes{\Args'}{\left( P'\!\Set[]{ \Subst{\Chan[s]}{\Args} } \right)} \triangleright \Delta, \Typed{\AT{\Chan[s]}{\Role}}{T} $.
			Since $ \Args \neq \Args' $, we have $ \Gamma \vdash P\!\Set[]{ \Subst{\Chan[s]}{\Args} } \triangleright \Delta, \Typed{\AT{\Chan[s]}{\Role}}{T} $.
		\item[Case Rule~$ (\mathsf{P}) $:]
			In this case $ P = \PReq{\Chan[s]'}{\Role_1}{\Role_2}{\Role_3}{\Chan[k]}{P'} $ and $ \Delta, \Typed{\AT{\Args}{\Role}}{T} = \Delta', \Typed{\AT{\Chan[s]'}{\Role_1}}{\LTReq{\Prot}{\Role_3}{\tilde{\Args[v]}}{\Role_2}{T_1}}, \Typed{\ATI{\Chan[k]}{\Role_3}}{T_3} $ and we have $ \Gamma \vdash P' \triangleright \Delta', \Typed{\AT{\Chan[s]'}{\Role_1}}{T_1} $, $ \GetType{\Prot} = \TypeOfProt{\tilde{\Role}_4}{\tilde{\Args[y]}}{\tilde{\Role}_5}{G} $, and $ \ProjS{G\!\Set[]{ \Subst{\tilde{\Args[v]}}{\tilde{\Args[y]}} }}{}{\Role_3} = T_3 $.
			Hence $ \Args \neq \Chan[k] $.
			We distinguish between the cases (1)~$ \Args = \Chan[s]' $ and (2)~$ \Args \neq \Chan[s]' $.
			\begin{enumerate}[(1)]
				\item Then $ \Delta = \Delta', \Typed{\ATI{\Chan[k]}{\Role_3}}{T_3} $, $ \Role = \Role_1 $, and $ T = \LTReq{\Prot}{\Role_3}{\tilde{\Args[v]}}{\Role_2}{T_1} $.
					By the induction hypothesis, $ \Gamma \vdash P' \triangleright \Delta', \Typed{\AT{\Args}{\Role}}{T_1} $ implies $ \Gamma \vdash P'\!\Set[]{ \Subst{\Chan[s]}{\Args} } \triangleright \Delta', \Typed{\AT{\Chan[s]}{\Role}}{T_1} $.
					With Rule~$ (\mathsf{P}) $, $ \GetType{\Prot} = \TypeOfProt{\tilde{\Role}_4}{\tilde{\Args[y]}}{\tilde{\Role}_5}{G} $, and $ \ProjS{G\!\Set[]{ \Subst{\tilde{\Args[v]}}{\tilde{\Args[y]}} }}{}{\Role_3} = T_3 $ we have $ \Gamma \vdash \PReq{\Chan[s]}{\Role}{\Role_2}{\Role_3}{\Chan[k]}{\left( P'\!\Set[]{ \Subst{\Chan[s]}{\Args} } \right)} \triangleright \Delta', \Typed{\AT{\Chan[s]}{\Role}}{\LTReq{\Prot}{\Role_3}{\tilde{\Args[v]}}{\Role_2}{T_1}}, \Typed{\ATI{\Chan[k]}{\Role_3}}{T_3} $.
					Hence $ \Gamma \vdash P\!\Set[]{ \Subst{\Chan[s]}{\Args} } \triangleright \Delta, \Typed{\AT{\Chan[s]}{\Role}}{T} $.
				\item Then $ \Delta = \Delta'', \Typed{\AT{\Chan[s]'}{\Role^1}}{\LTReq{\Prot}{\Role_3}{\tilde{\Args[v]}}{\Role_2}{T_1}}, \Typed{\ATI{\Chan[k]}{\Role_3}}{T_3} $ and $ \Delta' = \Delta'', \Typed{\AT{\Args}{\Role}}{T} $.
					By the induction hypothesis, $ \Gamma \vdash P' \triangleright \Delta'', \Typed{\AT{\Args}{\Role}}{T}, \Typed{\AT{\Chan[s]'}{\Role_1}}{T_1} $ implies $ \Gamma \vdash P'\!\Set[]{ \Subst{\Chan[s]}{\Args} } \triangleright \Delta'', \Typed{\AT{\Chan[s]}{\Role}}{T}, \Typed{\AT{\Chan[s]'}{\Role_1}}{T_1} $.
					With Rule~$ (\mathsf{P}) $, $ \GetType{\Prot} = \TypeOfProt{\tilde{\Role}_4}{\tilde{\Args[y]}}{\tilde{\Role}_5}{G} $, and $ \ProjS{G\!\Set[]{ \Subst{\tilde{\Args[v]}}{\tilde{\Args[y]}} }}{}{\Role_3} = T_3 $ we have $ \Gamma \vdash \PReq{\Chan[s]'}{\Role_1}{\Role_2}{\Role_3}{\Chan[k]}{\left( P'\!\Set[]{ \Subst{\Chan[s]}{\Args} } \right)} \triangleright \Delta'', \Typed{\AT{\Chan[s]}{\Role}}{T}, \Typed{\AT{\Chan[s]'}{\Role_1}}{\LTReq{\Prot}{\Role_3}{\tilde{\Args[v]}}{\Role_2}{T_1}}, \Typed{\ATI{\Chan[k]}{\Role_3}}{T_3} $.
					Hence $ \Gamma \vdash P\!\Set[]{ \Subst{\Chan[s]}{\Args} } \triangleright \Delta, \Typed{\AT{\Chan[s]}{\Role}}{T} $.
			\end{enumerate}
		\item[Case Rule~$ (\mathsf{J}) $:]
			In this case we have $ P = \PEnt{\Chan[s]'}{\Role_1}{\Role_2}{\Role_3}{\Args'}{P'} $ and $ \Delta, \Typed{\AT{\Args}{\Role}}{T} = \Delta', \Typed{\AT{\Chan[s]'}{\Role_2}}{\LTEnt{\Prot}{\Role_3}{\tilde{\Args[v]}}{\Role_1}{T_2}} $ and we have $ \Gamma \vdash P' \triangleright \Delta', \Typed{\AT{\Chan[s]'}{\Role_2}}{T_2}, \Typed{\AT{\Args'}{\Role_3}}{T_3} $, $ \GetType{\Prot} = \TypeOfProt{\tilde{\Role}_4}{\tilde{\Args[y]}}{\tilde{\Role}_5}{G} $, and $ \ProjS{G\!\Set[]{ \Subst{\tilde{\Args[v]}}{\tilde{\Args[y]}} }}{}{\Role_3} = T_3 $.
			Using alpha-conversion before Rule~$ (\mathsf{J}) $ we can ensure that $ \Args \neq \Args' $.
			We distinguish between the cases (1)~$ \Args = \Chan[s]' $ and (2)~$ \Args \neq \Chan[s]' $.
			\begin{enumerate}[(1)]
				\item Then $ \Delta = \Delta' $, $ \Role = \Role_2 $, and $ T = \LTEnt{\Prot}{\Role_3}{\tilde{\Args[v]}}{\Role_1}{T_2} $.
					By the induction hypothesis, $ \Gamma \vdash P' \triangleright \Delta, \Typed{\AT{\Args}{\Role}}{T_2}, \Typed{\AT{\Args'}{\Role_3}}{T_3} $ implies $ \Gamma \vdash P'\!\Set[]{ \Subst{\Chan[s]}{\Args} } \triangleright \Delta, \Typed{\AT{\Chan[s]}{\Role}}{T_2}, \Typed{\AT{\Args'}{\Role_3}}{T_3} $.
					With Rule~$ (\mathsf{J}) $, $ \GetType{\Prot} = \TypeOfProt{\tilde{\Role}_4}{\tilde{\Args[y]}}{\tilde{\Role}_5}{G} $, and $ \ProjS{G\!\Set[]{ \Subst{\tilde{\Args[v]}}{\tilde{\Args[y]}} }}{}{\Role_3} = T_3 $ we have $ \Gamma \vdash \PEnt{\Chan[s]}{\Role_1}{\Role}{\Role_3}{\Args'}{\left( P'\!\Set[]{ \Subst{\Chan[s]}{\Args} } \right)} \triangleright \Delta, \Typed{\AT{\Chan[s]}{\Role}}{\LTEnt{\Prot}{\Role_3}{\tilde{\Args[v]}}{\Role_1}{T_2}} $.
					Since $ \Args \neq \Args' $, we have $ \Gamma \vdash P\!\Set[]{ \Subst{\Chan[s]}{\Args} } \triangleright \Delta, \Typed{\AT{\Chan[s]}{\Role}}{T} $.
				\item Then $ \Delta = \Delta'', \Typed{\AT{\Chan[s]'}{\Role_2}}{\LTEnt{\Prot}{\Role_3}{\tilde{\Args[v]}}{\Role_1}{T_2}} $ and $ \Delta' = \Delta'', \Typed{\AT{\Args}{\Role}}{T} $.
					By the induction hypothesis,
					\begin{align*}
						\Gamma \vdash P' \triangleright \Delta'', \Typed{\AT{\Args}{\Role}}{T}, \Typed{\AT{\Chan[s]'}{\Role_2}}{T_2}, \Typed{\AT{\Args'}{\Role_3}}{T_3}
					\end{align*}
					implies $ \Gamma \vdash P'\!\Set[]{ \Subst{\Chan[s]}{\Args} } \triangleright \Delta'', \Typed{\AT{\Chan[s]}{\Role}}{T}, \Typed{\AT{\Chan[s]'}{\Role_2}}{T_2}, \Typed{\AT{\Args'}{\Role_3}}{T_3} $.
					With Rule~$ (\mathsf{J}) $, $ \GetType{\Prot} = \TypeOfProt{\tilde{\Role}_4}{\tilde{\Args[y]}}{\tilde{\Role}_5}{G} $, and $ \ProjS{G\!\Set[]{ \Subst{\tilde{\Args[v]}}{\tilde{\Args[y]}} }}{}{\Role_3} = T_3 $ we have:
					\begin{align*}
						\Gamma \vdash \PEnt{\Chan[s]'}{\Role_1}{\Role_2}{\Role_3}{\Args'}{\left( P'\!\Set[]{ \Subst{\Chan[s]}{\Args} } \right)} \triangleright \Delta'', \Typed{\AT{\Chan[s]}{\Role}}{T}, \Typed{\AT{\Chan[s]'}{\Role_2}}{\LTEnt{\Prot}{\Role_3}{\tilde{\Args[v]}}{\Role_1}{T_2}}
					\end{align*}
					Since $ \Args \neq \Args' $, we have $ \Gamma \vdash P\!\Set[]{ \Subst{\Chan[s]}{\Args} } \triangleright \Delta, \Typed{\AT{\Chan[s]}{\Role}}{T} $.
			\end{enumerate}
		\item[Case Rule~$ (\mathsf{New}) $:]
			In this case $ P = \PDecl{\Chan[k]}{\Chan[s]'}{\tilde{\Args[v]}}{\tilde{\Chan}}{\tilde{\Role}'''}{P'} $ and $ \Delta, \Typed{\AT{\Args}{\Role}}{T} = \Delta', \Typed{\AT{\Chan[s]'}{\Role'}}{\LTCall{\Prot}{G}{\tilde{\Args[v]}}{\Typed{\tilde{\Args[y]}}{\tilde{\Sort}}}{\tilde{\Role}'''}{T'}} $ and we have $ \Gamma \vdash P' \triangleright \Delta', \Typed{\AT{\Chan[s]'}{\Role'}}{T'}, \Typed{\ATI{\Chan[k]}{\Role''_1}}{T'_1}, \ldots, \Typed{\ATI{\Chan[k]}{\Role''_n}}{T'_n}, \Typed{\ATE{\Chan[k]}{\Role'''_1}}{T'_{n + 1}}, \ldots, \Typed{\ATE{\Chan[k]}{\Role'''_m}}{T'_{n + m}} $, $ \GetType{\Prot} = \TypeOfProt{\tilde{\Role}''}{\tilde{\Args[y]}}{\tilde{\Role}'''}{G} $, $ \forall i \logdot \GetType{\Chan_i} = \AT{T'_{i + n}}{\Role'''_{i + n}} $, $ \forall i \logdot \ProjS{G\!\Set[]{ \Subst{\tilde{\Args[v]}}{\tilde{\Args[y]}} }}{}{\Role''_i} = T'_i $, $ \forall j \logdot \ProjS{G\!\Set[]{ \Subst{\tilde{\Args[v]}}{\tilde{\Args[y]}} }}{}{\Role'''_j} = T'_{j + n} $, $ \vdash \Typed{\tilde{\Args[v]}}{\tilde{\Sort}} $, and $ \GetType{\Chan[k]} = {\Prot\!\Set[]{ \Subst{\tilde{\Args[v]}}{\tilde{\Args[y]}} }} $.
			Since $ \vdash \Typed{\tilde{\Args[v]}}{\tilde{\Sort}} $, we have $ \Args \notin \tilde{\Args[v]} $.
			We distinguish between the cases (1)~$ \Args = \Chan[s]' $ and (2)~$ \Args \neq \Chan[s]' $.
			\begin{enumerate}[(1)]
				\item Then $ \Delta = \Delta' $, $ \Role = \Role' $, and $ T = \LTCall{\Prot}{G}{\tilde{\Args[v]}}{\Typed{\tilde{\Args[y]}}{\tilde{\Sort}}}{\tilde{\Role}'''}{T'} $.
					By the induction hypothesis,
					\begin{align*}
						\Gamma \vdash P' \triangleright & \Delta, \Typed{\AT{\Args}{\Role}}{T'}, \Typed{\ATI{\Chan[k]}{\Role''_1}}{T'_1}, \ldots, \Typed{\ATI{\Chan[k]}{\Role''_n}}{T'_n}, \Typed{\ATE{\Chan[k]}{\Role'''_1}}{T'_{n + 1}}, \ldots, \Typed{\ATE{\Chan[k]}{\Role'''_m}}{T'_{n + m}}
					\end{align*}
					implies
					\begin{align*}
						\Gamma \vdash P'\!\Set[]{ \Subst{\Chan[s]}{\Args} } \triangleright & \Delta, \Typed{\AT{\Chan[s]}{\Role}}{T'}, \Typed{\ATI{\Chan[k]}{\Role''_1}}{T'_1}, \ldots, \Typed{\ATI{\Chan[k]}{\Role''_n}}{T'_n}, \Typed{\ATE{\Chan[k]}{\Role'''_1}}{T'_{n + 1}}, \ldots, \Typed{\ATE{\Chan[k]}{\Role'''_m}}{T'_{n + m}}
					\end{align*}
					With Rule~$ (\mathsf{New}) $ and $ \GetType{\Prot} = \TypeOfProt{\tilde{\Role}''}{\tilde{\Args[y]}}{\tilde{\Role}'''}{G} $ and $ \forall i \logdot \GetType{\Chan_i} = \AT{T'_{i + n}}{\Role'''_{i + n}} $, $ \forall i \logdot \ProjS{G\!\Set[]{ \Subst{\tilde{\Args[v]}}{\tilde{\Args[y]}} }}{}{\Role''_i} = T'_i $ and $ \forall j \logdot \ProjS{G\!\Set[]{ \Subst{\tilde{\Args[v]}}{\tilde{\Args[y]}} }}{}{\Role'''_j} = T'_{j + n} $, $ \vdash \Typed{\tilde{\Args[v]}}{\tilde{\Sort}} $ and $ \GetType{\Chan[k]} = {\Prot\!\Set[]{ \Subst{\tilde{\Args[v]}}{\tilde{\Args[y]}} }} $ we have
					\begin{align*}
						& \Gamma \vdash \PDecl{\Chan[k]}{\Chan[s]'}{\tilde{\Args[v]}}{\tilde{\Chan}}{\tilde{\Role}'''}{\left( P'\!\Set[]{ \Subst{\Chan[s]}{\Args} } \right)} \triangleright \; \Delta, \Typed{\AT{\Chan[s]}{\Role}}{\LTCall{\Prot}{G}{\tilde{\Args[v]}}{\Typed{\tilde{\Args[y]}}{\tilde{\Sort}}}{\tilde{\Role}'''}{T'}}
					\end{align*}
					Since $ \Args \notin \tilde{\Args[v]} $, we have $ \Gamma \vdash P\!\Set[]{ \Subst{\Chan[s]}{\Args} } \triangleright \Delta, \Typed{\AT{\Chan[s]}{\Role}}{T} $.
				\item Then $ \Delta = \Delta'', \Typed{\AT{\Chan[s]'}{\Role'}}{\LTCall{\Prot}{G}{\tilde{\Args[v]}}{\Typed{\tilde{\Args[y]}}{\tilde{\Sort}}}{\tilde{\Role}'''}{T'}} $ and $ \Delta' = \Delta'', \Typed{\AT{\Args}{\Role}}{T} $.
					By the induction hypothesis, $ \Gamma \vdash P' \triangleright \Delta'', \Typed{\AT{\Args}{\Role}}{T}, \Typed{\AT{\Chan[s]'}{\Role'}}{T'}, \Typed{\ATI{\Chan[k]}{\Role''_1}}{T'_1}, \ldots, \Typed{\ATI{\Chan[k]}{\Role''_n}}{T'_n}, \Typed{\ATE{\Chan[k]}{\Role'''_1}}{T'_{n + 1}}, \ldots, \Typed{\ATE{\Chan[k]}{\Role'''_m}}{T'_{n + m}} $ implies $ \Gamma \vdash P'\!\Set[]{ \Subst{\Chan[s]}{\Args} } \triangleright \Delta'', \Typed{\AT{\Chan[s]}{\Role}}{T}, \Typed{\AT{\Chan[s]'}{\Role'}}{T'}, \Typed{\ATI{\Chan[k]}{\Role''_1}}{T'_1}, \ldots, \Typed{\ATI{\Chan[k]}{\Role''_n}}{T'_n}, \Typed{\ATE{\Chan[k]}{\Role'''_1}}{T'_{n + 1}}, \ldots, \Typed{\ATE{\Chan[k]}{\Role'''_m}}{T'_{n + m}} $.
					With Rule~$ (\mathsf{New}) $ and $ \GetType{\Prot} = \TypeOfProt{\tilde{\Role}''}{\tilde{\Args[y]}}{\tilde{\Role}'''}{G} $ and $ \forall i \logdot \GetType{\Chan_i} = \AT{T'_{i + n}}{\Role'''_{i + n}} $ and $ \forall i \logdot \ProjS{G\!\Set[]{ \Subst{\tilde{\Args[v]}}{\tilde{\Args[y]}} }}{}{\Role''_i} = T'_i $ and $ \forall j \logdot \ProjS{G\!\Set[]{ \Subst{\tilde{\Args[v]}}{\tilde{\Args[y]}} }}{}{\Role'''_j} = T'_{j + n} $ and $ \vdash \Typed{\tilde{\Args[v]}}{\tilde{\Sort}} $ and $ \GetType{\Chan[k]} = {\Prot\!\Set[]{ \Subst{\tilde{\Args[v]}}{\tilde{\Args[y]}} }} $ we have
					\begin{align*}
						& \Gamma \vdash \PDecl{\Chan[k]}{\Chan[s]'}{\tilde{\Args[v]}}{\tilde{\Chan}}{\tilde{\Role}'''}{\left( P'\!\Set[]{ \Subst{\Chan[s]}{\Args} } \right)} \triangleright \; \Delta'', \Typed{\AT{\Chan[s]}{\Role}}{T}, \Typed{\AT{\Chan[s]'}{\Role'}}{\LTCall{\Prot}{G}{\tilde{\Args[v]}}{\Typed{\tilde{\Args[y]}}{\tilde{\Sort}}}{\tilde{\Role}'''}{T'}}
					\end{align*}
					Since $ \Args \notin \tilde{\Args[v]} $, we have $ \Gamma \vdash P\!\Set[]{ \Subst{\Chan[s]}{\Args} } \triangleright \Delta, \Typed{\AT{\Chan[s]}{\Role}}{T} $.
			\end{enumerate}
		\item[Case Rule~$ (\mathsf{S1}) $:]
			In this case $ P = \PChoi{P_1}{P_2} $ and $ \Delta, \Typed{\AT{\Args}{\Role}}{T} = \Delta', \Typed{\AT{\Chan[s]'}{\Role'}}{T_1 \oplus T_2} $ and we have $ \Gamma \vdash P_1 \triangleright \Delta', \Typed{\AT{\Chan[s]'}{\Role'}}{T_1} $ and $ \Gamma \vdash P_2 \triangleright \Delta', \Typed{\AT{\Chan[s]'}{\Role'}}{T_2} $.
			We distinguish between the cases (1)~$ \Args = \Chan[s]' $ and (2)~$ \Args \neq \Chan[s]' $.
			\begin{enumerate}[(1)]
				\item Then $ \Delta = \Delta' $, $ \Role = \Role' $, and $ T = T_1 \oplus T_2 $.
					By the induction hypothesis, $ \Gamma \vdash P_1 \triangleright \Delta, \Typed{\AT{\Args}{\Role}}{T_1} $ and $ \Gamma \vdash P_2 \triangleright \Delta, \Typed{\AT{\Args}{\Role}}{T_2} $ imply $ \Gamma \vdash P_1\!\Set[]{ \Subst{\Chan[s]}{\Args} } \triangleright \Delta, \Typed{\AT{\Chan[s]}{\Role}}{T_1} $ and $ \Gamma \vdash P_2\!\Set[]{ \Subst{\Chan[s]}{\Args} } \triangleright \Delta, \Typed{\AT{\Chan[s]}{\Role}}{T_2} $.
					With Rule~$ (\mathsf{S1}) $ we have $ \Gamma \vdash \PChoi{\left( P_1\!\Set[]{ \Subst{\Chan[s]}{\Args} } \right)}{\left( P_2\!\Set[]{ \Subst{\Chan[s]}{\Args} } \right)} \triangleright \Delta, \Typed{\AT{\Chan[s]}{\Role}}{\LTChoi{T_1}{T_2}} $.
					Hence $ \Gamma \vdash P\!\Set[]{ \Subst{\Chan[s]}{\Args} } \triangleright \Delta, \Typed{\AT{\Chan[s]}{\Role}}{T} $.
				\item Then $ \Delta = \Delta'', \Typed{\AT{\Chan[s]'}{\Role'}}{T_1 \oplus T_2} $ and $ \Delta' = \Delta'', \Typed{\AT{\Args}{\Role}}{T} $.
					By the induction hypothesis, $ \Gamma \vdash P_1 \triangleright \Delta'', \Typed{\AT{\Args}{\Role}}{T}, \Typed{\AT{\Chan[s]'}{\Role'}}{T_1} $ and $ \Gamma \vdash P_2 \triangleright \Delta'', \Typed{\AT{\Args}{\Role}}{T}, \Typed{\AT{\Chan[s]'}{\Role'}}{T_2} $ imply $ \Gamma \vdash P_1\!\Set[]{ \Subst{\Chan[s]}{\Args} } \triangleright \Delta'', \Typed{\AT{\Chan[s]}{\Role}}{T}, \Typed{\AT{\Chan[s]'}{\Role'}}{T_1} $ and $ \Gamma \vdash P_2\!\Set[]{ \Subst{\Chan[s]}{\Args} } \triangleright \Delta'', \Typed{\AT{\Chan[s]}{\Role}}{T}, \Typed{\AT{\Chan[s]'}{\Role'}}{T_2} $.
					With Rule~$ (\mathsf{S1}) $ we have $ \Gamma \vdash \PChoi{\left( P_1\!\Set[]{ \Subst{\Chan[s]}{\Args} } \right)}{\left( P_2\!\Set[]{ \Subst{\Chan[s]}{\Args} } \right)} \triangleright \Delta'', \Typed{\AT{\Chan[s]}{\Role}}{T}, \Typed{\AT{\Chan[s]'}{\Role'}}{\LTChoi{T_1}{T_2}} $.
					Hence $ \Gamma \vdash P\!\Set[]{ \Subst{\Chan[s]}{\Args} } \triangleright \Delta, \Typed{\AT{\Chan[s]}{\Role}}{T} $.
			\end{enumerate}
		\item[Case Rule~$ (\mathsf{S2}) $:]
			In this case $ \Delta, \Typed{\AT{\Args}{\Role}}{T} = \Delta', \Typed{\AT{\Chan[s]'}{\Role'}}{T_1 \oplus T_2} $ and we have $ \Gamma \vdash P \triangleright \Delta', \Typed{\AT{\Chan[s]'}{\Role'}}{T_i} $ with $ i \in \Set[]{ 1, 2 } $.
			We distinguish between the cases (1)~$ \Args = \Chan[s]' $ and (2)~$ \Args \neq \Chan[s]' $.
			\begin{enumerate}[(1)]
				\item Then $ \Delta = \Delta' $, $ \Role = \Role' $, and $ T = T_1 \oplus T_2 $.
					By the induction hypothesis, $ \Gamma \vdash P \triangleright \Delta, \Typed{\AT{\Args}{\Role}}{T_i} $ implies $ \Gamma \vdash P\!\Set[]{ \Subst{\Chan[s]}{\Args} } \triangleright \Delta, \Typed{\AT{\Chan[s]}{\Role}}{T_i} $.
					With Rule~$ (\mathsf{S2}) $ we have $ \Gamma \vdash P\!\Set[]{ \Subst{\Chan[s]}{\Args} } \triangleright \Delta, \Typed{\AT{\Chan[s]}{\Role}}{\LTChoi{T_1}{T_2}} $.
					Hence $ \Gamma \vdash P\!\Set[]{ \Subst{\Chan[s]}{\Args} } \triangleright \Delta, \Typed{\AT{\Chan[s]}{\Role}}{T} $.
				\item Then $ \Delta = \Delta'', \Typed{\AT{\Chan[s]'}{\Role'}}{T_1 \oplus T_2} $ and $ \Delta' = \Delta'', \Typed{\AT{\Args}{\Role}}{T} $.
					By the induction hypothesis, $ \Gamma \vdash P \triangleright \Delta'', \Typed{\AT{\Args}{\Role}}{T}, \Typed{\AT{\Chan[s]'}{\Role'}}{T_i} $ implies $ \Gamma \vdash P\!\Set[]{ \Subst{\Chan[s]}{\Args} } \triangleright \Delta'', \Typed{\AT{\Chan[s]}{\Role}}{T}, \Typed{\AT{\Chan[s]'}{\Role'}}{T_i} $.
					With Rule~$ (\mathsf{S2}) $ we have $ \Gamma \vdash P\!\Set[]{ \Subst{\Chan[s]}{\Args} } \triangleright \Delta'', \Typed{\AT{\Chan[s]}{\Role}}{T}, \Typed{\AT{\Chan[s]'}{\Role'}}{\LTChoi{T_1}{T_2}} $.
					Hence $ \Gamma \vdash P\!\Set[]{ \Subst{\Chan[s]}{\Args} } \triangleright \Delta, \Typed{\AT{\Chan[s]}{\Role}}{T} $.
			\end{enumerate}
		\item[Case Rule~$ (\mathsf{Pa}) $:]
			In this case $ P = \PPar{P_1}{P_2} $ and $ \Delta, \Typed{\AT{\Args}{\Role}}{T} = \Delta_1 \otimes \Delta_2 $ and we have $ \Gamma \vdash P_1 \triangleright \Delta_1 $ and $ \Gamma \vdash P_2 \triangleright \Delta_2 $.
			Hence $ \Typed{\AT{\Args}{\Role}}{T_1} \in \Delta_1 $, $ \Typed{\AT{\Args}{\Role}}{T_2} \in \Delta_2 $, and $ T = \LTPar{T_1}{T_2} $.
			Then $ \Delta_1 = \Delta_1', \Typed{\AT{\Args}{\Role}}{T_1} $, $ \Delta_2 = \Delta_2', \Typed{\AT{\Args}{\Role}}{T_2} $, and $ \Delta = \Delta_1' \otimes \Delta_2' $.
			By the induction hypothesis, $ \Gamma \vdash P_1 \triangleright \Delta_1', \Typed{\AT{\Args}{\Role}}{T_1} $ and $ \Gamma \vdash P_2 \triangleright \Delta_2', \Typed{\AT{\Args}{\Role}}{T_2} $ imply $ \Gamma \vdash P_1\!\Set[]{ \Subst{\Chan[s]}{\Args} } \triangleright \Delta_1', \Typed{\AT{\Chan[s]}{\Role}}{T_1} $ and $ \Gamma \vdash P_2\!\Set[]{ \Subst{\Chan[s]}{\Args} } \triangleright \Delta_2', \Typed{\AT{\Chan[s]}{\Role}}{T_2} $.
			With Rule~$ (\mathsf{Pa}) $ we have $ \Gamma \vdash \PPar{\left( P_1\!\Set[]{ \Subst{\Chan[s]}{\Args} } \right)}{\left( P_2\!\Set[]{ \Subst{\Chan[s]}{\Args} } \right)} \triangleright \left( \Delta_1', \Typed{\AT{\Chan[s]}{\Role}}{T_1} \right) \otimes \left( \Delta_2', \Typed{\AT{\Chan[s]}{\Role}}{T_2} \right) $.
			Hence $ \Gamma \vdash P\!\Set[]{ \Subst{\Chan[s]}{\Args} } \triangleright \Delta, \Typed{\AT{\Chan[s]}{\Role}}{T} $.
		\item[Case Rule~$ (\mathsf{Opt}) $:]
			Here $ P = \POpt{\Role_1}{\tilde{\Role}}{P_1}{\tilde{\Args}'}{\tilde{\Args[v]}}{P_2} $ and $ \Delta, \Typed{\AT{\Args}{\Role}}{T} = \Delta_1 \otimes \Delta_2, \Typed{\AT{\Chan[s]'}{\Role_1}}{\LTOpt{\tilde{\Role}}{T_1}{\Typed{\tilde{\Args[y]}}{\tilde{\Sort}}}{T_2}} $, and we have $ \Gamma \vdash P_1 \triangleright \Delta_1, \Typed{\AT{\Chan[s]'}{\Role_1}}{T_1}, \Typed{\Role_1}{\OV{\tilde{\Sort}}} $, $ \nexists \Role', \Sort[K] \logdot \Typed{\Role'}{\OV{\tilde{\Sort[K]}}} \in \Delta_1 $, $ \Gamma \vdash P_2 \triangleright \Delta_2, \Typed{\AT{\Chan[s]'}{\Role_1}}{T_2} $, $ \vdash \Typed{\tilde{\Args}'}{\tilde{\Sort}} $, and $ \vdash \Typed{\tilde{\Args[v]}}{\tilde{\Sort}} $.
			Hence $ \Args \notin \tilde{\Args[v]} $ and $ \Args \notin \tilde{\Args[y]} $.
			Using alpha-conversion before Rule~$ (\mathsf{Opt}) $ we can ensure that $ \Args \notin \tilde{\Args}' $.
			We distinguish between the cases (1)~$ \Args = \Chan[s]' $ and (2)~$ \Args \neq \Chan[s]' $.
			\begin{enumerate}[(1)]
				\item Then $ \Delta = \Delta_1 \otimes \Delta_2 $, $ \Role = \Role_1 $, and $ T = \LTOpt{\tilde{\Role}}{T_1}{\Typed{\tilde{\Args[y]}}{\tilde{\Sort}}}{T_2} $.
					By the induction hypothesis, $ \Gamma \vdash P_1 \triangleright \Delta_1, \Typed{\AT{\Args}{\Role}}{T_1}, \Typed{\Role_1}{\OV{\tilde{\Sort}}} $ and $ \Gamma \vdash P_2 \triangleright \Delta_2, \Typed{\AT{\Args}{\Role}}{T_2} $ imply $ \Gamma \vdash P_1\!\Set[]{ \Subst{\Chan[s]}{\Args} } \triangleright \Delta_1, \Typed{\AT{\Chan[s]}{\Role}}{T_1}, \Typed{\Role_1}{\OV{\tilde{\Sort}}} $ and $ \Gamma \vdash P_2\!\Set[]{ \Subst{\Chan[s]}{\Args} } \triangleright \Delta_2, \Typed{\AT{\Chan[s]}{\Role}}{T_2} $.
					With Rule~$ (\mathsf{Opt}) $, $ \nexists \Role', \Sort[K] \logdot \Typed{\Role'}{\OV{\tilde{\Sort[K]}}} \in \Delta_1 $, $ \vdash \Typed{\tilde{\Args}'}{\tilde{\Sort}} $, and $ \vdash \Typed{\tilde{\Args[v]}}{\tilde{\Sort}} $ we have $ \Gamma \vdash \POpt{\Role_1}{\tilde{\Role}}{P_1\!\Set[]{ \Subst{\Chan[s]}{\Args} }}{\tilde{\Args}'}{\tilde{\Args[v]}}{\left( P_2\!\Set[]{ \Subst{\Chan[s]}{\Args} } \right)} \triangleright \Delta_1 \otimes \Delta_2, \Typed{\AT{\Chan[s]}{\Role}}{\LTOpt{\tilde{\Role}}{T_1}{\Typed{\tilde{\Args[y]}}{\tilde{\Sort}}}{T_2}} $.
					Since $ \Args \notin \tilde{\Args}' $ and $ \Args \notin \tilde{\Args[v]} $, we have $ \Gamma \vdash P\!\Set[]{ \Subst{\Chan[s]}{\Args} } \triangleright \Delta, \Typed{\AT{\Chan[s]}{\Role}}{T} $.
				\item Then $ \Delta = \Delta_1' \otimes \Delta_2', \Typed{\AT{\Chan[s]'}{\Role_1}}{\LTOpt{\tilde{\Role}}{T_1}{\Typed{\tilde{\Args[y]}}{\tilde{\Sort}}}{T_2}} $, $ \Delta_1 = \Delta_1', \Typed{\AT{\Args}{\Role}}{T_1'} $, $ \Delta_2 = \Delta_2', \Typed{\AT{\Args}{\Role}}{T_2'} $, and $ T = \LTPar{T_1'}{T_2'} $.
					By the induction hypothesis, $ \Gamma \vdash P_1 \triangleright \Delta_1', \Typed{\AT{\Args}{\Role}}{T_1'}, \Typed{\AT{\Chan[s]'}{\Role_1}}{T_1}, \Typed{\Role_1}{\OV{\tilde{\Sort}}} $ and $ \Gamma \vdash P_2 \triangleright \Delta_2', \Typed{\AT{\Args}{\Role}}{T_2'}, \Typed{\AT{\Chan[s]'}{\Role_1}}{T_2} $ imply $ \Gamma \vdash P_1\!\Set[]{ \Subst{\Chan[s]}{\Args} } \triangleright \Delta_1', \Typed{\AT{\Chan[s]}{\Role}}{T_1'}, \Typed{\AT{\Chan[s]'}{\Role_1}}{T_1}, \Typed{\Role_1}{\OV{\tilde{\Sort}}} $ and $ \Gamma \vdash P_2\!\Set[]{ \Subst{\Chan[s]}{\Args} } \triangleright \Delta_2', \Typed{\AT{\Chan[s]}{\Role}}{T_2'}, \Typed{\AT{\Chan[s]'}{\Role_1}}{T_2} $.
					With Rule~$ (\mathsf{Opt}) $, $ \nexists \Role', \Sort[K] \logdot \Typed{\Role'}{\OV{\tilde{\Sort[K]}}} \in \Delta_1 $, $ \vdash \Typed{\tilde{\Args}'}{\tilde{\Sort}} $, and $ \vdash \Typed{\tilde{\Args[v]}}{\tilde{\Sort}} $ we have that $ \Gamma \vdash \POpt{\Role_1}{\tilde{\Role}}{P_1\!\Set[]{ \Subst{\Chan[s]}{\Args} }}{\tilde{\Args}'}{\tilde{\Args[v]}}{\left( P_2\!\Set[]{ \Subst{\Chan[s]}{\Args} } \right)} \triangleright \left( \Delta_1', \Typed{\AT{\Chan[s]}{\Role}}{T_1'} \right) \otimes \left( \Delta_2', \Typed{\AT{\Chan[s]}{\Role}}{T_2'} \right), \Typed{\AT{\Chan[s]'}{\Role_1}}{\LTOpt{\tilde{\Role}}{T_1}{\Typed{\tilde{\Args[y]}}{\tilde{\Sort}}}{T_2}} $.
					Since $ \Args \notin \tilde{\Args}' $ and $ \Args \notin \tilde{\Args[v]} $, we have $ \Gamma \vdash P\!\Set[]{ \Subst{\Chan[s]}{\Args} } \triangleright \Delta, \Typed{\AT{\Chan[s]}{\Role}}{T} $.
			\end{enumerate}
	\end{description}
	The proof for the session types with optional blocks but without sub-sessions is similar but omits the cases for the Rules~$ (\mathsf{P}) $, $ (\mathsf{J}) $, and $ (\mathsf{New}) $. The remaining cases do not rely on the Rules~$ (\mathsf{P}) $, $ (\mathsf{J}) $, or $ (\mathsf{New}) $.
\end{proof}

Moreover, values of the same kind can be substituted in the process without changing the session environment.

\begin{lemma}
	\label{lem:typeSubstB}
	In both type systems:
	If $ \Gamma \vdash P \triangleright \Delta $, $ \vdash \Typed{\Args[y]}{\Sort} $, and $ \vdash \Typed{\Args[v]}{\Sort} $ then $ \Gamma \vdash P\!\Set[]{ \Subst{\Args[v]}{\Args[y]} } \triangleright \Delta $.
\end{lemma}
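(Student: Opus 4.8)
The plan is to proceed by induction on the derivation of $ \Gamma \vdash P \triangleright \Delta $, following exactly the pattern of the proof of Lemma~\ref{lem:typeSubstA}, with $ (\mathsf{N}) $ and $ (\mathsf{OptE}) $ as base cases and the remaining rules as induction steps. The observation that makes everything go through — and the one thing I would state explicitly rather than compute — is that the assumption $ \vdash \Typed{\Args[y]}{\Sort} $ tells us (see the start of Section~\ref{sec:properties}) that $ \Args[y] $ is a value, hence different from every name that a well-typed $ P $ uses as a shared or session channel. Consequently $ \Set[]{ \Subst{\Args[v]}{\Args[y]} } $ touches only the value-argument positions of $ P $ and never the channel/session-name positions that are recorded in $ \Gamma $ or $ \Delta $. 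Two further routine facts will be used repeatedly: (i)~every side condition of a typing rule of the shape $ \vdash \Typed{\tilde{\Args[w]}}{\tilde{\Sort}} $ survives the substitution, because replacing an occurrence of $ \Args[y] $ in $ \tilde{\Args[w]} $ by $ \Args[v] $ preserves kinds (both have kind $ \Sort $); and (ii)~all other side conditions ($ \GetType{\Chan} = \AT{T}{\Role} $, $ \GetType{\Prot} = \cdots $, the projection equalities, the non-existence condition $ \nexists \Role'', \tilde{\Sort[K]} \logdot \Typed{\Role''}{\OV{\tilde{\Sort[K]}}} \in \Delta $, etc.) mention only $ \Gamma $, $ \Delta $, and protocol bodies, which the substitution leaves untouched.

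For the base cases: if the last rule is $ (\mathsf{N}) $ then $ P = \PEnd $, the substitution is the identity, and there is nothing to do; if it is $ (\mathsf{OptE}) $ then $ P = \POptEnd{\Role}{\tilde{\Args[w]}} $ with $ \vdash \Typed{\tilde{\Args[w]}}{\tilde{\Sort}'} $ and $ \Delta = \Typed{\Role}{\OV{\tilde{\Sort}'}} $, and since substituting a value of kind $ \Sort $ for $ \Args[y] $ preserves kinds we still have $ \vdash \Typed{\tilde{\Args[w]}\!\Set[]{ \Subst{\Args[v]}{\Args[y]} }}{\tilde{\Sort}'} $, so $ (\mathsf{OptE}) $ re-derives the judgement. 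For each induction step I would: apply alpha-conversion so that $ \Args[y] $ and $ \Args[v] $ are distinct from every name bound by the outermost constructor (the vectors $ \tilde{\Args}_i $ in $ \PInp{\Chan}{\tilde{\Args}}{P} $ and in $ \PGet{\cdots} $, the restricted name in $ \PRes{\Args'}{P} $, the arguments $ \tilde{\Args} $ of $ \POpt{\cdots} $, and the name bound by $ \PEnt{\cdots} $); invoke the induction hypothesis on the immediate subderivations; reassemble with the same rule using facts (i) and (ii) for the side conditions; and push the substitution back out past the freshly renamed binder. The only cases needing slightly more care are $ (\mathsf{Pa}) $ — where the substitution distributes over both premises and over the $ \otimes $-split of $ \Delta $ — and $ (\mathsf{S1}) $, $ (\mathsf{S2}) $, $ (\mathsf{Opt}) $; the last is handled as in Lemma~\ref{lem:typeSubstA}, additionally noting that the $ \OV{} $-entry and the non-existence condition are unaffected and that the default-value and continuation-argument vectors of the optional block keep their kinds. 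Recursion is handled implicitly, so $ \PRec{\TermV[X]}{P} $ is treated by unfolding together with Lemma~\ref{lem:typedStructuralCongruence}.

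I expect no genuine obstacle: the proof is pure bookkeeping across the typing rules, and the only point that has to be argued rather than calculated is the remark that a value variable cannot inhabit a channel position of a well-typed process, so the substitution commutes with every rule that manipulates the environments. Finally, the derivation for the smaller type system (optional blocks without sub-sessions) is obtained simply by deleting the cases for $ (\mathsf{P}) $, $ (\mathsf{J}) $, and $ (\mathsf{New}) $; none of the remaining cases relies on them.
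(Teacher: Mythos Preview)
Your proposal is correct and follows essentially the same approach as the paper: induction on the typing derivation with $(\mathsf{N})$ and $(\mathsf{OptE})$ as base cases, alpha-conversion to keep $\Args[y],\Args[v]$ away from binders, the key observation that $\vdash \Typed{\Args[y]}{\Sort}$ forces $\Args[y]$ to be a value and hence distinct from any channel/session name recorded in $\Gamma$ or $\Delta$, and the final remark that the smaller system is obtained by dropping the $(\mathsf{P})$, $(\mathsf{J})$, $(\mathsf{New})$ cases. The paper additionally singles out $(\mathsf{OptE})$, $(\mathsf{S})$, $(\mathsf{New})$, and $(\mathsf{Opt})$ as the only rules carrying free values in the process term, but this is just an organisational convenience and your uniform treatment covers it.
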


\begin{proof}
	We start with the larger type system, \ie the session types with optional blocks and sub-sessions.
	Assume $ \Gamma \vdash P \triangleright \Delta $, $ \vdash \Typed{\Args[y]}{\Sort} $, and $ \vdash \Typed{\Args[v]}{\Sort} $.
	We perform an induction on the derivation of the judgement from the typing rules of Figure~\ref{fig:typingRules}.
	Note that the Rules~$ (\mathsf{N}) $ and $ (\mathsf{OptE}) $ refer to base cases, while the remaining rules refer to the induction steps of the induction.
	Also note that the only rules with free values that can be substituted are $ (\mathsf{OptE}) $, $ (\mathsf{S}) $, $ (\mathsf{New}) $, and $ (\mathsf{Opt}) $.
	For these rules we have to check that the kind of values is respected and that a substitution of a value in the process does not conflict with the session environment required by this rule.
	We avoid the substitution of bound names explicitly using alpha-conversion.
	\begin{description}
		\item[Case Rule~$ (\mathsf{N}) $:]
			In this case $ P = \PEnd $ and $ \Delta = \emptyset $.
			By Rule~$ (\mathsf{N}) $, we have $ \Gamma \vdash \PEnd \triangleright \emptyset $.
			Hence $ \Gamma \vdash P\!\Set[]{ \Subst{\Args[v]}{\Args[y]} } \triangleright \Delta $.
		\item[Case Rule~$ (\mathsf{OptE}) $:]
			In this case $ P = \POptEnd{\Role}{\tilde{\Args[v]}'} $ and $ \Delta = \Typed{\Role}{\OV{\tilde{\Sort}'}} $ and we have $ \vdash \Typed{\tilde{\Args[v]}'}{\tilde{\Sort}'} $.
			Hence, if $ \tilde{\Args[v]}'_i = \Args[y] $, then $ \tilde{\Sort}'_i = \Sort $ and thus the kinds of $ \tilde{\Args[v]}'\!\Set[]{ \Subst{\Args[v]}{\Args[y]} } $ and $ \tilde{\Args[y]}' $ coincide.
			Thus $ \vdash \Typed{\tilde{\Args[v]}'}{\tilde{\Sort}'} $ implies $ \vdash \Typed{\tilde{\Args[v]}'\!\Set[]{ \Subst{\Args[v]}{\Args[y]} }}{\tilde{\Sort}'} $.
			With Rule~$ (\mathsf{OptE}) $ we have $ \Gamma \vdash \POptEnd{\Role}{\tilde{\Args[v]}'} \!\Set[]{ \Subst{\Args[v]}{\Args[y]} } \triangleright \Typed{\Role}{\OV{\tilde{\Sort}'}} $.
			Hence $ \Gamma \vdash P\!\Set[]{ \Subst{\Args[v]}{\Args[y]} } \triangleright \Delta $.
		\item[Case Rule~$ (\mathsf{I}) $:]
			In this case $ P = \PInp{\Chan}{\Args}{P'} $ and we have $ \Gamma \vdash P' \triangleright \Delta, \Typed{\AT{\Args}{\Role}}{T} $ and $ \GetType{\Chan} = \AT{T}{\Role} $.
			Using alpha-conversion before Rule~$ (\mathsf{I}) $ we can ensure that $ \Args \notin \Set[]{ \Args[v], \Args[y] } $.
			Because of $ \vdash \Typed{\Args[y]}{\Sort} $ and $ \vdash \Typed{\Args[v]}{\Sort} $, we have $ \Chan \notin \Set[]{ \Args[v], \Args[y] } $.
			Then, by the induction hypothesis, $ \Gamma \vdash P' \triangleright \Delta, \Typed{\AT{\Args}{\Role}}{T} $ implies $ \Gamma \vdash P'\!\Set[]{ \Subst{\Args[v]}{\Args[y]} } \triangleright \Delta, \Typed{\AT{\Args}{\Role}}{T} $.
			With Rule~$ (\mathsf{I}) $ and $ \GetType{\Chan} = \AT{T}{\Role} $ we have $ \Gamma \vdash \PInp{\Chan}{\Args}{\left( P'\!\Set[]{ \Subst{\Args[v]}{\Args[y]} } \right)} \triangleright \Delta $.
			Since $ \Chan, \Args \notin \Set[]{ \Args[v], \Args[y] } $, then $ \Gamma \vdash P\!\Set[]{ \Subst{\Args[v]}{\Args[y]} } \triangleright \Delta $.
		\item[Case Rule~$ (\mathsf{O}) $:]
			In this case $ P = \POut{\Chan}{\Chan[s]}{P'} $ and $ \Delta = \Delta', \Typed{\ATE{\Chan[s]}{\Role}}{T} $ and we have $ \Gamma \vdash P' \triangleright \Delta' $ and $ \GetType{\Chan} = \AT{T}{\Role} $.
			Because of $ \vdash \Typed{\Args[y]}{\Sort} $ and $ \vdash \Typed{\Args[v]}{\Sort} $, we have $ \Chan, \Chan[s] \notin \Set[]{ \Args[v], \Args[y] } $.
			By the induction hypothesis, $ \Gamma \vdash P' \triangleright \Delta' $ implies $ \Gamma \vdash P'\!\Set[]{ \Subst{\Args[v]}{\Args[y]} } \triangleright \Delta' $.
			With Rule~$ (\mathsf{O}) $ and $ \GetType{\Chan} = \AT{T}{\Role} $ we have $ \Gamma \vdash \POut{\Chan}{\Chan[s]}{\left( P'\!\Set[]{ \Subst{\Args[v]}{\Args[y]} } \right)} \triangleright \Delta', \Typed{\ATE{\Chan[s]}{\Role}}{T} $.
			Since $ \Chan, \Chan[s] \notin \Set[]{ \Args[v], \Args[y] } $, we have $ \Gamma \vdash P\!\Set[]{ \Subst{\Args[v]}{\Args[y]} } \triangleright \Delta $.
		\item[Case Rule~$ (\mathsf{C}) $:]
			In this case we have $ P = \PGet{\Chan[k]}{\Role_1}{\Role_2}{_{i \in \indexSet} \Set{ \PLab{\Labe_i}{\tilde{\Args[y]}'_i}{P_i} }} $, $ \Delta = \Delta', \Typed{\AT{\Chan[k]}{\Role_2}}{\LTGet{\Role_1}{_{i \in \indexSet{}} \Set{ \LTLab{\Labe_i}{\Typed{\tilde{\Args}_i}{\tilde{\Sort}'_i}}{T_i} }}} $ and we have $ \Gamma \vdash P_i \triangleright \Delta', \Typed{\AT{\Chan[k]}{\Role_2}}{T_i} $ and $ \vdash \Typed{\tilde{\Args[y]}'_i}{\tilde{\Sort}'_i} $ for all $ i \in \indexSet $.
			Using alpha-conversion before Rule~$ (\mathsf{C}) $ we can ensure that $ \Args[v], \Args[y] \notin \tilde{\Args[y]}'_i $ and $ \Args[v], \Args[y] \notin \tilde{\Args}_i $ for all $ i \in \indexSet $.
			Because of $ \vdash \Typed{\Args[y]}{\Sort} $ and $ \vdash \Typed{\Args[v]}{\Sort} $, we have $ \Chan[k] \notin \Set[]{ \Args[v], \Args[y] } $.
			By the induction hypothesis, $ \Gamma \vdash P_i \triangleright \Delta', \Typed{\AT{\Chan[k]}{\Role_2}}{T_i} $ implies $ \Gamma \vdash P_i\!\Set[]{ \Subst{\Args[v]}{\Args[y]} } \triangleright \Delta', \Typed{\AT{\Chan[k]}{\Role_2}}{T_i} $ for all $ i \in \indexSet $.
			With Rule~$ (\mathsf{C}) $ and $ \vdash \Typed{\tilde{\Args[y]}'_i}{\tilde{\Sort}'_i} $ for all $ i \in \indexSet $ we have $ \Gamma \vdash \PGet{\Chan[k]}{\Role_1}{\Role_2}{_{i \in \indexSet} \Set{ \PLab{\Labe_i}{\tilde{\Args[y]}'_i}{\left( P_i\!\Set[]{ \Subst{\Chan[s]}{\Args} } \right)} }} \triangleright \Delta', \Typed{\AT{\Chan[k]}{\Role_2}}{\LTGet{\Role_1}{_{i \in \indexSet{}} \Set{ \LTLab{\Labe_i}{\Typed{\tilde{\Args}_i}{\tilde{\Sort}'_i}}{T_i} }}} $.
			Since $ \Chan[k] \notin \Set[]{ \Args[v], \Args[y] } $ and $ \Args[v], \Args[y] \notin \tilde{\Args[y]}'_i $ for all $ i \in \indexSet $, we have $ \Gamma \vdash P\!\Set[]{ \Subst{\Args[v]}{\Args[y]} } \triangleright \Delta $.
		\item[Case Rule~$ (\mathsf{S}) $:]
			In this case we have $ P = \PSend{\Chan[k]}{\Role_1}{\Role_2}{\Labe_j}{\tilde{\Args[v]}'}{P'} $, $ \Delta = \Delta', \Typed{\AT{\Chan[k]}{\Role_1}}{\LTSend{\Role_2}{_{i \in \indexSet} \Set{ \LTLab{\Labe_i}{\Typed{\tilde{\Args}_i}{\tilde{\Sort}'_i}}{T_i} }}} $ and we have $ \Gamma \vdash P' \triangleright \Delta', \Typed{\AT{\Chan[k]}{\Role_1}}{T_j} $ and $ \vdash \Typed{\tilde{\Args[v]}'}{\tilde{\Sort}'_j} $.
			Hence, if $ \tilde{\Args[v]}'_i = \Args[y] $, then $ \tilde{\Sort}'_i = \Sort $ and thus the kinds of $ \tilde{\Args[v]}'\!\Set[]{ \Subst{\Args[v]}{\Args[y]} } $ and $ \tilde{\Args} $ coincide.
			Thus $ \vdash \Typed{\tilde{\Args[v]}'}{\tilde{\Sort}_j'} $ implies $ \vdash \Typed{\tilde{\Args[v]}'\!\Set[]{ \Subst{\Args[v]}{\Args[y]} }}{\tilde{\Sort}_j'} $.
			Because of $ \vdash \Typed{\Args[y]}{\Sort} $ and $ \vdash \Typed{\Args[v]}{\Sort} $, we have $ \Chan[k] \notin \Set[]{ \Args[v], \Args[y] } $.
			By the induction hypothesis, $ \Gamma \vdash P' \triangleright \Delta', \Typed{\AT{\Chan[k]}{\Role_1}}{T_j} $ implies $ \Gamma \vdash P'\!\Set[]{ \Subst{\Args[v]}{\Args[y]} } \triangleright \Delta', \Typed{\AT{\Chan[k]}{\Role_1}}{T_j} $.
			With Rule~$ (\mathsf{S}) $ and $ \vdash \Typed{\tilde{\Args[v]}'\!\Set[]{ \Subst{\Args[v]}{\Args[y]} }}{\tilde{\Sort}_j'} $ we have $ \Gamma \vdash \PSend{\Chan[k]}{\Role_1}{\Role_2}{\Labe_j}{\tilde{\Args[v]}'\!\Set[]{ \Subst{\Args[v]}{\Args[y]} }}{\left( P'\!\Set[]{ \Subst{\Args[v]}{\Args[y]} } \right)} \triangleright \Delta', \Typed{\AT{\Chan[k]}{\Role_1}}{\LTSend{\Role_2}{_{i \in \indexSet} \Set{ \LTLab{\Labe_i}{\Typed{\tilde{\Args}_i}{\tilde{\Sort}'_i}}{T_i} }}} $.
			Since $ \Chan[k] \notin \Set[]{ \Args[v], \Args[y] } $, we have $ \Gamma \vdash P\!\Set[]{ \Subst{\Args[v]}{\Args[y]} } \triangleright \Delta $.
		\item[Case Rule~$ (\mathsf{R}) $:]
			In this case $ P = \PRes{\Args}{P'} $ and we have $ \Gamma, \Typed{\Args}{\AT{T}{\Role}} \vdash P' \triangleright \Delta $.
			Using alpha-conversion before Rule~$ (\mathsf{R}) $ we can ensure that $ \Args \notin \Set[]{ \Args[v], \Args[y] } $.
			By the induction hypothesis, $ \Gamma, \Typed{\Args}{\AT{T}{\Role}} \vdash P' \triangleright \Delta $ implies $ \Gamma, \Typed{\Args}{\AT{T}{\Role}} \vdash P'\!\Set[]{ \Subst{\Args[v]}{\Args[y]} } \triangleright \Delta $.
			With Rule~$ (\mathsf{R}) $ we have $ \Gamma \vdash \PRes{\Args}{\left( P'\!\Set[]{ \Subst{\Args[v]}{\Args[y]} } \right)} \triangleright \Delta $.
			Since $ \Args \notin \Set[]{ \Args[v], \Args[y] } $, we have $ \Gamma \vdash P\!\Set[]{ \Subst{\Args[v]}{\Args[y]} } \triangleright \Delta $.
		\item[Case Rule~$ (\mathsf{P}) $:]
			In this case we have $ P = \PReq{\Chan[s]}{\Role_1}{\Role_2}{\Role_3}{\Chan[k]}{P'} $ and $ \Delta = \Delta', \Typed{\AT{\Chan[s]}{\Role_1}}{\LTReq{\Prot}{\Role_3}{\tilde{\Args[v]}'}{\Role_2}{T_1}}, \Typed{\ATI{\Chan[k]}{\Role_3}}{T_3} $, $ \Gamma \vdash P' \triangleright \Delta', \Typed{\AT{\Chan[s]}{\Role_1}}{T_1} $, $ \GetType{\Prot} = \TypeOfProt{\tilde{\Role}_4}{\tilde{\Args[y]}'}{\tilde{\Role}_5}{G} $, and $ \ProjS{G\!\Set[]{ \Subst{\tilde{\Args[v]}'}{\tilde{\Args[y]}'} }}{}{\Role_3} = T_3 $.
			Because of $ \vdash \Typed{\Args[y]}{\Sort} $ and $ \vdash \Typed{\Args[v]}{\Sort} $, we have $ \Chan[s], \Chan[k] \notin \Set[]{ \Args[v], \Args[y] } $.
			By the induction hypothesis, $ \Gamma \vdash P' \triangleright \Delta', \Typed{\AT{\Chan[s]}{\Role_1}}{T_1} $ implies $ \Gamma \vdash P'\!\Set[]{ \Subst{\Args[v]}{\Args[y]} } \triangleright \Delta', \Typed{\AT{\Chan[s]}{\Role_1}}{T_1} $.
			With Rule~$ (\mathsf{P}) $, $ \GetType{\Prot} = \TypeOfProt{\tilde{\Role}_4}{\tilde{\Args[y]}'}{\tilde{\Role}_5}{G} $, and $ \ProjS{G\!\Set[]{ \Subst{\tilde{\Args[v]}'}{\tilde{\Args[y]}'} }}{}{\Role_3} = T_3 $ we have $ \Gamma \vdash \PReq{\Chan[s]}{\Role_1}{\Role_2}{\Role_3}{\Chan[k]}{\left( P'\!\Set[]{ \Subst{\Args[v]}{\Args[y]} } \right)} \triangleright \Delta', \Typed{\AT{\Chan[s]}{\Role_1}}{\LTReq{\Prot}{\Role_3}{\tilde{\Args[v]}'}{\Role_2}{T_1}}, \Typed{\ATI{\Chan[k]}{\Role_3}}{T_3} $.
			Since $ \Chan[s], \Chan[k] \notin \Set[]{ \Args[v], \Args[y] } $, we have $ \Gamma \vdash P\!\Set[]{ \Subst{\Args[v]}{\Args[y]} } \triangleright \Delta $.
		\item[Case Rule~$ (\mathsf{J}) $:]
			In this case we have $ P = \PEnt{\Chan[s]}{\Role_1}{\Role_2}{\Role_3}{\Args}{P'} $ and $ \Delta = \Delta', \Typed{\AT{\Chan[s]}{\Role_2}}{\LTEnt{\Prot}{\Role_3}{\tilde{\Args[v]}'}{\Role_1}{T_2}} $, $ \Gamma \vdash P' \triangleright \Delta', \Typed{\AT{\Chan[s]}{\Role_2}}{T_2}, \Typed{\AT{\Args}{\Role_3}}{T_3} $, $ \GetType{\Prot} = \TypeOfProt{\tilde{\Role}_4}{\tilde{\Args[y]}'}{\tilde{\Role}_5}{G} $, and $ \ProjS{G\!\Set[]{ \Subst{\tilde{\Args[v]}'}{\tilde{\Args[y]}'} }}{}{\Role_3} = T_3 $.
			Using alpha-conversion before Rule~$ (\mathsf{J}) $ we can ensure that $ \Args \notin \Set[]{ \Args[v], \Args[y] } $.
			Because of $ \vdash \Typed{\Args[y]}{\Sort} $ and $ \vdash \Typed{\Args[v]}{\Sort} $, we have $ \Chan[s] \notin \Set[]{ \Args[v], \Args[y] } $.
			By the induction hypothesis, $ \Gamma \vdash P' \triangleright \Delta', \Typed{\AT{\Chan[s]}{\Role_2}}{T_2}, \Typed{\AT{\Args}{\Role_3}}{T_3} $ implies $ \Gamma \vdash P'\!\Set[]{ \Subst{\Args[v]}{\Args[y]} } \triangleright \Delta', \Typed{\AT{\Chan[s]}{\Role_2}}{T_2}, \Typed{\AT{\Args}{\Role_3}}{T_3} $.
			With Rule~$ (\mathsf{J}) $, $ \GetType{\Prot} = \TypeOfProt{\tilde{\Role}_4}{\tilde{\Args[y]}'}{\tilde{\Role}_5}{G} $, and $ \ProjS{G\!\Set[]{ \Subst{\tilde{\Args[v]}'}{\tilde{\Args[y]}'} }}{}{\Role_3} = T_3 $ we have $ \Gamma \vdash \PEnt{\Chan[s]}{\Role_1}{\Role_2}{\Role_3}{\Args}{\left( P'\!\Set[]{ \Subst{\Args[v]}{\Args[y]} } \right)} \triangleright \Delta', \Typed{\AT{\Chan[s]}{\Role_2}}{\LTEnt{\Prot}{\Role_3}{\tilde{\Args[v]}'}{\Role_1}{T_2}} $.
			Since $ \Chan[s], \Args \notin \Set[]{ \Args[v], \Args[y] } $, we have $ \Gamma \vdash P\!\Set[]{ \Subst{\Args[v]}{\Args[y]} } \triangleright \Delta $.
		\item[Case Rule~$ (\mathsf{New}) $:]
			In this case $ P = \PDecl{\Chan[k]}{\Chan[s]}{\tilde{\Args[v]}'}{\tilde{\Chan}}{\tilde{\Role}'''}{P'} $ and
			\begin{align*}
				\Delta = \Delta', \Typed{\AT{\Chan[s]}{\Role}}{\LTCall{\Prot}{G}{\tilde{\Args[v]}'}{\Typed{\tilde{\Args[y]}'}{\tilde{\Sort}'}}{\tilde{\Role}'''}{T'}}
			\end{align*}
			and we have
			\begin{align*}
				\Gamma \vdash P' \triangleright \; \Delta', \Typed{\AT{\Chan[s]}{\Role}}{T}, \Typed{\ATI{\Chan[k]}{\Role''_1}}{T'_1}, \ldots, \Typed{\ATI{\Chan[k]}{\Role''_n}}{T'_n}, \Typed{\ATE{\Chan[k]}{\Role'''_1}}{T'_{n + 1}}, \ldots, \Typed{\ATE{\Chan[k]}{\Role'''_m}}{T'_{n + m}}
			\end{align*}
			and $ \GetType{\Prot} = \TypeOfProt{\tilde{\Role}''}{\tilde{\Args[y]}'}{\tilde{\Role}'''}{G} $ and $ \forall i \logdot \GetType{\Chan_i} = \AT{T'_{i + n}}{\Role'''_{i + n}} $ and $ \forall i \logdot \ProjS{G\!\Set[]{ \Subst{\tilde{\Args[v]}'}{\tilde{\Args[y]}'} }}{}{\Role''_i} = T'_i $ and $ \forall j \logdot \ProjS{G\!\Set[]{ \Subst{\tilde{\Args[v]}'}{\tilde{\Args[y]}'} }}{}{\Role'''_j} = T'_{j + n} $ and $ \vdash \Typed{\tilde{\Args[v]}'}{\tilde{\Sort}'} $ and $ \GetType{\Chan[k]} = {\Prot\!\Set[]{ \Subst{\tilde{\Args[v]}'}{\tilde{\Args[y]}'} }} $.
			Hence, if $ \tilde{\Args[v]}'_i = \Args[y] $, then $ \tilde{\Sort}'_i = \Sort $ and thus the kinds of $ \tilde{\Args[v]}'\!\Set[]{ \Subst{\Args[v]}{\Args[y]} } $ and $ \tilde{\Args[y]}' $ coincide.
			Because of $ \vdash \Typed{\Args[y]}{\Sort} $ and $ \vdash \Typed{\Args[v]}{\Sort} $, we have $ \Chan[s], \Chan[k] \notin \Set[]{ \Args[v], \Args[y] } $ and $ \Args[v], \Args[y] \notin \tilde{\Chan} $.
			By the induction hypothesis,
			\begin{align*}
				\Gamma \vdash P' \triangleright \Delta', \Typed{\AT{\Chan[s]}{\Role}}{T}, \Typed{\ATI{\Chan[k]}{\Role''_1}}{T'_1}, \ldots, \Typed{\ATI{\Chan[k]}{\Role''_n}}{T'_n}, \Typed{\ATE{\Chan[k]}{\Role'''_1}}{T'_{n + 1}}, \ldots, \Typed{\ATE{\Chan[k]}{\Role'''_m}}{T'_{n + m}}
			\end{align*}
			implies
			\begin{align*}
				\Gamma \vdash P'\!\Set[]{ \Subst{\Args[v]}{\Args[y]} } \triangleright \; \Delta', \Typed{\AT{\Chan[s]}{\Role}}{T}, \Typed{\ATI{\Chan[k]}{\Role''_1}}{T'_1}, \ldots, \Typed{\ATI{\Chan[k]}{\Role''_n}}{T'_n}, \Typed{\ATE{\Chan[k]}{\Role'''_1}}{T'_{n + 1}}, \ldots, \Typed{\ATE{\Chan[k]}{\Role'''_m}}{T'_{n + m}}
			\end{align*}
			With Rule~$ (\mathsf{New}) $ and $ \GetType{\Prot} = \TypeOfProt{\tilde{\Role}''}{\tilde{\Args[y]}'}{\tilde{\Role}'''}{G} $ and $ \forall i \logdot \GetType{\Chan_i} = \AT{T'_{i + n}}{\Role'''_{i + n}} $ and $ \forall i \logdot \ProjS{G\!\Set[]{ \Subst{\tilde{\Args[v]}'}{\tilde{\Args[y]}'} }}{}{\Role''_i} = T'_i $ and $ \forall j \logdot \ProjS{G\!\Set[]{ \Subst{\tilde{\Args[v]}'}{\tilde{\Args[y]}'} }}{}{\Role'''_j} = T'_{j + n} $ and $ \vdash \Typed{\tilde{\Args[v]}'}{\tilde{\Sort}'} $ and $ \GetType{\Chan[k]} = {\Prot\!\Set[]{ \Subst{\tilde{\Args[v]}'}{\tilde{\Args[y]}'} }} $ we have
			\begin{align*}
				\Gamma \vdash \PDecl{\Chan[k]}{\Chan[s]}{\tilde{\Args[v]}'}{\tilde{\Chan}}{\tilde{\Role}'''}{\left( P'\!\Set[]{ \Subst{\Args[v]}{\Args[y]} } \right)} \triangleright \Delta', \Typed{\AT{\Chan[s]}{\Role}}{\LTCall{\Prot}{G}{\tilde{\Args[v]}'}{\Typed{\tilde{\Args[y]}'}{\tilde{\Sort}'}}{\tilde{\Role}'''}{T}}
			\end{align*}
			Since $ \Chan[s], \Chan[k] \notin \Set[]{ \Args[v], \Args[y] } $ and $ \Args[v], \Args[y] \notin \tilde{\Chan} $, we have $ \Gamma \vdash P\!\Set[]{ \Subst{\Args[v]}{\Args[y]} } \triangleright \Delta $.
		\item[Case Rule~$ (\mathsf{S1}) $:]
			In this case $ P = \PChoi{P_1}{P_2} $ and $ \Delta = \Delta', \Typed{\AT{\Chan[s]}{\Role}}{T_1 \oplus T_2} $ and we have $ \Gamma \vdash P_1 \triangleright \Delta', \Typed{\AT{\Chan[s]}{\Role}}{T_1} $ and $ \Gamma \vdash P_2 \triangleright \Delta', \Typed{\AT{\Chan[s]}{\Role}}{T_2} $.
			Because of $ \vdash \Typed{\Args[y]}{\Sort} $ and $ \vdash \Typed{\Args[v]}{\Sort} $, we have $ \Chan[s] \notin \Set[]{ \Args[v], \Args[y] } $.
			By the induction hypothesis, $ \Gamma \vdash P_1 \triangleright \Delta', \Typed{\AT{\Chan[s]}{\Role}}{T_1} $ and $ \Gamma \vdash P_2 \triangleright \Delta', \Typed{\AT{\Chan[s]}{\Role}}{T_2} $ imply $ \Gamma \vdash P_1\!\Set[]{ \Subst{\Args[v]}{\Args[y]} } \triangleright \Delta', \Typed{\AT{\Chan[s]}{\Role}}{T_1} $ and $ \Gamma \vdash P_2\!\Set[]{ \Subst{\Args[v]}{\Args[y]} } \triangleright \Delta', \Typed{\AT{\Chan[s]}{\Role}}{T_2} $.
			With Rule~$ (\mathsf{S1}) $ we have $ \Gamma \vdash \PChoi{\left( P_1\!\Set[]{ \Subst{\Args[v]}{\Args[y]} } \right)}{\left( P_2\!\Set[]{ \Subst{\Args[v]}{\Args[y]} } \right)} \triangleright \Delta', \Typed{\AT{\Chan[s]}{\Role}}{\LTChoi{T_1}{T_2}} $.
			Hence $ \Gamma \vdash P\!\Set[]{ \Subst{\Args[v]}{\Args[y]} } \triangleright \Delta $.
		\item[Case Rule~$ (\mathsf{S2}) $:]
			In this case $ \Delta = \Delta', \Typed{\AT{\Chan[s]}{\Role}}{T_1 \oplus T_2} $ and we have $ \Gamma \vdash P \triangleright \Delta', \Typed{\AT{\Chan[s]}{\Role}}{T_i} $ with $ i \in \Set[]{ 1, 2 } $.
			Because of $ \vdash \Typed{\Args[y]}{\Sort} $ and $ \vdash \Typed{\Args[v]}{\Sort} $, we have $ \Chan[s] \notin \Set[]{ \Args[v], \Args[y] } $.
			By the induction hypothesis, $ \Gamma \vdash P \triangleright \Delta', \Typed{\AT{\Chan[s]}{\Role}}{T_i} $ implies $ \Gamma \vdash P\!\Set[]{ \Subst{\Args[v]}{\Args[y]} } \triangleright \Delta', \Typed{\AT{\Chan[s]}{\Role}}{T_i} $.
			With Rule~$ (\mathsf{S2}) $ we have $ \Gamma \vdash P\!\Set[]{ \Subst{\Args[v]}{\Args[y]} } \triangleright \Delta', \Typed{\AT{\Chan[s]}{\Role}}{\LTChoi{T_1}{T_2}} $.
			Hence $ \Gamma \vdash P\!\Set[]{ \Subst{\Args[v]}{\Args[y]} } \triangleright \Delta $.
		\item[Case Rule~$ (\mathsf{Pa}) $:]
			In this case $ P = \PPar{P_1}{P_2} $ and $ \Delta = \Delta_1 \otimes \Delta_2 $ and we have $ \Gamma \vdash P_1 \triangleright \Delta_1 $ and $ \Gamma \vdash P_2 \triangleright \Delta_2 $.
			By the induction hypothesis, $ \Gamma \vdash P_1 \triangleright \Delta_1 $ and $ \Gamma \vdash P_2 \triangleright \Delta_2 $ imply $ \Gamma \vdash P_1\!\Set[]{ \Subst{\Args[v]}{\Args[y]} } \triangleright \Delta_1 $ and $ \Gamma \vdash P_2\!\Set[]{ \Subst{\Args[v]}{\Args[y]} } \triangleright \Delta_2 $.
			With Rule~$ (\mathsf{Pa}) $ we have $ \Gamma \vdash \PPar{\left( P_1\!\Set[]{ \Subst{\Args[v]}{\Args[y]} } \right)}{\left( P_2\!\Set[]{ \Subst{\Args[v]}{\Args[y]} } \right)} \triangleright \Delta_1 \otimes \Delta_2 $.
			Hence $ \Gamma \vdash P\!\Set[]{ \Subst{\Args[v]}{\Args[y]} } \triangleright \Delta $.
		\item[Case Rule~$ (\mathsf{Opt}) $:]
			In this case $ P = \POpt{\Role_1}{\tilde{\Role}}{P_1}{\tilde{\Args}}{\tilde{\Args[v]}'}{P_2} $ and $ \Delta = \Delta_1 \otimes \Delta_2, \Typed{\AT{\Chan[s]}{\Role_1}}{\LTOpt{\tilde{\Role}}{T_1}{\Typed{\tilde{\Args[y]}'}{\tilde{\Sort}'}}{T_2}} $ and we have $ \Gamma \vdash P_1 \triangleright \Delta_1, \Typed{\AT{\Chan[s]}{\Role_1}}{T_1}, \Typed{\Role_1}{\OV{\tilde{\Sort}'}} $, $ \nexists \Role', \Sort[K] \logdot \Typed{\Role'}{\OV{\tilde{\Sort[K]}}} \in \Delta_1 $, $ \Gamma \vdash P_2 \triangleright \Delta_2, \Typed{\AT{\Chan[s]}{\Role_1}}{T_2} $, $ \vdash \Typed{\tilde{\Args}}{\tilde{\Sort}'} $, and $ \vdash \Typed{\tilde{\Args[v]}'}{\tilde{\Sort}'} $.
			Hence, if $ \tilde{\Args[v]}'_i = \Args[y] $, then $ \tilde{\Sort}'_i = \Sort $ and thus the kinds of $ \tilde{\Args[v]}'\!\Set[]{ \Subst{\Args[v]}{\Args[y]} } $ and $ \tilde{\Args[y]}' $ coincide.
			Thus $ \vdash \Typed{\tilde{\Args[v]}'}{\tilde{\Sort}'} $ implies $ \vdash \Typed{\tilde{\Args[v]}'\!\Set[]{ \Subst{\Args[v]}{\Args[y]} }}{\tilde{\Sort}'} $.
			Using alpha-conversion before Rule~$ (\mathsf{Opt}) $ we can ensure that $ \Args[v], \Args[y] \notin \tilde{\Args} $.
			Because of $ \vdash \Typed{\Args[y]}{\Sort} $ and $ \vdash \Typed{\Args[v]}{\Sort} $, we have $ \Chan[s] \notin \Set[]{ \Args[v], \Args[y] } $.
			By the induction hypothesis, $ \Gamma \vdash P_1 \triangleright \Delta_1, \Typed{\AT{\Chan[s]}{\Role_1}}{T_1}, \Typed{\Role_1}{\OV{\tilde{\Sort}'}} $ and $ \Gamma \vdash P_2 \triangleright \Delta_2, \Typed{\AT{\Chan[s]}{\Role_1}}{T_2} $ imply $ \Gamma \vdash P_1\!\Set[]{ \Subst{\Args[v]}{\Args[y]} } \triangleright \Delta_1, \Typed{\AT{\Chan[s]}{\Role_1}}{T_1}, \Typed{\Role_1}{\OV{\tilde{\Sort}'}} $ and $ \Gamma \vdash P_2\!\Set[]{ \Subst{\Args[v]}{\Args[y]} } \triangleright \Delta_2, \Typed{\AT{\Chan[s]}{\Role_1}}{T_2} $.
			With Rule~$ (\mathsf{Opt}) $, $ \nexists \Role', \Sort[K] \logdot \Typed{\Role'}{\OV{\tilde{\Sort[K]}}} \in \Delta_1 $, $ \vdash \Typed{\tilde{\Args}}{\tilde{\Sort}'} $, and $ \vdash \Typed{\tilde{\Args[v]}'\!\Set[]{ \Subst{\Args[v]}{\Args[y]} }}{\tilde{\Sort}'} $ we have $ \Gamma \vdash \POpt{\Role_1}{\tilde{\Role}}{P_1\!\Set[]{ \Subst{\Args[v]}{\Args[y]} }}{\tilde{\Args}}{\tilde{\Args[v]}'\!\Set[]{ \Subst{\Args[v]}{\Args[y]} }}{\left( P_2\!\Set[]{ \Subst{\Args[v]}{\Args[y]} } \right)} \triangleright \Delta_1 \otimes \Delta_2, \Typed{\AT{\Chan[s]}{\Role_1}}{\LTOpt{\tilde{\Role}}{T_1}{\Typed{\tilde{\Args[y]}'}{\tilde{\Sort}'}}{T_2}} $.
			Since $ \Args[v], \Args[y] \notin \tilde{\Args} $, we have $ \Gamma \vdash P\!\Set[]{ \Subst{\Args[v]}{\Args[y]} } \triangleright \Delta $.
	\end{description}
	The proof for the session types with optional blocks but without sub-sessions is similar but omits the cases for the Rules~$ (\mathsf{P}) $, $ (\mathsf{J}) $, and $ (\mathsf{New}) $. The remaining cases do not rely on the Rules~$ (\mathsf{P}) $, $ (\mathsf{J}) $, or $ (\mathsf{New}) $.
\end{proof}

The next lemma deals with evaluation contexts in typing judgements (compare to \cite{Demangeon15}). If a process $ P $ is well-typed within an evaluation context then
\begin{inparaenum}[(1)]
	\item the process $ P $ is well-typed itself and
	\item any other process that is well-typed \wrt to the same global environment as $ P $ is also well-typed within the evaluation context.
\end{inparaenum}

\begin{lemma}
	\label{lem:typeEC}
	For both type systems: If $ \Gamma \vdash \AEC{P} \triangleright \Delta $ then:
	\begin{enumerate}
		\item There exist $ \Delta_1, \Delta' $, and $ \Gamma \subseteq \Gamma' $ such that $ \Gamma' \vdash P \triangleright \Delta_1 $ and $ \Delta = \Delta' \otimes \Delta_1 $.
		\item For all $ P_2, \Delta_2 $ such that $ \Gamma' \vdash P_2 \triangleright \Delta_2 $ and $ \left( \forall \Role, \tilde{\Sort} \logdot \Typed{\Role}{\OV{\tilde{\Sort}}} \in \Delta_1 \text{ iff } \Typed{\Role}{\OV{\tilde{\Sort}}} \in \Delta_2 \right) $, we have $ \Gamma \vdash \AEC{P_2} \triangleright \Delta_2 \otimes \Delta' $.
	\end{enumerate}
\end{lemma}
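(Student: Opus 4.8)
The plan is to prove both parts of the lemma \emph{simultaneously} by induction on the structure of the evaluation context $\EC$, following the grammar $\EC \deffTerms \ECHole \sep \ECPar{P}{\EC} \sep \ECRes{\Args}{\EC} \sep \POpt{\Role}{\tilde{\Role}}{\EC}{\tilde{\Args}}{\tilde{\Args[v]}}{P'}$; proving part~2 inside the induction is necessary since it refers to the $\Delta_1$, $\Delta'$, and $\Gamma'$ produced by part~1. As in the proof of Lemma~\ref{lem:typedStructuralCongruence} I will use that, for the head constructor of $\AEC{P}$ (which is never a choice), exactly one structural typing rule applies, so any derivation of $\Gamma \vdash \AEC{P} \triangleright \Delta$ is that structural rule followed by a block of $(\mathsf{S2})$ applications that can be permuted to act on the sub-derivation of the hole and reinstated verbatim in the reconstruction; I therefore ignore $(\mathsf{S2})$ in the inversion steps below. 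The base case $\EC = \ECHole$ is immediate: $\AEC{P} = P$, so taking $\Gamma' = \Gamma$, $\Delta_1 = \Delta$, and $\Delta' = \emptyset$ discharges part~1 since $\Delta = \emptyset \otimes \Delta$, and part~2 holds because $\AEC{P_2} = P_2$ and $\Delta_2 \otimes \emptyset = \Delta_2$.

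For $\EC = \ECPar{Q}{\EC'}$ I invert $(\mathsf{Pa})$ to obtain $\Gamma \vdash Q \triangleright \Delta_Q$ and $\Gamma \vdash \AEC[E']{P} \triangleright \Delta_{E'}$ with $\Delta = \Delta_Q \otimes \Delta_{E'}$, apply the induction hypothesis to $\EC'$ to get $\Gamma \subseteq \Gamma'$, $\Delta_1$, $\Delta''$ with $\Gamma' \vdash P \triangleright \Delta_1$ and $\Delta_{E'} = \Delta'' \otimes \Delta_1$, and set $\Delta' = \Delta_Q \otimes \Delta''$; part~2 follows by applying the induction hypothesis's part~2 and then $(\mathsf{Pa})$ with $\Gamma \vdash Q \triangleright \Delta_Q$, using commutativity and associativity of $\otimes$ (clauses (2)--(5) of its definition). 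For $\EC = \ECRes{\Args}{\EC'}$ I invert $(\mathsf{R})$ to get $\Gamma, \Typed{\Args}{\AT{T}{\Role}} \vdash \AEC[E']{P} \triangleright \Delta$ for the $T,\Role$ chosen in the given derivation, apply the induction hypothesis with the enlarged global environment $\Gamma, \Typed{\Args}{\AT{T}{\Role}}$ — which returns a $\Gamma'$ containing it and hence containing $\Gamma$ — keep the same $\Delta_1$ and set $\Delta' = \Delta''$, and close part~2 by re-applying $(\mathsf{R})$ after an alpha-conversion making $\Args$ fresh for $P_2$; the side condition $\Gamma \subseteq \Gamma'$ is harmless because $(\mathsf{R})$ only adds to the global environment.

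The substantive case is $\EC = \POpt{\Role_1}{\tilde{\Role}}{\EC'}{\tilde{\Args}}{\tilde{\Args[v]}}{P'}$. Inverting $(\mathsf{Opt})$ gives a content typing $\Gamma \vdash \AEC[E']{P} \triangleright \Delta_a, \Typed{\AT{\Chan[s]}{\Role_1}}{T}, \Typed{\Role_1}{\OV{\tilde{\Sort}}}$ with no $\OV$-declaration in $\Delta_a$, a continuation typing $\Gamma \vdash P' \triangleright \Delta_b, \Typed{\AT{\Chan[s]}{\Role_1}}{T'}$, the kind side conditions, and $\Delta = (\Delta_a \otimes \Delta_b), \Typed{\AT{\Chan[s]}{\Role_1}}{\LTOpt{\tilde{\Role}'}{T}{\Typed{\tilde{\Args[y]}}{\tilde{\Sort}}}{T'}}$. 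Applying the induction hypothesis to $\EC'$ on the content judgement yields $\Gamma' \vdash P \triangleright \Delta_1$ and $\Delta_a, \Typed{\AT{\Chan[s]}{\Role_1}}{T}, \Typed{\Role_1}{\OV{\tilde{\Sort}}} = \Delta_{E'} \otimes \Delta_1$; since $\otimes$ preserves the multiset of $\OV$-declarations (clause (3)) and $\Delta_a$ has none, exactly one of $\Delta_{E'}$, $\Delta_1$ carries $\Typed{\Role_1}{\OV{\tilde{\Sort}}}$ and neither carries any other. I obtain $\Delta'$ for part~1 by collecting $\Delta_{E'}$, the continuation's contribution $\Delta_b$, and the promotion of the owner's entry from $T$ to $\LTOpt{\tilde{\Role}'}{T}{\Typed{\tilde{\Args[y]}}{\tilde{\Sort}}}{T'}$. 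For part~2 I apply the induction hypothesis's part~2 to get $\Gamma \vdash \AEC[E']{P_2} \triangleright \Delta_2 \otimes \Delta_{E'}$, and here the $\OV$-agreement between $\Delta_1$ and $\Delta_2$ is exactly what is needed: whether $\Typed{\Role_1}{\OV{\tilde{\Sort}}}$ lay in $\Delta_1$ or in $\Delta_{E'}$, the agreement forces $\Delta_2 \otimes \Delta_{E'}$ to contain precisely the single $\OV$-declaration $\Typed{\Role_1}{\OV{\tilde{\Sort}}}$, so the premise $\nexists \Role'', \tilde{\Sort[K]} \logdot \Typed{\Role''}{\OV{\tilde{\Sort[K]}}} \in \Delta$ of $(\mathsf{Opt})$ holds for the outer part; if $\Delta_2 \otimes \Delta_{E'}$ does not mention $\AT{\Chan[s]}{\Role_1}$ I insert $\Typed{\AT{\Chan[s]}{\Role_1}}{\LTEnd}$ using the convention $\left( \Delta, \Typed{\AT{\Chan[s]}{\Role}}{\LTEnd} \right) = \Delta$, after which $(\mathsf{Opt})$ re-applies with the unchanged continuation typing and kind conditions to give $\Gamma \vdash \AEC{P_2} \triangleright \Delta_2 \otimes \Delta'$.

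I expect the $\otimes$-bookkeeping in the optional-block case to be the main obstacle: one must argue carefully that the single declaration $\Typed{\Role_1}{\OV{\tilde{\Sort}}}$ and the owner's channel entry can be relocated consistently between the context's residual $\Delta'$ and the hole's environment $\Delta_1$, and this is precisely what forces the $\OV$-agreement side condition into the statement of part~2 — without it $\AEC{P_2}$ would in general fail the no-$\OV$-in-the-outer-part premise of $(\mathsf{Opt})$. The remaining cases are routine inversions combined with the algebraic laws for $\otimes$, the $\LTEnd$-absorption convention, and alpha-conversion. Finally, since none of the four context constructors and none of the rules invoked other than $(\mathsf{Opt})$, $(\mathsf{R})$, $(\mathsf{Pa})$, $(\mathsf{S2})$, $(\mathsf{N})$ is specific to sub-sessions, the identical argument establishes the lemma for the type system without sub-sessions.
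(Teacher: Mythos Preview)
Your proposal is correct and follows essentially the same approach as the paper: both argue that only $(\mathsf{R})$, $(\mathsf{Pa})$, and $(\mathsf{Opt})$ can occur on the path from the root judgement to the subderivation for $P$, collect the residual session-environment pieces split off along that path into $\Delta'$, and for part~2 replace the subtree for $P$ by one for $P_2$, with the $\OV$-agreement hypothesis guaranteeing that the side condition of $(\mathsf{Opt})$ is re-satisfied at each enclosing optional block. The paper phrases this as an informal trace through the derivation tree, while you render it as an explicit structural induction on $\EC$; the content is the same. (One small remark: the alpha-conversion you invoke in the $(\mathsf{R})$ case is unnecessary---$P_2$ is typed under a $\Gamma'$ that already contains $\Typed{\Args}{\AT{T}{\Role}}$, so capture by the restriction is harmless and indeed intended.)
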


\begin{proof}
	By the typing rules of Figure~\ref{fig:typingRules}, the derivation of $ \Gamma \vdash \AEC{P} \triangleright \Delta $ is a tree containing a derivation of $ \Gamma' \vdash P \triangleright \Delta_1 $ for some $ \Gamma', \Delta_1 $ as subtree.
	Since no rule removes elements of the global environment (but Rule~$ (\mathsf{R}) $ might add elements), $ \Gamma \subseteq \Gamma' $.
	By the definition of evaluation contexts, the only rules that can be used in the part of the derivation of $ \Gamma \vdash \AEC{P} \triangleright \Delta $ that is below the subtree $ \Gamma' \vdash P \triangleright \Delta_1 $ are the Rules~$ (\mathsf{R}) $, $ (\mathsf{Pa}) $, and $ (\mathsf{Opt}) $.
	Rule~$ (\mathsf{R}) $ does not change the session environment.
	The Rules~$ (\mathsf{Pa}) $ and $ (\mathsf{Opt}) $ split the session environment of the original judgement into two session environments using the operator $ \otimes $ such that each of the two subtrees generated by these rules obtains one part of the session environment.
	Hence, moving downwards from $ \Gamma' \vdash P \triangleright \Delta_1 $ in the derivation of $ \Gamma \vdash \AEC{P} \triangleright \Delta $, we can collect all session environments that were split from $ \Delta $ and combine them with $ \otimes $---possibly adding $ \emptyset $---to obtain $ \Delta' $ such that $ \Delta = \Delta' \otimes \Delta_1 $. Rule~$ (\mathsf{Opt}) $ additionally adds an assignment $ \Typed{\Role}{\OV{\tilde{\Sort}}} $ to cover the type of the return values to the part of the session environment that is used in $ \EC $ for the position that contains the hole. In this case either $ P $ contains $ \POptEnd{\Role}{\tilde{\Args[v]}} $ and $ \Typed{\Role}{\OV{\tilde{\Sort}}} \in \Delta_1 $ for some $ \Role $ and $ \Typed{\tilde{\Args[v]}}{\tilde{\Sort}} $, or the $ \Typed{\Role}{\OV{\tilde{\Sort}}} $ was split from $ \Delta $ into the part $ \Delta' $. The typing rules ensure that $ \Delta_1 $ can contain at most one assignment of the form $ \Typed{\Role}{\OV{\tilde{\Sort}}} $.

	Moreover, if we have $ \Gamma' \vdash P_2 \triangleright \Delta_2 $ and $ \left( \forall \Role, \tilde{\Sort} \logdot \Typed{\Role}{\OV{\tilde{\Sort}}} \in \Delta_1 \text{ iff } \Typed{\Role}{\OV{\tilde{\Sort}}} \in \Delta_2 \right) $, we can replace the subtree for $ \Gamma' \vdash P \triangleright \Delta_1 $ in the derivation of $ \Gamma \vdash \AEC{P} \triangleright \Delta $---while substituting all occurrences of $ \Delta_1 $ by $ \Delta_2 $ below the subtree---and obtain a derivation for $ \Gamma \vdash \AEC{P_2} \triangleright \Delta_2 \otimes \Delta' $. Here, the second condition ensures, that the type of the return values is checked for $ P_1 $ if and only if it is checked for $ P_2 $. This ensures that this property holds in case the hole of the context covers the enclosed part of an optional block.
	
	Since both type systems use evaluation contexts and the Rules~$ (\mathsf{R}) $, $ (\mathsf{Pa}) $, and $ (\mathsf{Opt}) $, to type them, in the same way, both type systems fulfil this property.
\end{proof}

Note that, since the contexts $ \ECR $, $ \ECO $, and $ \ECP $ are strict sub-contexts of evaluation contexts $ \EC $, \ie each context of one of the former kinds is also an evaluation context, the above lemma holds for all four kinds of contexts.

\subsection{Subject Reduction}

\emph{Subject reduction} is a basic property of each type system. It is this property that allows us to reason statically about terms, by ensuring that whenever a process its well-typed then all its derivatives are well-typed as well.
Hence, for all properties the type system ensures for well-typed terms, it is not necessary to compute executions but only to test for well-typedness of the original term.
We use a strong variant of subject reduction that additionally involves the condition $ \Delta \mapsto \Delta' $, in order to capture how the local types evolve alongside the reduction of processes.
More precisely, $ \Delta' $ is the session environment we obtain for the derivative $ P' $ of a process $ P $ with respect to a step $ P \longmapsto P' $.
Therefore the effect of reductions on processes on the corresponding local types is captured within the relation $ \mapsto $.

\begin{figure*}[tp]
	\[ \begin{array}{c}
		(\mathsf{comS}') \dfrac{j \in \indexSet}{\Delta, \Typed{\AT{\Chan[k]}{\Role_1}}{\LTSend{\Role_2}{_{i \in \indexSet} \Set{ \LTLab{\Labe_i}{\Typed{\tilde{\Args}_i}{\tilde{\Sort}_i}}{T_i} }}}, \Typed{\AT{\Chan[k]}{\Role_2}}{\LTGet{\Role_1}{_{i \in \indexSet{}} \Set{ \LTLab{\Labe_i}{\Typed{\tilde{\Args}_i'}{\tilde{\Sort}_i}}{T_i'} }}} \mapsto \Delta, \Typed{\AT{\Chan[k]}{\Role_1}}{T_j}, \Typed{\AT{\Chan[k]}{\Role_2}}{T_j'}}
		\vspace*{0.75em}\\
		(\mathsf{choice}') \dfrac{\Delta, \Typed{\AT{\Chan[s]}{\Role}}{T_i} \mapsto \Delta', \Typed{\AT{\Chan[s]}{\Role}}{T_i'} \quad i \in \Set[]{1, 2}}{\Delta, \Typed{\AT{\Chan[s]}{\Role}}{T_1 \oplus T_2} \mapsto \Delta', \Typed{\AT{\Chan[s]}{\Role}}{T_i'}}
		\hspace{2em}
		(\mathsf{comC}') \dfrac{}{\Delta, \Typed{\ATE{\Chan[s]}{\Role}}{T} \mapsto \Delta, \Typed{\AT{\Chan[s]}{\Role}}{T}}
		\vspace*{0.75em}\\
		(\mathsf{par}) \dfrac{\Delta_1 \mapsto \Delta_1'}{\Delta_1 \otimes \Delta_2 \mapsto \Delta_1' \otimes \Delta_2}
		\hspace{2em}
		(\mathsf{subs}') \dfrac{\forall i \logdot \ProjS{G\!\Set[]{ \Subst{\tilde{\Args[v]}}{\tilde{\Args[y]}} }}{}{\Role_i} = T'_i \quad \forall j \logdot \ProjS{G\!\Set[]{ \Subst{\tilde{\Args[v]}}{\tilde{\Args[y]}} }}{}{\Role'_j} = T'_{j + n}}{
			\begin{matrix}
				\Delta, \Typed{\AT{\Chan[s]}{\Role''}}{\LTCall{\Prot}{G}{\tilde{\Args[v]}}{\Typed{\tilde{\Args[y]}}{\tilde{\Sort}}}{\tilde{\Role}'}{T}} \mapsto\\
				\Delta, \Typed{\AT{\Chan[s]}{\Role''}}{T}, \Typed{\ATI{\Chan[k]}{\Role_1}}{T'_1}, \ldots, \Typed{\ATI{\Chan[k]}{\Role_n}}{T'_n}, \Typed{\ATE{\Chan[k]}{\Role'_1}}{T'_{n + 1}}, \ldots, \Typed{\ATE{\Chan[k]}{\Role'_m}}{T'_{n + m}}
			\end{matrix}
		} \vspace*{0.75em}\\
		(\mathsf{join}') \dfrac{}{\Delta, \Typed{\AT{\Chan[s]}{\Role_1}}{\LTReq{\Prot}{\Role_3}{\tilde{\Args[v]}}{\Role_2}{T_1}}, \Typed{\ATI{\Chan[k]}{\Role_3}}{T_3}, \Typed{\AT{\Chan[s]}{\Role_2}}{\LTEnt{\Prot}{\Role_3}{\tilde{\Args[v]}'}{\Role_1}{T_2}} \mapsto \Delta, \Typed{\AT{\Chan[s]}{\Role_1}}{T_1}, \Typed{\AT{\Chan[s]}{\Role_2}}{T_2}, \Typed{\AT{\Chan[k]}{\Role_3}}{T_3}} \vspace*{0.75em}\\
		(\mathsf{opt}') \dfrac{\Delta, \Typed{\AT{\Chan[s]}{\Role_1}}{T_1} \mapsto \Delta', \Typed{\AT{\Chan[s]}{\Role_1}}{T_1'}}{\Delta, \Typed{\AT{\Chan[s]}{\Role_1}}{\LTOpt{\tilde{\Role}}{T_1}{\Typed{\tilde{\Args[y]}}{\tilde{\Sort}}}{T_1''}} \mapsto \Delta', \Typed{\AT{\Chan[s]}{\Role_1}}{\LTOpt{\tilde{\Role}}{T_1'}{\Typed{\tilde{\Args[y]}}{\tilde{\Sort}}}{T_1''}}} \vspace*{0.75em}\\
		(\mathsf{optCom}) \dfrac{\Delta, \Typed{\AT{\Chan[s]}{\Role_1}}{T_1}, \Typed{\AT{\Chan[s]}{\Role_2}}{T_2} \mapsto \Delta', \Typed{\AT{\Chan[s]}{\Role_1}}{T_1'}, \Typed{\AT{\Chan[s]}{\Role_2}}{T_2'}}{
			\begin{matrix}
				\Delta, \Typed{\AT{\Chan[s]}{\Role_1}}{\LTOpt{\tilde{\Role}}{T_1}{\Typed{\tilde{\Args[y]}_1}{\tilde{\Sort}_1}}{T_1''}}, \Typed{\AT{\Chan[s]}{\Role_2}}{\LTOpt{\tilde{\Role}}{T_2}{\Typed{\tilde{\Args[y]}_2}{\tilde{\Sort}_2}}{T_2''}} \mapsto\\
				\Delta', \Typed{\AT{\Chan[s]}{\Role_1}}{\LTOpt{\tilde{\Role}}{T_1'}{\Typed{\tilde{\Args[y]}_1}{\tilde{\Sort}_1}}{T_1''}}, \Typed{\AT{\Chan[s]}{\Role_2}}{\LTOpt{\tilde{\Role}}{T_2'}{\Typed{\tilde{\Args[y]}_2}{\tilde{\Sort}_2}}{T_2''}}
			\end{matrix}
		} \vspace*{0.75em}\\
		(\mathsf{fail}') \dfrac{\exists \Gamma, P, \tilde{\Sort} \logdot \Gamma \vdash P \triangleright \Delta, \Typed{\AT{\Chan[s]}{\Role_1}}{T}, \Typed{\Role_1}{\OV{\tilde{\Sort}}}}{\Delta \otimes \Delta', \Typed{\AT{\Chan[s]}{\Role_1}}{\LTOpt{\tilde{\Role}}{T}{\Typed{\tilde{\Args[y]}}{\tilde{\Sort}}}{T'}} \mapsto \Delta', \Typed{\AT{\Chan[s]}{\Role_1}}{T'}}
		\hspace*{2em}
		(\mathsf{succ}') \dfrac{}{\Delta, \Typed{\AT{\Chan[s]}{\Role_1}}{\LTOpt{\tilde{\Role}}{\LTEnd}{\Typed{\tilde{\Args[y]}}{\tilde{\Sort}}}{T'}} \mapsto \Delta, \Typed{\AT{\Chan[s]}{\Role_1}}{T'}}
	\end{array} \]
	\caption{Reduction Rules for Session Environments}
	\label{fig:sessionTypeReductions}
\end{figure*}

In Figure~\ref{fig:sessionTypeReductions} we derive from the interplay of the reduction rules of processes in Figure~\ref{fig:reductionRules} and the typing rules in Figure~\ref{fig:typingRules} the rules for the evolution of session environments following the reductions of a process.
Note that Rule~(\textsf{succ}') is a special case of Rule~(\textsf{fail}'). The difference between a successful completion and the abortion of an optional block cannot be observed from the session environment.
Also note that the rules of Figure~\ref{fig:sessionTypeReductions} do not replace the rules for process reductions or type checks. They are used here as an auxiliary tool to simplify the argumentation about completion.
Figure~\ref{fig:sessionTypeReductions} contains all rules for both considered type systems. For the smaller type system with optional blocks but without sub-sessions the Rules~$ (\mathsf{subs}') $ and $ (\mathsf{join}') $ are superfluous.
Based on the rules of Figure~\ref{fig:sessionTypeReductions}, we add the condition $ \Delta \mapsto \Delta' $ to the formulation of subject reduction.
Obviously this extension results into a strictly stronger requirement that naturally implies the former statement.
The proof of subject reduction is by induction over the derivation of a single reduction step of the processes, \ie over the reduction rules.
For each reduction rule we have to prove how the proof of well-typedness of the process can be adapted to show that the derivative is also well-typed.
Thus we have to relate the reduction rules and the typing rules.

\begin{theorem}[Subject Reduction]
	\label{thm:subjectReduction}
	$ $\\
	For both type systems:
	If $ \Gamma \vdash P \triangleright \Delta $ and $ P \longmapsto P' $ then there exists $ \Delta' $ such that $ \Gamma \vdash P' \triangleright \Delta' $ and $ \Delta \mapsto \Delta' $.
\end{theorem}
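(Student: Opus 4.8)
The plan is to proceed by induction on the derivation of the single reduction step $P \longmapsto P'$, i.e.\ by a case analysis on which reduction rule of Figure~\ref{fig:reductionRules} is applied at the root. For each rule we assume $\Gamma \vdash P \triangleright \Delta$ and construct a suitable $\Delta'$ together with both a derivation of $\Gamma \vdash P' \triangleright \Delta'$ and a derivation of $\Delta \mapsto \Delta'$ using the session-environment reduction rules of Figure~\ref{fig:sessionTypeReductions}. The three auxiliary results already established do the heavy lifting here: Lemma~\ref{lem:typedStructuralCongruence} lets us normalise $P$ modulo $\equiv$ (in particular to expose the redex and to push applications of Rule~$(\mathsf{S2})$ out of the way, as in the proof of Lemma~\ref{lem:typedStructuralCongruence}); Lemma~\ref{lem:typeEC} lets us peel off the surrounding evaluation context $\EC$ (and, for the optional-block rules, the $\ECR$- and $\ECO$-contexts, which are strict sub-contexts and hence covered), reducing each case to a statement about the bare redex and then re-assembling the context around the rewritten redex; and Lemmas~\ref{lem:typeSubstA} and \ref{lem:typeSubstB} handle the substitutions that every communication and every optional-block resolution performs --- name substitutions $\Set[]{\Subst{\Chan[s]}{\Args}}$, $\Set[]{\Subst{\Chan[k]}{\Args}}$ via Lemma~\ref{lem:typeSubstA}, and value substitutions $\Set[]{\Subst{\tilde{\Args[v]}}{\tilde{\Args}_j}}$, $\Set[]{\Subst{\tilde{\Args[v]}_d}{\tilde{\Args}}}$ via Lemma~\ref{lem:typeSubstB}.

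The inductive-step rules are $(\mathsf{choice})$, which follows directly from inverting $(\mathsf{S1})$/$(\mathsf{S2})$ and appealing to the induction hypothesis on $P_i \longmapsto P_i'$, matched by Rule~$(\mathsf{choice}')$; and the context closure that is implicit in every rule, handled uniformly by Lemma~\ref{lem:typeEC} together with Rules~$(\mathsf{par})$, $(\mathsf{opt}')$, and $(\mathsf{optCom})$ for propagating $\mapsto$ through $\PPar{P}{\cdot}$ and through an enclosing optional block. For the base-case communication rules I would invert the typing derivation of the redex: for $(\mathsf{comS})$, the parallel redex forces (via $(\mathsf{Pa})$, then $(\mathsf{S})$ and $(\mathsf{C})$) local types $\LTSend{\Role_2}{\ldots}$ and $\LTGet{\Role_1}{\ldots}$ on $\AT{\Chan[k]}{\Role_1}$ and $\AT{\Chan[k]}{\Role_2}$, and after reduction the continuations are typed against $T_j$ and $T_j'$ with the value substitution absorbed by Lemma~\ref{lem:typeSubstB} --- exactly the shape of Rule~$(\mathsf{comS}')$. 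The cases for $(\mathsf{comC})$, $(\mathsf{subs})$, and $(\mathsf{join})$ are analogous, inverting $(\mathsf{I})$/$(\mathsf{O})$, $(\mathsf{New})$, and $(\mathsf{P})$/$(\mathsf{J})$ respectively, with Lemma~\ref{lem:typeSubstA} absorbing the name substitution and the projection side-conditions of the typing rules lining up precisely with the premises of $(\mathsf{comC}')$, $(\mathsf{subs}')$, $(\mathsf{join}')$.

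The genuinely new and most delicate cases are the four optional-block rules $(\mathsf{fail})$, $(\mathsf{succ})$, $(\mathsf{cSO})$, $(\mathsf{cCO})$ (and $(\mathsf{jO})$ for the larger system). For $(\mathsf{succ})$: inverting $(\mathsf{Opt})$ on $\POpt{\Role}{\tilde{\Role}}{\POptEnd{\Role}{\tilde{\Args[v]}}}{\tilde{\Args}}{\tilde{\Args[v]}_d}{P}$ shows the content is typed with $\Typed{\Role}{\OV{\tilde{\Sort}}}$ in its environment, so by $(\mathsf{OptE})$ the content's session environment is exactly $\Typed{\Role}{\OV{\tilde{\Sort}}}$, hence (via the $\otimes$-bookkeeping of Lemma~\ref{lem:typeEC}, which guarantees this declaration sits in the content's $\Delta_1$) the inner local type has been driven to $\LTEnd$; the continuation $P$ is typed against $T'$ with the value substitution handled by Lemma~\ref{lem:typeSubstB}; and $\Delta \mapsto \Delta'$ is Rule~$(\mathsf{succ}')$. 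For $(\mathsf{fail})$ the argument is the same but the content may be anything; the premise of Rule~$(\mathsf{fail}')$ is designed exactly to record that the aborted content \emph{was} well-typed with the $\OV{\tilde{\Sort}}$ declaration, so no information is lost. For $(\mathsf{cSO})$/$(\mathsf{cCO})$/$(\mathsf{jO})$ one combines the $(\mathsf{comS})$/$(\mathsf{comC})$/$(\mathsf{join})$ reasoning with a \emph{double} application of Lemma~\ref{lem:typeEC} --- one for each of the two $\ECR$-then-$\ECO$ context nests --- and then Rule~$(\mathsf{optCom})$ to propagate $\mapsto$ through both optional blocks simultaneously. The main obstacle, and the point requiring the most care, is precisely this interaction of optional-block typing with the context lemma: one must track that the unique $\Typed{\Role}{\OV{\tilde{\Sort}}}$ declaration introduced by $(\mathsf{Opt})$ is routed by $\otimes$ into exactly the parallel branch of the content that holds the hole, so that Lemma~\ref{lem:typeEC}'s side condition ($\Typed{\Role}{\OV{\tilde{\Sort}}} \in \Delta_1$ iff it is in $\Delta_2$) is met when we swap $\POptEnd{\Role}{\tilde{\Args[v]}}$ (or a communication prefix) for its continuation; getting the $\Delta = \Delta' \otimes \Delta_1$ decomposition and the associativity/commutativity of $\otimes$ to cooperate with the $\mapsto$-rules, rather than any single case, is where the proof's real work lies. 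Finally, since the structural-congruence, substitution, and context lemmas all hold for both type systems and the optional-block rules are identical in both, the same induction establishes the claim for the system without sub-sessions by simply omitting the $(\mathsf{subs})$, $(\mathsf{join})$, $(\mathsf{jO})$ cases.
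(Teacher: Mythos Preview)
Your proposal is correct and follows essentially the same approach as the paper's own proof: induction on the reduction derivation, case analysis on the rules of Figure~\ref{fig:reductionRules}, with Lemma~\ref{lem:typeEC} used to peel off and re-assemble the evaluation contexts, Lemmas~\ref{lem:typeSubstA} and~\ref{lem:typeSubstB} to absorb the substitutions, and the corresponding rules of Figure~\ref{fig:sessionTypeReductions} (together with $(\mathsf{par})$, $(\mathsf{opt}')$, $(\mathsf{optCom})$) to establish $\Delta \mapsto \Delta'$. You have also correctly isolated the delicate point --- routing the unique $\Typed{\Role}{\OV{\tilde{\Sort}}}$ declaration through the $\otimes$-decomposition so that the side condition of Lemma~\ref{lem:typeEC}(2) is satisfied --- and the observation that the smaller system is handled by simply omitting the sub-session cases matches the paper exactly.
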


\begin{proof}
	Again we consider the larger type system first.
	Assume $ \Gamma \vdash P \triangleright \Delta $ and $ P \longmapsto P' $.
	We perform an induction over the rules used to derive $ P \longmapsto P' $ with a case analysis over the rules of Figure~\ref{fig:typingRules}.
	\begin{description}
		\item[Cases $ (\mathsf{comS}) $:]
			In this case we have
			\begin{align*}
				P = \AEC{\PPar{\PSend{\Chan[k]}{\Role_1}{\Role_2}{\Labe_j}{\tilde{\Args[v]}}{P^*}}{\PGet{\Chan[k]}{\Role_1}{\Role_2}{_{i \in \indexSet} \Set{ \PLab{\Labe_i}{\tilde{\Args}_i}{P_i} }}}} \quad \text{ and } \quad P' = \AEC{\PPar{P^*}{P_j\!\Set[]{ \Subst{\tilde{\Args[v]}}{\tilde{\Args}_j} }}}
			\end{align*}
			With $ \Gamma \vdash P \triangleright \Delta $ and Lemma~\ref{lem:typeEC}~(1), there exist $ \Delta_P, \Delta_{\EC}, \Gamma' $ such that $ \Gamma \subseteq \Gamma' $, $ \Delta = \Delta_{\EC} \otimes \Delta_P $, and $ \Gamma' \vdash \PPar{\PSend{\Chan[k]}{\Role_1}{\Role_2}{\Labe_j}{\tilde{\Args[v]}}{P^*}}{\PGet{\Chan[k]}{\Role_1}{\Role_2}{_{i \in \indexSet} \Set{ \PLab{\Labe_i}{\tilde{\Args}_i}{P_i} }}} \triangleright \Delta_P $.
			By the rules in Figure~\ref{fig:typingRules} the proof of the judgement has to start (modulo Rule~(\textsf{S2})) as follows
			\begin{align*}
				\dfrac{\dfrac{\Gamma' \vdash P^* \triangleright \Delta_{P1}, \Typed{\AT{\Chan[k]}{\Role_1}}{T_j^*} \quad \vdash \Typed{\tilde{\Args[v]}}{\tilde{\Sort}_j}}{\Gamma' \vdash \PSend{\Chan[k]}{\Role_1}{\Role_2}{\Labe_j}{\tilde{\Args[v]}}{P^*} \triangleright \Delta_{P1}, \Typed{\AT{\Chan[k]}{\Role_1}}{T_{\text{send}}}} (\mathsf{S}) \quad \dfrac{\left( \Gamma' \vdash P_i \triangleright \Delta_{P2}, \Typed{\AT{\Chan[k]}{\Role_2}}{T_i} \quad \vdash \Typed{\tilde{\Args}_i}{\tilde{\Sort}_i} \right)_{i \in \indexSet}}{\Gamma' \vdash \PGet{\Chan[k]}{\Role_1}{\Role_2}{_{i \in \indexSet} \Set{ \PLab{\Labe_i}{\tilde{\Args}_i}{P_i} }} \triangleright \Delta_{P2}, \Typed{\AT{\Chan[k]}{\Role_2}}{T_{\text{get}}}} (\mathsf{C})}{\Gamma' \vdash \PPar{\PSend{\Chan[k]}{\Role_1}{\Role_2}{\Labe_j}{\tilde{\Args[v]}}{P^*}}{\PGet{\Chan[k]}{\Role_1}{\Role_2}{_{i \in \indexSet} \Set{ \PLab{\Labe_i}{\tilde{\Args}_i}{P_i} }}} \triangleright \Delta_P} (\mathsf{Pa})
			\end{align*}
			where $ T_{\text{send}} = \LTSend{\Role_2}{_{i \in \indexSet} \Set{ \LTLab{\Labe_i}{\Typed{\tilde{\Args[z]}_i}{\tilde{\Sort}_i}}{T^*_i} }} $, $ T_{\text{get}} = \LTGet{\Role_1}{_{i \in \indexSet{}} \Set{ \LTLab{\Labe_i}{\Typed{\tilde{\Args[y]}_i}{\tilde{\Sort}_i}}{T_i} }} $, and the session environment $ \Delta_P = \Delta_{P1}, \Typed{\AT{\Chan[k]}{\Role_1}}{T_{\text{send}}} \otimes \Delta_{P2}, \Typed{\AT{\Chan[k]}{\Role_2}}{T_{\text{get}}} $.

			By Lemma~\ref{lem:typeSubstB}, $ \Gamma' \vdash P_j \triangleright \Delta_{P2}, \Typed{\AT{\Chan[k]}{\Role_2}}{T_j} $, $ \vdash \Typed{\tilde{\Args}_j}{\tilde{\Sort}_j} $, and $ \vdash \Typed{\tilde{\Args[v]}}{\tilde{\Sort}_j} $ imply $ \Gamma' \vdash P_j\!\Set[]{\Subst{\tilde{\Args[v]}}{\tilde{\Args}_j}} \triangleright \Delta_{P2}, \Typed{\AT{\Chan[k]}{\Role_2}}{T_j} $.
			With $ \Gamma' \vdash P^* \triangleright \Delta_{P1}, \Typed{\AT{\Chan[k]}{\Role_1}}{T_j^*} $ and since $ \Delta_{P1} \otimes \Delta_{P2} $ is defined, we obtain
			\begin{align*}
				\dfrac{\Gamma' \vdash P^* \triangleright \Delta_{P1}, \Typed{\AT{\Chan[k]}{\Role_1}}{T_j^*} \quad \Gamma' \vdash P_j\!\Set[]{\Subst{\tilde{\Args[v]}}{\tilde{\Args}_j}} \triangleright \Delta_{P2}, \Typed{\AT{\Chan[k]}{\Role_2}}{T_j}}{\Gamma' \vdash \PPar{P^*}{P_j\!\Set[]{\Subst{\tilde{\Args[v]}}{\tilde{\Args}_j}}} \triangleright \Delta_{P1}, \Typed{\AT{\Chan[k]}{\Role_1}}{T_j^*} \otimes \Delta_{P2}, \Typed{\AT{\Chan[k]}{\Role_2}}{T_j}} (\mathsf{Pa})
			\end{align*}
			Note that $ \Delta_P $ contains some $ \Typed{\Role[p]}{\OV{\tilde{\Sort[K]}}} $ if and only if $ \Delta_{P1}, \Typed{\AT{\Chan[k]}{\Role_1}}{T_j^*} \otimes \Delta_{P2}, \Typed{\AT{\Chan[k]}{\Role_2}}{T_j} $ contains the same assignment $ \Typed{\Role[p]}{\OV{\tilde{\Sort[K]}}} $.
			With Lemma~\ref{lem:typeEC}~(2), we have $ \Gamma \vdash P' \triangleright \left( \Delta_{P1}, \Typed{\AT{\Chan[k]}{\Role_1}}{T_j^*} \otimes \Delta_{P2}, \Typed{\AT{\Chan[k]}{\Role_2}}{T_j} \right) \otimes \Delta_{\EC} $.

			It remains to show that $ \Delta_{\EC} \otimes \Delta_P \mapsto \left( \Delta_{P1}, \Typed{\AT{\Chan[k]}{\Role_1}}{T_j^*} \otimes \Delta_{P2}, \Typed{\AT{\Chan[k]}{\Role_2}}{T_j} \right) \otimes \Delta_{\EC} $.
			Because $ j \in \indexSet $ and
			\begin{align*}
				\Delta_P ={} & \Delta_{P1} \otimes \Delta_{P2}, \Typed{\AT{\Chan[k]}{\Role_1}}{\LTSend{\Role_2}{_{i \in \indexSet} \Set{ \LTLab{\Labe_i}{\Typed{\tilde{\Args[z]}_i}{\tilde{\Sort}_i}}{T^*_i} }}}, \Typed{\AT{\Chan[k]}{\Role_2}}{\LTGet{\Role_1}{_{i \in \indexSet{}} \Set{ \LTLab{\Labe_i}{\Typed{\tilde{\Args[y]}_i}{\tilde{\Sort}_i}}{T_i} }}}
			\end{align*}
			we obtain
			\begin{align*}
				\dfrac{\dfrac{}{\Delta_P \mapsto \Delta_{P1}, \Typed{\AT{\Chan[k]}{\Role_1}}{T_j^*} \otimes \Delta_{P2}, \Typed{\AT{\Chan[k]}{\Role_2}}{T_j}} (\mathsf{comS}')}{\Delta_{\EC} \otimes \Delta_P \mapsto \left( \Delta_{P1}, \Typed{\AT{\Chan[k]}{\Role_1}}{T_j^*} \otimes \Delta_{P2}, \Typed{\AT{\Chan[k]}{\Role_2}}{T_j} \right) \otimes \Delta_{\EC}} (\mathsf{par})
			\end{align*}
		\item[Case $ (\mathsf{choice}) $:]
			In this case we have $ P_i \longmapsto P_i' $ and
			\begin{align*}
				P = \AEC{\PChoi{P_1}{P_2}} \quad \text{ and } \quad P' = \AEC{P_i'}
			\end{align*}
			With $ \Gamma \vdash P \triangleright \Delta $ and Lemma~\ref{lem:typeEC}~(1), there exist $ \Delta_P, \Delta_{\EC}, \Gamma' $ such that $ \Gamma \subseteq \Gamma' $, $ \Delta = \Delta_{\EC} \otimes \Delta_P $, and $ \Gamma' \vdash \PChoi{P_1}{P_2} \triangleright \Delta_P $.
			By the rules in Figure~\ref{fig:typingRules} the proof of the judgement has to start (modulo Rule~(\textsf{S2})) as follows
			\begin{align*}
				\dfrac{\Gamma' \vdash P_1 \triangleright \Delta_{P}', \Typed{\AT{\Chan[s]}{\Role}}{T_1} \quad \Gamma' \vdash P_1 \triangleright \Delta_{P}', \Typed{\AT{\Chan[s]}{\Role}}{T_2}}{\Gamma' \vdash \PChoi{P_1}{P_2} \triangleright \Delta_P} (\mathsf{S1})
			\end{align*}
			for some $ \Chan[s] $ and $ \Role $, where $ \Delta_P = \Delta_{P}', \Typed{\AT{\Chan[s]}{\Role}}{\LTChoi{T_1}{T_2}} $.

			By the induction hypothesis, $ \PChoi{P_1}{P_2} \longmapsto P_i' $, $ \Gamma' \vdash P_1 \triangleright \Delta_{P}', \Typed{\AT{\Chan[s]}{\Role}}{T_1} $ and $ \Gamma' \vdash P_1 \triangleright \Delta_{P}', \Typed{\AT{\Chan[s]}{\Role}}{T_2} $ imply $ \Gamma' \vdash P_i' \triangleright \Delta_{P}'' $ for some $ \Delta_{P}'' $ such that $ \Delta_{P}', \Typed{\AT{\Chan[s]}{\Role}}{T_i} \mapsto \Delta_P'' $.
			Because of $ \Delta_P = \Delta_{P}', \Typed{\AT{\Chan[s]}{\Role}}{\LTChoi{T_1}{T_2}} $, $ \Delta_{P}', \Typed{\AT{\Chan[s]}{\Role}}{T_i} \mapsto \Delta_P'' $ and since the rules of Figure~\ref{fig:sessionTypeReductions} neither remove nor add assignments of the form $ \Typed{\Role[p]}{\OV{\tilde{\Sort[K]}}} $, we have $ \Typed{\Role[p]}{\OV{\tilde{\Sort[K]}}} \in \Delta_P $ iff $ \Typed{\Role[p]}{\OV{\tilde{\Sort[K]}}} \in \Delta_P'' $.
			Finally, with Lemma~\ref{lem:typeEC}~(2), we have $ \Gamma \vdash P' \triangleright \Delta_P'' \otimes \Delta_{\EC} $.

			It remains to show that $ \Delta_{\EC} \otimes \Delta_P \mapsto \Delta_P'' \otimes \Delta_{\EC} $.
			Because $ \Delta_{P}', \Typed{\AT{\Chan[s]}{\Role}}{T_i} \mapsto \Delta_P'' $ and $ \Delta_P = \Delta_{P}', \Typed{\AT{\Chan[s]}{\Role}}{\LTChoi{T_1}{T_2}} $, we obtain
			\begin{align*}
				\dfrac{\dfrac{\Delta_{P}', \Typed{\AT{\Chan[s]}{\Role}}{T_i} \mapsto \Delta_P''}{\Delta_P \mapsto \Delta_P''} (\mathsf{choice}')}{\Delta_{\EC} \otimes \Delta_P \mapsto \Delta_P'' \otimes \Delta_{\EC}} (\mathsf{par})
			\end{align*}
		\item[Cases $ (\mathsf{subs}) $:]
			In this case we have
			\begin{align*}
				P = \AEC{\PDecl{\Chan[k]}{\Chan[s]}{\tilde{\Args[v]}}{\tilde{\Chan}}{\tilde{\Role}}{P^*}} \quad \text{ and } \quad P' = \AEC{\PPar{P^*}{\PPar{\POutS{\Chan_1}{\Chan[k]}}{\PPar{\ldots}{\POutS{\Chan_m}{\Chan[s]}}}}}
			\end{align*}
			With $ \Gamma \vdash P \triangleright \Delta $ and Lemma~\ref{lem:typeEC}~(1), there exist $ \Delta_P, \Delta_{\EC}, \Gamma' $ such that $ \Gamma \subseteq \Gamma' $, $ \Delta = \Delta_{\EC} \otimes \Delta_P $, and $ \Gamma' \vdash \PDecl{\Chan[k]}{\Chan[k]}{\tilde{\Args[v]}}{\tilde{\Chan}}{\tilde{\Role}}{P^*} \triangleright \Delta_P $.
			By the rules in Figure~\ref{fig:typingRules} the proof of the judgement has to start (modulo Rule~(\textsf{S2})) as follows
			\begin{align*}
				\dfrac{\begin{array}{c} \Gamma' \vdash P^* \triangleright \Delta_P', \Typed{\AT{\Chan[s]}{\Role''}}{T}, \Typed{\ATI{\Chan[k]}{\Role_1}}{T'_1}, \ldots, \Typed{\ATI{\Chan[k]}{\Role_n}}{T'_n}, \Typed{\ATE{\Chan[k]}{\Role'_1}}{T'_{n + 1}}, \ldots, \Typed{\ATE{\Chan[k]}{\Role'_m}}{T'_{n + m}} \quad \GetType[\Gamma']{\Prot} = \TypeOfProt{\tilde{\Role}}{\tilde{\Args[y]}}{\tilde{\Role}'}{G}\\ \forall i \logdot \GetType[\Gamma']{\Chan_i} = \AT{T'_{i + n}}{\Role'_{i + n}} \quad \forall i \logdot \ProjS{G\!\Set[]{ \Subst{\tilde{\Args[v]}}{\tilde{\Args[y]}} }}{}{\Role_i} = T'_i \quad \forall j \logdot \ProjS{G\!\Set[]{ \Subst{\tilde{\Args[v]}}{\tilde{\Args[y]}} }}{}{\Role'_j} = T'_{j + n} \quad \vdash \Typed{\tilde{\Args[v]}}{\tilde{\Sort}} \quad \GetType[\Gamma']{\Chan[k]} = {\Prot\!\Set[]{ \Subst{\tilde{\Args[v]}}{\tilde{\Args[y]}} }} \end{array}}{\Gamma' \vdash \PDecl{\Chan[k]}{\Chan[s]}{\tilde{\Args[v]}}{\tilde{\Chan}}{\tilde{\Role}}{P^*} \triangleright \Delta_P} (\mathsf{New})
			\end{align*}
			where $ \Delta_P = \Delta_P', \Typed{\AT{\Chan[s]}{\Role''}}{\LTCall{\Prot}{G}{\tilde{\Args[v]}}{\Typed{\tilde{\Args[y]}}{\tilde{\Sort}}}{\tilde{\Role}'}{T}} $.

			Note that Rule~$ (\mathsf{subs}) $ replaces $ P^* $ by $ \PPar{P^*}{\PPar{\POutS{\Chan_1}{\Chan[k]}}{\PPar{\ldots}{\POutS{\Chan_m}{\Chan[k]}}}} $.
			To obtain a type derivation for this term, we first apply Rule~$ (\mathsf{Pa}) $ $ n $ times to split the parallel components of the process.
			Thereby $ \Delta_P', \Typed{\AT{\Chan[s]}{\Role''}}{T}, \Typed{\ATI{\Chan[k]}{\Role_1}}{T'_1}, \ldots, \Typed{\ATI{\Chan[k]}{\Role_n}}{T'_n}, \Typed{\ATE{\Chan[k]}{\Role'_1}}{T'_{n + 1}}, \ldots, \Typed{\ATE{\Chan[k]}{\Role'_m}}{T'_{n + m}} $ is split up into $ \Delta_P', \Typed{\AT{\Chan[s]}{\Role''}}{T}, \Typed{\ATI{\Chan[k]}{\Role_1}}{T'_1}, \ldots, \Typed{\ATI{\Chan[k]}{\Role_n}}{T'_n} $ and $ m $ instances of $ \Typed{\ATE{\Chan[k]}{\Role'_i}}{T'_{n + i}} $.
			Because of $ \forall i \logdot \GetType[\Gamma']{\Chan_i} = \AT{T'_{i + n}}{\Role'_{i + n}} $, for each $ \POutS{\Chan_i}{\Chan[k]} $ we have
			\begin{align*}
				\dfrac{\dfrac{}{\Gamma' \vdash \PEnd \triangleright \emptyset} (\mathsf{N}) \quad \GetType[\Gamma']{\Chan_i} = \AT{T'_{i + n}}{\Role'_{i + n}}}{\Gamma' \vdash \POutS{\Chan_i}{\Chan[k]} \triangleright \Typed{\ATE{\Chan[k]}{\Role'_i}}{T'_{n + i}}} (\mathsf{O})
			\end{align*}
			Finally, with Lemma~\ref{lem:typeEC}~(2), we have
			\begin{align*}
				\Gamma \vdash P' \triangleright \left( \Delta_P', \Typed{\AT{\Chan[s]}{\Role''}}{T}, \Typed{\ATI{\Chan[k]}{\Role_1}}{T'_1}, \ldots, \Typed{\ATI{\Chan[k]}{\Role_n}}{T'_n}, \Typed{\ATE{\Chan[k]}{\Role'_1}}{T'_{n + 1}}, \ldots, \Typed{\ATE{\Chan[k]}{\Role'_m}}{T'_{n + m}} \right) \otimes \Delta_{\EC}
			\end{align*}

			It remains to show that:
			\begin{align*}
				\Delta_{\EC} \otimes \Delta_P \mapsto \left( \Delta_P', \Typed{\AT{\Chan[s]}{\Role''}}{T}, \Typed{\ATI{\Chan[k]}{\Role_1}}{T'_1}, \ldots, \Typed{\ATI{\Chan[k]}{\Role_n}}{T'_n}, \Typed{\ATE{\Chan[k]}{\Role'_1}}{T'_{n + 1}}, \ldots, \Typed{\ATE{\Chan[k]}{\Role'_m}}{T'_{n + m}} \right) \otimes \Delta_{\EC}
			\end{align*}
			Because of $ \forall i \logdot \ProjS{G\!\Set[]{ \Subst{\tilde{\Args[v]}}{\tilde{\Args[y]}} }}{}{\Role_i} = T'_i $ and $ \forall j \logdot \ProjS{G\!\Set[]{ \Subst{\tilde{\Args[v]}}{\tilde{\Args[y]}} }}{}{\Role'_j} = T'_{j + n} $, this follows from Rule~$ (\mathsf{subs}') $.
		\item[Case $ (\mathsf{comC}) $:]
			In this case we have
			\begin{align*}
				P = \AEC{\PPar{\POut{\Chan}{\tilde{\Chan[s]}}{P_1}}{\PInp{\Chan}{\tilde{\Args}}{P_2}}} \quad \text{ and } \quad P' = \AEC{\PPar{P_1}{P_2\!\Set[]{ \Subst{\tilde{\Chan[s]}}{\tilde{\Args}} }}}
			\end{align*}
			With $ \Gamma \vdash P \triangleright \Delta $ and Lemma~\ref{lem:typeEC}~(1), there exist $ \Delta_P, \Delta_{\EC}, \Gamma' $ such that $ \Gamma \subseteq \Gamma' $, $ \Delta = \Delta_{\EC} \otimes \Delta_P $, and $ \Gamma' \vdash \PPar{\POut{\Chan}{\tilde{\Chan[s]}}{P_1}}{\PInp{\Chan}{\tilde{\Args}}{P_2}} \triangleright \Delta_P $.
			By the rules in Figure~\ref{fig:typingRules} the proof of the judgement has to start (modulo Rule~(\textsf{S2})) as follows
			\begin{align*}
				\dfrac{\dfrac{\Gamma' \vdash P_1 \triangleright \Delta_{P1} \quad \GetType[\Gamma']{\Chan} = \AT{T}{\Role}}{\Gamma' \vdash \POut{\Chan}{\tilde{\Chan[s]}}{P_1} \triangleright \Delta_{P1}, \Typed{\ATE{\Chan[s]}{\Role}}{T}} (\mathsf{O}) \quad \dfrac{\Gamma' \vdash P_2 \triangleright \Delta_{P2}, \Typed{\AT{\Args}{\Role}}{T} \quad \GetType[\Gamma']{\Chan}{\AT{T}{\Role}}}{\Gamma' \vdash \PInp{\Chan}{\tilde{\Args}}{P_2} \triangleright \Delta_{P2}} (\mathsf{I})}{\Gamma' \vdash \PPar{\POut{\Chan}{\tilde{\Chan[s]}}{P_1}}{\PInp{\Chan}{\tilde{\Args}}{P_2}} \triangleright \Delta_P} (\mathsf{Pa})
			\end{align*}
			where $ \Delta_P = \Delta_{P1} \otimes \Delta_{P2}, \Typed{\ATE{\Chan[s]}{\Role}}{T} $.

			By Lemma~\ref{lem:typeSubstA}, $ \Gamma' \vdash P_2 \triangleright \Delta_{P2}, \Typed{\AT{\Args}{\Role}}{T} $ implies $ \Gamma' \vdash P_2\!\Set[]{\Subst{\Chan[s]}{\Args}} \triangleright \Delta_{P2}, \Typed{\AT{\Chan[s]}{\Role_2}}{T} $.
			With $ \Gamma' \vdash P_1 \triangleright \Delta_{P1} $ and since $ \Delta_{P1} \otimes \Delta_{P2} $ is defined and there is no type for $ \AT{\Chan[s]}{\Role} $ in $ \Delta_{P1} $, we obtain
			\begin{align*}
				\dfrac{\Gamma' \vdash P_1 \triangleright \Delta_{P1} \quad \Gamma' \vdash P_2\!\Set[]{\Subst{\Chan[s]}{\Args}} \triangleright \Delta_{P2}, \Typed{\AT{\Chan[s]}{\Role}}{T}}{\Gamma' \vdash \PPar{P_1}{P_2\!\Set[]{\Subst{\Chan[s]}{\Args}}} \triangleright \Delta_{P1} \otimes \Delta_{P2}, \Typed{\AT{\Chan[s]}{\Role}}{T}} (\mathsf{Pa})
			\end{align*}
			Finally, with Lemma~\ref{lem:typeEC}~(2), we have $ \Gamma \vdash P' \triangleright \left( \Delta_{P1} \otimes \Delta_{P2}, \Typed{\AT{\Chan[s]}{\Role}}{T} \right) \otimes \Delta_{\EC} $.

			It remains to show that $ \Delta_{\EC} \otimes \Delta_P \mapsto \left( \Delta_{P1} \otimes \Delta_{P2}, \Typed{\AT{\Chan[s]}{\Role}}{T} \right) \otimes \Delta_{\EC} $.
			Because $ \Delta_P = \Delta_{P1} \otimes \Delta_{P2}, \Typed{\ATE{\Chan[s]}{\Role}}{T} $, we obtain
			\begin{align*}
				\dfrac{\dfrac{}{\Delta_P \mapsto \Delta_{P1} \otimes \Delta_{P2}, \Typed{\AT{\Chan[s]}{\Role}}{T}} (\mathsf{comC}')}{\Delta_{\EC} \otimes \Delta_P \mapsto \left( \Delta_{P1} \otimes \Delta_{P2}, \Typed{\AT{\Chan[s]}{\Role}}{T} \right) \otimes \Delta_{\EC}} (\mathsf{par})
			\end{align*}
		\item[Case $ (\mathsf{join}) $:]
			In this case we have
			\begin{align*}
				P = \AEC{\PPar{\PReq{\Chan[s]}{\Role_1}{\Role_2}{\Role_3}{\Chan[k]}{P_1}}{\PEnt{\Chan[s]}{\Role_1}{\Role_2}{\Role_3}{\Args}{P_2}}} \quad \text{ and } \quad P' = \AEC{\PPar{P_1}{P_2\!\Set[]{ \Subst{\Chan[k]}{\Args} }}}
			\end{align*}
			With $ \Gamma \vdash P \triangleright \Delta $ and Lemma~\ref{lem:typeEC}~(1), there exist $ \Delta_P, \Delta_{\EC}, \Gamma' $ such that $ \Gamma \subseteq \Gamma' $, $ \Delta = \Delta_{\EC} \otimes \Delta_P $, and $ \Gamma' \vdash \PPar{\PReq{\Chan[s]}{\Role_1}{\Role_2}{\Role_3}{\Chan[k]}{P_1}}{\PEnt{\Chan[s]}{\Role_1}{\Role_2}{\Role_3}{\Args}{P_2}} \triangleright \Delta_P $.
			By the rules in Figure~\ref{fig:typingRules} the proof of the judgement has to start (modulo Rule~(\textsf{S2})) as follows
			\begin{align*}
				\dfrac{\dfrac{\Gamma' \vdash P_1 \triangleright \Delta_{P1}, \Typed{\AT{\Chan[s]}{\Role_1}}{T_1} \quad \GetType[\Gamma']{\Prot} = \TypeOfProt{\tilde{\Role}_4}{\tilde{\Args[y]}}{\tilde{\Role}_5}{G} \quad \ProjS{G\!\Set[]{ \Subst{\tilde{\Args[v]}}{\tilde{\Args[y]}} }}{}{\Role_3} = T_3}{\Gamma' \vdash \PReq{\Chan[s]}{\Role_1}{\Role_2}{\Role_3}{\Chan[k]}{P_1} \triangleright \Delta_{P1}, \Typed{\AT{\Chan[s]}{\Role_1}}{\LTReq{\Prot}{\Role_3}{\tilde{\Args[v]}}{\Role_2}{T_1}}, \Typed{\ATI{\Chan[k]}{\Role_3}}{T_3}} (\mathsf{P}) \quad D}{\Gamma' \vdash \PPar{\PReq{\Chan[s]}{\Role_1}{\Role_2}{\Role_3}{\Chan[k]}{P_1}}{\PEnt{\Chan[s]}{\Role_1}{\Role_2}{\Role_3}{\Args}{P_2}} \triangleright \Delta_P} (\mathsf{Pa})
			\end{align*}
			with $ D = $
			\begin{align*}
				\dfrac{\Gamma' \vdash P_2 \triangleright \Delta_{P2}, \Typed{\AT{\Chan[s]}{\Role_2}}{T_2}, \Typed{\AT{\Args}{\Role_3}}{T_3} \quad \GetType[\Gamma']{\Prot} = \TypeOfProt{\tilde{\Role}_4}{\tilde{\Args[y]}}{\tilde{\Role}_5}{G} \quad \ProjS{G\!\Set[]{ \Subst{\tilde{\Args[v]}}{\tilde{\Args[y]}} }}{}{\Role_3} = T_3}{\Gamma' \vdash \PEnt{\Chan[s]}{\Role_1}{\Role_2}{\Role_3}{\Args}{P_2} \triangleright \Delta_{P2}, \Typed{\AT{\Chan[s]}{\Role_2}}{\LTEnt{\Prot}{\Role_3}{\tilde{\Args[v]}}{\Role_1}{T_2}}} (\mathsf{J})
			\end{align*}
			where $ \Delta_P = \Delta_{P1}, \Typed{\AT{\Chan[s]}{\Role_1}}{\LTReq{\Prot}{\Role_3}{\tilde{\Args[v]}}{\Role_2}{T_1}}, \Typed{\ATI{\Chan[k]}{\Role_3}}{T_3} \otimes \Delta_{P2}, \Typed{\AT{\Chan[s]}{\Role_2}}{\LTEnt{\Prot}{\Role_3}{\tilde{\Args[v]}}{\Role_1}{T_2}} $.

			By Lemma~\ref{lem:typeSubstA}, $ \Gamma' \vdash P_2 \triangleright \Delta_{P2}, \Typed{\AT{\Chan[s]}{\Role_2}}{T_2}, \Typed{\AT{\Args}{\Role_3}}{T_3} $ implies $ \Gamma' \vdash P_2\!\Set[]{\Subst{\Chan[k]}{\Args}} \triangleright \Delta_{P2}, \Typed{\AT{\Chan[s]}{\Role_2}}{T_2}, \Typed{\AT{\Chan[k]}{\Role_3}}{T_3} $.
			With $ \Gamma' \vdash P_1 \triangleright \Delta_{P1}, \Typed{\AT{\Chan[s]}{\Role_1}}{T_1} $ and since $ \Delta_{P1} \otimes \Delta_{P2} $ is defined, we obtain
			\begin{align*}
				\dfrac{\Gamma' \vdash P_1 \triangleright \Delta_{P1}, \Typed{\AT{\Chan[s]}{\Role_1}}{T_1} \quad \Gamma' \vdash P_2\!\Set[]{\Subst{\Chan[k]}{\Args}} \triangleright \Delta_{P2}, \Typed{\AT{\Chan[s]}{\Role_2}}{T_2}, \Typed{\AT{\Chan[k]}{\Role_3}}{T_3}}{\Gamma' \vdash \PPar{P_1}{P_2\!\Set[]{\Subst{\Chan[k]}{\Args}}} \triangleright \Delta_{P1}, \Typed{\AT{\Chan[s]}{\Role_1}}{T_1} \otimes \Delta_{P2}, \Typed{\AT{\Chan[s]}{\Role_2}}{T_2}, \Typed{\AT{\Chan[k]}{\Role_3}}{T_3}} (\mathsf{Pa})
			\end{align*}
			Finally, with Lemma~\ref{lem:typeEC}~(2), we have $ \Gamma \vdash P' \triangleright \left( \Delta_{P1}, \Typed{\AT{\Chan[s]}{\Role_1}}{T_1} \otimes \Delta_{P2}, \Typed{\AT{\Chan[s]}{\Role_2}}{T_2}, \Typed{\AT{\Chan[k]}{\Role_3}}{T_3} \right) \otimes \Delta_{\EC} $.

			It remains to show that:
			\begin{align*}
				\Delta_{\EC} \otimes \Delta_P \mapsto \left( \Delta_{P1}, \Typed{\AT{\Chan[s]}{\Role_1}}{T_1} \otimes \Delta_{P2}, \Typed{\AT{\Chan[s]}{\Role_2}}{T_2}, \Typed{\AT{\Chan[k]}{\Role_3}}{T_3} \right) \otimes \Delta_{\EC}
			\end{align*}
			Because $ \Delta_P = \Delta_{P1}, \Typed{\AT{\Chan[s]}{\Role_1}}{\LTReq{\Prot}{\Role_3}{\tilde{\Args[v]}}{\Role_2}{T_1}}, \Typed{\ATI{\Chan[k]}{\Role_3}}{T_3} \otimes \Delta_{P2}, \Typed{\AT{\Chan[s]}{\Role_2}}{\LTEnt{\Prot}{\Role_3}{\tilde{\Args[v]}}{\Role_1}{T_2}} $, we obtain
			\begin{align*}
				\dfrac{\dfrac{}{\Delta_P \mapsto \Delta_{P1} \otimes \Delta_{P2}, \Typed{\AT{\Chan[s]}{\Role_1}}{T_1}, \Typed{\AT{\Chan[s]}{\Role_2}}{T_2}, \Typed{\AT{\Chan[k]}{\Role_3}}{T_3}} (\mathsf{join}')}{\Delta_{\EC} \otimes \Delta_P \mapsto \left( \Delta_{P1}, \Typed{\AT{\Chan[s]}{\Role_1}}{T_1} \otimes \Delta_{P2}, \Typed{\AT{\Chan[s]}{\Role_2}}{T_2}, \Typed{\AT{\Chan[k]}{\Role_3}}{T_3} \right) \otimes \Delta_{\EC}} (\mathsf{par})
			\end{align*}
		\item[Case $ (\mathsf{fail}) $:]
			In this case we have
			\begin{align*}
				P = \AEC{\POpt{\Role_1}{\tilde{\Role}}{P_1}{\tilde{\Args}}{\tilde{\Args[v]}}{P_2}} \quad \text{ and } \quad P' = \AEC{P_2\!\Set[]{ \Subst{\tilde{\Args[v]}}{\tilde{\Args}} }}
			\end{align*}
			Because of $ \Gamma \vdash P \triangleright \Delta $ and Lemma~\ref{lem:typeEC}~(1), there exist $ \Delta_P, \Delta_{\EC}, \Gamma' $ such that $ \Gamma \subseteq \Gamma' $, $ \Delta = \Delta_{\EC} \otimes \Delta_P $, and $ \Gamma' \vdash \POpt{\Role_1}{\tilde{\Role}}{P_1}{\tilde{\Args}}{\tilde{\Args[v]}}{P_2} \triangleright \Delta_P $.
			By the rules in Figure~\ref{fig:typingRules} the proof of the judgement has to start (modulo Rule~(\textsf{S2})) as follows:
			\begin{align*}
				\dfrac{\Gamma' \vdash P_1 \triangleright \Delta_{P1}, \Typed{\AT{\Chan[s]}{\Role_1}}{T_1}, \Typed{\Role_1}{\OV{\tilde{\Sort}}} \quad \nexists \Role', \tilde{K} \logdot \Typed{\Role'}{\OV{\tilde{K}}} \in \Delta_{P1} \quad \Gamma' \vdash P_2 \triangleright \Delta_{P2}, \Typed{\AT{\Chan[s]}{\Role_1}}{T_1'} \quad \vdash \Typed{\tilde{\Args}}{\tilde{\Sort}} \quad \vdash \Typed{\tilde{\Args[v]}}{\tilde{\Sort}}}{\Gamma' \vdash \POpt{\Role_1}{\tilde{\Role}}{P_1}{\tilde{\Args}}{\tilde{\Args[v]}}{P_2} \triangleright \Delta_P}(\mathsf{Opt})
			\end{align*}
			where $ \Delta_P = \Delta_{P1} \otimes \Delta_{P2}, \Typed{\AT{\Chan[s]}{\Role_1}}{\LTOpt{\tilde{\Role}'}{T_1}{\Typed{\tilde{\Args[y]}}{\tilde{\Sort}}}{T_1'}} $ and $ \tilde{\Role} \ \dot{=} \ \tilde{\Role}' $.
			By $ \Gamma' \vdash P_2 \triangleright \Delta_{P2}, \Typed{\AT{\Chan[s]}{\Role_1}}{T_1'} $, $ \vdash \Typed{\tilde{\Args}}{\tilde{\Sort}} $, $ \vdash \Typed{\tilde{\Args[v]}}{\tilde{\Sort}} $, and Lemma~\ref{lem:typeSubstB}, we have $ \Gamma' \vdash P_2\!\Set[]{ \Subst{\tilde{\Args[v]}}{\tilde{\Args}} } \triangleright \Delta_{P2}, \Typed{\AT{\Chan[s]}{\Role_1}}{T_1'} $.
			Because of $ \nexists \Role', \tilde{K} \logdot \Typed{\Role'}{\OV{\tilde{K}}} \in \Delta_{P1} $, we have $ \Typed{\Role'}{\OV{\tilde{K}}} \in \Delta_P $ iff $ \Typed{\Role'}{\OV{\tilde{K}}} \in \Delta_{P2} $ iff $ \Typed{\Role'}{\OV{\tilde{K}}} \in \Delta_{P2}, \Typed{\AT{\Chan[s]}{\Role_1}}{T_1'} $.
			With Lemma~\ref{lem:typeEC}~(2) then $ \Gamma \vdash P' \triangleright \left( \Delta_{P2}, \Typed{\AT{\Chan[s]}{\Role_1}}{T_1'} \right) \otimes \Delta_{\EC} $.

			Because $ \Delta_P = \Delta_{P1} \otimes \Delta_{P2}, \Typed{\AT{\Chan[s]}{\Role_1}}{\LTOpt{\Role_2}{T_1}{\Typed{\tilde{\Args[y]}}{\tilde{\Sort}}}{T_1'}} $ and $ \Gamma' \vdash P_1 \triangleright \Delta_{P1}, \Typed{\AT{\Chan[s]}{\Role_1}}{T_1}, \Typed{\Role_1}{\OV{\tilde{\Sort}}} $, we obtain
			\begin{align*}
				\dfrac{\dfrac{}{\Delta_P \mapsto \Delta_{P2}, \Typed{\AT{\Chan[s]}{\Role_1}}{T_1'}} (\mathsf{fail}')}{\Delta_{\EC} \otimes \Delta_P \mapsto \left( \Delta_{P2}, \Typed{\AT{\Chan[s]}{\Role_1}}{T_1'} \right) \otimes \Delta_{\EC}} (\mathsf{par})
			\end{align*}
		\item[Case $ (\mathsf{succ}) $:]
			In this case we have
			\begin{align*}
				P = \AEC{\POpt{\Role_1}{\tilde{\Role}}{\POptEnd{\Role_1}{\tilde{\Args[v]}_1}}{\tilde{\Args}}{\tilde{\Args[v]}_2}{P_1}} \; \text{ and } \; P' = \AEC{P_1\!\Set[]{ \Subst{\tilde{\Args[v]}_1}{\tilde{\Args}} }}
			\end{align*}
			With $ \Gamma \vdash P \triangleright \Delta $ and Lemma~\ref{lem:typeEC}~(1), there exist $ \Delta_P, \Delta_{\EC}, \Gamma' $ such that $ \Gamma \subseteq \Gamma' $, $ \Delta = \Delta_{\EC} \otimes \Delta_P $, and $ \Gamma' \vdash \POpt{\Role_1}{\tilde{\Role}}{\POptEnd{\Role_1}{\tilde{\Args[v]}_1}}{\tilde{\Args}}{\tilde{\Args[v]}_2}{P_1} \triangleright \Delta_P $.
			By the rules in Figure~\ref{fig:typingRules} the proof of the judgement has to start (modulo Rule~(\textsf{S2})) as follows:
			\begin{align*}
				\dfrac{\dfrac{\vdash \Typed{\tilde{\Args[v]}_1}{\tilde{\Sort}}}{\Gamma' \vdash \POptEnd{\Role_1}{\tilde{\Args[v]}_1} \triangleright \Typed{\Role_1}{\OV{\tilde{\Sort}}}}(\mathsf{OptE}) \quad \Gamma' \vdash P_1 \triangleright \Delta_{P1}, \Typed{\AT{\Chan[s]}{\Role_1}}{T_1'} \quad \vdash \Typed{\tilde{\Args}}{\tilde{\Sort}} \quad \vdash \Typed{\tilde{\Args[v]}_2}{\tilde{\Sort}}}{\Gamma' \vdash \POpt{\Role_1}{\tilde{\Role}}{\POptEnd{\Role_1}{\tilde{\Args[v]}_1}}{\tilde{\Args}}{\tilde{\Args[v]}_2}{P_1} \triangleright \Delta_P}(\mathsf{Opt})
			\end{align*}
			where $ \Delta_P = \Delta_{P1}, \Typed{\AT{\Chan[s]}{\Role_1}}{\LTOpt{\tilde{\Role}'}{\LTEnd}{\Typed{\tilde{\Args[z]}}{\tilde{\Sort}}}{T_1'}} $ and $ \tilde{\Role} \ \dot{=} \ \tilde{\Role}' $.
			Since the global environment cannot contain two different declarations of output values for the same pair $ \AT{\Chan[s]}{\Role_1} $, the kinds of the values $ \tilde{\Args[v]}_1 $ and $ \tilde{\Args[v]}_2 $ have to be the same, \ie $ \vdash \Typed{\tilde{\Args[v]}_1}{\tilde{\Sort}} $ and $ \vdash \Typed{\tilde{\Args[v]}_2}{\tilde{\Sort}} $.
			Because of $ \Gamma' \vdash P_1 \triangleright \Delta_{P1}, \Typed{\AT{\Chan[s]}{\Role_1}}{T_1'} $, $ \vdash \Typed{\tilde{\Args[v]}_1}{\tilde{\Sort}} $, $ \vdash \Typed{\tilde{\Args}}{\tilde{\Sort}} $, and Lemma~\ref{lem:typeSubstB}, we have $ \Gamma' \vdash P_1\!\Set[]{ \Subst{\tilde{\Args[v]}_1}{\tilde{\Args}} } \triangleright \Delta_{P1}, \Typed{\AT{\Chan[s]}{\Role_1}}{T_1'} $.
			With Lemma~\ref{lem:typeEC}~(2) then $ \Gamma \vdash P' \triangleright \left( \Delta_{P1}, \Typed{\AT{\Chan[s]}{\Role_1}}{T_1'} \right) \otimes \Delta_{\EC} $.

			It remains to show that $ \Delta_{\EC} \otimes \Delta_P \mapsto \left( \Delta_{P1}, \Typed{\AT{\Chan[s]}{\Role_1}}{T_1'} \right) \otimes \Delta_{\EC} $.

			Because $ \Delta_P = \Delta_{P1}, \Typed{\AT{\Chan[s]}{\Role_1}}{\LTOpt{\tilde{\Role}}{\LTEnd}{\Typed{\tilde{\Args[z]}}{\tilde{\Sort}}}{T_1'}} $, we obtain
			\begin{align*}
				\dfrac{\dfrac{}{\Delta_P \mapsto \Delta_{P1}, \Typed{\AT{\Chan[s]}{\Role_1}}{T_1'}} (\mathsf{succ}')}{\Delta_{\EC} \otimes \Delta_P \mapsto \left( \Delta_{P1}, \Typed{\AT{\Chan[s]}{\Role_1}}{T_1'} \right) \otimes \Delta_{\EC}} (\mathsf{par})
			\end{align*}
		\item[Case $ (\mathsf{cCO}) $:]
			In this case we have
			\begin{align*}
				P & = \AEC{\PPar{\AECR{\POpt{\Role_1}{\tilde{\Role}}{\PPar{\POut{\Chan}{\Chan[s]}{P_1}}{P_2}}{\tilde{\Args}_1}{\tilde{\Args[v]}_1}{P_3}}}{\AECR[E']{\POpt{\Role_2}{\tilde{\Role}}{\PPar{\PInp{\Chan}{\Args}{P_4}}{P_5}}{\tilde{\Args}_2}{\tilde{\Args[v]}_2}{P_6}}}}\\
				P' & = \AEC{\PPar{\AECR{\POpt{\Role_1}{\tilde{\Role}}{\PPar{P_1}{P_2}}{\tilde{\Args}_1}{\tilde{\Args[v]}_1}{P_3}}}{\AECR[E']{\POpt{\Role_2}{\tilde{\Role}}{\PPar{P_4\!\Set[]{ \Subst{\Chan[s]}{\Args} }}{P_5}}{\tilde{\Args}_2}{\tilde{\Args[v]}_2}{P_6}}}}
			\end{align*}
			With $ \Gamma \vdash P \triangleright \Delta $ and Lemma~\ref{lem:typeEC}~(1), there exist $ \Delta_P, \Delta_{\EC}, \Gamma' $ such that $ \Gamma \subseteq \Gamma' $, $ \Delta = \Delta_{\EC} \otimes \Delta_P $, and
			\begin{align*}
				\Gamma' \vdash \PPar{\AECR{\POpt{\Role_1}{\tilde{\Role}}{\PPar{\POut{\Chan}{\Chan[s]}{P_1}}{P_2}}{\tilde{\Args}_1}{\tilde{\Args[v]}_1}{P_3}}}{\AECR[E']{\POpt{\Role_2}{\tilde{\Role}}{\PPar{\PInp{\Chan}{\Args}{P_4}}{P_5}}{\tilde{\Args}_2}{\tilde{\Args[v]}_2}{P_6}}} \triangleright \Delta_P
			\end{align*}
			By the rules in Figure~\ref{fig:typingRules} the proof of the judgement has to start (modulo Rule~(\textsf{S2})) with Rule~$ (\mathsf{Pa}) $, that splits $ \Delta_P $ such that $ \Delta_P = \Delta_{\ECR, P1-3} \otimes \Delta_{\ECR', P4-6} $. Again by Lemma~\ref{lem:typeEC}~(1), there exist $ \Delta_{P1-3} $, $ \Delta_{P4-6} $, $ \Delta_{\ECR} $, $ \Delta_{\ECR'} $, $ \Gamma_1 $, and $ \Gamma_2 $ such that $ \Gamma' \subseteq \Gamma_1 $, $ \Gamma' \subseteq \Gamma_2 $, $ \Delta_{\ECR, P1-3} = \Delta_{\ECR} \otimes \Delta_{P1-3} $, $ \Delta_{\ECR', P4-6} = \Delta_{\ECR'} \otimes \Delta_{P4-6} $, $ \Gamma_1 \vdash \POpt{\Role_1}{\tilde{\Role}}{\PPar{\POut{\Chan}{\Chan[s]}{P_1}}{P_2}}{\tilde{\Args}_1}{\tilde{\Args[v]}_1}{P_3} \triangleright \Delta_{P1-3} $, and $ \Gamma_2 \vdash \POpt{\Role_2}{\tilde{\Role}}{\PPar{\PInp{\Chan}{\Args}{P_4}}{P_5}}{\tilde{\Args}_2}{\tilde{\Args[v]}_2}{P_6} \triangleright \Delta_{P4-6} $. Then:
			\begin{align*}
				\hspace*{-1em}\dfrac{\dfrac{\begin{array}{l} \dfrac{\Gamma_1 \vdash P_1 \triangleright \Delta_{P1} \quad \GetType[\Gamma_1]{\Chan} = \AT{T_1}{\Role_3}}{\Gamma_1 \vdash \POut{\Chan}{\Chan[s]}{P_1} \triangleright \Delta_{P1}, \Typed{\ATE{\Chan[s]}{\Role_3}}{T_1}}(\mathsf{O}) \quad \Gamma_1 \vdash P_2 \triangleright \Delta_{P2} \end{array}}{\Gamma_1 \vdash \PPar{\POut{\Chan}{\Chan[s]}{P_1}}{P_2} \triangleright \left( \Delta_{P1}, \Typed{\ATE{\Chan[s]}{\Role_3}}{T_1} \right) \otimes \Delta_{P2}}(\mathsf{Pa}) \begin{array}{l} \Gamma_1 \vdash P_3 \triangleright \Delta_{P3}, \Typed{\AT{\Chan[k]_1}{\Role_1}}{T_1'}\\ \vdash \Typed{\tilde{\Args}_1}{\tilde{\Sort}} \quad \vdash \Typed{\tilde{\Args[v]}_1}{\tilde{\Sort}} \end{array}}{\Gamma_1 \vdash \POpt{\Role_1}{\tilde{\Role}}{\PPar{\POut{\Chan}{\Chan[s]}{P_1}}{P_2}}{\tilde{\Args}_1}{\tilde{\Args[v]}_1}{P_3} \triangleright \Delta_{P1-3}}(\mathsf{Opt})
			\end{align*}
			where $ \left( \Delta_{P1}, \Typed{\ATE{\Chan[s]}{\Role_3}}{T_1} \right) \otimes \Delta_{P2} = \Delta_{P1-2}, \Typed{\AT{\Chan[k]_1}{\Role_1}}{T_1}, \Typed{\Role_1}{\OV{\tilde{\Sort}}} $ and $ \nexists \Role', \tilde{\Sort[K]} \logdot \Typed{\Role'}{\OV{\tilde{\Sort[K]}}} \in \Delta_{P1-2} $ and $ \Delta_{P1-3} = \Delta_{P1-2} \otimes \Delta_{P3}, \Typed{\AT{\Chan[k]_1}{\Role_1}}{\LTOpt{\Role_2}{T_1}{\Typed{\tilde{\Args[y]}_1}{\tilde{\Sort}}}{T_1'}} $.
			Since $ \left( \Delta_{P1}, \Typed{\ATE{\Chan[s]}{\Role_3}}{T_1} \right) \otimes \Delta_{P2} $ is defined and because $ \Chan[s] \neq \Chan[k] $, we obtain
			\begin{align*}
				\dfrac{\dfrac{\begin{array}{l} \Gamma_1 \vdash P_1 \triangleright \Delta_{P1} \quad \Gamma_1 \vdash P_2 \triangleright \Delta_{P2} \end{array}}{\Gamma_1 \vdash \PPar{P_1}{P_2} \triangleright \Delta_{P1} \otimes \Delta_{P2}}(\mathsf{Pa}) \quad \begin{array}{l} \Gamma_1 \vdash P_3 \triangleright \Delta_{P3}, \Typed{\AT{\Chan[k]_1}{\Role_1}}{T_1'}\\ \vdash \Typed{\tilde{\Args}_1}{\tilde{\Sort}} \quad \vdash \Typed{\tilde{\Args[v]}_1}{\tilde{\Sort}} \end{array}}{\Gamma_1 \vdash \POpt{\Role_1}{\tilde{\Role}}{\PPar{P_1}{P_2}}{\tilde{\Args}_1}{\tilde{\Args[v]}_1}{P_3} \triangleright \Delta_{P1-3}'}(\mathsf{Opt})
			\end{align*}
			where $ \Delta_{P1} \otimes \Delta_{P2} = \Delta_{P1-2}', \Typed{\AT{\Chan[k]_1}{\Role_1}}{T_1''}, \Typed{\Role_1}{\OV{\tilde{\Sort}}} $ and
			\begin{align*}
				\Delta_{P1-3}' = \Delta_{P1-2}' \otimes \Delta_{P3}, \Typed{\AT{\Chan[k]_1}{\Role_1}}{\LTOpt{\Role_2}{T_1''}{\Typed{\tilde{\Args[y]}_1}{\tilde{\Sort}}}{T_1'}}
			\end{align*}
			Note that $ \Delta_{P1-3}' $ is obtained from $ \Delta_{P1-3} $ by removing a capability on $ \AT{\Chan[s]}{\Role_3} $ and changing a capability on $ \AT{\Chan[k]_1}{\Role_1} $.
			With Lemma~\ref{lem:typeEC}~(2), then $ \Gamma' \vdash \AECR{\POpt{\Role_1}{\tilde{\Role}}{\PPar{P_1}{P_2}}{\tilde{\Args}_1}{\tilde{\Args[v]}_1}{P_3}} \triangleright \Delta_{\ECR, P1-3}' $, where $ \Delta_{\ECR, P1-3}' = \Delta_{P1-3}' \otimes \Delta_{\ECR} $.

			Moreover, because $ \GetType[\Gamma_2]{\Chan} = \AT{T1}{\Role_3} $,
			\begin{align*}
				\hspace*{-1em}\dfrac{\dfrac{\begin{array}{l} \dfrac{\Gamma_2 \vdash P_4 \triangleright \Delta_{P4}, \Typed{\AT{\Args}{\Role_3}}{T_1} \quad \GetType[\Gamma_2]{\Chan} = \AT{T_1}{\Role_3}}{\Gamma_2 \vdash \PInp{\Chan}{\Args}{P_4} \triangleright \Delta_{P4}}(\mathsf{I}) \quad \Gamma_2 \vdash P_5 \triangleright \Delta_{P5} \end{array}}{\Gamma_2 \vdash \PPar{\PInp{\Chan}{\Args}{P_4}}{P_5} \triangleright \Delta_{P4} \otimes \Delta_{P5}}(\mathsf{Pa}) \begin{array}{l} \Gamma_2 \vdash P_6 \triangleright \Delta_{P6}, \Typed{\AT{\Chan[k]_2}{\Role_2}}{T_2'}\\ \vdash \Typed{\tilde{\Args}_2}{\tilde{\Sort}'} \quad \vdash \Typed{\tilde{\Args[v]}_2}{\tilde{\Sort}'} \end{array}}{\Gamma_2 \vdash \POpt{\Role_2}{\tilde{\Role}}{\PPar{\PInp{\Chan}{\Args}{P_4}}{P_5}}{\tilde{\Args}_2}{\tilde{\Args[v]}_2}{P_6} \triangleright \Delta_{P4-6}}(\mathsf{Opt})
			\end{align*}
			where $ \Delta_{P4} \otimes \Delta_{P5} = \Delta_{P4-5}, \Typed{\AT{\Chan[k]_2}{\Role_2}}{T_2}, \Typed{\Role_2}{\OV{\tilde{\Sort'}}} $, $ \nexists \Role', \tilde{\Sort[K]} \logdot \Typed{\Role'}{\OV{\tilde{\Sort[K]}}} \in \Delta_{P4-5} $, and we have $ \Delta_{P4-6} = \Delta_{P4-5} \otimes \Delta_{P6}, \Typed{\AT{\Chan[k]_2}{\Role_2}}{\LTOpt{\tilde{\Role}}{T_2}{\Typed{\tilde{\Args[y]}_2}{\tilde{\Sort}'}}{T_2'}} $.
			By $ \Gamma_2 \vdash P_4 \triangleright \Delta_{P4}, \Typed{\AT{\Args}{\Role_3}}{T_1} $ and Lemma~\ref{lem:typeSubstA}, we have $ \Gamma_2 \vdash P_4\!\Set[]{ \Subst{\Chan[s]}{\Args} } \triangleright \Delta_{P4}, \Typed{\AT{\Chan[s]}{\Role_3}}{T_1} $.
			Then
			\begin{align*}
				\hspace*{-1em}\dfrac{\dfrac{\Gamma_2 \vdash P_4\!\Set[]{ \Subst{\Chan[s]}{\Args} } \triangleright \Delta_{P4}, \Typed{\AT{\Chan[s]}{\Role_3}}{T_1} \quad \Gamma_2 \vdash P_5 \triangleright \Delta_{P5}}{\Gamma_2 \vdash \PPar{P_4\!\Set[]{ \Subst{\Chan[s]}{\Args} }}{P_5} \triangleright \left( \Delta_{P4}, \Typed{\AT{\Chan[s]}{\Role_3}}{T_1} \right) \otimes \Delta_{P5}}(\mathsf{Pa}) \begin{array}{l} \Gamma_2 \vdash P_6 \triangleright \Delta_{P6}, \Typed{\AT{\Chan[k]_2}{\Role_2}}{T_2'}\\ \vdash \Typed{\tilde{\Args}_2}{\tilde{\Sort}'} \quad \vdash \Typed{\tilde{\Args[v]}_2}{\tilde{\Sort}'} \end{array}}{\Gamma_2 \vdash \POpt{\Role_2}{\tilde{\Role}}{\PPar{P_4\!\Set[]{ \Subst{\Chan[s]}{\Args} }}{P_5}}{\tilde{\Args}_2}{\tilde{\Args[v]}_2}{P_6} \triangleright \Delta_{P4-6}'}(\mathsf{Opt})
			\end{align*}
			where $ \left( \Delta_{P4}, \Typed{\AT{\Chan[s]}{\Role_3}}{T_1} \right) \otimes \Delta_{P5} = \Delta_{P4-5}'', \Typed{\AT{\Chan[k]_2}{\Role_2}}{T_2}, \Typed{\Role_2}{\OV{\tilde{\Sort'}}} $ and we have $ \Delta_{P4-6}' = \Delta_{P4-5}'' \otimes \Delta_{P6}, \Typed{\AT{\Chan[k]_2}{\Role_2}}{\LTOpt{\tilde{\Role}}{T_2}{\Typed{\tilde{\Args[y]}_2}{\tilde{\Sort}'}}{T_2'}} $.
			Here $ \Delta_{P4-6}' $ is obtained from $ \Delta_{P4-6} $ by a adding a single capability on $ \AT{\Chan[s]}{\Role_3} $.
			With Lemma~\ref{lem:typeEC}~(2), then $ \Gamma' \vdash \AECR[E']{\POpt{\Role_2}{\tilde{\Role}}{\PPar{P_4\!\Set[]{ \Subst{\Chan[s]}{\Args} }}{P_5}}{\tilde{\Args}_2}{\tilde{\Args[v]}_2}{P_6}} \triangleright \Delta_{\ECR', P4-6}' $, where $ \Delta_{\ECR', P4-6}' = \Delta_{P4-6}' \otimes \Delta_{\ECR'} $.

			Since $ \Delta_{\ECR, P1-3} \otimes \Delta_{\ECR', P4-6} $ is defined, so is $ \Delta_{\ECR, P1-3}' \otimes \Delta_{\ECR', P4-6}' $.
			Hence, by Rule~$ (\mathsf{Pa}) $, the judgement $ \Gamma' \vdash \AECR{\POpt{\Role_1}{\tilde{\Role}}{\PPar{P_1}{P_2}}{\tilde{\Args}_1}{\tilde{\Args[v]}_1}{P_3}} \triangleright \Delta_{\ECR, P1-3}' $, and $ \Gamma' \vdash \AECR[E']{\POpt{\Role_2}{\tilde{\Role}}{\PPar{P_4\!\Set[]{ \Subst{\Chan[s]}{\Args} }}{P_5}}{\tilde{\Args}_2}{\tilde{\Args[v]}_2}{P_6}} \triangleright \Delta_{\ECR', P4-6}' $, we have
			\begin{align*}
				\Gamma' \vdash \PPar{\AECR{\POpt{\Role_1}{\tilde{\Role}}{\PPar{P_1}{P_2}}{\tilde{\Args}_1}{\tilde{\Args[v]}_1}{P_3}}}{\AECR[E']{\POpt{\Role_2}{\tilde{\Role}}{\PPar{P_4\!\Set[]{ \Subst{\Chan[s]}{\Args} }}{P_5}}{\tilde{\Args}_2}{\tilde{\Args[v]}_2}{P_6}}} \triangleright \; \Delta_{\ECR, P1-3}' \otimes \Delta_{\ECR', P4-6}'
			\end{align*}
			With Lemma~\ref{lem:typeEC}~(2) we conclude with $ \Gamma \vdash P' \triangleright \left( \Delta_{\ECR, P1-3}' \otimes \Delta_{\ECR', P4-6}' \right) \otimes \Delta_{\EC} $.

			It remains to show that $ \Delta_{\EC} \otimes \Delta_P \mapsto \left( \Delta_{\ECR, P1-3}' \otimes \Delta_{\ECR', P4-6}' \right) \otimes \Delta_{\EC} $.
			Because $ \Delta_P = \left( \Delta_{\ECR} \otimes \Delta_{P1-3} \right) \otimes \left( \Delta_{\ECR'} \otimes \Delta_{P4-6} \right) $ with $ \Delta_{P1-3} = \Delta_{P1-2} \otimes \Delta_{P3}, \Typed{\AT{\Chan[k]_1}{\Role_1}}{\LTOpt{\Role_2}{T_1}{\Typed{\tilde{\Args[y]}_1}{\tilde{\Sort}}}{T_1'}} $ and $ \Delta_{P4-6} = \Delta_{P4-5} \otimes \Delta_{P6}, \Typed{\AT{\Chan[k]_2}{\Role_2}}{\LTOpt{\Role_1}{T_2}{\Typed{\tilde{\Args[y]}_2}{\tilde{\Sort}'}}{T_2'}} $, we obtain
			\begin{align*}
				\dfrac{\dfrac{\dfrac{}{\left( \Delta_{P1-2} \otimes \Delta_{P3}, \Typed{\AT{\Chan[k]_1}{\Role_1}}{T_1} \right) \otimes \left( \Delta_{P4-5} \otimes \Delta_{P6}, \Typed{\AT{\Chan[k]_2}{\Role_2}}{T_2} \right) \mapsto \Delta_{P1-3}' \otimes \Delta_{P4-6}'} (\mathsf{comS}')}{\Delta_{P1-3} \otimes \Delta_{P4-6} \mapsto \Delta_{P1-3}' \otimes \Delta_{P4-6}'} (\mathsf{optCom})}{\Delta_{\EC} \otimes \Delta_P \mapsto \left( \Delta_{\ECR, P1-3}' \otimes \Delta_{\ECR', P4-6}' \right) \otimes \Delta_{\EC}} (\mathsf{par})
			\end{align*}
			where we first reorder the session environments modulo $ \otimes $ and remove with Rule~$ (\mathsf{par}) $ all assignments on the contexts, \ie $ \Delta_{\ECR} $, $ \Delta_{\ECR'} $, and $ \Delta_{\EC} $.
		\item[Case $ (\mathsf{jO}) $:]
			In this case we have
			\begin{align*}
				P & = \AEC{\PPar{\AECR{P_{1-3}}}{\AECR[E']{P_{4-6}}}}\\
				P_{1-3} & = \POpt{\Role_1}{\tilde{\Role}}{\PPar{\PReq{\Chan[s]}{\Role_3}{\Role_4}{\Role_5}{\Chan[k]}{P_1}}{P_2}}{\tilde{\Args}_1}{\tilde{\Args[v]}_1}{P_3}\\
				P_{4-6} & = \POpt{\Role_2}{\tilde{\Role}}{\PPar{\PEnt{\Chan[s]}{\Role_3}{\Role_4}{\Role_5}{\Args}{P_4}}{P_5}}{\tilde{\Args}_2}{\tilde{\Args[v]}_2}{P_6}\\
				P' & = \AEC{\PPar{\AECR{\POpt{\Role_1}{\tilde{\Role}}{\PPar{P_1}{P_2}}{\tilde{\Args}_1}{\tilde{\Args[v]}_1}{P_3}}}{\AECR[E']{\POpt{\Role_2}{\tilde{\Role}}{\PPar{P_4\!\Set[]{ \Subst{\Chan[k]}{\Args} }}{P_5}}{\tilde{\Args}_2}{\tilde{\Args[v]}_2}{P_6}}}}
			\end{align*}
			With $ \Gamma \vdash P \triangleright \Delta $ and Lemma~\ref{lem:typeEC}~(1), there exist $ \Delta_P, \Delta_{\EC}, \Gamma' $ such that $ \Gamma \subseteq \Gamma' $, $ \Delta = \Delta_{\EC} \otimes \Delta_P $, and $ \Gamma' \vdash \PPar{\AECR{P_{1-3}}}{\AECR[E']{P_{4-6}}} \triangleright \Delta_P $.
			By the rules in Figure~\ref{fig:typingRules} the proof of the judgement has to start (modulo Rule~(\textsf{S2})) with Rule~$ (\mathsf{Pa}) $, that splits $ \Delta_P $ such that $ \Delta_P = \Delta_{\ECR, P1-3} \otimes \Delta_{\ECR', P4-6} $. Again by Lemma~\ref{lem:typeEC}~(1), there exist $ \Delta_{P1-3} $, $ \Delta_{P4-6} $, $ \Delta_{\ECR} $, $ \Delta_{\ECR'} $, $ \Gamma_1 $, and $ \Gamma_2 $ such that $ \Gamma' \subseteq \Gamma_1 $, $ \Gamma' \subseteq \Gamma_2 $, $ \Delta_{\ECR, P1-3} = \Delta_{\ECR} \otimes \Delta_{P1-3} $, $ \Delta_{\ECR', P4-6} = \Delta_{\ECR'} \otimes \Delta_{P4-6} $, $ \Gamma_1 \vdash P_{1-3} \triangleright \Delta_{P1-3} $, and $ \Gamma_2 \vdash P_{4-6} \triangleright \Delta_{P4-6} $. Then:
			\begin{align*}
				\dfrac{\dfrac{\begin{array}{l} \dfrac{\begin{array}{l} \Gamma_1 \vdash P_1 \triangleright \Delta_{P1}, \Typed{\AT{\Chan[s]}{\Role_3}}{T_1}\\ \GetType[\Gamma_1]{\Prot} = \TypeOfProt{\tilde{\Role}_6}{\tilde{\Args[y]}_1}{\tilde{\Role}_7}{G} \quad \ProjS{G\!\Set[]{ \Subst{\tilde{\Args[z]}_1}{\tilde{\Args[y]}_1} }}{}{\Role_5} = T_5 \end{array}}{\Gamma_1 \vdash \PReq{\Chan[s]}{\Role_3}{\Role_4}{\Role_5}{\Chan[k]}{P_1} \triangleright \Delta_{P1}'}(\mathsf{P}) \quad \Gamma_1 \vdash P_2 \triangleright \Delta_{P2} \end{array}}{\Gamma_1 \vdash \PPar{\PReq{\Chan[s]}{\Role_3}{\Role_4}{\Role_5}{\Chan[k]}{P_1}}{P_2} \triangleright \Delta_{P1}' \otimes \Delta_{P2}}(\mathsf{Pa}) \begin{array}{l} \Gamma_1 \vdash P_3 \triangleright \Delta_{P3}, \Typed{\AT{\Chan[k]_1}{\Role_1}}{T_1'}\\ \vdash \Typed{\tilde{\Args}_1}{\tilde{\Sort}} \quad \vdash \Typed{\tilde{\Args[v]}_1}{\tilde{\Sort}} \end{array}}{\Gamma_1 \vdash \POpt{\Role_1}{\tilde{\Role}}{\PPar{\PReq{\Chan[s]}{\Role_3}{\Role_4}{\Role_5}{\Chan[k]}{P_1}}{P_2}}{\tilde{\Args}_1}{\tilde{\Args[v]}_1}{P_3} \triangleright \Delta_{P1-3}}(\mathsf{Opt})
			\end{align*}
			where $ \nexists \Role', \tilde{\Sort[K]} \logdot \Typed{\Role'}{\OV{\tilde{\Sort[K]}}} \in \Delta_{P1-2} $ and
			\begin{align*}
				\Delta_{P1}' & = \Delta_{P1}, \Typed{\AT{\Chan[s]}{\Role_3}}{\LTReq{\Prot}{\Role_5}{\tilde{\Args[z]}_1}{\Role_4}{T_1}}, \Typed{\ATI{\Chan[k]}{\Role_5}}{T_5}\\
				\Delta_{P1}' \otimes \Delta_{P2} & = \Delta_{P1-2}, \Typed{\AT{\Chan[k]_1}{\Role_1}}{T_1}, \Typed{\Role_1}{\OV{\tilde{\Sort}}}\\
				\Delta_{P1-3} & = \Delta_{P1-2} \otimes \Delta_{P3}, \Typed{\AT{\Chan[k]_1}{\Role_1}}{\LTOpt{\tilde{\Role}}{T_1}{\Typed{\tilde{\Args[y]}_1}{\tilde{\Sort}}}{T_1'}}
			\end{align*}
			Since $ \Delta_{P1}' \otimes \Delta_{P2} $ is defined and because $ \Chan[s] \neq \Chan[k] $, we obtain
			\begin{align*}
				\dfrac{\dfrac{\begin{array}{l} \Gamma_1 \vdash P_1 \triangleright \Delta_{P1}, \Typed{\AT{\Chan[s]}{\Role_3}}{T_1} \quad \Gamma_1 \vdash P_2 \triangleright \Delta_{P2} \end{array}}{\Gamma_1 \vdash \PPar{P_1}{P_2} \triangleright \left( \Delta_{P1}, \Typed{\AT{\Chan[s]}{\Role_3}}{T_1} \right) \otimes \Delta_{P2}}(\mathsf{Pa}) \quad \begin{array}{l} \Gamma_1 \vdash P_3 \triangleright \Delta_{P3}, \Typed{\AT{\Chan[k]_1}{\Role_1}}{T_1'}\\ \vdash \Typed{\tilde{\Args}_1}{\tilde{\Sort}} \quad \vdash \Typed{\tilde{\Args[v]}_1}{\tilde{\Sort}} \end{array}}{\Gamma_1 \vdash \POpt{\Role_1}{\tilde{\Role}}{\PPar{P_1}{P_2}}{\tilde{\Args}_1}{\tilde{\Args[v]}_1}{P_3} \triangleright \Delta_{P1-3}'}(\mathsf{Opt})
			\end{align*}
			where $ \left( \Delta_{P1}, \Typed{\AT{\Chan[s]}{\Role_3}}{T_1} \right) \otimes \Delta_{P2} = \Delta_{P1-2}', \Typed{\AT{\Chan[k]_1}{\Role^1}}{T_1} $ and
			\begin{align*}
				\Delta_{P1-3}' = \Delta_{P1-2}' \otimes \Delta_{P3}, \Typed{\AT{\Chan[k]_1}{\Role_1}}{\LTOpt{\tilde{\Role}}{T_1}{\Typed{\tilde{\Args[y]}_1}{\tilde{\Sort}}}{T_1'}}, \Typed{\Role_1}{\OV{\tilde{\Sort}}}
			\end{align*}
			Note that $ \Delta_{P1-3}' $ is obtained from $ \Delta_{P1-3} $ by removing a capability on $ \AT{\Chan[k]}{\Role_5} $ and reducing a capability on $ \AT{\Chan[s]}{\Role_3} $.
			With Lemma~\ref{lem:typeEC}~(2), then $ \Gamma' \vdash \AECR{\POpt{\Role_1}{\tilde{\Role}}{\PPar{P_1}{P_2}}{\tilde{\Args}_1}{\tilde{\Args[v]}_1}{P_3}} \triangleright \Delta_{\ECR, P1-3}' $, where $ \Delta_{\ECR, P1-3}' = \Delta_{P1-3}' \otimes \Delta_{\ECR} $.

			Moreover, because of $ \GetType[\Gamma_2]{\Prot} = \TypeOfProt{\tilde{\Role}_6}{\tilde{\Args[y]}_1}{\tilde{\Role}_7}{G} $ and $ \ProjS{G\!\Set[]{ \Subst{\tilde{\Args[z]}_1}{\tilde{\Args[y]}_1} }}{}{\Role_5} = T_5 $,
			\begin{align*}
				\dfrac{\dfrac{\dfrac{\begin{array}{l} \Gamma_2 \vdash P_4 \triangleright \Delta_{P4}, \Typed{\AT{\Chan[s]}{\Role_4}}{T_4}, \Typed{\AT{\Args}{\Role_5}}{T_5}\\ \GetType[\Gamma_2]{\Prot} = \TypeOfProt{\tilde{\Role}_6}{\tilde{\Args[y]}_1}{\tilde{\Role}_7}{G} \quad \ProjS{G\!\Set[]{ \Subst{\tilde{\Args[z]}_1}{\tilde{\Args[y]}_1} }}{}{\Role_5} = T_5 \end{array}}{\Gamma_2 \vdash \PEnt{\Chan[s]}{\Role_3}{\Role_4}{\Role_5}{\Args}{P_4} \triangleright \Delta_{P4}'}(\mathsf{J}) \quad \Gamma_2 \vdash P_5 \triangleright \Delta_{P5}}{\Gamma_2 \vdash \PPar{\PEnt{\Chan[s]}{\Role_3}{\Role_4}{\Role_5}{\Args}{P_4}}{P_5} \triangleright \Delta_{P4}' \otimes \Delta_{P5}}(\mathsf{Pa}) \begin{array}{l} \Gamma_2 \vdash P_6 \triangleright \Delta_{P6}, \Typed{\AT{\Chan[k]_2}{\Role_2}}{T_2'}\\ \vdash \Typed{\tilde{\Args}_2}{\tilde{\Sort}'} \quad \vdash \Typed{\tilde{\Args[v]}_2}{\tilde{\Sort}'} \end{array}}{\Gamma_2 \vdash \POpt{\Role_2}{\tilde{\Role}}{\PPar{\PEnt{\Chan[s]}{\Role_3}{\Role_4}{\Role_5}{\Args}{P_4}}{P_5}}{\tilde{\Args}_2}{\tilde{\Args[v]}_2}{P_6} \triangleright \Delta_{P4-6}}(\mathsf{Opt})
			\end{align*}
			where $ \nexists \Role', \tilde{\Sort[K]} \logdot \Typed{\Role'}{\OV{\tilde{\Sort[K]}}} \in \Delta_{P4-5} $ and
			\begin{align*}
				\Delta_{P4}' & = \Delta_{P4}, \Typed{\AT{\Chan[s]}{\Role_4}}{\LTEnt{\Prot}{\Role_5}{\tilde{\Args[z]}_2}{\Role_3}{T_4}}\\
				\Delta_{P4}' \otimes \Delta_{P5} & = \Delta_{P4-5}, \Typed{\AT{\Chan[k]_2}{\Role_2}}{T_2}, \Typed{\Role_2}{\OV{\tilde{\Sort}'}}\\
				\Delta_{P4-6} & = \Delta_{P4-5} \otimes \Delta_{P6}, \Typed{\AT{\Chan[k]_2}{\Role_2}}{\LTOpt{\tilde{\Role}}{T_2}{\Typed{\tilde{\Args[y]}_2}{\tilde{\Sort}'}}{T_2'}}
			\end{align*}
			By $ \Gamma_2 \vdash P_4 \triangleright \Delta_{P4}, \Typed{\AT{\Chan[s]}{\Role_4}}{T_4}, \Typed{\AT{\Args}{\Role_5}}{T_5} $ and Lemma~\ref{lem:typeSubstA}, we have $ \Gamma_2 \vdash P_4\!\Set[]{ \Subst{\Chan[k]}{\Args} } \triangleright \Delta_{P4}, \Typed{\AT{\Chan[s]}{\Role_4}}{T_4}, \Typed{\AT{\Args[k]}{\Role_5}}{T_3'} $.
			Then
			\begin{align*}
				\dfrac{\dfrac{\begin{array}{l} \Gamma_2 \vdash P_4\!\Set[]{ \Subst{\Chan[k]}{\Args} } \triangleright \Delta_{P4}, \Typed{\AT{\Chan[s]}{\Role_4}}{T_4}, \Typed{\AT{\Args[k]}{\Role_5}}{T_5} \quad \Gamma_2 \vdash P_5 \triangleright \Delta_{P5} \end{array}}{\Gamma_2 \vdash \PPar{P_4\!\Set[]{ \Subst{\Chan[k]}{\Args} }}{P_5} \triangleright \Delta_{P4-5}''}(\mathsf{Pa}) \begin{array}{l} \Gamma_2 \vdash P_6 \triangleright \Delta_{P6}, \Typed{\AT{\Chan[k]_2}{\Role_2}}{T_2'}\\ \vdash \Typed{\tilde{\Args}_2}{\tilde{\Sort}'} \quad \vdash \Typed{\tilde{\Args[v]}_2}{\tilde{\Sort}'} \end{array}}{\Gamma_2 \vdash \POpt{\Role_2}{\tilde{\Role}}{\PPar{P_4\!\Set[]{ \Subst{\Chan[k]}{\Args} }}{P_5}}{\tilde{\Args}_2}{\tilde{\Args[v]}_2}{P_6} \triangleright \Delta_{P4-6}'}(\mathsf{Opt})
			\end{align*}
			where $ \Delta_{P4-5}'' = \left( \Delta_{P4}, \Typed{\AT{\Chan[s]}{\Role_4}}{T_4}, \Typed{\AT{\Args[k]}{\Role_5}}{T_5} \right) \otimes \Delta_5 = \Delta_{P4-5}''', \Typed{\AT{\Chan[k]_2}{\Role_2}}{T_2} $ and $ \Delta_{P4-6}' = \Delta_{P4-5}''' \otimes \Delta_{P6}, \Typed{\AT{\Chan[k]_2}{\Role_2}}{\LTOpt{\tilde{\Role}}{T_2}{\Typed{\tilde{\Args[y]}^2}{\tilde{\Sort}'}}{T_2'}}, \Typed{\Role_2}{\OV{\tilde{\Sort}'}} $.
			Here $ \Delta_{P4-6}' $ is obtained from $ \Delta_{P4-6} $ by reducing a capability on $ \AT{\Chan[s]}{\Role_4} $ and a adding a capability on $ \AT{\Chan[k]}{\Role_5} $.
			With Lemma~\ref{lem:typeEC}~(2), then $ \Gamma' \vdash \AECR[E']{\POpt{\Role_2}{\tilde{\Role}}{\PPar{P_4\!\Set[]{ \Subst{\Chan[k]}{\Args} }}{P_5}}{\tilde{\Args}_2}{\tilde{\Args[v]}_2}{P_6}} \triangleright \Delta_{\ECR', P4-6}' $, where $ \Delta_{\ECR', P4-6}' = \Delta_{P4-6}' \otimes \Delta_{\ECR'} $.

			Since $ \Delta_{\ECR, P1-3} \otimes \Delta_{\ECR', P4-6} $ is defined, so is $ \Delta_{\ECR, P1-3}' \otimes \Delta_{\ECR', P4-6}' $.
			Hence, by Rule~$ (\mathsf{Pa}) $, the judgement $ \Gamma' \vdash \AECR{\POpt{\Role_1}{\tilde{\Role}}{\PPar{P_1}{P_2}}{\tilde{\Args}_1}{\tilde{\Args[v]}_1}{P_3}} \triangleright \Delta_{\ECR, P1-3}' $, and $ \Gamma' \vdash \AECR[E']{\POpt{\Role_2}{\tilde{\Role}}{\PPar{P_4\!\Set[]{ \Subst{\Chan[k]}{\Args} }}{P_5}}{\tilde{\Args}_2}{\tilde{\Args[v]}_2}{P_6}} \triangleright \Delta_{\ECR', P4-6}' $, we have
			\begin{align*}
				\Gamma' \vdash \PPar{\AECR{\POpt{\Role_1}{\tilde{\Role}}{\PPar{P_1}{P_2}}{\tilde{\Args}_1}{\tilde{\Args[v]}_1}{P_3}}}{\AECR[E']{\POpt{\Role_2}{\tilde{\Role}}{\PPar{P_4\!\Set[]{ \Subst{\Chan[k]}{\Args} }}{P_5}}{\tilde{\Args}_2}{\tilde{\Args[v]}_2}{P_6}}} \triangleright \; \Delta_{\ECR, P1-3}' \otimes \Delta_{\ECR', P4-6}'
			\end{align*}
			With Lemma~\ref{lem:typeEC}~(2) we conclude with $ \Gamma \vdash P' \triangleright \left( \Delta_{\ECR, P1-3}' \otimes \Delta_{\ECR', P4-6}' \right) \otimes \Delta_{\EC} $.

			It remains to show that $ \Delta_{\EC} \otimes \Delta_P \mapsto \left( \Delta_{\ECR, P1-3}' \otimes \Delta_{\ECR', P4-6}' \right) \otimes \Delta_{\EC} $.
			Because $ \Delta_P = \left( \Delta_{\ECR} \otimes \Delta_{P1-3} \right) \otimes \left( \Delta_{\ECR'} \otimes \Delta_{P4-6} \right) $ and because $ \Delta_{P1-3}' $ and $ \Delta_{P4-6}' $ are obtained from $ \Delta_{P1-3} $ and $ \Delta_{P4-6} $ by
			\begin{itemize}
				\item changing $ \Typed{\ATI{\Chan[k]}{\Role_5}}{T_5} $ to $ \Typed{\AT{\Chan[k]}{\Role_5}}{T_5} $,
				\item reducing $ \Typed{\AT{\Chan[s]}{\Role_3}}{\LTReq{\Prot}{\Role_5}{\tilde{\Args[z]}_1}{\Role_4}{T_1}} $ to $ \Typed{\AT{\Chan[s]}{\Role_3}}{T_1} $, and
				\item reducing $ \Typed{\AT{\Chan[s]}{\Role_4}}{\LTEnt{\Prot}{\Role_5}{\tilde{\Args[z]}_2}{\Role_3}{T_4}} $ to $ \Typed{\AT{\Chan[s]}{\Role_4}}{T_4} $
			\end{itemize}
			we have
			\begin{align*}
				\dfrac{\dfrac{\dfrac{}{\left( \Delta_{P1-2} \otimes \Delta_{P3}, \Typed{\AT{\Chan[k]_1}{\Role_1}}{T_1} \right) \otimes \left( \Delta_{P4-5} \otimes \Delta_{P6}, \Typed{\AT{\Chan[k]_2}{\Role_2}}{T_2} \right) \mapsto \Delta_{P1-3}' \otimes \Delta_{P4-6}'} (\mathsf{join}')}{\Delta_{P1-3} \otimes \Delta_{P4-6} \mapsto \Delta_{P1-3}' \otimes \Delta_{P4-6}'} (\mathsf{optCom})}{\Delta_{\EC} \otimes \Delta_P \mapsto \left( \Delta_{\ECR, P1-3}' \otimes \Delta_{\ECR', P4-6}' \right) \otimes \Delta_{\EC}} (\mathsf{par})
			\end{align*}
		\item[Case $ (\mathsf{cSO}) $:]
			In this case we have
			\begin{align*}
				P & = \AEC{\PPar{\AECR{P_{1-3}}}{\AECR[E']{P_{4-6}}}}\\
				P_{1-3} & = \POpt{\Role_1}{\tilde{\Role}}{\PPar{\PSend{\Chan[k]}{\Role_3}{\Role_4}{\Labe_j}{\tilde{\Args[v]}}{P_1}}{P_2}}{\tilde{\Args}_1}{\tilde{\Args[v]}_1}{P_3}\\
				P_{4-6} & = \POpt{\Role_2}{\tilde{\Role}}{\PPar{\PGet{\Chan[k]}{\Role_3}{\Role_4}{_{i \in \indexSet} \Set{ \PLab{\Labe_i}{\tilde{\Args}_i}{P_{4, i}} }}}{P_5}}{\tilde{\Args}_2}{\tilde{\Args[v]}_2}{P_6}\\
				P' & = \AEC{\PPar{\AECR{\POpt{\Role_1}{\tilde{\Role}}{\PPar{P_1}{P_2}}{\tilde{\Args}_1}{\tilde{\Args[v]}_1}{P_3}}}{\AECR[E']{\POpt{\Role_2}{\tilde{\Role}}{\PPar{P_{4, j}\!\Set[]{ \Subst{\tilde{\Args[v]}}{\tilde{\Args}_j} }}{P_5}}{\tilde{\Args}_2}{\tilde{\Args[v]}_2}{P_6}}}}
			\end{align*}
			With $ \Gamma \vdash P \triangleright \Delta $ and Lemma~\ref{lem:typeEC}~(1), there exist $ \Delta_P, \Delta_{\EC}, \Gamma' $ such that $ \Gamma \subseteq \Gamma' $, $ \Delta = \Delta_{\EC} \otimes \Delta_P $, and $ \Gamma' \vdash \PPar{\AECR{P_{1-3}}}{\AECR[E']{P_{4-6}}} \triangleright \Delta_P $.
			By the rules in Figure~\ref{fig:typingRules} the proof of the judgement has to start (modulo Rule~(\textsf{S2})) with Rule~$ (\mathsf{Pa}) $, that splits $ \Delta_P $ such that $ \Delta_P = \Delta_{\ECR, P1-3} \otimes \Delta_{\ECR', P4-6} $. Again by Lemma~\ref{lem:typeEC}~(1), there exist $ \Delta_{P1-3} $, $ \Delta_{P4-6} $, $ \Delta_{\ECR} $, $ \Delta_{\ECR'} $, $ \Gamma_1 $, and $ \Gamma_2 $ such that $ \Gamma' \subseteq \Gamma_1 $, $ \Gamma' \subseteq \Gamma_2 $, $ \Delta_{\ECR, P1-3} = \Delta_{\ECR} \otimes \Delta_{P1-3} $, $ \Delta_{\ECR', P4-6} = \Delta_{\ECR'} \otimes \Delta_{P4-6} $, $ \Gamma_1 \vdash P_{1-3} \triangleright \Delta_{P1-3} $, and $ \Gamma_2 \vdash P_{4-6} \triangleright \Delta_{P4-6} $. Then:
			\begin{align*}
				\dfrac{\dfrac{\begin{array}{l} \dfrac{\Gamma_1 \vdash P_1 \triangleright \Delta_{P_1}, \Typed{\AT{\Chan[k]}{\Role^3}}{T_{1, j}} \quad \vdash \Typed{\tilde{\Args[v]}}{\tilde{\Sort}_j'}}{\Gamma_1 \vdash \PSend{\Chan[k]}{\Role_3}{\Role_4}{\Labe_j}{\tilde{\Args[v]}}{P_1} \triangleright \Delta_{P1}'}(\mathsf{S}) \quad \Gamma_1 \vdash P_2 \triangleright \Delta_{P2} \end{array}}{\Gamma_1 \vdash \PPar{\PSend{\Chan[k]}{\Role_3}{\Role_4}{\Labe_j}{\tilde{\Args[v]}}{P_1}}{P_2} \triangleright \Delta_{P1}' \otimes \Delta_{P2}}(\mathsf{Pa}) \begin{array}{l} \Gamma_1 \vdash P_3 \triangleright \Delta_{P3}, \Typed{\AT{\Chan[k]_1}{\Role_1}}{T_1'}\\ \vdash \Typed{\tilde{\Args}_1}{\tilde{\Sort}} \quad \vdash \Typed{\tilde{\Args[v]}_1}{\tilde{\Sort}} \end{array}}{\Gamma_1 \vdash \POpt{\Role_1}{\tilde{\Role}}{\PPar{\PSend{\Chan[k]}{\Role_3}{\Role_4}{\Labe_j}{\tilde{\Args[v]}}{P_1}}{P_2}}{\tilde{\Args}_1}{\tilde{\Args[v]}_1}{P_3} \triangleright \Delta_{P1-3}}(\mathsf{Opt})
			\end{align*}
			where $ \nexists \Role', \tilde{\Sort[K]} \logdot \Typed{\Role'}{\OV{\tilde{\Sort[K]}}} \in \Delta_{P1-2} $ and
			\begin{align*}
				\Delta_{P1}' & = \Delta_{P1}, \Typed{\AT{\Chan[k]}{\Role_3}}{\LTSend{\Role_4}{_{i \in \indexSet} \Set{ \LTLab{\Labe_i}{\Typed{\tilde{\Args[z]}_i}{\tilde{\Sort}_i'}}{T_{1, i}} }}}\\
				\Delta_{P1}' \otimes \Delta_{P2} & = \Delta_{P1-2}, \Typed{\AT{\Chan[k]_1}{\Role_1}}{T_1}, \Typed{\Role_1}{\OV{\tilde{\Sort}}}\\
				\Delta_{P1-3} & = \Delta_{P1-2} \otimes \Delta_{P3}, \Typed{\AT{\Chan[k]_1}{\Role_1}}{\LTOpt{\tilde{\Role}}{T_1}{\Typed{\tilde{\Args[y]}_1}{\tilde{\Sort}}}{T_1'}}
			\end{align*}
			Since $ \Delta_{P1}' \otimes \Delta_{P2} $ is defined and because $ \Chan[s] \neq \Chan[k] $, we obtain
			\begin{align*}
				\dfrac{\dfrac{\begin{array}{l} \Gamma' \vdash P_1 \triangleright \Delta_{P1}, \Typed{\AT{\Chan[k]}{\Role_3}}{T_{1, j}} \quad \Gamma' \vdash P_2 \triangleright \Delta_{P2} \end{array}}{\Gamma' \vdash \PPar{P_1}{P_2} \triangleright \left( \Delta_{P1}, \Typed{\AT{\Chan[k]}{\Role_3}}{T_{1, j}} \right) \otimes \Delta_{P2}}(\mathsf{Pa}) \quad \begin{array}{l} \Gamma' \vdash P_3 \triangleright \Delta_{P3}, \Typed{\AT{\Chan[k]_1}{\Role_1}}{T_1'}\\ \vdash \Typed{\tilde{\Args}_1}{\tilde{\Sort}} \quad \vdash \Typed{\tilde{\Args[v]}_1}{\tilde{\Sort}} \end{array}}{\Gamma' \vdash \POpt{\Role_1}{\tilde{\Role}}{\PPar{P_1}{P_2}}{\tilde{\Args}_1}{\tilde{\Args[v]}_1}{P_3} \triangleright \Delta_{P1-3}'}(\mathsf{Opt})
			\end{align*}
			where $ \left( \Delta_{P1}, \Typed{\AT{\Chan[k]}{\Role_3}}{T_{1, j}} \right) \otimes \Delta_{P2} = \Delta_{P1-2}', \Typed{\AT{\Chan[k]_1}{\Role_1}}{T_1}, \Typed{\Role_1}{\OV{\tilde{\Sort}}} $ and
			\begin{align*}
				\Delta_{P1-3}' = \Delta_{P1-2}' \otimes \Delta_{P3}, \Typed{\AT{\Chan[k]_1}{\Role_1}}{\LTOpt{\tilde{\Role}}{T_1}{\Typed{\tilde{\Args[y]}_1}{\tilde{\Sort}}}{T_1'}}
			\end{align*}
			Note that $ \Delta_{P1-3}' $ is obtained from $ \Delta_{P1-3} $ by reducing a capability on $ \AT{\Chan[k]}{\Role_3} $.
			With Lemma~\ref{lem:typeEC}~(2), then $ \Gamma' \vdash \AECR{\POpt{\Role_1}{\tilde{\Role}}{\PPar{P_1}{P_2}}{\tilde{\Args}_1}{\tilde{\Args[v]}_1}{P_3}} \triangleright \Delta_{\ECR, P1-3}' $, where $ \Delta_{\ECR, P1-3}' = \Delta_{P1-3}' \otimes \Delta_{\ECR} $.

			Moreover
			\begin{align*}
				\dfrac{\begin{array}{l} \dfrac{\begin{array}{l} \dfrac{\left( \Gamma_2 \vdash P_{4, i} \triangleright \Delta_{P4}, \Typed{\AT{\Chan[k]}{\Role_4}}{T_{4, i}} \quad \vdash \Typed{\tilde{\Args}_i}{\tilde{\Sort}_i'} \right)_{i \in \indexSet}}{\Gamma_2 \vdash \PGet{\Chan[k]}{\Role_3}{\Role_4}{_{i \in \indexSet} \Set{ \PLab{\Labe_i}{\tilde{\Args}_i}{P_{4, i}} }} \triangleright \Delta_{P4}'}(\mathsf{C}) \quad \Gamma_2 \vdash P_5 \triangleright \Delta_{P5} \end{array}}{\Gamma_2 \vdash \PPar{\PGet{\Chan[k]}{\Role_3}{\Role_4}{_{i \in \indexSet} \Set{ \PLab{\Labe_i}{\tilde{\Args}_i}{P_{4, i}} }}}{P_5} \triangleright \Delta_{P4}' \otimes \Delta_{P5}}(\mathsf{Pa})\\ \Gamma_2 \vdash P_6 \triangleright \Delta_{P6}, \Typed{\AT{\Chan[k]_2}{\Role_2}}{T_2'} \quad \vdash \Typed{\tilde{\Args}_2}{\tilde{\Sort}''} \quad \vdash \Typed{\tilde{\Args[v]}_2}{\tilde{\Sort}''} \end{array}}{\Gamma_2 \vdash \POpt{\Role_2}{\tilde{\Role}}{\PPar{\PGet{\Chan[k]}{\Role_3}{\Role_4}{_{i \in \indexSet} \Set{ \PLab{\Labe_i}{\tilde{\Args}_i}{P_{4, i}} }}}{P_5}}{\tilde{\Args}_2}{\tilde{\Args[v]}_2}{P_6} \triangleright \Delta_{P4-6}}(\mathsf{Opt})
			\end{align*}
			where $ \nexists \Role', \tilde{\Sort[K]} \logdot \Typed{\Role'}{\OV{\tilde{\Sort[K]}}} \in \Delta_{P4-5} $ and
			\begin{align*}
				\Delta_{P4}' & = \Delta_{P4}, \Typed{\AT{\Chan[k]}{\Role_4}}{\LTGet{\Role_3}{_{i \in \indexSet} \Set{ \LTLab{\Labe_i}{\Typed{\tilde{\Args[z]}'}{\tilde{\Sort}_i'}}{T_{4, i}} }}}\\
				\Delta_{P4}' \otimes \Delta_{P5} & = \Delta_{P4-5}, \Typed{\AT{\Chan[k]_2}{\Role_2}}{T_2}, \Typed{\Role_2}{\OV{\tilde{\Sort}''}}\\
				\Delta_{P4-6} & = \Delta_{P4-5} \otimes \Delta_{P6}, \Typed{\AT{\Chan[k]_2}{\Role_2}}{\LTOpt{\tilde{\Role}}{T_2}{\Typed{\tilde{\Args[y]}_2}{\tilde{\Sort}''}}{T_2'}}
			\end{align*}
			By $ \Gamma_2 \vdash P_{4, j} \triangleright \Delta_{P4}, \Typed{\AT{\Chan[k]}{\Role_4}}{T_{4, j}} $, $ \vdash \Typed{\tilde{\Args[v]}}{\tilde{\Sort}_j'} $, $ \vdash \Typed{\tilde{\Args[x]}_j}{\tilde{\Sort}_j'} $, and Lemma~\ref{lem:typeSubstB}, we have $ \Gamma_2 \vdash P_{4, j}\!\Set[]{ \Subst{\tilde{\Args[v]}}{\tilde{\Args}_j} } \triangleright \Delta_{P4}, \Typed{\AT{\Chan[k]}{\Role_4}}{T_{4, j}} $.
			Then
			\begin{align*}
				\dfrac{\dfrac{\begin{array}{l} \Gamma_2 \vdash P_{4, j}\!\Set[]{ \Subst{\tilde{\Args[v]}}{\tilde{\Args}_j} } \triangleright \Delta_{P4}, \Typed{\AT{\Chan[k]}{\Role_4}}{T_{4, j}}\\ \Gamma_2 \vdash P_5 \triangleright \Delta_{P5} \end{array}}{\Gamma_2 \vdash \PPar{P_4\!\Set[]{ \Subst{\tilde{\Args[v]}}{\tilde{\Args}_j} }}{P_5} \triangleright \Delta_{P4 - 5}''}(\mathsf{Pa}) \begin{array}{l} \Gamma_2 \vdash P_6 \triangleright \Delta_{P6}, \Typed{\AT{\Chan[k]_2}{\Role_2}}{T_2'}\\ \vdash \Typed{\tilde{\Args}_2}{\tilde{\Sort}''} \quad \vdash \Typed{\tilde{\Args[v]}_2}{\tilde{\Sort}''} \end{array}}{\Gamma_2 \vdash \POpt{\Role_2}{\tilde{\Role}}{\PPar{P_4\!\Set[]{ \Subst{\tilde{\Args[v]}}{\tilde{\Args}_j} }}{P_5}}{\tilde{\Args}_2}{\tilde{\Args[v]}_2}{P_6} \triangleright \Delta_{P4-6}'}(\mathsf{Opt})
			\end{align*}
			where $ \nexists \Role', \tilde{\Sort[K]} \logdot \Typed{\Role'}{\OV{\tilde{\Sort[K]}}} \in \Delta_{P4-5}''' $ and
			\begin{align*}
				\Delta_{P4-5}'' & = \left( \Delta_{P4}, \Typed{\AT{\Chan[k]}{\Role_4}}{T_{4, j}} \right) \otimes \Delta_5 = \Delta_{P4-5}''', \Typed{\AT{\Chan[k]_2}{\Role_2}}{T_2}, \Typed{\Role_2}{\OV{\tilde{\Sort}''}}\\
				\Delta_{P4-6}' & = \Delta_{P4-5}''' \otimes \Delta_{P6}, \Typed{\AT{\Chan[k]_2}{\Role_2}}{\LTOpt{\tilde{\Role}}{T_2}{\Typed{\tilde{\Args[y]}_2}{\tilde{\Sort}'}}{T_2'}}
			\end{align*}
			Hence $ \Delta_{P4-6}' $ is obtained from $ \Delta_{P4-6} $ by changing a capabilities for $ \Role_4 $.
			With Lemma~\ref{lem:typeEC}~(2), then $ \Gamma' \vdash \AECR[E']{\POpt{\Role_2}{\tilde{\Role}}{\PPar{P_4\!\Set[]{ \Subst{\tilde{\Args[v]}}{\tilde{\Args}_j} }}{P_5}}{\tilde{\Args}_2}{\tilde{\Args[v]}_2}{P_6}} \triangleright \Delta_{\ECR', P4-6}' $, where $ \Delta_{\ECR', P4-6}' = \Delta_{P4-6}' \otimes \Delta_{\ECR'} $.

			Since $ \Delta_{\ECR, P1-3} \otimes \Delta_{\ECR', P4-5} $ is defined, so is $ \Delta_{\ECR, P1-3}' \otimes \Delta_{\ECR', P4-6}' $.
			Hence, by Rule~$ (\mathsf{Pa}) $, the judgement $ \Gamma' \vdash \AECR{\POpt{\Role_1}{\tilde{\Role}}{\PPar{P_1}{P_2}}{\tilde{\Args}_1}{\tilde{\Args[v]}_1}{P_3}} \triangleright \Delta_{\ECR, P1-3}' $, and $ \Gamma' \vdash \AECR[E']{\POpt{\Role_2}{\tilde{\Role}}{\PPar{P_4\!\Set[]{ \Subst{\tilde{\Args[v]}}{\tilde{\Args}_j} }}{P_5}}{\tilde{\Args}_2}{\tilde{\Args[v]}_2}{P_6}} \triangleright \Delta_{\ECR', P4-6}' $, we have
			\begin{align*}
				\Gamma' \vdash \PPar{\AECR{\POpt{\Role_1}{\tilde{\Role}}{\PPar{P_1}{P_2}}{\tilde{\Args}_1}{\tilde{\Args[v]}_1}{P_3}}}{\AECR[E']{\POpt{\Role_2}{\tilde{\Role}}{\PPar{P_4\!\Set[]{ \Subst{\tilde{\Args[v]}}{\tilde{\Args}_j} }}{P_5}}{\tilde{\Args}_2}{\tilde{\Args[v]}_2}{P_6}}} \triangleright \Delta_{\ECR, P1-3}' \otimes \Delta_{\ECR', P4-6}'
			\end{align*}
			With Lemma~\ref{lem:typeEC}~(2) we conclude with $ \Gamma \vdash P' \triangleright \left( \Delta_{\ECR, P1-3}' \otimes \Delta_{\ECR', P4-6}' \right) \otimes \Delta_{\EC} $.

			It remains to show that $ \Delta_{\EC} \otimes \Delta_P \mapsto \left( \Delta_{\ECR, P1-3}' \otimes \Delta_{\ECR', P4-6}' \right) \otimes \Delta_{\EC} $.
			Because $ \Delta_P = \left( \Delta_{\ECR} \otimes \Delta_{P1-3} \right) \otimes \left( \Delta_{\ECR'} \otimes \Delta_{P4-6} \right) $ and because $ \Delta_{P1-3}' $ and $ \Delta_{P4-6}' $ are obtained from $ \Delta_{P1-3} $ and $ \Delta_{P4-6} $ by
			\begin{itemize}
				\item reducing $ \Typed{\AT{\Chan[k]}{\Role_3}}{\LTSend{\Role_4}{_{i \in \indexSet} \Set{ \LTLab{\Labe_i}{\Typed{\tilde{\Args[z]}_i}{\tilde{\Sort}_i'}}{T_{1, i}} }}} $ to $ \Typed{\AT{\Chan[k]}{\Role_3}}{T_{1, j}} $, and
				\item reducing $ \Typed{\AT{\Chan[k]}{\Role_4}}{\LTGet{\Role_3}{_{i \in \indexSet} \Set{ \LTLab{\Labe_i}{\Typed{\tilde{\Args[z]}'}{\tilde{\Sort}_i'}}{T_{4, i}} }}} $ to $ \Typed{\AT{\Chan[k]}{\Role_4}}{T_{4, j}} $
			\end{itemize}
			we have
			\begin{align*}
				\dfrac{\dfrac{\dfrac{}{\left( \Delta_{P1-2} \otimes \Delta_{P3}, \Typed{\AT{\Chan[k]_1}{\Role_1}}{T_1} \right) \otimes \left( \Delta_{P4-5} \otimes \Delta_{P6}, \Typed{\AT{\Chan[k]_2}{\Role_2}}{T_2} \right) \mapsto \Delta_{P1-3}' \otimes \Delta_{P4-6}'} (\mathsf{comS}')}{\Delta_{P1-3} \otimes \Delta_{P4-6} \mapsto \Delta_{P1-3}' \otimes \Delta_{P4-6}'} (\mathsf{optCom})}{\Delta_{\EC} \otimes \Delta_P \mapsto \left( \Delta_{\ECR, P1-3}' \otimes \Delta_{\ECR', P4-6}' \right) \otimes \Delta_{\EC}} (\mathsf{par})
			\end{align*}
	\end{description}
	To obtain the proof for the smaller type system we simply omit the Cases~$ (\mathsf{subs}) $, $ (\mathsf{join}) $, $ (\mathsf{cSO}) $, $ (\mathsf{cCO}) $, and $ (\mathsf{jO}) $. This is possible, because no other case relies on one of the Rules~$ (\mathsf{P}) $, $ (\mathsf{J}) $, or $ (\mathsf{New}) $.
\end{proof}

\subsection{Progress and Completion}

Apart from subject reduction we are interested in progress and completion.
Following \cite{DemangeonHonda12} we use coherence to prove progress and completion.
A session environment is \emph{coherent} if it is composed of the projections of well-formed global types with global types for all external invitations (also guarded once). In other words if the session environment is coherent we can use the projection rules in the reversed direction to reconstruct complete global types. In particular coherence ensures that in the case of a communication from $ \Role_1 $ to $ \Role_2 $ on a channel $ \Args $ the session environment maps the type of the sender to $ \AT{\Args}{\Role_1} $ and the type of the receiver to $ \AT{\Args}{\Role_2} $ (or vice versa). This also ensures that the type of the transmitted value and the type of the received value have to correspond and that for each sender there is the matching receiver and vice versa.

Most of the reduction rules preserve coherence.
Only the rules to call a sub-session and to handle its internal and external invitations as well as the failing of optional blocks can temporary invalidate this property.
By removing the protocol call and a strict subset of these internal and external invitations, we obtain a process and a corresponding session type that does not directly result from the projection of a global type, since it neither refers to the session initialisation containing all internal and external invitations nor to the global type of the content of this sub-session without open invitations.
A failing optional block is not a problem for the process itself, because the continuation of the process is instantiated with the default value and this process with a corresponding session environment correspond to the projection of the global type of the continuation.
But a failing optional block may cause another part of the network, \ie a parallel process, to lose coherence.
If another, parallel optional block is waiting for a communication with the former, it is doomed to fail.
This situation of a single optional block without its dual communication partner cannot result from the projection of a global type.
Due to the interleaving of steps, an execution starting in a process with a coherent session environment may lead to a state in which there are open internal and external invitations for several different protocols and/or several single optional blocks at the same time. However, coherence ensures that for all such reachable processes there is a finite sequence of steps that restores coherence and thus ensures progress and completion.

The rules of Figure~\ref{fig:sessionTypeReductions} allow to restore coherence.

\begin{lemma}
	\label{lem:coherence}
	For both type systems:\\
	If $ \Delta $ is coherent and $ \Delta \mapsto \Delta' $ then there exists $ \Delta'' $ such that $ \Delta' \mapsto^* \Delta'' $ and $ \Delta'' $ is coherent.
\end{lemma}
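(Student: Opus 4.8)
The plan is to do a case analysis on the rule of Figure~\ref{fig:sessionTypeReductions} that derives the step $\Delta \mapsto \Delta'$, exploiting the observation made just before the lemma that almost all of these rules preserve coherence and that only $(\mathsf{subs}')$, $(\mathsf{join}')$ and $(\mathsf{fail}')$ can temporarily invalidate it. For the coherence‑preserving rules---$(\mathsf{comS}')$, $(\mathsf{choice}')$, $(\mathsf{comC}')$, $(\mathsf{opt}')$, $(\mathsf{optCom})$ and $(\mathsf{succ}')$---I would argue that $\Delta'$ still reverse‑projects to well‑formed global types, since such a step merely advances every participant's local type along the matching global constructor (a communication, a choice, or an optional block that succeeds), and take $\Delta'' = \Delta'$ with zero further steps. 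The rule $(\mathsf{par})$ is handled by peeling off the $\otimes$‑wrappers: the step is ultimately one of the base rules acting on the capabilities belonging to a single session channel, and the restoring sequence I construct below for that base rule can itself be replayed under $(\mathsf{par})$, so it suffices to treat the base rules. Note also that $(\mathsf{join}')$ cannot be the \emph{first} step out of a coherent $\Delta$, because a coherent environment contains no dangling internal invitations; it only occurs while cleaning up after $(\mathsf{subs}')$, which is where I treat it.

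The first genuinely interesting case is $(\mathsf{subs}')$. Here the step replaces a $\mathtt{call}$‑capability on $\AT{\Chan[s]}{\Role''}$ by the residual parent type $T$ (which, by the projection rule for $\mathtt{call}$ in Figure~\ref{fig:projectionRules}, starts with the parallel composition of the $\mathtt{req}$/$\mathtt{ent}$ prefixes) together with $n$ internal‑ and $m$ external‑invitation capabilities on the fresh channel $\Chan[k]$. I would then fire $(\mathsf{join}')$ once for each of the $n$ internal roles; each such step consumes a $\mathtt{req}$/$\mathtt{ent}$ pair together with the matching internal invitation $\Typed{\ATI{\Chan[k]}{\Role_i}}{T'_i}$ and leaves $\Typed{\AT{\Chan[k]}{\Role_i}}{T'_i}$. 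After these $n$ steps (a finite sequence) the capabilities on $\Chan[k]$ are exactly the $\Typed{\AT{\Chan[k]}{\Role_i}}{\ProjS{G\!\Set[]{\Subst{\tilde{\Args[v]}}{\tilde{\Args[y]}}}}{}{\Role_i}}$ plus the still guarded‑once external invitations, alongside $\Typed{\AT{\Chan[s]}{\Role''}}{T}$; this is the reverse‑projection of the well‑formed global type obtained by unfolding the call, hence coherent.

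The second interesting case is $(\mathsf{fail}')$. The step removes one participant's optional block on $\AT{\Chan[s]}{\Role_1}$, turning $\LTOpt{\tilde{\Role}}{T}{\Typed{\tilde{\Args[y]}}{\tilde{\Sort}}}{T'}$ into $T'$; the dual optional blocks carried by the other roles $\Role_i \in \tilde{\Role}$ are now partnerless and doomed. I would fire $(\mathsf{fail}')$ once for each such $\Role_i$ (there are $|\tilde{\Role}|-1$ of them, so the sequence is finite), discharging the rule's typing side condition each time from the fact that these residuals descend from projections of well‑formed global types and are therefore well‑typed (cf.\ Theorem~\ref{thm:subjectReduction}); a witness process always exists, e.g.\ one that immediately returns the default values. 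Once every participant's block is gone, each $\Role_i$ carries $T'_i$, which by the two cases of the optional‑block projection rule equals $\ProjS{G'}{}{\Role_i}$, or $\LTEnd$ in parallel with $\ProjS{G'}{}{\Role_i}$; the non‑participants were already projected as $\ProjS{G'}{}{\Role_p}$, so up to structural congruence the whole environment is the reverse‑projection of the continuation $G'$, hence coherent. For the type system without sub‑sessions only this last case and the preserving rules occur, so the same argument specialises directly.

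The main obstacle I expect is verifying, in the $(\mathsf{subs}')$ and $(\mathsf{fail}')$ cases, that the cleaned‑up environment reverse‑projects to genuinely \emph{well‑formed} global types: one must check that unfolding a $\mathtt{call}$ preserves well‑kindedness, projectability (including the extra optional‑block and sub‑session conditions of Section~\ref{sec:globalTypes}) and linearity, and that deleting an optional block from a well‑formed type again yields a well‑formed type. The fiddly point is the bookkeeping between participant and non‑participant roles combined with the two projection cases for optional blocks (default values present versus absent); by contrast, the finiteness of the restoring sequences---$n$ steps, respectively $|\tilde{\Role}|-1$ steps---is immediate.
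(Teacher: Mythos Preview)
Your overall plan---case analysis on the rules of Figure~\ref{fig:sessionTypeReductions}, with clean-up sequences for $(\mathsf{subs}')$ and $(\mathsf{fail}')$---matches the paper's proof. However, two of your case groupings are wrong and would make the argument fail.

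First, $(\mathsf{succ}')$ is \emph{not} coherence-preserving. The rule removes the optional block of \emph{one} role $\Role_1$ only, replacing $\LTOpt{\tilde{\Role}}{\LTEnd}{\Typed{\tilde{\Args[y]}}{\tilde{\Sort}}}{T'}$ by $T'$; the remaining roles in $\tilde{\Role}$ still carry their optional blocks inside $\Delta$. The resulting environment cannot be reverse-projected from any global $\mathsf{opt}$-type, so it is not coherent and you cannot take $\Delta'' = \Delta'$. The paper handles $(\mathsf{succ}')$ exactly like $(\mathsf{fail}')$: after the step, apply $(\mathsf{fail}')$ to each of the other participants' surviving optional blocks on $\tilde{\Role}$ to reach the projection of the continuation $G'$. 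Your sentence ``such a step merely advances every participant's local type'' is the error---it advances only one.

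Second, $(\mathsf{choice}')$, $(\mathsf{opt}')$ and $(\mathsf{optCom})$ are not base rules: each has a premise of the form $\Delta_0 \mapsto \Delta_0'$, so you cannot simply declare them coherence-preserving. The paper treats them as induction steps: one checks that the premise environment (with the surrounding $\oplus$ or $\mathsf{opt}$ stripped) is itself coherent, invokes the induction hypothesis to get a restoring sequence for it, and then replays that sequence under the same wrapper rule to obtain the restoring sequence for $\Delta'$. Your flat case analysis skips this, and in particular for $(\mathsf{opt}')$ the inner step may well be a $(\mathsf{fail}')$ or $(\mathsf{subs}')$ that does \emph{not} preserve coherence, so the claim ``$\Delta'$ still reverse-projects'' is false there too. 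The same applies to your informal handling of $(\mathsf{par})$: the paper needs to move dual optional blocks across $\otimes$ before the induction hypothesis applies, because $\Delta_1$ alone need not be coherent even when $\Delta_1 \otimes \Delta_2$ is.

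A minor point: for $(\mathsf{comC}')$ (and likewise $(\mathsf{join}')$, which you correctly exclude) the paper argues the case is vacuous because a coherent $\Delta$ cannot contain an open $\ATE{\Chan[s]}{\Role}$ capability; your phrasing that it ``preserves coherence'' is not the right reason. Your treatment of $(\mathsf{subs}')$ is essentially the paper's, except that the paper also discharges the $m$ external invitations via $(\mathsf{comC}')$ rather than leaving them pending.
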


\begin{proof}
	Again we consider the larger type system first.
	The proof is by induction on the rules that are necessary to derive $ \Delta \mapsto \Delta' $. Here most of the cases refer to base cases; only the rules~$ (\mathsf{choice}') $, $ (\mathsf{opt}) $, $ (\mathsf{optCom}) $, and $ (\mathsf{par}) $ refer to induction steps.
	\begin{description}
		\item[Case $ (\mathsf{comS}') $:]
			In this case $ \Delta $ contains two type statements for a channel $ \Chan[k] $ on two different roles $ \Role_1 $ and $ \Role_2 $:
			\begin{align*}
				\Typed{\AT{\Chan[k]}{\Role_1}}{\LTSend{\Role_2}{_{i \in \indexSet} \Set{ \LTLab{\Labe_i}{\Typed{\tilde{\Args}_i}{\tilde{\Sort}_i}}{T_i} }}}, \Typed{\AT{\Chan[k]}{\Role_2}}{\LTGet{\Role_1}{_{i \in \indexSet{}} \Set{ \LTLab{\Labe_i}{\Typed{\tilde{\Args}_i'}{\tilde{\Sort}_i}}{T_i'} }}}
			\end{align*}
			Since $ \Delta $ is coherent and cannot contain other type statements for $ \AT{\Chan[k]}{\Role_1} $ or $ \AT{\Chan[k]}{\Role_2} $, these two local types have to be the result of the projection of a single global type describing a communication from $ \Role_1 $ to $ \Role_2 $ on channel $ \Chan[k] $.
			Moreover the possible continuations of this global type are projected into the pairs of local types $ T_i $ and $ T_i' $ such that for all $ i \in \indexSet $ the combination of $ \Typed{\AT{\Chan[k]}{\Role_1}}{T_i} $ and $ \Typed{\AT{\Chan[k]}{\Role_2}}{T_i'} $ is the result of the projection of the respective continuation of the global type.
			Because of this, $ \Delta' $ (the two type statements are replaced by $ \Typed{\AT{\Chan[k]}{\Role_1}}{T_j}, \Typed{\AT{\Chan[k]}{\Role_2}}{T_j'} $) is coherent.
		\item[Case $ (\mathsf{choice}') $:]
			In this case we have $ \Delta_1, \Typed{\AT{\Chan[s]}{\Role}}{T_i} \mapsto \Delta_1', \Typed{\AT{\Chan[s]}{\Role}}{T_i'} $ for some $ i \in \Set[]{1, 2} $ and $ \Delta = \Delta_1, \Typed{\AT{\Chan[s]}{\Role}}{T_1 \oplus T_2} $.
			By the induction hypothesis and $ \Delta_1, \Typed{\AT{\Chan[s]}{\Role}}{T_i} \mapsto \Delta_1', \Typed{\AT{\Chan[s]}{\Role}}{T_i'} $, the resulting $ \Delta_1', \Typed{\AT{\Chan[s]}{\Role}}{T_i'} $ is coherent for both instantiations of $ i $.
		\item[Case $ (\mathsf{comC}') $:]
			In this case $ \Delta = \Delta_1, \Typed{\ATE{\Chan[s]}{\Role}}{T} $ and $ \Delta' = \Delta_1, \Typed{\AT{\Chan[s]}{\Role}}{\Role} $.
			We observe that this rule does not change the types, but only lifts the status of $ \Typed{\ATE{\Chan[s]}{\Role}}{T} $ from 'needs to be invited with type $ T $' to 'is present'.
			However, because of the open external invitation $ \Typed{\ATE{\Chan[s]}{\Role}}{T} $, the session environment $ \Delta $ is not coherent and thus the implication holds trivially.
			We need this rule to restore coherence in the Case~$ (\mathsf{subs}') $.
		\item[Case $ (\mathsf{join}') $:]
			In this case $ \Delta = \Delta_1, \Typed{\AT{\Chan[s]}{\Role_1}}{\LTReq{\Prot}{\Role_3}{\tilde{\Args[v]}}{\Role_2}{T_1}}, \Typed{\ATI{\Chan[k]}{\Role_3}}{T_3}, \Typed{\AT{\Chan[s]}{\Role_2}}{\LTEnt{\Prot}{\Role_3}{\tilde{\Args[v]}'}{\Role_1}{T_2}} $ and $ \Delta' = \Delta_1, \Typed{\AT{\Chan[s]}{\Role_1}}{T_1}, \Typed{\AT{\Chan[s]}{\Role_2}}{T_2}, \Typed{\AT{\Chan[k]}{\Role_3}}{T_3} $.
			An internal invitation is accepted by reducing the corresponding request $ \mathtt{req} $ and its acceptance notification $ \mathtt{ent} $, and by lifting the status of $ \Typed{\ATI{\Chan[k]}{\Role_3}}{T_3} $ from 'needs to be invited with type $ T_3 $' to 'is present' ($ \Typed{\AT{\Chan[k]}{\Role_3}}{T_3} $).
			Again the session environment $ \Delta $ is not coherent, because of $ \Typed{\ATI{\Chan[k]}{\Role_3}}{T_3} $, and thus the implication holds trivially.
			We need this rule to restore coherence in the Case~$ (\mathsf{subs}') $.
		\item[Case $ (\mathsf{subs}') $:] In this case the $ \Typed{\AT{\Chan[s]}{\Role''}}{\LTCall{\Prot}{G}{\tilde{\Args[v]}}{\Typed{\tilde{\Args[y]}}{\tilde{\Sort}}}{\tilde{\Role}'}{T}} $ of $ \Delta $ is reduced to $ \Typed{\AT{\Chan[s]}{\Role''}}{T} $ in $ \Delta' $ and the statements $ \Typed{\ATI{\Chan[k]}{\Role_1}}{T'_1}, \ldots, \Typed{\ATI{\Chan[k]}{\Role_n}}{T'_n}, \Typed{\ATE{\Chan[k]}{\Role'_1}}{T'_{n + 1}}, \ldots, \Typed{\ATE{\Chan[k]}{\Role'_m}}{T'_{n + m}} $ are added to $ \Delta' $.
			Since $ \Delta $ is coherent $ \LTCall{\Prot}{G}{\tilde{\Args[v]}}{\Typed{\tilde{\Args[y]}}{\tilde{\Sort}}}{\tilde{\Role}'}{T} $ results from the projection of the global type for the declaration (with $ \mathtt{let} $) of $ \Prot $ and its call $ \GTCall{\Role_A}{\Prot}{\tilde{\Role}}{\tilde{\Args[y]}}{G} $.
			The the statements $ \Typed{\ATI{\Chan[k]}{\Role_1}}{T'_1}, \ldots, \Typed{\ATI{\Chan[k]}{\Role_n}}{T'_n}, \Typed{\ATE{\Chan[k]}{\Role'_1}}{T'_{n + 1}}, \ldots, \Typed{\ATE{\Chan[k]}{\Role'_m}}{T'_{n + m}} $ refer to the open internal and external invitations.
			Because of these statements, $ \Delta' $ is not coherent but we can restore coherence by accepting all open invitations, \ie by moving to the projection of $ G $ the global type for the continuation of the call $ \GTCall{\Role_A}{\Prot}{\tilde{\Role}}{\tilde{\Args[y]}}{G} $.
			
			The open internal invitations $ \Typed{\ATI{\Chan[k]}{\Role_i}}{T'_i} $ are handled by requests $ \mathtt{req} $ and acceptance notifications $ \mathtt{ent} $ that result from the projection of the call $ \GTCall{\Role_A}{\Prot}{\tilde{\Role}}{\tilde{\Args[y]}}{G} $. These are unguarded by Rule~$ (\mathsf{subs}) $ in the type judgement.
			Since $ \Delta $ is coherent and because all internal invitations as well as their acceptance notifications are generated by the same projection of the call, $ \Delta $ has to contain exactly one pair $ \Typed{\AT{\Chan[s]}{\Role_i'}}{\LTReq{\Prot}{\Role_i}{\tilde{\Args[v]}}{\Role_i''}{T_i''}}, \Typed{\AT{\Chan[s]}{\Role_i''}}{\LTEnt{\Prot}{\Role_i}{\tilde{\Args[v]}'}{\Role_i'}{T_i'''}} $ for each $ \Typed{\ATI{\Chan[k]}{\Role_i}}{T'_i} $.
			Because of that we can reduce the open internal invitations by $ n $ applications of Rule~$ (\mathsf{join}') $.
			As result the requests and acceptance notifications are reduced to their respective continuations, and the $ \Typed{\ATI{\Chan[k]}{\Role_i}}{T'_i} $ are turned into $ \Typed{\AT{\Chan[k]}{\Role_i}}{T'_i} $.
			Accordingly the $ n $ applications of Rule~$ (\mathsf{join}') $ lead to $ \Delta \mapsto^n \Delta_1 $, where $ \Delta_1 $ is obtained from $ \Delta $ by replacing $ \Typed{\AT{\Chan[s]}{\Role''}}{\LTCall{\Prot}{G}{\tilde{\Args[v]}}{\Typed{\tilde{\Args[y]}}{\tilde{\Sort}}}{\tilde{\Role}'}{T}} $ and the corresponding $ n - 1 $ parallel acceptance notifications $ \Typed{\AT{\Chan[s]}{\Role_i''}}{\LTEnt{\Prot}{\Role_i}{\tilde{\Args[v]}'}{\Role_i'}{T_i'''}} $ by $ T' $ and $ T_i''' $, where $ T' $ is obtained from $ T $ by replacing the corresponding requests and the acceptance notification of the caller by their continuations.
			The remaining open external invitations are the only reason that prevents $ \Delta_1 $ from being coherent.

			The open external invitations $ \Typed{\ATE{\Chan[k]}{\Role'_j}}{T'_{n + j}} $ are accepted with $ m $ applications of Rule~$ (\mathsf{comC}') $ (which does not influence other parts of the session environments and also does not require other parts of $ \Delta_1 $ to contain specific local types).
			As result the $ \Typed{\ATE{\Chan[k]}{\Role'_j}}{T'_{n + j}} $ are turned into $ \Typed{\AT{\Chan[k]}{\Role'_j}}{T'_{n + j}} $ that correspond to the projection of $ m $ global types on the respective roles $ \tilde{\Role}' $ for the sub-session $ \Chan[k] $. To obtain the global type that restores coherence, these $ m $ global types of the external communication partners are placed in parallel to the global type of the continuation $ G $.
		\item[Case $ (\mathsf{opt}') $:]
			In this case we have $ \Delta_1, \Typed{\AT{\Chan[s]}{\Role_1}}{T_1} \mapsto \Delta_1', \Typed{\AT{\Chan[s]}{\Role_1}}{T_1'} $,
			\begin{align*}
				\Delta = \Delta_1, \Typed{\AT{\Chan[s]}{\Role_1}}{\LTOpt{\tilde{\Role}}{T_1}{\Typed{\tilde{\Args[y]}}{\tilde{\Sort}}}{T_2}} \quad \text{ and } \quad \Delta' = \Delta_1', \Typed{\AT{\Chan[s]}{\Role_1}}{\LTOpt{\tilde{\Role}}{T_1'}{\Typed{\tilde{\Args[y]}}{\tilde{\Sort}}}{T_2}}
			\end{align*}
			Since $ \Delta_1, \Typed{\AT{\Chan[s]}{\Role_1}}{T_1} $ results from $ \Delta $ by removing an optional block and its continuation while extracting the content of the optional block and since $ \Delta $ is coherent, $ T_1 $ is the result of projecting the global type representing the content of the optional block and thus $ \Delta_1, \Typed{\AT{\Chan[s]}{\Role_1}}{T_1} $ is also coherent.
			Then, by the induction hypothesis, $ \Delta_1, \Typed{\AT{\Chan[s]}{\Role_1}}{T_1} \mapsto \Delta_1', \Typed{\AT{\Chan[s]}{\Role_1}}{T_1'} $ implies that there is some $ \Delta_1'' $ such that $ \Delta_1', \Typed{\AT{\Chan[s]}{\Role_1}}{T_1'} \mapsto^* \Delta_1'' $ and $ \Delta_1'' $ is coherent.
			Note that this sequence may reduce the local type $ T_1' $ assigned to $ \AT{\Chan[s]}{\Role_1} $ to $ T_1'' = \LTEnd $.
			By applying Rule~$ (\mathsf{opt}') $ around each step of $ \Delta_1', \Typed{\AT{\Chan[s]}{\Role_1}}{T_1'} \mapsto^* \Delta_1'' $ we obtain the derivation $ \Delta \mapsto^* \Delta_1''' $, where $ \Delta_1''' $ is obtained from $ \Delta_1'' $ by replacing $ \Typed{\AT{\Chan[s]}{\Role_1}}{T_1''} $ in $ \Delta_1'' $ by $ \Typed{\AT{\Chan[s]}{\Role_1}}{\LTOpt{\tilde{\Role}}{T_1''}{\Typed{\tilde{\Args[y]}}{\tilde{\Sort}}}{T_2}} $.
			Since $ \Delta_1'' $ is coherent, $ T_1'' $ is the result of a projection of a global type and the remaining type statements add to a coherent session environment.
			Since $ \Delta $ is coherent, $ T_2 $ is the result of a projection of a global type and the parts of $ \Delta $ that are not changed in $ \Delta \mapsto^* \Delta_1''' $ contain the dual projection of the optional block.
			Because of this, $ \Delta_1''' $ is coherent.
		\item[Case $ (\mathsf{optCom}) $:]
			In this case we have $ \Delta_1, \Typed{\AT{\Chan[s]}{\Role_1}}{T_1}, \Typed{\AT{\Chan[s]}{\Role_2}}{T_2} \mapsto \Delta_1', \Typed{\AT{\Chan[s]}{\Role_1}}{T_1'}, \Typed{\AT{\Chan[s]}{\Role_2}}{T_2'} $,
			\begin{align*}
				\Delta &= \Delta_1, \Typed{\AT{\Chan[s]}{\Role_1}}{\LTOpt{\tilde{\Role}}{T_1}{\Typed{\tilde{\Args[y]}_1}{\tilde{\Sort}_1}}{T_3}}, \Typed{\AT{\Chan[s]}{\Role_2}}{\LTOpt{\tilde{\Role}}{T_2}{\Typed{\tilde{\Args[y]}_2}{\tilde{\Sort}_2}}{T_4}} \quad \text{ and}\\
				\Delta' &= \Delta_1', \Typed{\AT{\Chan[s]}{\Role_1}}{\LTOpt{\tilde{\Role}}{T_1'}{\Typed{\tilde{\Args[y]}_1}{\tilde{\Sort}_1}}{T_3}}, \Typed{\AT{\Chan[s]}{\Role_2}}{\LTOpt{\tilde{\Role}}{T_2'}{\Typed{\tilde{\Args[y]}_2}{\tilde{\Sort}_2}}{T_4}}
			\end{align*}
			Since $ \Delta_1, \Typed{\AT{\Chan[s]}{\Role_1}}{T_1}, \Typed{\AT{\Chan[s]}{\Role_2}}{T_2} $ results from $ \Delta $ by removing two optional blocks and their continuations while extracting the content of the optional blocks and since $ \Delta $ is coherent, $ T_1 $ and $ T_2 $ are the result of projecting the global types representing the content of the optional blocks and thus $ \Delta_1, \Typed{\AT{\Chan[s]}{\Role_1}}{T_1}, \Typed{\AT{\Chan[s]}{\Role_2}}{T_2} $ is also coherent.
			Then, by the induction hypothesis, $ \Delta_1, \Typed{\AT{\Chan[s]}{\Role_1}}{T_1}, \Typed{\AT{\Chan[s]}{\Role_2}}{T_2} \mapsto \Delta_1', \Typed{\AT{\Chan[s]}{\Role_1}}{T_1'}, \Typed{\AT{\Chan[s]}{\Role_2}}{T_2'} $ implies that there is some $ \Delta_1'' $ such that $ \Delta_1', \Typed{\AT{\Chan[s]}{\Role_1}}{T_1'}, \Typed{\AT{\Chan[s]}{\Role_2}}{T_2'} \mapsto^* \Delta_1'' $ and $ \Delta_1'' $ is coherent.
			Let this sequence reduce $ T_1' $ and $ T_2' $ to $ T_1'' $ and $ T_2'' $.
			By applying Rule~$ (\mathsf{optCom}') $ around each step of $ \Delta_1', \Typed{\AT{\Chan[s]}{\Role_1}}{T_1'}, \Typed{\AT{\Chan[s]}{\Role_2}}{T_2'} \mapsto^* \Delta_1'' $ we obtain the derivation $ \Delta \mapsto^* \Delta_1''' $, where $ \Delta_1''' $ is obtained from $ \Delta_1'' $ by replacing $ \Typed{\AT{\Chan[s]}{\Role_1}}{T_1''}, \Typed{\AT{\Chan[s]}{\Role_2}}{T_2''} $ in $ \Delta_1'' $ by $ \Typed{\AT{\Chan[s]}{\Role_1}}{\LTOpt{\tilde{\Role}}{T_1''}{\Typed{\tilde{\Args[y]}_1}{\tilde{\Sort}_1}}{T_3}}, \Typed{\AT{\Chan[s]}{\Role_2}}{\LTOpt{\tilde{\Role}}{T_2''}{\Typed{\tilde{\Args[y]}_2}{\tilde{\Sort}_2}}{T_4}} $.
			Since $ \Delta_1'' $ is coherent, $ T_1'' $ and $ T_2'' $ are the result of a projection of a global type and the remaining type statements add to a coherent session environment.
			Since $ \Delta $ is coherent, $ T_3 $ and $ T_4 $ are the result of a projection of a global type.
			Because of this, $ \Delta_1''' $ is coherent.
		\item[Case $ (\mathsf{fail}') $:]
			In this case
			\begin{align*}
				\Delta = \Delta_1 \otimes \Delta_2, \Typed{\AT{\Chan[s]}{\Role_1}}{\LTOpt{\tilde{\Role}}{T_1}{\Typed{\tilde{\Args[y]}}{\tilde{\Sort}}}{T_1'}} \quad \text{ and } \quad \Delta' = \Delta_2, \Typed{\AT{\Chan[s]}{\Role_1}}{T_1'}
			\end{align*}
			and $ \Gamma \vdash P \triangleright \Delta_1, \Typed{\AT{\Chan[s]}{\Role_1}}{T}, \Typed{\Role_1}{\OV{\tilde{\Sort}}} $ for some $ \Gamma, P, \tilde{\Sort} $.
			Since $ \Delta $ is coherent, $ \Delta_2 $ contains all optional blocks with participants $ \tilde{\Role} $ that depend on the failed block.
			By one more application of Rule~$ (\mathsf{fail}') $ for each such block, we remove these optional blocks to avoid deadlocked communication attempts with the former failed block, \ie we have $ \Delta \mapsto \Delta' \mapsto^* \Delta'' $ such that $ \Delta'' $ is obtained by reducing statements of the form $ \Typed{\AT{\Chan[s]}{\Role_2}}{\LTOpt{\tilde{\Role}}{T_2}{\Typed{\tilde{\Args[y]}}{\tilde{\Sort}}}{T_2'}} $ with $ \Role_2 \in \tilde{\Role} $ in $ \Delta' $ to $ \Typed{\AT{\Chan[s]}{\Role_2}}{T_2'} $.
			Since $ \Delta $ is coherent, $ T_1' $ and all the $ T_2' $ are projections of global types for the continuations of the respective blocks.
			Because of that, $ \Delta'' $ is coherent.
		\item[Case $ (\mathsf{succ}') $:]
			In this case
			\begin{align*}
				\Delta = \Delta_1, \Typed{\AT{\Chan[s]}{\Role_1}}{\LTOpt{\tilde{\Role}}{\LTEnd}{\Typed{\tilde{\Args[y]}}{\tilde{\Sort}}}{T_1}} \quad \text{ and } \quad \Delta' = \Delta_1, \Typed{\AT{\Chan[s]}{\Role_1}}{T_1}
			\end{align*}
			Since $ \Delta $ is coherent, $ T_1 $ is a projection of a global type (for the continuation of the considered optional block).
			Applying Rule~$ (\mathsf{fail}') $ as in the last case, we reduce all optional blocks on the same participants.
			We obtain $ \Delta \mapsto \Delta' \mapsto^* \Delta'' $, where $ \Delta'' $ is obtained from $ \Delta' $ by reducing statements of the form $ \Typed{\AT{\Chan[s]}{\Role_2}}{\LTOpt{\tilde{\Role}}{T_2}{\Typed{\tilde{\Args[y]}}{\tilde{\Sort}}}{T_2'}} $ with $ \Role_2 \in \tilde{\Role} $ in $ \Delta' $ to $ \Typed{\AT{\Chan[s]}{\Role_2}}{T_2'} $.
			Since $ \Delta $ is coherent, $ T_1' $ and all the $ T_2' $ are projections of global types for the continuations of the respective blocks.
			Because of that, $ \Delta'' $ is coherent.
		\item[Case $ (\mathsf{par}) $:]
			In this case we have $ \Delta_1 \mapsto \Delta_1' $,
			\begin{align*}
				\Delta = \Delta_1 \otimes \Delta_2 \quad \text{ and } \quad \Delta' = \Delta_1' \otimes \Delta_2
			\end{align*}
			Since $ \Delta $ is coherent, either $ \Delta_1 $ is coherent or there are some optional blocks in $ \Delta_2 $ that are missing in $ \Delta_1 $ to turn it into a coherent session environment.
			In the latter case we can move the respective blocks over $ \otimes $ and, by applying Rule~$ (\mathsf{par}) $, obtain a derivation $ \Delta_3 \mapsto \Delta_3' $ such that $ \Delta_3 $ is coherent. Let $ \Delta_2' $ be the remainder of $ \Delta_2 $, \ie $ \Delta_1 \otimes \Delta_2 = \left( \Delta_1 \otimes \Delta_4 \right) \otimes \Delta_2' $ and $ \Delta_3 = \Delta_1 \otimes \Delta_4 $.
			Since we can also move $ \emptyset $ this way, the second case is more general.
			By the induction hypothesis, then there is some $ \Delta_3'' $ such that $ \Delta_3' \mapsto^* \Delta_3'' $ and $ \Delta_3'' $ is coherent.
			Since $ \Delta_3 \otimes \Delta_2' $ is defined, so is $ \Delta_3'' \otimes \Delta_2' $.
			Hence we obtain $ \Delta \mapsto \Delta' \mapsto \Delta_3'' \otimes \Delta_2' $.
			Since $ \Delta $ is coherent and $ \Delta_2' $ does not contain optional blocks with counterparts in $ \Delta_3 $, we conclude that $ \Delta_2' $ is coherent.
			With the coherence of $ \Delta_3'' $, then $ \Delta_3'' \otimes \Delta_2' $ is coherent.
	\end{description}
	For the type system without sub-sessions the Rules~$ (\mathsf{subs}') $ and $ (\mathsf{join}') $ are superfluous. Since $ (\mathsf{subs}') $ is the only case that relies on the presence of these two rules, these two cases can be removed and the statement holds for the smaller type system.
\end{proof}

Let weak coherence describe the session environments that only temporary lost coherence.
More precisely, a session environment $ \Delta $ is \emph{weakly coherent} if there is some $ \Delta' $ such that $ \Delta' $ is coherent and $ \Delta' \mapsto \Delta $.
As it can be shown easily by an induction on the rules of Figure~\ref{fig:sessionTypeReductions} and the definition of coherence, a weakly coherent session environment results from missing optional blocks for pairs of dual communication partners and/or missing $ \mathtt{call} $-type statements together with a strict subset of missing open invitations of the respective protocol.
Note that, due to the open external invitations for the parent session, all presented examples are not coherent but only weakly coherent.
Since weak coherence results from reducing a coherent session environment, we can always perform some more reductions to restore coherence.

\begin{lemma}
	\label{lem:weakCoherence}
	For both type systems:\\
	If $ \Delta $ is weakly coherent then there exists $ \Delta' $ such that $ \Delta \mapsto^* \Delta' $ and $ \Delta' $ is coherent.
\end{lemma}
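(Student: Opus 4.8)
The plan is to derive this lemma immediately from Lemma~\ref{lem:coherence}. By the definition of weak coherence, if $\Delta$ is weakly coherent then there exists a coherent session environment $\Delta_0$ with $\Delta_0 \mapsto \Delta$. Applying Lemma~\ref{lem:coherence} to the step $\Delta_0 \mapsto \Delta$ — which is legitimate precisely because $\Delta_0$ is coherent — yields some $\Delta'$ with $\Delta \mapsto^* \Delta'$ and $\Delta'$ coherent, which is exactly the claim. So the proof is essentially a one-line appeal to the previous lemma.

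Concretely, first I would unfold the definition of weak coherence to expose the witness $\Delta_0$ together with the single reduction $\Delta_0 \mapsto \Delta$. Then I would invoke Lemma~\ref{lem:coherence} with this step as its hypothesis and read off the conclusion. Since Lemma~\ref{lem:coherence} holds for both type systems, the argument is uniform and needs no separate treatment of the calculus with or without sub-sessions; the only rules that can temporarily destroy coherence — the sub-session call with its internal and external invitations, and the abortion of an optional block whose dual communication partner is still pending — are exactly the ones that Lemma~\ref{lem:coherence} already knows how to repair by a finite sequence of $\mapsto$-steps (accepting the missing invitations, respectively aborting the orphaned optional blocks).

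There is essentially no obstacle here: this lemma merely repackages Lemma~\ref{lem:coherence}, turning its implicit ``one $\mapsto$-step away from coherent'' hypothesis into the explicit definiendum ``weakly coherent''. The structural characterisation of weakly coherent environments sketched in the surrounding text — missing optional blocks for pairs of dual communication partners, and/or a missing $\mathtt{call}$-type statement together with a strict subset of the corresponding open invitations — is not required for the proof; it is only intuition and could, if desired, be recorded as a separate observation proved by induction on the rules of Figure~\ref{fig:sessionTypeReductions}. A self-contained argument avoiding the appeal to Lemma~\ref{lem:coherence} would prove that characterisation first and then exhibit the repairing sequence directly, but this just re-derives a special case of Lemma~\ref{lem:coherence} and is strictly more work, so I would keep the short version.
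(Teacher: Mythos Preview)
Your proposal is correct and follows essentially the same approach as the paper: both arguments reduce the claim to Lemma~\ref{lem:coherence}. The paper's proof is slightly more verbose in that it explicitly recalls the two problematic cases (open invitations from a reduced $\mathtt{call}$, and orphaned optional blocks) and the corresponding repair steps from the proof of Lemma~\ref{lem:coherence}, whereas you apply that lemma as a black box---which is cleaner and entirely sufficient given the single-step definition of weak coherence.
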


\begin{proof}
	The proof for both type systems is the same except for the handling of sub-sessions and invitations that can be ignored in the simpler case.
	
	If $ \Delta $ is coherent, then choose $ \Delta' = \Delta $ and we are done.
	Otherwise, because $ \Delta $ is weakly coherent, there is some $ \Delta_0 $ such that $ \Delta_0 \mapsto^* \Delta $ and $ \Delta_0 $ is coherent.
	By recalling the proof of Lemma~\ref{lem:coherence}, then $ \Delta $ is only weakly coherent, because in comparison with $ \Delta_0 $ there are missing $ \mathtt{call} $ (due to Rule~$ (\mathsf{subs}') $) with already reduced invitations (due to the Rules~$ (\mathsf{join}') $ or $ (\mathsf{comC}') $) or missing optional blocks (due to the Rules~$ (\mathsf{fail}') $ or $ (\mathsf{succ}') $), whose counterparts are contained in $ \Delta $.
	
	For the former case, Lemma~\ref{lem:coherence} tells us that it suffices to answer the remaining invitations. Since $ \Delta_0 $ is coherent and $ \Delta_0 \mapsto^* \Delta $, all necessary internal acceptance notifications $ \mathtt{ent} $ are contained in $ \Delta $ and thus the invitations can be removed as described in Lemma~\ref{lem:coherence} in the Case~$ (\mathsf{subs}') $ using Rule~$ (\mathsf{join}') $ followed by the removal of the external invitations using Rule~$ (\mathsf{comC}') $.
	
	In the latter case, Lemma~\ref{lem:coherence} tells us that all problematic optional blocks can be removed by Rule~$ (\mathsf{fail}') $ that can be applied whenever there is an unguarded optional block.
	
	Thus, following Lemma~\ref{lem:coherence}, we can remove all problematic open invitations and optional blocks without counterparts and obtain $ \Delta \mapsto^* \Delta' $ such that $ \Delta' $ is coherent.
\end{proof}

Accordingly, our extension of the type system with optional blocks cannot cause deadlock, because optional blocks can always be aborted using Rule~$ (\mathsf{fail}) $.

Due to initial external invitations $ \PInp{\Chan[a]}{\Chan[s]}{\ldots} $ to the parent session, our examples are not coherent. Since this design decision allows for modularity using sub-sessions, we do not want to restrict our attention to coherent session environments. Instead, to better cover these cases, we relax the definition of coherence for initial session environments.
Let a session environment $ \Delta $ be \emph{initially coherent} if it is obtained from a coherent environment, \ie $ \Delta_0 \mapsto^* \Delta $ for some coherent $ \Delta_0 $, and neither contains internal open invitations nor optional blocks without their counterparts.

\emph{Progress} ensures that well-typed processes cannot get stuck unless their protocol requires them to.
In comparison to standard formulations of progress from literature and in comparison to \cite{DemangeonHonda12}, we add that the respective sequence of steps does not require any optional blocks to be unreliable.
We denote an optional block as \emph{unreliable} \wrt to a sequence of steps if it does fail within this sequence and else as \emph{reliant}.
In other words we ensure progress despite arbitrary (and any number of) failures of optional blocks.

\begin{theorem}[Progress]
	\label{thm:progress}
	For both type systems:\\
	If $ \Gamma \vdash P \triangleright \Delta $ such that $ \Delta $ is initially coherent, then either $ P = \PEnd $ or there exists $ P' $ such that $ P \longmapsto^+ P' $, $ \Gamma \vdash P' \triangleright \Delta' $, where $ \Delta \mapsto^* \Delta' $ and $ \Delta' $ is coherent, and $ P \longmapsto^+ P' $ does not require any optional block to be unreliable.
\end{theorem}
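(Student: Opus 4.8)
The goal is to show that a well-typed process $P$ with an initially coherent session environment $\Delta$ either equals $\PEnd$ or can perform a non-empty sequence of steps—none of which relies on an optional block failing—to a well-typed process $P'$ whose session environment $\Delta'$ is coherent and satisfies $\Delta \mapsto^* \Delta'$. The natural route is to work via the session environment. First I would record the obvious base case: if $P = \PEnd$ we are done. So assume $P \neq \PEnd$. By Lemma~\ref{lem:typedStructuralCongruence} we may work up to structural congruence and pick a convenient representative of $P$. The key preparatory observation is that, since $\Delta$ is \emph{initially coherent}, there is a coherent $\Delta_0$ with $\Delta_0 \mapsto^* \Delta$; and since $\Delta$ additionally contains neither internal open invitations nor optional blocks without their counterparts, the only way $\Delta$ can fail to be coherent is through open \emph{external} invitations to the parent session (the $\PInp{\Chan[a]}{\Chan[s]}{\ldots}$ prefixes, which are not themselves unguarded until the matching output fires). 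This restricts the ``shapes'' $\Delta$ can have, which I would catalogue once and reuse.

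\emph{Finding the step.} The heart of the argument is: from any well-typed $P \neq \PEnd$ with (initially) coherent $\Delta$, some reduction step is enabled, and one can be chosen that does not use $(\mathsf{fail})$. I would argue this by a case analysis on the structure of the unguarded subterms of $P$, guided by the shape of $\Delta$ read off via Lemma~\ref{lem:typeEC}. The reasoning mirrors the proof of Lemma~\ref{lem:coherence} but in the opposite direction—there we showed the session-environment rules $\mapsto$ can restore coherence; here I need to exhibit corresponding \emph{process} steps. Concretely: if some unguarded subterm is a communication prefix (a $\mathtt{send}/\mathtt{get}$ pair, a $\mathtt{req}/\mathtt{ent}$ pair, an external $\POut{}{}{}/\PInp{}{}{}$ pair, or a $\mathtt{call}$), then coherence of $\Delta_0$ guarantees the matching counterpart exists in $P$ (possibly inside sibling optional blocks with the same participant set, handled by the $\ECR/\ECO$ contexts and Rules $(\mathsf{cSO})$, $(\mathsf{cCO})$, $(\mathsf{jO})$), so a step in $\{(\mathsf{comS}),(\mathsf{comC}),(\mathsf{join}),(\mathsf{subs})\}$ (or its optional-block variant) fires. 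If the only enabled moves are $(\mathsf{succ})$—because an unguarded optional block has reduced its content to $\POptEnd{\Role}{\tilde{\Args[v]}}$—then we take that step, which is not $(\mathsf{fail})$. A choice $\PChoi{P_1}{P_2}$ resolves by $(\mathsf{choice})$ once one branch can move; by Rule $(\mathsf{S1})$ and the induction hypothesis applied inside the branch, it can. Recursion is unfolded by structural congruence. In each case subject reduction (Theorem~\ref{thm:subjectReduction}) gives well-typedness of the result with $\Delta \mapsto \Delta_1$.

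\emph{Iterating to coherence.} Having one good step, I would then iterate: as long as the current session environment is not coherent, the same argument yields another non-$(\mathsf{fail})$ step. To see this terminates at a coherent environment, I would invoke Lemma~\ref{lem:weakCoherence} (or re-run the relevant portion of Lemma~\ref{lem:coherence}): from a weakly/initially coherent $\Delta$ there is a \emph{finite} sequence $\Delta \mapsto^* \Delta'$ with $\Delta'$ coherent, obtained by answering the outstanding external invitations via $(\mathsf{comC}')$ and, if needed for parent-session bookkeeping, internal ones via $(\mathsf{join}')$—crucially, \emph{without} using $(\mathsf{fail}')$, since the ``initially coherent'' hypothesis rules out dangling optional blocks. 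I would then lift this finite $\mapsto$-sequence back to a finite $\longmapsto^+$-sequence on processes, using subject reduction at each step to keep the process well-typed and to maintain the $\mapsto$ correspondence, and observing that none of the lifted steps is $(\mathsf{fail})$—so every optional block encountered is reliant with respect to this sequence. Since the sequence has at least one step and ends in a coherent $\Delta'$ with $\Gamma \vdash P' \triangleright \Delta'$, we are done. Finally, the smaller type system is the obvious restriction: drop the sub-session / invitation cases, which are the only places Rules $(\mathsf{P})$, $(\mathsf{J})$, $(\mathsf{New})$, $(\mathsf{subs}')$, $(\mathsf{join}')$ intervene.

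\emph{The main obstacle.} The delicate point is the very first step—establishing that \emph{some} non-$(\mathsf{fail})$ step is always enabled—without being able to appeal directly to coherence of $\Delta$ itself (only weak/initial coherence). The danger is a configuration where the only unguarded actions are communication prefixes whose duals are \emph{guarded} (e.g.\ stuck behind an optional block that has not yet produced $\POptEnd{}{}$), so that naively one would be forced to use $(\mathsf{fail})$ to make progress. I expect to resolve this by exploiting the $\ECO/\ECR$-context structure of Rules $(\mathsf{cSO})$--$(\mathsf{jO})$ together with the invariant, extracted from $\Delta_0$'s coherence, that every optional block's content is itself the projection of a global type and therefore can always make its own internal progress (by the same inductive argument, one level down); so an unguarded optional block is never permanently stuck at its content—it will eventually reach $\POptEnd{}{}$ via $(\mathsf{succ})$ or a nested communication, never requiring $(\mathsf{fail})$. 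Making this nested induction precise—on, say, the total size of $P$ plus the nesting depth of optional blocks—is the technical core of the proof.
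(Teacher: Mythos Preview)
Your proposal contains the right ingredients and is essentially workable, but you have inverted the order of the two main phases relative to the paper, and this inversion is precisely what creates the ``main obstacle'' you worry about at the end.

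The paper proceeds as follows. Starting from an initially coherent $\Delta$, it \emph{first} uses Lemmata~\ref{lem:coherence} and~\ref{lem:weakCoherence} to answer all open external invitations on the session-environment side (a sequence $\Delta \mapsto^* \Delta_1$ that avoids $(\mathsf{fail}')$, since initial coherence rules out dangling optional blocks and internal invitations), and lifts this to process steps $P \longmapsto^* P_1$ via the typing rules and subject reduction. Only \emph{after} reaching a fully coherent $\Delta_1$ does it argue that $P_1$ (if not $\PEnd$) has an enabled non-$(\mathsf{fail})$ step: at that point coherence means $\Delta_1$ is literally built from projections of well-formed global types, so every unguarded prefix in $P_1$ has its unguarded dual available (possibly inside a matching optional block, handled by $(\mathsf{cSO})$, $(\mathsf{cCO})$, $(\mathsf{jO})$). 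One such step is taken, and then Lemma~\ref{lem:coherence} restores coherence once more---with a small extra case when the step was $(\mathsf{succ})$, where one must observe that the remaining counterpart blocks no longer need to communicate with the finished one and can themselves be driven to $(\mathsf{succ})$.

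You instead try to exhibit a non-$(\mathsf{fail})$ step \emph{directly} from the merely initially coherent $\Delta$, and only afterwards iterate to coherence. That forces you into the scenario you flag---an unguarded prefix whose dual is buried inside an optional block---and leads you to propose a nested induction on ``size of $P$ plus nesting depth''. That measure is not obviously well-founded in the presence of recursion unfolding (which you handle by structural congruence but which increases size), and in any case it is unnecessary: reorder the argument so that you reach a coherent environment \emph{before} hunting for the step, and the obstacle evaporates, because coherence guarantees the dual is unguarded. Your ``Iterating to coherence'' paragraph already contains exactly the lift-from-$\mapsto$-to-$\longmapsto$ mechanism the paper uses; you just need to run it \emph{first} rather than second.
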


\begin{proof}
	The proof is the same for both type systems.
	Assume $ \Gamma \vdash P \triangleright \Delta $ such that $ \Delta $ is initially coherent and $ P \neq \PEnd $.

	Then, by the Lemmata~\ref{lem:coherence} and \ref{lem:weakCoherence}, we can answer all open external invitations in the sequence $ \Delta \mapsto \Delta_1 $ without Rule~$ (\mathsf{fail}') $ such that $ \Delta_1 $ is coherent.
	Because of $ \Gamma \vdash P \triangleright \Delta $ and the typing rules of Figure~\ref{fig:typingRules}, we can map this sequence to $ P \longmapsto^* P_1 $ and, by Theorem~\ref{thm:subjectReduction}, $ \Gamma \vdash P_1 \triangleright \Delta_1 $.
	Since $ \Delta \mapsto^* \Delta_1 $ does not use Rule~$ (\mathsf{fail}') $, no optional block fails in $ P \longmapsto^* P_1 $.
	
	If $ P_1 = \PEnd $ then, since $ P \neq \PEnd $, there was at least one open external invitation and thus $ P \longmapsto^+ P_1 $ and we are done.
	
	If $ P_1 \neq \PEnd $ then, because of $ \Gamma \vdash P_1 \triangleright \Delta_1 $, the projection rules in Figure~\ref{fig:projectionRules}, and since $ \Delta_1 $ is coherent, $ P_1 $ contains unguarded
	\begin{compactitem}
		\item both parts (sender and receiver) of the projection of a global type for communication,
		\item all counterparts of the projection of a global type of an optional block, or
		\item (in the case of the larger type system) all internal acceptance notifications and the call guarding internal invitations and one acceptance notification that result from the projection of a global type of a sub-session call.
	\end{compactitem}
	In all three cases, coherence and the projection rules ensure that there is at least one step to reduce $ P_1 $ in which no optional block fails, \ie there is some $ P_1' $ such that $ P_1 \longmapsto P_1' $ without Rule~$ (\mathsf{fail}) $.
	By Theorem~\ref{thm:subjectReduction}, then $ \Gamma \vdash P_1' \triangleright \Delta_1' $ for some $ \Delta_1' $ such that $ \Delta_1 \mapsto \Delta_1' $.
	Since $ P $ is initially coherent and $ P \longmapsto^+ P_1' $ does not use Rule~$ (\mathsf{fail}) $, for each optional block in $ P $ there are either all matching counterpart or $ P_1 \longmapsto P_1' $ was using Rule~$ (\mathsf{succ}) $.
	
	In the former case we can use Lemma~\ref{lem:coherence}, to obtain $ \Delta' $ without using Rule~$ (\mathsf{fail}') $ such that $ \Delta \mapsto^* \Delta_1 \mapsto \Delta_1' \mapsto^* \Delta' $ and $ \Delta' $ is coherent.
	With $ \Gamma \vdash P_1' \triangleright \Delta_1' $, the typing rules in Figure~\ref{fig:typingRules}, and Theorem~\ref{thm:subjectReduction}, then $ P \longmapsto^* P_1 \longmapsto P_1' \longmapsto^* P' $ such that $ \Gamma \vdash P' \triangleright \Delta' $ and $ P \longmapsto^+ P' $ does not require any optional block to be unreliable.
	
	In the latter case, coherence ensures that the counterparts of the successfully terminated optional block does not need to communicate with this optional block.
	By repeating the above argument for the content of the counterparts (that are by coherence obtained from a global type), where is some $ P' $ such that $ P_1' \longmapsto^* P' $ without Rule~$ (\mathsf{fail}) $ and $ \Gamma \vdash P' \triangleright \Delta_1' $ that successfully resolves the remaining counterparts such that $ P \longmapsto^* P_1 \longmapsto P_1' \longmapsto^* P' $ does not require any optional block to be unreliable, $ \Delta \mapsto^* \Delta_1 \mapsto^* \Delta_1' $ and $ \Delta_1' $ is coherent.
\end{proof}

\emph{Completion} is a special case of progress for processes without infinite recursions.
It ensures that well-typed processes, without infinite recursion or a loop resulting from calling sub-sessions infinitely often, follow their protocol and then terminate.
Similarly to progress, we prove that completion holds despite arbitrary failures of optional blocks but does not require any optional block to be unreliable.

\begin{theorem}[Completion]
	\label{thm:completion}
	For both type systems:\\
	If $ \Gamma \vdash P \triangleright \Delta $ such that $ \Delta $ is initially coherent and $ P $ does not contain infinite recursions and cannot infinitely often call a sub-session, then $ P \longmapsto^* \PEnd $, $ \Gamma \vdash \PEnd \triangleright \emptyset $, and $ P \longmapsto^* \PEnd $ does not require any optional block to be unreliable.
\end{theorem}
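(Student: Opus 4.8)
The plan is to derive Completion from Progress (Theorem~\ref{thm:progress}) together with Subject Reduction (Theorem~\ref{thm:subjectReduction}) and the coherence lemmas, by iterating the progress argument along a well-founded measure. The idea is that Progress already gives us, from any well-typed process with an (initially) coherent session environment, a non-empty reliability-respecting reduction towards a new well-typed process with a coherent environment; Completion then follows once we know this process cannot go on forever.

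First I would record two auxiliary observations. (i) A coherent session environment is in particular initially coherent (take the empty $\mapsto^{*}$-sequence; a coherent environment contains neither open internal invitations nor optional blocks without counterparts), so whenever Progress hands us a $P'$ with $\Gamma \vdash P' \triangleright \Delta'$ and $\Delta'$ coherent, the hypotheses of Progress are again met at $P'$. (ii) The two syntactic hypotheses on $P$ --- that it contains no infinite recursion and cannot call a sub-session infinitely often --- are preserved by $\longmapsto$: every reduction rule either consumes a prefix, resolves an optional block via $(\mathsf{succ})$ or $(\mathsf{fail})$, or unfolds a recursion / instantiates a sub-session call, and in the latter cases the newly exposed body is a sub-term of $P$ or of a declared sub-protocol, so no fresh source of divergence is created. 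These two facts let me re-enter Progress at every stage.

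The core of the argument is a well-founded measure. The combination of the two hypotheses is exactly what rules out every infinite $\longmapsto$-sequence from $P$: such a sequence would eventually either exhaust all communication prefixes and optional blocks, or be forced to unfold some recursion infinitely often or call some sub-session infinitely often, both excluded. Since $\longmapsto$ is finitely branching (each rule fixes the redex up to the choice of evaluation context, and a term has finitely many redexes), König's Lemma yields a finite bound $\mu(P)$ on the length of the longest reduction out of $P$, and every step strictly decreases $\mu$. I would then prove the theorem by induction on $\mu(P)$: if $P = \PEnd$ we are done, with $\Gamma \vdash \PEnd \triangleright \emptyset$ by Rule~$(\mathsf{N})$; otherwise Progress gives $P \longmapsto^{+} P'$ with $\Gamma \vdash P' \triangleright \Delta'$, $\Delta \mapsto^{*} \Delta'$, $\Delta'$ coherent, and the sequence not requiring any optional block to be unreliable. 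Since this sequence is non-empty, $\mu(P') < \mu(P)$; by (i) and (ii) the induction hypothesis applies to $P'$, yielding $P' \longmapsto^{*} \PEnd$, again without making any optional block unreliable, and concatenating gives $P \longmapsto^{*} \PEnd$ as required.

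The main obstacle is the measure step: turning the paper's informal hypotheses ``contains no infinite recursion'' and ``cannot infinitely often call a sub-session'' into a usable well-founded order, in particular controlling the interaction between recursion unfolding (which via structural congruence duplicates a body) and Rule~$(\mathsf{subs})$ (which locally enlarges the process by the content of the sub-protocol), since both can increase the plain syntactic size of the term. One clean way is a lexicographic measure whose first component counts the maximal number of recursion unfoldings and sub-session instantiations still possible --- finite by hypothesis --- and whose second component is the number of communication and invitation prefixes plus optional blocks occurring in the term: recursion-unfolding and $(\mathsf{subs})$ steps strictly decrease the first component, while every other step leaves it unchanged and strictly decreases the second. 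A secondary, routine point is checking that the progress-supplied sequences compose and that the qualifier ``does not require any optional block to be unreliable'' is stable under concatenation, which is immediate, as it is a property of the multiset of rule instances used and Progress never appeals to Rule~$(\mathsf{fail})$.
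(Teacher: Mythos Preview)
Your proposal is correct and follows the same high-level strategy as the paper: iterate Progress, using that a coherent $\Delta'$ is again initially coherent, until $\PEnd$ is reached. The difference lies entirely in the termination argument.

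You argue termination at the \emph{process} level: the hypotheses exclude infinite $\longmapsto$-sequences, so (via K\"onig's Lemma or your lexicographic measure) there is a well-founded rank $\mu(P)$ on which to induct. The paper instead argues termination at the \emph{type} level: alongside $P \longmapsto^{+} P' \longmapsto^{+} \ldots$ it tracks the chain $\Delta \mapsto^{*} \Delta' \mapsto^{*} \ldots$ supplied by Subject Reduction and Progress, observes that each rule of Figure~\ref{fig:sessionTypeReductions} strictly shrinks the session environment (with $(\mathsf{subs}')$ trading a $\mathtt{call}$ for new entries, bounded by the no-infinite-sub-session hypothesis, and recursion unfolding bounded by the no-infinite-recursion hypothesis), and concludes that the chain reaches $\emptyset$; it then uses the equivalence $P' = \PEnd \iff \Delta' = \emptyset$ (which you only invoke in one direction) to finish. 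The paper's route avoids the finite-branching check and the somewhat delicate definition of your first lexicographic component, at the cost of relying on an inspection of the $\mapsto$ rules; your route is more self-contained on the operational side and would survive changes to the session-environment reduction relation. Both are sound.
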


\begin{proof}
	By the typing rules in Figure~\ref{fig:typingRules}, $ \Gamma \vdash P' \triangleright \Delta' $ implies that $ P' = \PEnd $ if and only if $ \Delta' = \emptyset $.

	By Theorem~\ref{thm:progress}, if $ \Gamma \vdash P \triangleright \Delta $ such that $ \Delta $ is initially coherent, then either $ P = \PEnd $ or there exists $ P' $ such that $ P \longmapsto^+ P' $, $ \Gamma \vdash P' \triangleright \Delta' $, where $ \Delta \mapsto^* \Delta' $ and $ \Delta' $ is coherent, and $ P \longmapsto^+ P' $ does not require any optional block to be unreliable.

	In the first case ($ P = \PEnd $) we are done.
	Otherwise, since coherence implies initial coherence, we do perform at least one step and can apply Theorem~\ref{thm:progress} on $ \Gamma \vdash P' \triangleright \Delta' $ again.
	By repeating this argument we either construct an infinite reduction sequence or reach $ \PEnd $ after finitely many steps as required.
	Remember that we equate structural congruent session environments.
	But, since we assume that $ P $ and accordingly $ \Delta $ do not do an infinite sequence of recursions, applying structural congruence cannot increase the session environment infinitely often.

	Along with the reduction sequence for processes we construct a reduction sequence $ \Delta \mapsto^* \Delta' \mapsto^* \Delta'' \mapsto^* \ldots $.
	By inspecting the rules of Figure~\ref{fig:sessionTypeReductions}, it is easy to check that each reduction step strictly reduces the according session environment.
	Rule~$ (\mathsf{subs}') $ introduces new parts to the session environment but therefore has to reduce a $ \mathtt{call} $ in another part of the session environment.
	Since we assume that $ P $ cannot infinitely often call a sub-session, we can easily construct a potential function to prove that the session environment strictly decreases whenever no recursion is unfolded.
	Because of that and since $ \Delta $ is finite, the sequence $ \Delta \mapsto^* \Delta' \mapsto^* \Delta'' \mapsto^* \ldots $ eventually reaches $ \emptyset $, \ie $ \Delta \mapsto^* \emptyset $.
	With that we reach $ \PEnd $.
\end{proof}

\subsection{Summary}

A simple but interesting consequence of the Completion property is, that for each well-typed process there is a sequence of steps that successfully resolves all optional blocks. This is because we type the content of optional blocks and that our type system ensures that these contents reach exactly one success reporting message $ \POptEnd{\Role}{\tilde{\Args[v]}} $ in exactly one of its parallel branches (and in each of its choice branches).

\begin{corollary}[Reliance]
	\label{cor:reliance}
	For both type systems:\\
	If $ \Gamma \vdash P \triangleright \Delta $ such that $ \Delta $ is initially coherent and $ P $ does not contain infinite recursions and cannot infinitely often call a sub-session, then $ P \longmapsto^* \PEnd $ such that all optional blocks are successfully resolved in this sequence.
\end{corollary}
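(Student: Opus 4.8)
The plan is to obtain Reliance almost directly from Theorem~\ref{thm:completion}, the only additional ingredient being a syntactic inspection of which reduction rules of Figure~\ref{fig:reductionRules} can delete an optional block. First I would apply Theorem~\ref{thm:completion} to $P$: its hypotheses are exactly those assumed here, so it yields a reduction sequence $P \longmapsto^* \PEnd$ that ``does not require any optional block to be unreliable'', i.e.\ --- unfolding the definition of \emph{unreliable} --- a sequence in which Rule~$(\mathsf{fail})$ is never applied. I would fix this sequence and argue that every optional block occurring anywhere along it is eliminated by a later application of Rule~$(\mathsf{succ})$, hence successfully resolved.

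This rests on three observations. (i)~$\PEnd$, and every process structurally congruent to it, contains no optional-block constructor --- an easy induction on $\equiv$, using that $P$ contains no infinite recursion so that unfolding $\PRec{\TermV[X]}{Q}$ stays bounded. (ii)~A case analysis over the reduction rules shows that the only rules that can remove an occurrence of $\POpt{\Role}{\tilde{\Role}}{Q}{\tilde{\Args}}{\tilde{\Args[v]}}{Q'}$ are $(\mathsf{succ})$, $(\mathsf{fail})$, and $(\mathsf{choice})$ (when the discarded branch contains a block); all the communication, invitation and sub-session rules operate strictly inside or strictly around a block through the contexts $\EC$, $\ECR$, $\ECO$ and leave its constructor in place. (iii)~Since $(\mathsf{fail})$ is excluded and the sequence reaches $\PEnd$, each block present at some intermediate state must disappear via $(\mathsf{succ})$ or via a discarding $(\mathsf{choice})$ step. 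Blocks created en route --- by unfolding a guarded continuation, or by joining a sub-session whose protocol body contains optional blocks --- are covered uniformly, since Theorem~\ref{thm:completion} guarantees termination of the entire term.

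The main obstacle is therefore the interaction between $(\mathsf{choice})$ and optional blocks: I must exclude the scenario in which a $(\mathsf{choice})$ step throws away a whole subterm containing an as-yet-unresolved optional block. Here I would invoke well-typedness. By Rules~$(\mathsf{Opt})$ and $(\mathsf{OptE})$, the content of each optional block is well-typed w.r.t.\ a session environment extended by a single $\Typed{\Role}{\OV{\tilde{\Sort}}}$ that can only be consumed by $\POptEnd{\Role}{\tilde{\Args[v]}}$, so --- applying Theorem~\ref{thm:progress}/\ref{thm:completion} to that content --- it can always be driven to the single occurrence of $\POptEnd{\Role}{\tilde{\Args[v]}}$ demanded by the premise of $(\mathsf{succ})$, in each choice branch; moreover Rule~$(\mathsf{S1})$ forces the two branches of a choice to carry matching session types, so no block can sit solely in one branch of a choice lying outside it. Operationally I would then refine the scheduling used in the proof of Theorem~\ref{thm:completion}: whenever an optional block becomes unguarded, resolve it by $(\mathsf{succ})$ before performing any $(\mathsf{choice})$ step that would otherwise discard it. That this refinement keeps us on a sequence reaching $\PEnd$ follows from Lemma~\ref{lem:coherence}, which guarantees that the $\mapsto$-image of the corresponding $(\mathsf{succ}')$ step on the session environment can always be completed to a coherent one, so Progress still applies. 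Everything else is bookkeeping on top of Completion.
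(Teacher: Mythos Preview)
Your core approach matches the paper's: invoke Theorem~\ref{thm:completion} to obtain a reduction $P \longmapsto^* \PEnd$ that never uses Rule~$(\mathsf{fail})$, and conclude that every optional block removed along the way is removed by $(\mathsf{succ})$. The paper's own argument is essentially just that one sentence; it treats the corollary as an immediate restatement of Completion together with the observation that the typing rules $(\mathsf{Opt})$/$(\mathsf{OptE})$ force every block's content to reach a $\POptEnd{\Role}{\tilde{\Args[v]}}$.

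Where you diverge is the extended analysis of $(\mathsf{choice})$. This stems from reading ``all optional blocks are successfully resolved'' too literally, as if it required every syntactic occurrence of an optional block in every choice branch to be discharged via $(\mathsf{succ})$. The intended meaning---both in the corollary and in the informal description of Reliance in \S\ref{sec:optionalBlocks}---is simply that no block \emph{fails}, i.e.\ the sequence avoids $(\mathsf{fail})$. A block sitting in a discarded choice branch is never executed and so is neither successfully resolved nor failed; it is outside the scope of the claim. With this reading the choice case evaporates and your items~(i)--(ii) already finish the argument.

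As a minor technical note, your proposed scheduling refinement would not work as stated anyway: choice is not part of the evaluation context $\EC$ (Definition~\ref{def:evaluationContexts}), so an optional block inside a branch of $\PChoi{P_1}{P_2}$ cannot be reduced by $(\mathsf{succ})$ before the choice is taken---the very act of reducing the block \emph{is} the $(\mathsf{choice})$ step that commits to that branch and discards the other. Fortunately, as above, this refinement is unnecessary.
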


To summarize our type systems have the following properties.

\begin{theorem}[Properties]
	\label{thm:properties}
	For both type systems:
	\begin{description}
		\item[Subject Reduction:] If $ \Gamma \vdash P \triangleright \Delta $ and $ P \longmapsto P' $ then there exists $ \Delta' $ such that $ \Gamma \vdash P' \triangleright \Delta' $ and $ \Delta \mapsto^* \Delta' $.
		\item[Progress:] If $ \Gamma \vdash P \triangleright \Delta $ such that $ \Delta $ is initially coherent, then either $ P = \PEnd $ or there exists $ P' $ such that $ P \longmapsto^+ P' $, $ \Gamma \vdash P' \triangleright \Delta' $, where $ \Delta \mapsto^* \Delta' $ and $ \Delta' $ is coherent, and $ P \longmapsto^+ P' $ does not require any optional block to be unreliable.
		\item[Completion:] If $ \Gamma \vdash P \triangleright \Delta $ such that $ \Delta $ is initially coherent and $ P $ does not contain infinite recursions and cannot infinitely often call a sub-session, then $ P \longmapsto^* \PEnd $, $ \Gamma \vdash \PEnd \triangleright \emptyset $, and $ P \longmapsto^* \PEnd $ does not require any optional block to be unreliable.
		\item[Reliance:] If $ \Gamma \vdash P \triangleright \Delta $ such that $ \Delta $ is initially coherent and $ P $ does not contain infinite recursions and cannot infinitely often call a sub-session, then $ P \longmapsto^* \PEnd $ such that all optional blocks are successfully resolved in this sequence.
	\end{description}
\end{theorem}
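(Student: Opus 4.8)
The plan is straightforward: Theorem~\ref{thm:properties} is purely a bundling of results already established in this section, so the proof will consist of four short appeals to those results, one per labelled item, together with one trivial observation about closures. I would open the proof by noting, as in every earlier statement, that the argument is uniform for both type systems because each of the four cited results is itself proved for both, so no separate case analysis on the calculus is needed.

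For the \textbf{Subject Reduction} item I would invoke Theorem~\ref{thm:subjectReduction} directly. The only difference is cosmetic: Theorem~\ref{thm:subjectReduction} gives $\Delta \mapsto \Delta'$ (a single step of the session-environment reduction of Figure~\ref{fig:sessionTypeReductions}), whereas the summary statement asks for $\Delta \mapsto^* \Delta'$. Since $\longmapsto$ is a single-step relation and $\mapsto^*$ is by definition the reflexive--transitive closure of $\mapsto$, the implication $\Delta \mapsto \Delta' \;\Rightarrow\; \Delta \mapsto^* \Delta'$ is immediate, which is the only thing to remark here.

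The \textbf{Progress} item is exactly Theorem~\ref{thm:progress} and the \textbf{Completion} item is exactly Theorem~\ref{thm:completion}; for each I would simply state that the claim is a verbatim copy of the corresponding theorem, so there is nothing further to do. Likewise the \textbf{Reliance} item is Corollary~\ref{cor:reliance}, which in turn was obtained from Completion by observing that the typing rules (\textsf{Opt}) and (\textsf{OptE}) force the content of each optional block to report success $\POptEnd{\Role}{\tilde{\Args[v]}}$ in exactly one parallel branch; I would just cite the corollary rather than re-run that reasoning.

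There is essentially no mathematical obstacle in this final step — the work has all been done in Theorems~\ref{thm:subjectReduction}, \ref{thm:progress}, \ref{thm:completion} and Corollary~\ref{cor:reliance}. If anything, the one point worth a sentence of care is making explicit that ``initially coherent'' as used in the Progress, Completion and Reliance items is the same notion defined earlier (obtained from a coherent $\Delta_0$ with $\Delta_0 \mapsto^* \Delta$ and with no internal open invitations and no optional blocks without counterparts), so that the hypotheses of the bundled statement match those of the cited results on the nose; I would include that remark and then close the proof.
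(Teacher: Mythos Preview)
Your proposal is correct and matches the paper's approach exactly: the paper does not even provide a separate proof for Theorem~\ref{thm:properties}, treating it as an immediate summary of Theorems~\ref{thm:subjectReduction}, \ref{thm:progress}, \ref{thm:completion} and Corollary~\ref{cor:reliance}. Your observation about the cosmetic $\mapsto$ versus $\mapsto^*$ difference in the Subject Reduction item is the only point that merits a sentence, and you handle it correctly.
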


$ \PRC{n} $ is well-typed \wrt to the initially coherent session environment $ \Gamma $ that does not contain recursions. Thus, by the completion property of Theorem~\ref{thm:properties}, our implementation $ \PRC{n} $ of the rotating coordinator algorithm terminates despite arbitrary failures of optional blocks. Note that, although establishing the type system and proving Theorem~\ref{thm:properties} was elaborate, to check whether a process is well-typed is straightforward and can be automated easily and efficiently.

Since all communication steps of the algorithm are captured in optional blocks and since failure of optional blocks containing a single communication step represents a link failure/message loss, $ \PRC{n} $ terminates despite arbitrary occurrences of link failures.

Session types usually also ensure \emph{communication safety}, \ie freedom of communication error, and \emph{session fidelity}, \ie a well-typed process exactly follows the specification described by its global type. With optional blocks we lose these properties, because they model failures. As a consequence communications may fail and whole parts of the specified protocol in the global type might be skipped. In order to still provide some guarantees on the behaviour of well-typed processes, we however limited the effect of failures by encapsulation in optional blocks. It is trivial to see, that in the failure-free case, \ie if no optional block fails, we inherit communication safety and session fidelity from the underlying session types in \cite{BettiniAtall08,BocciAtall10} and \cite{DemangeonHonda12}. Even in the case of failing optional blocks, we inherit communication safety and session fidelity for the parts of protocols outside of optional blocks and the inner parts of successful optional blocks, since our extension ensures that all optional blocks that depend on a failure are doomed to fail and the remaining parts work as specified by the global type.

\subsection{System Failures}

If we use optional blocks the cover a single transmission over an unreliable link, each use of Rule~(\textsf{fail}) refers to a single link failure. Whether a specification, \ie a global type, implements link failures can be checked easily, by analysing whether all communication steps on unreliable links are encapsulated by the above described binary optional blocks $ \GUL{\Role[src]}{\Args[v]_{\Role[src]}}{\Role[trg]}{\Args[v]_{\Role[trg]}} $.
Notice that this way we can model systems that contain reliable as well as unreliable links.

The properties encapsulation, isolation, and safety guarantee that the above described unreliable links meet our intuition of the considered class of failure and their effect. Restricting our attention to link failures, where in the case of failure a default value is provided, as well as the restriction on protocols to compute some values might appear as a rather strong limitation. But this limitation actually matches the intuition used for many distributed algorithms. We consider systems that use some method to determine at which point a certain failure has occurred---\eg by a time out or more abstractly a failure detector. But apart from the detection of the failure, the system does usually not provide any informations about it or its source. We match this intuition by restricting the way the modelled system can react on a failure.

\subsection{Crash Failures}

Crash failures can be considered as a special case of link failures: After the first link failure all communications with the respective sender of the first failure have to fail.
Following this intuition, a system with crash failures can be obtained from a system with link failures by excluding all executions that do not meet the above criterion.
Accordingly, all algorithms that terminate despite link failures also terminate despite crash failures.
There are however algorithms that do not guarantee termination despite link failures but only despite crash failures. Consider once more Example~\ref{exa:RCAlgorithm}. This algorithm satisfies termination despite link failures; but it will not be able to ensure agreement in this scenario, \ie cannot ensure that despite link failures all participants decide consistently \cite{Lynch96}. Agreement despite crash failures is ensured.
Similarly an algorithm might satisfy termination only with respect to a maximal amount of failures or under the assumption that a certain process never fails.

The simplest way to express crash failures with optional blocks is to encapsulate the specification of a whole algorithm in an optional block on all participating roles. Projection then results in local types $ T_i $ for each role $ \Role_i $, that are completely encapsulated by an optional block $ \LTOptS{\tilde{\Role}}{T_i}{\cdot} $.
A process crashes iff its optional block fails. Here we need to encapsulate all communications between the participating roles $ \tilde{\Role} $ in optional blocks of the form $ \GUL{\Role[src]}{\Args[v]_{\Role[src]}}{\Role[trg]}{\Args[v]_{\Role[trg]}} $, \ie have to model all links as unreliable, to ensure that the crash of one process does not doom the whole system. With that the specification of systems that contain both, link and crash failures, is easy. We can also model a process crash with recovery this way, using a recursion $ \LTRec{\TermV}{\LTOpt{\tilde{\Role}}{T_i}{\Typed{\tilde{\Args}}{\tilde{\Sort}}}}{t} $ and the default values $ \tilde{\Args} $ to capture the initial values of the process.
The main difficulty are systems with unreliable processes but reliable links. Here we have to ensure that all communication failures result from a crashed process. An easy way to tackle this problem is to let the reduction semantics keep track of the processes that are crashed or are currently considered alive as it was done \eg in \cite{KuhnrichNestmann09,wagnerNestmann14} or for exceptions in \cite{capecchi2016}. With that the semantics can ensure that a communication error causes a process to crash---or is caused by a crashed process---and that the optional block of a crashed process will eventually fail. The interesting question here is how the type system can be used to guarantee termination in systems with unreliable processes, if the algorithm does not terminate in the presence of arbitrary link failures. Even more challenging is the analysis of algorithms that tolerate only a bounded amount of failures. In the presented approach we concentrate---as a first step---on link failures/message loss and algorithms that terminate despite arbitrary link failures.

\section{Conclusions}
\label{sec:conclusions}

We extend standard session types with optional blocks with default values. Thereby, we obtain a type system for progress and completion/termination despite link failures that can be used to reason about fault-tolerant distributed algorithms.
Our approach is limited with respect to two aspects: We only cover algorithms that
\begin{inparaenum}[(1)]
	\item allow us to specify default values for all unreliable communication steps and
	\item terminate despite arbitrary link failures.
\end{inparaenum}
Accordingly, this approach is only a first step towards the analysis of distributed algorithms with session types. It shows however that it is possible to analyse distributed algorithms with session types and how the latter can solve the otherwise often complicated and elaborate task of proving termination.
Note that, optional blocks can contain larger parts of protocols than a single communication step. Thus they may also allow for more complicated failure patterns than simple link failures/message loss.

In \cite{adameitPetersNestmann17} we extend a simple type system with optional blocks. The (for many distributed algorithms interesting) concept of rounds is obtained instead by using the more complicated nested protocols (as defined in \cite{DemangeonHonda12}) with optional blocks. Due to lack of space, the type systems with nested protocols/sub-sessions and optional blocks as well as more interesting examples with and without explicit (and of course overlapping) rounds were postponed to this report. As presented above the inclusion of sub-session is straightforward and does not require to change the concept of optional blocks as presented in \cite{adameitPetersNestmann17}.
In combination with sub-sessions our attempt respects two important aspects of fault-tolerant distributed algorithms:
\begin{inparaenum}[(1)]
	\item The modularity as \eg present in the concept of rounds in many algorithms can be expressed naturally, and
	\item the model respects the asynchronous nature of distributed systems such that messages are not necessarily delivered in the order they are sent and the rounds may overlap.
\end{inparaenum}

Our extension offers new possibilities for the analysis of distributed algorithms and widens the applicability of session types to unreliable network structures.
We hope to inspire further work in particular to cover larger classes of algorithms and system failures.

\newcommand*{\doi}[1]{\href{http://dx.doi.org/#1}{doi: #1}}
\bibliographystyle{plainnat}
\bibliography{SessionTypesForLinkFailures}

\end{document}